\documentclass[onecolumn,romanappendices,10pt]{IEEEtran}
\IEEEoverridecommandlockouts
\usepackage{cite}
\usepackage{amsmath,amssymb,amsfonts}

\usepackage{amsthm}
\usepackage{mathrsfs}
\usepackage{bm}
\usepackage{dsfont}
\usepackage[ruled,vlined]{algorithm2e}
\usepackage{mathtools}
\usepackage{graphicx}
\usepackage{textcomp}
\usepackage{multirow}
\usepackage{threeparttable}
\usepackage{enumerate}
\usepackage{mathtools}
\usepackage{hyperref}
\usepackage{xurl}
\usepackage{stfloats}
\usepackage{subfig}

\graphicspath{{figures/}}

\newtheorem{theorem}{Theorem}
\newtheorem{lemma}{Lemma}
\newtheorem{definition}{Definition}
\newtheorem{proposition}{Proposition}
\newtheorem{corollary}{Corollary}
\newtheorem{example}{Example}
\newtheorem{assumption}{Assumption}
\newtheorem{remark}{Remark}

\newcommand{\Acal}{{\mathcal A}}
\newcommand{\Ecal}{{\mathcal E}}
\newcommand{\Ncal}{{\mathcal N}}

\newcommand{\Ucal}{{\mathcal U}}

\newcommand{\Fcal}{{\mathcal F}}

\newcommand{\Kcal}{{\mathcal K}}

\newcommand{\Xcal}{{\mathcal X}}
\newcommand{\Ycal}{{\mathcal Y}}

\newcommand{\Wcal}{{\mathcal W}}
\newcommand{\Scal}{{\mathcal S}}

\newcommand{\Zcal}{{\mathcal Z}}

\newcommand{\EE}{{\mathbb{E}}}
\newcommand{\PP}{{\mathbb{P}}}
\newcommand{\RR}{{\mathbb{R}}}
\newcommand{\NN}{{\mathbb{N}}}

\DeclareMathOperator{\gen}{\overline{gen}}
\DeclareMathOperator{\Uniform}{Unif}
\DeclareMathOperator{\Bernoulli}{Bern}
\DeclareMathOperator{\Binomial}{Binom}

\newcommand\indep{ \perp \!\!\! \perp}

\allowdisplaybreaks

\begin{document}
	
	\title{On Unified and Sharpened CMI Bounds for Generalization Errors}
	\author{\IEEEauthorblockN{Yang Lu, Matthias Frey, Margreta Kuijper, and Jingge Zhu}
		\thanks{The authors are with the Department of Electrical and Electronic Engineering, The University of Melbourne,  Parkville, Victoria, Australia (email: yang.lu11@student.unimelb.edu.au, \{matthias.frey, mkuijper, jingge.zhu\}@unimelb.edu.au).}
		\thanks{This work was presented in part at the 2025 IEEE International Symposium on Information Theory (ISIT), Ann Arbor, Michigan, USA, June 2025.}}
	
	\maketitle
	\begin{abstract} 
		We present a new family of information-theoretic generalization bounds within the framework of conditional mutual information (CMI). Most of our results are established based on the leave-$m$-out (L$m$O) cross-validation error, with $m$ denoting the number of the hold-out supersamples. Under this setting, we propose a unified CMI-based bound, allowing to envelop and reproduce many known CMI-based bounds and also bridge the gap between the MI- and CMI-based bounds when $m$ tends to infinity. The proposed framework not only provides a unified description of the existing bounds but also develops new, sharper bounds. We show the benefits of the proposed bounds through several simple examples, where the existing results are either inapplicable or looser. Moreover, under the premise that the loss function is bounded, we tighten the CMI quantities involved in the proposed bounds by reducing the number of conditional terms, thereby enhancing the proposed framework. We show empirically that the resulting new bounds improve upon the previously known ones.
	\end{abstract}
	
	\section{Introduction}\label{sec:intro}
	A fundamental topic of statistical learning theory concerns how well a hypothesis inferred from some training data can generalize to new unseen data drawn from the same distribution as the training data. That is, how can we ensure that the learned hypothesis has captured knowledge that is reflective of the underlying data distribution rather than the training data? It is therefore crucial to understand the generalization error, which measures the discrepancy between the losses that the learning algorithm incurs on training data and unseen test data. Bounding generalization errors for learning algorithms not only provides guarantees whether the produced hypothesis overfits its training data, but also potentially inspires the design of improved learning algorithms. Several information-theoretic quantities such as the Kullback-Leibler (KL) divergence, mutual information (MI) and conditional mutual information (CMI), have emerged as promising tools to derive novel information-theoretic bounds. These bounds capture the joint effect of the learning algorithm and data distribution, beyond what conventional algorithm-independent bounds (e.g., those based on the methods of VC dimension and Rademacher complexity) can express \cite{shalev2014understanding}.
	
	Information-theoretic generalization bounds were first initiated in the seminal works \cite{russo2016controlling} and \cite{xu2017information}, where the generalization error is bounded in terms of the MI between the input training data and the output hypothesis. However, in some learning scenarios where the true generalization error is vanishing, the very first MI bound may yield a vacuous result \cite{bu2020tightening}. This happens particularly when the input data set and output hypothesis are both continuous random variables and the hypothesis is a deterministic function of the training data, thus leading to an unbounded MI quantity. To mitigate this limitation, the original MI bound was advanced and improved with various techniques, including the individual-sample or random-subset technique \cite{bu2020tightening,harutyunyan2021information,rodriguez2021random}, the disintegration technique \cite{negrea2019information} and the chaining method \cite{asadi2018chaining,zhou2023stochastic}. Amongst them, the authors of \cite{bu2020tightening} decomposed the generalization error and bounded each individual term to derive an individual mutual information (IMI) bound, which is proved to be tighter than the MI bound. Remarkably, another major step to improve the MI-based bounds was pushed forward by a series of CMI-based bounds \cite{steinke2020reasoning,haghifam2020sharpened,zhou2022individually,rammal2022leave,haghifam2022understanding}, where the training set is randomly selected from a larger supersample set and the generalization behavior is characterized by the CMI quantity between the output hypothesis and the training set selection, conditioned on the supersample set. Notably, based on different construction approaches of the supersample set, CMI-based bounds can be classified into the standard CMI bound and the leave-one-out CMI (LOO-CMI) bound. The former adopts the setting introduced initially in \cite{steinke2020reasoning}, where a Bernoulli variable is used to distinguish the training membership from a supersample pair. Furthermore, previous tightening techniques for the original MI bound are also applicable to the standard CMI bound \cite{haghifam2020sharpened,rodriguez2021random,zhou2022individually}. However, the $n$ unused samples in the standard supersample set of size $2n$ make it inefficient to compute the CMI quantity. This issue can be alleviated by the LOO-CMI bound, where the introduction of only one supersample brings benefits in terms of lower computational cost, alongside better interpretation and explicit connections with stability theory \cite{rammal2022leave,haghifam2022understanding}.
	
	Although the CMI-based bounds are always non-vacuous, they still suffer from the challenges arising in cumbersome estimation of MI involving the high-dimensional hypothesis, hindering the bounds from practical application. To alleviate this issue, one line of research employs the functional CMI (f-CMI) bound \cite{harutyunyan2021information}, where the high-dimensional hypothesis is replaced by the low-dimensional prediction with respect to the chosen supersample. As a result, the f-CMI bound is applicable to a wide range of methods, including non-parametric approaches, Bayesian algorithms and ensembling algorithms \cite{harutyunyan2021information}. Subsequently, the f-CMI bound was advanced in several follow-up works \cite{hellstrom2022new,wang2023tighter,dongexactly}, yielding strengthened bounds evaluated on the loss incurred on supersamples, including evaluated CMI (e-CMI) \cite{hellstrom2022new}, loss-difference based MI (ld-MI) \cite{wang2023tighter} and binarized-loss based MI (bl-MI) \cite{dongexactly}. In general, the data-processing inequality implies the following progression chain from loosest to tightest: f-CMI $\to$ e-CMI $\to$ ld-MI $\to$ bl-MI. Notably, the bl-MI bound provides an exact characterization of the generalization behavior and is therefore the tightest bound. It should be noted that the loss-based bounds are tailored for the specific loss function that appears in the bound. In contrast, conventional hypothesis-based bounds are compatible with general loss functions as they only require knowledge of the hypothesis itself. 
	
	Following the above-mentioned research thread and further motivated by understanding a deeper relationship between various generalization bounds, we establish a unified supersample framework that envelops most of the existing generalization bounds and further facilitates new promising bounds with improved tightness and effectiveness. The main contributions are summarized as follows.
	
	\begin{itemize}
		\item Towards the generalization analysis of learning algorithms, we present a new leave-$m$-out (L$m$O) supersample setting (Definition~\ref{def:p-lmo}). In this new setting, $n$ training samples are randomly chosen without replacement from $n+m$ supersamples. It holds that the expected L$m$O cross-validation error is equivalent to the classical expected generalization error (Lemma~\ref{lem:IP-lmo-gen-err}). Based on this observation, the generalization behavior can be characterized by a new CMI-based bound (Theorem~\ref{thm:IPCIMI}). By choosing appropriate parameters, the unified CMI bound is able to reproduce many known MI/CMI-based bounds in \cite{xu2017information,bu2020tightening,zhou2022individually,rammal2022leave,haghifam2022understanding}, providing a unified view and derivation of the existing results. 
		\item The proposed unified setting also allows us to derive new bounds (Corollary~\ref{cor:lofo-CMI}) that admit improved tightness. We show that under two learning scenarios, the proposed new bounds are tighter than the IMI \cite{bu2020tightening} and individually conditional individual mutual information (ICIMI) \cite{zhou2022individually} bounds.
		\item Inspired by \cite{dongexactly}, we extend the proposed unified CMI framework and develop new bounds (Theorem~\ref{thm:SIPCIMI}) that are equipped with improved CMI quantities, in which the mutual information is conditioned on only one supersample. Numerical results show that these singly-conditioned CMI-based bounds can be tighter than their original counterparts.
		\item Beyond the regime of conventional hypothesis-based bounds, we further develop the proposed unified supersample setting to establish tighter bounds based on a different discrepancy measure of the generalization error and easy-to-measure bounds based on the predictions for the supersamples (Theorem~\ref{thm:lb-gen-bound}). We also provide an exactly tight bound (Theorem~\ref{thm:tightest-gen-bound}) and illustrate that the generalization behavior can be precisely characterized in some scenarios.
	\end{itemize}
	
	The remainder of this paper is organized as follows: in Section~\ref{sec:preliminary}, we introduce some necessary notations and present the problem formulation. Section~\ref{sec:review} reviews several relevant information-theoretic generalization bounds which are further evaluated through a specific learning example. Section~\ref{sec:unified-CMI} introduces a new supersample setting to facilitate a unified CMI-based bound. The special cases of this unified bound, including various known and new bounds, are further presented, evaluated and compared via both analytical and numerical results. Next, Section~\ref{sec:SCMI-bounds} employs the same supersample setting and proposes a series of parallel bounds equipped with improved CMI quantities, potentially enhancing the results in Section~\ref{sec:unified-CMI}. In Section~\ref{sec:extensions}, we extend our framework in various directions to derive tighter, easy-to-measure bounds. Finally, Section~\ref{sec:conclusion} concludes this paper. Proofs, additional analysis and calculation details can be found in Appendix~\ref{app:lemmas} to \ref{app:calculations}.
	
	\section{Preliminaries}\label{sec:preliminary}
	
	\subsection{Notations}
	We use capital letters (e.g., $X$) for random variables, lower-case letters (e.g., $x$) for their realizations, and calligraphic letters (e.g., $\Xcal$) for their ranges. Let $P_X$ denote the distribution of a random variable $X$ and $P_{X|Y}$ denote the conditional distribution of $X$ given $Y$. When conditioning on a specific realization $Y=y$, we use $P_{X|Y =y}$ or simply $P_{X|y}$, and accordingly the expectation over $X \sim P_{X|Y =y}$ is denoted as $\EE_{X|Y =y}$ or $\EE_{X|y}$. $X \indep Y$ means $X$ and $Y$ are independent. For brevity, we use $X \indep Y$ to indicate that $X$ and $Y$ are independent random variables. Besides, let $[n]$ be the set $\{1, 2, \ldots, n\}$ and let $\NN$ denote the set of natural numbers $\{1,2,\ldots\}$. Moreover, we define the minimum function $\min(a, b)$ and the maximum function $\max(a, b)$ which return the smaller and larger of the two numbers $a$ and $b$, respectively. Let $H(p) := -p \log p - (1 - p) \log (1 - p)$ denote the binary entropy function. Given a uniform variable $X \in S$ that takes values from $S$ with equal probability, we let $\Uniform(S)$ denote the distribution of $X$. Given a Bernoulli variable $X \in \{0, 1\}$ and $P(X = 1) = p$, we let $\Bernoulli(p)$ denote its distribution. The number of successes in $n$ i.i.d. Bernoulli trials is defined as the Binomial variable. Then, given a Binomial variable $Y = \sum_{i=1}^{n} X_i$, where $X_i \overset{i.i.d.}{\sim} \Bernoulli(p)$, we let $\Binomial(n, p)$ denote the distribution of $Y$. Moreover, the Bernstein polynomials are defined as $B_{n,x}(p) := \begin{psmallmatrix} n \\ x \end{psmallmatrix} p^x (1-p)^{n-x}$, where $\begin{psmallmatrix} n \\ i \end{psmallmatrix}$ is a binomial coefficient. Given probability measures $P$ and $Q$, the total variation between $P$ and $Q$ is defined as $D_{\textnormal{TV}}(P \| Q) := \frac{1}{2} \int | \frac{d P}{d Q} - 1 | d Q$, and the KL divergence from $P$ to $Q$ is defined as $D_{\textnormal{KL}}(P \| Q) := \int \log \frac{d P}{d Q} d P$, where $\frac{d P}{d Q}$ is the Radon-Nikodym derivative of $P$ with respect to $Q$. Let $\otimes$ form product measures. The Shannon mutual information between two random variables $X$ and $Y$ is $I(X; Y) := D\left( P_{XY} \| P_X \otimes P_Y \right)$. Besides, we define the disintegrated mutual information as $I^z(X; Y) := D\left( P_{XY | Z = z} \| P_{X | Z = z} \otimes P_{Y | Z = z} \right)$, which in expectation is the conditional mutual information $I(X; Y | Z) := \EE_Z \left[ I^Z(X; Y)\right]$. Throughout this paper, all the logarithms are to the base $e$, and thus all the information-theoretic quantities are measured in nats.
	
	We define the following.
	\begin{definition}
		For a random variable $X$, the cumulant generating function (CGF) of $X$ is defined as 
		\begin{align*}
			\psi_X (\lambda) := \log \EE \left[ e^{\lambda(X - \EE [X])} \right].
		\end{align*}
	\end{definition}
	
	\begin{definition}\label{def:con-CGF}
		For random variables $X$ and $Y$, the CGF of $X$ conditioned on $Y=y$ is defined as
		\begin{align*}
			\psi_{X|Y=y} (\lambda) := \log \EE_{X|Y=y} \left[ e^{\lambda(X - \EE_{X|Y=y} [X])} \right].
		\end{align*}
	\end{definition}
	
	\subsection{Problem Formulation} \label{subsec:pbl_setup}
	Let $\Zcal$ denote the data domain, and $\Wcal$ the hypothesis domain. Let $Z_{[n]} := \{Z_i\}_{i=1}^n \in \Zcal^n$ be a collection composed of $n$ i.i.d. samples drawn from some unknown distribution $\mu$, where $n \in \NN$. Taking the training set $Z_{[n]}$ as input, a possibly stochastic learning algorithm $\Acal := \Zcal^n \to \Wcal$ selects a hypothesis $W \in \Wcal$ as the output, i.e., $W := \Acal(Z_{[n]})$. This algorithm induces a conditional distribution $P_{W | Z_{[n]}}$ which maps from $\Zcal^n$ to $\Wcal$. We denote the joint distribution of the training set $Z_{[n]}$ and the hypothesis $W$ by $P_{W,Z_{[n]}} := \mu^{\otimes n} \otimes P_{W|Z_{[n]}}$, and the marginal distribution of $W$ by $P_W$. Given a loss function $\ell : \Wcal \times \Zcal \to \RR^+$, the learner would find a hypothesis $w \in \Wcal$ that minimizes the following population loss:
	\begin{equation}
		\begin{aligned}
			L_\mu(w) = \EE_{Z \sim \mu} \left[ \ell(w, Z) \right],
		\end{aligned}
		\label{eq:def-pop-loss}
	\end{equation}
	and the quantity $L_\mu := \EE_W [L_\mu(W)]$ is then defined as the expected population risk. However, it is typically infeasible to compute $L_\mu(w)$ directly since the underlying distribution $\mu$ is usually unknown. One alternate approach is to minimize the empirical loss defined by
	\begin{equation}
		\begin{aligned}
			L_{Z_{[n]}}(w) := \frac{1}{n} \sum_{i=1}^{n} \ell(w, Z_i).
		\end{aligned}
		\label{eq:def-emp-loss}
	\end{equation}
	We also define $\widehat{L}_n := \EE [L_{Z_{[n]}}(W)]$ as the expected empirical risk, where the expectation is taken over the joint distribution $P_{W,Z_{[n]}}$. To evaluate how the hypothesis selected by the learner performs on unseen samples compared to those that are used for training and thus assess the overfitting effect, the expected generalization error is introduced as
	\begin{equation}
		\begin{aligned}
			\gen := L_\mu - \widehat{L}_n.
		\end{aligned}
		\label{eq:def-gen}
	\end{equation}
	In the sequel, our main efforts focus on bounding the expected generalization error $\gen$.
	
	\section{Review of Related Results}\label{sec:review}
	\begin{figure}[t]
		\centering
		\includegraphics[width=0.95\textwidth]{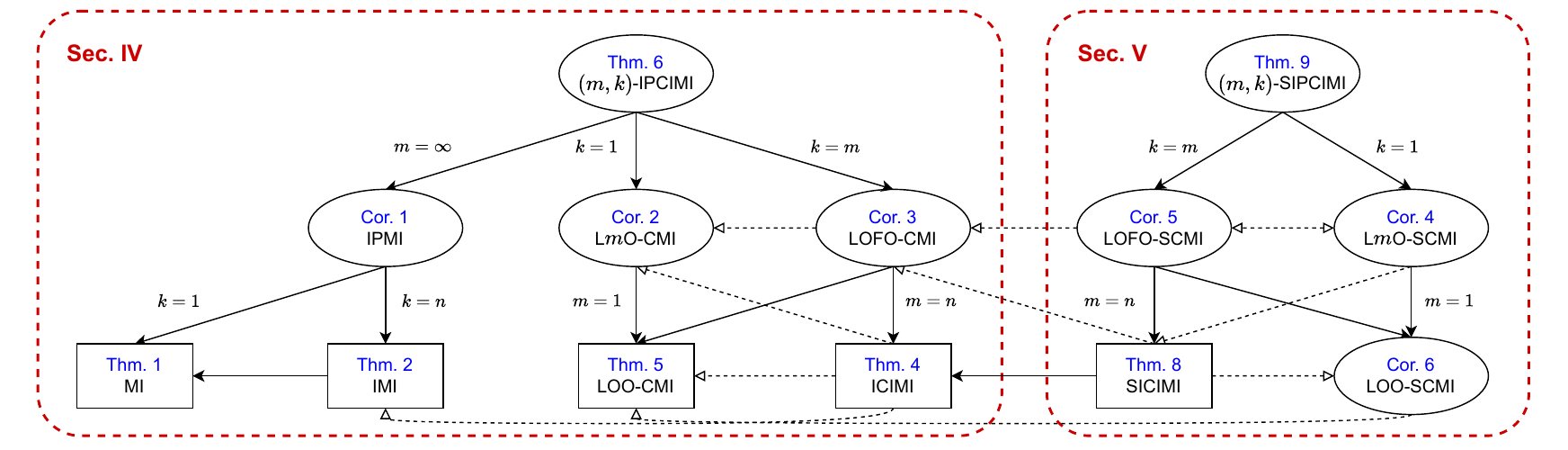}
		\caption{An overview of the proposed framework that involves all bounds of interest. In this figure, the bounds from the prior literature are marked in rectangles, whereas the bounds proposed in this paper are marked in ellipses. For any two bounds A and B such that A points to B, we use 1) a filled arrow with a solid line to indicate that A is tighter than B, where the label over the arrow (if any) represents the parameter setting with which B can be specialized from A; 2) a hollow arrow with a dashed line to indicate that A is tighter than B, as shown by at least one of the Examples~\ref{eg:Bernoulli}, \ref{eg:Gaussian} and \ref{eg:finite_Gaussian} studied in Sections~\ref{sec:unified-CMI} and \ref{sec:SCMI-bounds}. We further note that Theorem~\ref{thm:bld-IPCIMI} is a special version of Theorem~\ref{thm:IPCIMI} with a bounded loss difference assumption and hence omitted from the figure.}
		\label{fig:overview}
	\end{figure}
	
	In this section, we briefly review some related results on information-theoretic generalization bounds. As an overview, Figure~\ref{fig:overview} provides a characterization of the relationships between all the considered bounds, and the thorough discussions are deferred to Sections~\ref{sec:unified-CMI} and \ref{sec:SCMI-bounds}.
	
	\subsection{MI-Based Bounds}
	Originating in \cite{russo2016controlling}, MI is used to upper bound the expected generalization error. This result was further studied and extended in \cite{xu2017information}, leading to the following standard MI-based generalization bound.
	\begin{theorem}[MI Bound \cite{xu2017information}] \label{thm:MI} 
		If $\ell(w, Z)$, where $Z \sim \mu$, is $\sigma$-sub-Gaussian for all $w \in \Wcal$, then
		\begin{equation}
			\begin{aligned}
				\gen \leq \sqrt{\frac{2\sigma^2}{n} I(W; Z_{[n]})}. 
			\end{aligned}
		\end{equation}
		In particular, if $\ell(w, z) \in [0, 1]$ for all $w \in \Wcal$ and $z \in \Zcal$, then
		\begin{equation}
			\begin{aligned}
				\gen \leq \sqrt{\frac{1}{2n} I(W; Z_{[n]})}. 
			\end{aligned}
			\label{eq:MI}
		\end{equation}
	\end{theorem}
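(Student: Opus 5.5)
We will prove the MI bound through the Donsker--Varadhan variational representation of the KL divergence combined with the sub-Gaussian assumption. The key object is the empirical-risk functional
\begin{align*}
f(w, z_{[n]}) := \frac{1}{n}\sum_{i=1}^n \ell(w, z_i).
\end{align*}
We introduce an independent copy $(\bar W, \bar Z_{[n]}) \sim P_W \otimes \mu^{\otimes n}$. Under this product measure,
\begin{align*}
\EE[f(\bar W, \bar Z_{[n]})] = \EE_{\bar W}[L_\mu(\bar W)] = L_\mu,
\end{align*}
using that the marginal of $\bar W$ equals $P_W$. Under the joint law $P_{W,Z_{[n]}}$, $\EE[f(W, Z_{[n]})] = \widehat{L}_n$. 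Hence
\begin{align*}
\gen = \EE_{P_W\otimes\mu^{\otimes n}}[f] - \EE_{P_{W,Z_{[n]}}}[f],
\end{align*}
a difference of expectations of one function under the two measures whose KL divergence is exactly $I(W; Z_{[n]})$.

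First, apply Donsker--Varadhan. For every $\lambda \in \RR$,
\begin{align*}
I(W; Z_{[n]}) = D_{\textnormal{KL}}(P_{W,Z_{[n]}} \| P_W \otimes \mu^{\otimes n}) \geq \EE_{P_{W,Z_{[n]}}}[\lambda f] - \log \EE_{P_W\otimes\mu^{\otimes n}}[e^{\lambda f}].
\end{align*}
Second, control the log-moment-generating term. For each fixed $w$, $f(w, \bar Z_{[n]})$ is an average of $n$ i.i.d.\ $\sigma$-sub-Gaussian variables, so it is $\sigma/\sqrt{n}$-sub-Gaussian with mean $L_\mu(w)$. Its conditional CGF (Definition~\ref{def:con-CGF}) therefore satisfies $\psi(\lambda) \le \lambda^2\sigma^2/(2n)$.

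The main obstacle is that $f(\bar W, \bar Z_{[n]})$ is not centered for each $w$ at the same point: the conditional means $L_\mu(w)$ vary with $w$. Taking the expectation over $\bar W$ of $e^{\lambda(f - L_\mu(\bar W))}$ is fine, but $\EE[e^{\lambda L_\mu(\bar W)}]$ is not controlled by the hypothesis. I would handle this by applying Donsker--Varadhan conditionally: work with $D(P_{Z_{[n]}|W=w} \| \mu^{\otimes n})$ for each $w$, which is legitimate because $\bar Z_{[n]}$ is independent of $\bar W$. Conditioning on $W=w$ gives
\begin{align*}
\EE[f(w,Z_{[n]}) \mid W=w] - L_\mu(w) \ge -\sqrt{\tfrac{2\sigma^2}{n} D(P_{Z_{[n]}|W=w}\|\mu^{\otimes n})},
\end{align*}
obtained by optimizing over $\lambda<0$ in $\lambda(\cdot) - \lambda^2\sigma^2/(2n) \le D$. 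Averaging over $W$ and applying Jensen to the concave square root then gives
\begin{align*}
\gen \le \EE_W\left[\sqrt{\tfrac{2\sigma^2}{n}D(P_{Z_{[n]}|W}\|\mu^{\otimes n})}\right] \le \sqrt{\tfrac{2\sigma^2}{n} I(W;Z_{[n]})},
\end{align*}
since $\EE_W[D(P_{Z_{[n]}|W}\|\mu^{\otimes n})] = I(W;Z_{[n]})$ because $P_{Z_{[n]}} = \mu^{\otimes n}$.

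For the bounded case, Hoeffding's lemma shows that any $[0,1]$-valued variable is $\tfrac12$-sub-Gaussian. Substituting $\sigma = 1/2$ yields \eqref{eq:MI}.
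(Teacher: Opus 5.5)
Your argument is correct. It is essentially the standard Xu--Raginsky proof: Donsker--Varadhan combined with the $\sigma/\sqrt{n}$ sub-Gaussian CGF of the empirical average. You are right that DV must be applied conditionally on $W=w$ against $\mu^{\otimes n}$, because $L_\mu(w)$ varies with $w$; the unconditional display you start with is never used and can be dropped. What makes the conditioning work is $\EE_W[D(P_{Z_{[n]}|W}\|\mu^{\otimes n})]=I(W;Z_{[n]})$, since $P_{Z_{[n]}}=\mu^{\otimes n}$, rather than the independence of $\bar Z_{[n]}$ and $\bar W$; you state this correctly at the end.

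The paper does not prove this theorem itself; it imports it from \cite{xu2017information}. Its only internal derivation of \eqref{eq:MI} is as the $(m,k)=(\infty,1)$ case of its framework (Corollary~\ref{cor:IPMI}), which works differently:
\begin{enumerate}
\item DV is applied conditionally on the supersample $\hat Z_{[n+m]}$, with the randomness coming from the membership $U_{[n]}$.
\item The CGF of the L$m$O cross-validation error is controlled by the Doob-martingale/Azuma--Hoeffding argument for sampling without replacement (Lemma~\ref{lem:bld-CGF-bound}).
\item Finally $m\to\infty$ is taken via Lemma~\ref{lem:lim-MI}.
\end{enumerate}
That route delivers only the bounded(-difference) loss specialization, and only as a limit. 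As stated it also assumes $|\Wcal|<\infty$, although the inequality $I(W;U_{[n]}|\hat Z_{[n+m]})\le I(W;Z_{[n]})$ from Lemma~\ref{lem:ineq-CMI-MI} would suffice without it. It never reaches the general sub-Gaussian statement.

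Your direct route therefore buys the full sub-Gaussian bound, with no finiteness assumption and no limiting argument. The paper's route instead buys the interpretation of the MI bound as the endpoint of a family of CMI bounds, which stay finite even when $I(W;Z_{[n]})$ is infinite.
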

	Instead of bounding the whole generalization error \eqref{eq:def-gen} directly, as a proof technique, the authors of \cite{bu2020tightening} chose to bound each individual difference $\EE \left[ L_{\mu}(W) - \ell(W,Z_i) \right]$. The application of such an individual-sample technique resulted in the following individual mutual information (IMI) bound, which tightened the result in Theorem~\ref{thm:MI}. 
	\begin{theorem}[IMI Bound \cite{bu2020tightening}] \label{thm:IMI} 
		Let $\tilde{W}$ be an independent copy of $W$ and $\hat{Z}$ an independent copy of $Z$ such that $P_{\tilde{W},\hat{Z}} = P_W \otimes \mu$. Suppose that $\psi_{\ell(\tilde{W}, \hat{Z})}(\lambda) \leq \Psi_{-}(-\lambda)$ for $\lambda \leq 0$ under distribution $P_{\tilde{W},\hat{Z}}$, then
		\begin{equation}
			\begin{aligned}
				\gen \leq \frac{1}{n} \sum_{i=1}^n \inf_{\lambda > 0} \frac{1}{\lambda} \left( I(W; Z_i) + \Psi_{-}(-\lambda) \right).
			\end{aligned}\label{eq:gen-IMI}
		\end{equation}
		In particular, if $\ell(w, z) \in [0, 1]$ for all $w \in \Wcal$ and $z \in \Zcal$, then
		\begin{equation}
			\begin{aligned}
				\gen \leq \frac{1}{n} \sum_{i=1}^{n} \sqrt{\frac{1}{2} I(W; Z_i)}. 
			\end{aligned}
			\label{eq:IMI}
		\end{equation}
	\end{theorem}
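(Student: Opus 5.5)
The plan is to bound each individual term of the decomposition
\begin{align*}
\gen = \frac{1}{n}\sum_{i=1}^n \EE\left[ L_\mu(W) - \ell(W,Z_i)\right]
\end{align*}
separately, using the Donsker--Varadhan variational representation of the KL divergence. For fixed $i$, write
\begin{align*}
\EE[L_\mu(W)] = \EE_{P_W\otimes\mu}[\ell(\tilde W,\hat Z)].
\end{align*}
This holds because $L_\mu(W) = \EE_{\hat Z\sim\mu}[\ell(W,\hat Z)]$ with $\hat Z$ independent of $W$. Also, since $Z_i\sim\mu$, the pair $(W,Z_i)$ has joint law $P_{W,Z_i}$ with marginals $P_W$ and $\mu$. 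Hence the $i$-th term equals
\begin{align*}
\EE_{P_W\otimes\mu}[f] - \EE_{P_{W,Z_i}}[f], \qquad f=\ell .
\end{align*}

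Next apply Donsker--Varadhan with $P=P_{W,Z_i}$, $Q=P_W\otimes\mu$ and test function $g=-\lambda \ell$ for $\lambda>0$:
\begin{align*}
I(W;Z_i) = D_{\textnormal{KL}}(P\|Q) \geq -\lambda\EE_P[\ell(W,Z_i)] - \log \EE_Q\left[e^{-\lambda \ell(\tilde W,\hat Z)}\right].
\end{align*}
Subtracting and adding $\lambda\EE_Q[\ell]$ rewrites the log-MGF term as
\begin{align*}
-\lambda\EE_Q[\ell] + \psi_{\ell(\tilde W,\hat Z)}(-\lambda).
\end{align*}
The hypothesis $\psi(\lambda')\le \Psi_-(-\lambda')$ for $\lambda'\le 0$, taken with $\lambda'=-\lambda$, bounds this by $-\lambda\EE_Q[\ell]+\Psi_-(\lambda)$. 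Here I read the paper's notation $\Psi_-(-\lambda)$ consistently as $\Psi_-$ evaluated so that it upper-bounds $\psi(-\lambda)$. Rearranging gives
\begin{align*}
\lambda\left(\EE_Q[\ell]-\EE_P[\ell]\right) \le I(W;Z_i)+\Psi_-(-\lambda).
\end{align*}
Dividing by $\lambda>0$ and taking the infimum over $\lambda$ yields the $i$-th summand; averaging over $i$ gives \eqref{eq:gen-IMI}.

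For the bounded case $\ell\in[0,1]$, Hoeffding's lemma gives that $\ell(\tilde W,\hat Z)$ is $\tfrac12$-sub-Gaussian, i.e.\ $\psi(\lambda)\le \lambda^2/8$ for all $\lambda$. So we may take $\Psi_-(-\lambda)=\lambda^2/8$. Optimizing $\frac{1}{\lambda}\bigl(I+\lambda^2/8\bigr)$ at $\lambda=\sqrt{8I}$ gives $\sqrt{I/2}$, which is \eqref{eq:IMI}. If $I(W;Z_i)=0$, the infimum is $0$, obtained as $\lambda\to 0$.

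The main subtlety is justifying Donsker--Varadhan for the unbounded loss $\ell\ge 0$ and making sure the sign and argument conventions in the CGF assumption line up. For the first, the test function $-\lambda\ell$ is bounded above, so $\EE_Q[e^{-\lambda\ell}]\le 1$ is finite, and the variational inequality holds; if $\EE_P[\ell]$ is infinite the bound is trivial. I also need $\EE_Q[\ell]<\infty$, which is implicit in the CGF assumption being finite. Measurability and integrability otherwise cause no trouble.
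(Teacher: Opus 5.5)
Your argument is correct. It is essentially the original argument of Bu et al.\ (Donsker--Varadhan for $P_{W,Z_i}$ against $P_W\otimes\mu$ with test function $-\lambda\ell$, then Hoeffding's lemma for the bounded case).

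The paper does not prove this theorem itself; it quotes it. The only derivation it gives inside its own framework is Corollary~\ref{cor:IPMI} with $k=n$, and it runs as follows:
\begin{itemize}
\item The $(m,n)$-IPCIMI bound \eqref{eq:m_n_IPCIMI} has constant $\frac{(n+m)^2}{2m^2}$, coming from the random-subset martingale bound of Lemma~\ref{lem:bld-CGF-bound}.
\item Its information term is $I(W;U^{(i)}\mid\hat Z^{(i)}_{[\frac{m}{n}+1]}) = I(W;Z_i)-I(W;\hat Z^{(i)}_{[\frac{m}{n}+1]})$ by Lemma~\ref{lem:ineq-CMI-MI}.
\item Lemma~\ref{lem:lim-MI} uses McDiarmid's inequality and bounded convergence to show the subtracted term vanishes as $m\to\infty$, so the bound tends to \eqref{eq:IMI}.
\end{itemize}

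Your route differs in three ways. It is direct and non-asymptotic. It does not need the assumption $|\Wcal|<\infty$ that Lemma~\ref{lem:lim-MI} relies on. It also gives the general CGF form \eqref{eq:gen-IMI}, which the paper lists in Table~\ref{tab:special_cases} but never derives from its framework; the limiting argument only covers the bounded-loss form. In your proof the concentration step uses the randomness of a fresh pair $(\tilde W,\hat Z)$, whereas the paper's uses the randomness of the subset selection. What the paper's route buys is structural insight. It exhibits the IMI bound as the $m\to\infty$ endpoint of a family of CMI bounds, each with an information term no larger than $I(W;Z_i)$ but a constant larger than $\frac12$.

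One point should be stated rather than smoothed over. Your computation gives $\psi_{\ell(\tilde W,\hat Z)}(-\lambda)\le\Psi_-(\lambda)$. So what you have proved is $\gen\le\frac1n\sum_{i=1}^n\inf_{\lambda>0}\frac1\lambda\bigl(I(W;Z_i)+\Psi_-(\lambda)\bigr)$, which is the correct form (Bu et al.'s $\psi_-^{*-1}$). The $\Psi_-(-\lambda)$ in the displayed statement is a sign slip: for $\lambda>0$ it evaluates $\Psi_-$ on the negative half-line, where the hypothesis gives no control. Your final display silently reverts to $\Psi_-(-\lambda)$; say explicitly that the statement should be read with $\Psi_-(\lambda)$. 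In the bounded case $\Psi_-(x)=x^2/8$ is even, so \eqref{eq:IMI} is unaffected.
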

	
	\subsection{CMI-Based Bounds}
	The aforementioned MI-based bounds may yield vacuous results, particularly when the hypothesis is a deterministic function of the training data. To alleviate this issue, a new supersample setting was first introduced in \cite{steinke2020reasoning} to obtain non-vacuous and tightened generalization bounds. We note that all of the supersample settings introduced in this paper serve as purely analytical tools without altering the learning problem stated in Section~\ref{sec:preliminary}.
	\begin{definition}[Standard Supersample Setting \cite{steinke2020reasoning}] \label{def:standard-supersample}
		Define the supersample set $\hat{Z}_{[2n]} = \{\hat{Z}_i\}_{i=1}^{2n} \in \Zcal^{2n}$ as a collection of $2n$ samples drawn i.i.d. from $\mu$. Let $R_{[n]} = \{R_i\}_{i=1}^{n} \in \{0,1\}^n$ be an independent membership sequence, where each $R_i$ is drawn i.i.d. from $\Bernoulli(\frac{1}{2})$ and used to divide each pair $(\hat{Z}_{i}, \hat{Z}_{i+n})$ into a training sample and a test sample\footnote{We note that the so-called ``test'' sample does not refer to the test data used in practice. The test samples are used as an analytical tool to define the population loss.}. Specifically, $n$ samples are selected to formulate the training set $\hat{Z}_{R_{[n]}} := \{\hat{Z}_{i + R_i n}\}_{i=1}^{n}$, while the remaining half of the samples constitute the test set, and thus the output hypothesis is characterized as $W^{\textrm{std}} \sim P_{W|Z_{[n]}}(\cdot | \hat{Z}_{R_{[n]}})$.
	\end{definition}
	It is shown in the proof of \cite[Theorem 5.1]{steinke2020reasoning} that $(W^{\textrm{std}}, \hat{Z}_{R_{[n]}})$ follows the same distribution as $(W,Z_{[n]})$ defined in Section~\ref{subsec:pbl_setup}. In particular, they have the same (expected) generalization error. We therefore write, whenever it is clever from context, $W$ instead of $W^{\textrm{std}}$ and use $\gen$ to denote the expected generalization error of both $W$ and $W^{\textrm{std}}$. Under this setting, a new generalization bound involving the CMI quantity was established.
	\begin{theorem}[Standard CMI Bound \cite{steinke2020reasoning}] \label{thm:CMI}
		Consider the standard supersample setting in Definition~\ref{def:standard-supersample}. If $\ell(w, z) \in [0, 1]$ for all $w \in \Wcal$ and $z \in \Zcal$, then
		\begin{equation}
			\begin{aligned}
				\gen \leq \sqrt{\frac{2}{n} I\left( W; R_{[n]} | \hat{Z}_{[2n]} \right)}.
			\end{aligned}
			\label{eq:CMI}
		\end{equation}
	\end{theorem}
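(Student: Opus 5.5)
The plan is the standard Donsker--Varadhan argument applied to the supersample, with the Bernoulli selector $R_{[n]}$ as the only randomness to be "decoupled." First I will express the generalization error in terms of the supersample. Since $(W,\hat{Z}_{R_{[n]}})$ has the same law as $(W,Z_{[n]})$, and each test sample $\hat{Z}_{i+(1-R_i)n}$ is independent of $W$ given nothing else (it is a fresh draw from $\mu$ unused by the algorithm), we have
\begin{align*}
\gen = \EE\Big[ \frac{1}{n}\sum_{i=1}^n \big( \ell(W,\hat{Z}_{i+(1-R_i)n}) - \ell(W,\hat{Z}_{i+R_i n}) \big) \Big] = \EE\big[ f(W,R_{[n]},\hat{Z}_{[2n]}) \big],
\end{align*}
where $f(w,r,\hat z) := \frac{1}{n}\sum_{i=1}^n (-1)^{r_i}\big(\ell(w,\hat z_{i+n}) - \ell(w,\hat z_i)\big)$.

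Next I fix a realization $\hat{Z}_{[2n]}=\hat z$ and work under the disintegrated measure. Apply the Donsker--Varadhan variational formula to $P_{W,R_{[n]}|\hat z}$ versus $P_{W|\hat z}\otimes P_{R_{[n]}}$. Here $R_{[n]}$ is independent of $\hat{Z}_{[2n]}$, so its conditional law is still uniform on $\{0,1\}^n$. For every $\lambda\in\RR$ this gives
\begin{align*}
\lambda\,\EE_{W,R|\hat z}[f] \le I^{\hat z}(W;R_{[n]}) + \log \EE_{\tilde W \sim P_{W|\hat z},\, \tilde R \sim \Uniform(\{0,1\}^n)}\big[e^{\lambda f(\tilde W,\tilde R,\hat z)}\big].
\end{align*}
Under the product measure, for each fixed $\tilde W=w$ the summands $(-1)^{\tilde R_i}\Delta_i$, with $\Delta_i:=\ell(w,\hat z_{i+n})-\ell(w,\hat z_i)\in[-1,1]$, are independent, zero-mean and each lies in an interval of length at most $2$. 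Hoeffding's lemma then bounds the log-MGF by $\lambda^2/(2n)$. Consequently $\EE_{W,R|\hat z}[f] \le \frac{I^{\hat z}(W;R_{[n]})}{\lambda} + \frac{\lambda}{2n}$, and optimizing over $\lambda>0$ yields $\EE_{W,R|\hat z}[f]\le \sqrt{\frac{2}{n} I^{\hat z}(W;R_{[n]})}$.

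Finally I take expectation over $\hat{Z}_{[2n]}$. Because $x\mapsto\sqrt{x}$ is concave, Jensen's inequality gives $\EE\big[\sqrt{I^{\hat Z}}\big]\le\sqrt{\EE[I^{\hat Z}]}=\sqrt{I(W;R_{[n]}|\hat{Z}_{[2n]})}$, which is \eqref{eq:CMI}.

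The only delicate step is the first identity: checking that the test-sample loss has expectation $L_\mu$. This holds because, conditional on $R_{[n]}$, the unselected sample is independent of $(W,\hat{Z}_{R_{[n]}})$ and distributed as $\mu$, so the equality uses only the equivalence of the two settings noted after Definition~\ref{def:standard-supersample}. The rest is routine Hoeffding and Jensen.
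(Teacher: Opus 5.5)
Your proof is correct. The paper itself does not prove this theorem (it only cites \cite{steinke2020reasoning}), but your argument follows the same route as the cited source and as the paper's own proof of Theorem~\ref{thm:IPCIMI}: Donsker--Varadhan under the disintegrated measure $P_{W,R_{[n]}|\hat{Z}_{[2n]}=\hat z}$ against $P_{W|\hat z}\otimes P_{R_{[n]}}$, Hoeffding's lemma on the Rademacher-signed loss differences to get the $\lambda^2/(2n)$ bound, optimization over $\lambda>0$, and Jensen over $\hat{Z}_{[2n]}$ (which in passing gives the sharper disintegrated bound $\EE\big[\sqrt{\tfrac{2}{n} I^{\hat{Z}_{[2n]}}(W;R_{[n]})}\big]$).
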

	By nature, the CMI quantity is always bounded, while the MI quantities can be unbounded particularly when $W$ and $Z_i$ are both continuous. With the applications of the individual-sample technique \cite{bu2020tightening} and the disintegration approach \cite{negrea2019information}, the result in \eqref{eq:CMI} can be strengthened to a tighter bound, namely individually conditional individual mutual information (ICIMI) bound, which is shown as below.
	\begin{theorem}[ICIMI Bound \cite{zhou2022individually}] \label{thm:ICIMI}
		Consider the standard supersample setting in Definition~\ref{def:standard-supersample}. Let $(\tilde{W}, \tilde{R}_i)$ be a decoupled pair of $(W, R_i)$ conditioned on $(\hat{Z}_{i}, \hat{Z}_{i+n})$, i.e., $P_{\tilde{W},\tilde{R}_i| \hat{Z}_{i}, \hat{Z}_{i+n}} = P_{W | \hat{Z}_{i}, \hat{Z}_{i+n}} \otimes P_{R_i | \hat{Z}_{i}, \hat{Z}_{i+n}}$, where $P_{R_i | \hat{Z}_{i}, \hat{Z}_{i+n}} = P_{R_i}$ as $R_i \indep (\hat{Z}_{i}, \hat{Z}_{i+n})$. Then,
		\begin{align}
			\gen &\leq \frac{1}{n} \sum_{i=1}^n \EE \left[ \inf_{\lambda > 0} \frac{1}{\lambda} \left( I^{\hat{Z}_i, \hat{Z}_{i+n}}\left( W; R_i \right) + \psi_{\tilde{R}_i(\ell(\tilde{W}, \hat{Z}_i) - \ell(\tilde{W}, \hat{Z}_{i+n})) | \hat{Z}_{i}, \hat{Z}_{i+n}}(\lambda) \right) \right] \label{eq:gen-dis-ICIMI}\\
			&\leq \frac{1}{n} \sum_{i=1}^n \inf_{\lambda > 0} \frac{1}{\lambda} \left( I\left( W; R_i | \hat{Z}_i, \hat{Z}_{i+n} \right) + \EE \left[ \psi_{\tilde{R}_i(\ell(\tilde{W}, \hat{Z}_i) - \ell(\tilde{W}, \hat{Z}_{i+n})) | \hat{Z}_{i}, \hat{Z}_{i+n}}(\lambda) \right] \right). \label{eq:gen-ICIMI}
		\end{align}
		In particular, if $\ell(w, z) \in [0, 1]$ for all $w \in \Wcal$ and $z \in \Zcal$, then
		\begin{align}
			\gen &\leq \frac{1}{n} \sum_{i = 1}^n \sqrt{2 I\left( W; R_i | \hat{Z}_i, \hat{Z}_{i+n} \right)}.
			\label{eq:ICIMI}
		\end{align}
	\end{theorem}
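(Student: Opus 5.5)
The plan is to follow the individual-sample plus disintegration route: write $\gen$ as an average of per-index terms, and apply the Donsker--Varadhan variational formula conditionally on each supersample pair.

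\textbf{Step 1 (decomposition).} In the standard supersample setting of Definition~\ref{def:standard-supersample}, the $i$-th training sample is $\hat{Z}_{i+R_i n}$ and the $i$-th test sample is $\hat{Z}_{i+(1-R_i)n}$. By the distributional equivalence with $(W,Z_{[n]})$,
\begin{align*}
\gen = \frac{1}{n}\sum_{i=1}^n \EE\left[\ell(W,\hat{Z}_{i+(1-R_i)n}) - \ell(W,\hat{Z}_{i+R_i n})\right] .
\end{align*}
Introduce the sign variable $\epsilon_i := 1-2R_i\in\{\pm1\}$. The $i$-th summand then equals
\begin{align*}
\EE\left[\epsilon_i\left(\ell(W,\hat{Z}_i)-\ell(W,\hat{Z}_{i+n})\right)\right].
\end{align*}
The statement writes the same quantity as a function $f(W,R_i)$ of $\tilde R_i$ times the loss difference; I will follow that convention, reading it via the shift described in Step 3.

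\textbf{Step 2 (disintegrated Donsker--Varadhan).} Fix a realization $(z_i,z_{i+n})$. Let $g(w,r)$ be the summand as a function of $(W,R_i)$. Apply Donsker--Varadhan to $P_{W,R_i|z_i,z_{i+n}}$ against the product $P_{W|z_i,z_{i+n}}\otimes P_{R_i}$, which is exactly the law of $(\tilde W,\tilde R_i)$. For every $\lambda>0$ this gives
\begin{align*}
\EE[g(W,R_i)\mid z_i,z_{i+n}] \le \frac{1}{\lambda}\left(I^{z_i,z_{i+n}}(W;R_i) + \psi_{g(\tilde W,\tilde R_i)|z_i,z_{i+n}}(\lambda)\right) + \EE[g(\tilde W,\tilde R_i)\mid z_i,z_{i+n}].
\end{align*}

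\textbf{Step 3 (the decoupled mean vanishes).} This is the key step. Under the product measure, $\tilde R_i$ is an independent $\Bernoulli(\frac{1}{2})$, so the sign $1-2\tilde R_i$ has zero mean independently of $\tilde W$. Hence the last term is zero. The CGF in the statement is centered, so it is invariant to the affine reparametrization between $\tilde R_i$ and the sign, up to the convention the authors use. Taking the infimum over $\lambda$ pointwise, then the expectation over the pair, then averaging over $i$, yields \eqref{eq:gen-dis-ICIMI}.

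\textbf{Step 4 (averaged form and bounded case).} The bound \eqref{eq:gen-ICIMI} follows because $\EE[\inf_\lambda(\cdot)]\le\inf_\lambda\EE[\cdot]$, together with $\EE[I^{\hat Z_i,\hat Z_{i+n}}(W;R_i)] = I(W;R_i|\hat Z_i,\hat Z_{i+n})$.

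For losses in $[0,1]$, the loss difference lies in $[-1,1]$, so $g(\tilde W,\tilde R_i)$ lies in an interval of length $2$. Hoeffding's lemma then gives $\psi\le \lambda^2/2$. Optimizing $\frac{1}{\lambda}(I+\lambda^2/2)$ over $\lambda$ gives $\sqrt{2I}$, which is \eqref{eq:ICIMI}.

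\textbf{Main obstacle.} The delicate point is Step 3: ensuring that the decoupled expectation vanishes under the exact parametrization of the statement. If the literal $\tilde R_i(\cdot)$ form is used rather than the sign form, the mean $\frac12\EE[\ell(\tilde W,z_i)-\ell(\tilde W,z_{i+n})]$ does not vanish by itself. In that case I would instead apply Donsker--Varadhan to the centered sign version and note that the CGFs coincide up to rescaling of $\lambda$. A secondary point is measurability and integrability for Donsker--Varadhan, which is routine under finiteness of the CGF.
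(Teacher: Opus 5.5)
Your route is the paper's own. The paper obtains this theorem as the $(m,k)=(n,n)$ case of Theorem~\ref{thm:IPCIMI}, whose proof follows your steps: per-pair decomposition, Donsker--Varadhan against $P_{W|\hat Z_i,\hat Z_{i+n}}\otimes P_{R_i}$, the vanishing decoupled mean (Lemma~\ref{lem:symmetry}), and $\EE\inf\le\inf\EE$. At $(n,n)$, Lemma~\ref{lem:bld-CGF-bound} gives the CGF bound $\lambda^2\Delta^2/2$, which is exactly your direct Hoeffding bound. Two points need fixing.

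\textbf{The sign in Step 1 is reversed.} Since $\hat Z_{i+R_in}$ is the training sample, $R_i=0$ gives the summand $\ell(W,\hat Z_{i+n})-\ell(W,\hat Z_i)$. The correct identity is therefore $\gen=\frac1n\sum_i\EE[(2R_i-1)D_i]$ with $D_i:=\ell(W,\hat Z_i)-\ell(W,\hat Z_{i+n})$. With $\epsilon_i=1-2R_i$ as you wrote it, Steps 2--3 bound $-\gen$ rather than $\gen$. The repair costs nothing: under the decoupled law, $2\tilde R_i-1$ is a symmetric $\pm1$ variable independent of $\tilde W$, so both signs give the same CGF term.

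\textbf{Discard the fallback in Step 3 and in your last paragraph.} Write $\tilde D:=\ell(\tilde W,\hat Z_i)-\ell(\tilde W,\hat Z_{i+n})$. Then $(2\tilde R_i-1)\tilde D=2\tilde R_i\tilde D-\tilde D$, and $\tilde D$ is random through $\tilde W$. So the signed variable is not an affine function of $\tilde R_i\tilde D$, and its CGF cannot be recovered from $\psi_{\tilde R_i\tilde D}$ by centering or by rescaling $\lambda$.

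Even when $\tilde D\equiv d$ is constant, $\psi_{\tilde R_i d}(\lambda)=\log\cosh(\lambda d/2)=\psi_{(2\tilde R_i-1)d}(\lambda/2)$. This rescaling halves the optimized bound rather than reproducing it.

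The statement's $\tilde R_i(\cdot)$ has to be read as the signed difference $(2\tilde R_i-1)\tilde D$. This is precisely $\Ecal^{(i)}(\tilde W,\hat Z^{(i)}_{[2]},\tilde U^{(i)}_{[1]})$ at $(m,k)=(n,n)$, up to relabeling of the pairs. With that reading and the sign fixed, your argument is complete.
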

	While the above-mentioned CMI-based bounds are established on the standard supersample setting, it was discovered in concurrent works \cite{haghifam2022understanding,rammal2022leave} that the following leave-one-out (LOO) supersample setting enables a different approach to derive new generalization bounds.
	\begin{definition}[LOO Supersample Setting \cite{haghifam2022understanding,rammal2022leave}] \label{def:loo-supersample}
		Define a supersample set $\hat{Z}_{[n+1]} = \{\hat{Z}_i\}_{i=1}^{n+1} \in \Zcal^{n+1}$ as a collection of $n+1$ samples drawn i.i.d. from $\mu$. Let $U \in [n+1]$ be an independent random variable drawn from $\Uniform([n+1])$. The $U$-th sample $\hat{Z}_U$ is left out as the test sample, while the remaining $n$ samples are used for training, and thus the output hypothesis is characterized as $W^{\textrm{LOO}} \sim P_{W|Z_{[n]}}(\cdot | \hat{Z}_{[n+1] \backslash U})$.
	\end{definition}
	It is also proved in \cite[Lemma~7.2]{rammal2022leave} that generalization error incurred by $W^{\textrm{LOO}}$ and $\hat{Z}_{[n+1] \backslash U}$ is identical to $\gen$ defined in \eqref{eq:def-gen}, and thus we still use $W$ and $\gen$ instead as the convention. An LOO-CMI bound was built based on this setting.
	\begin{theorem}[LOO-CMI Bound \cite{haghifam2022understanding,rammal2022leave}] \label{thm:LOO-CMI}
		Consider the LOO supersample setting in Definition~\ref{def:loo-supersample}. If $\ell(w, z) \in [0, 1]$ for all $w \in \Wcal$ and $z \in \Zcal$, then
		\begin{align}
			\gen &\leq \frac{n+1}{n} \sqrt{\frac{1}{2} I\left( W; U | \hat{Z}_{[n+1]} \right)}.
			\label{eq:LOO-CMI}
		\end{align}
	\end{theorem}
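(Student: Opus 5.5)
The plan is to write the expected generalization error in the LOO setting as the covariance between a function of $U$ and a function of $(W,\hat{Z}_{[n+1]})$. After that, a Donsker--Varadhan change of measure conditioned on $\hat{Z}_{[n+1]}$ will turn this covariance into $I(W;U|\hat{Z}_{[n+1]})$.

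First I will express $\gen$ in this setting. The test loss is $\ell(W,\hat{Z}_U)$. The training loss is $\frac{1}{n}\sum_{j\neq U}\ell(W,\hat{Z}_j)$, which equals $\frac{1}{n}\bigl(\sum_{j=1}^{n+1}\ell(W,\hat{Z}_j)-\ell(W,\hat{Z}_U)\bigr)$. Using the identity of \cite[Lemma~7.2]{rammal2022leave},
\begin{align*}
\gen=\EE\Bigl[\tfrac{n+1}{n}\ell(W,\hat{Z}_U)-\tfrac{1}{n}\textstyle\sum_{j=1}^{n+1}\ell(W,\hat{Z}_j)\Bigr]
=\tfrac{n+1}{n}\,\EE\Bigl[\ell(W,\hat{Z}_U)-\tfrac{1}{n+1}\textstyle\sum_{j}\ell(W,\hat{Z}_j)\Bigr].
\end{align*}
Define $g(w,u,\hat z):=\ell(w,\hat z_u)-\frac{1}{n+1}\sum_j \ell(w,\hat z_j)$. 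Let $\tilde U$ be an independent copy of $U$ that is independent of $(W,\hat{Z}_{[n+1]})$. Since $\tilde U\sim\Uniform([n+1])$, we have $\EE[g(W,\tilde U,\hat{Z}_{[n+1]})\mid W,\hat{Z}_{[n+1]}]=0$. So $\gen=\frac{n+1}{n}\bigl(\EE[g(W,U,\hat Z)]-\EE[g(W,\tilde U,\hat Z)]\bigr)$, where the second expectation is zero.

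Next I fix $\hat{Z}_{[n+1]}=\hat z$ and apply the Donsker--Varadhan variational formula to $P_{W,U|\hat z}$ versus $P_{W|\hat z}\otimes P_{U}$. This uses $U\indep \hat{Z}_{[n+1]}$, so $P_{U|\hat z}=P_U$. For every $\lambda$,
\[
\lambda\,\EE_{W,U|\hat z}[g]\le I^{\hat z}(W;U)+\log \EE_{W|\hat z}\EE_{\tilde U}\bigl[e^{\lambda g(W,\tilde U,\hat z)}\bigr].
\]
For each fixed $w$, the variable $g(w,\tilde U,\hat z)$ has mean zero and takes values in an interval of length at most $1$, because $\ell(w,\hat z_{\tilde U})\in[0,1]$ and the subtracted term is a constant. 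Hoeffding's lemma therefore bounds the log-MGF by $\lambda^2/8$. Optimizing over $\lambda>0$ gives $\EE[g\mid \hat z]\le\sqrt{\tfrac12 I^{\hat z}(W;U)}$.

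Finally I take the expectation over $\hat{Z}_{[n+1]}$. Jensen's inequality applied to the concave square root gives $\EE\sqrt{\tfrac12 I^{\hat Z}(W;U)}\le\sqrt{\tfrac12 I(W;U|\hat{Z}_{[n+1]})}$. Multiplying by $\frac{n+1}{n}$ yields \eqref{eq:LOO-CMI}.

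I expect the main point needing care to be the first step, the rewriting of $\gen$ as a centered function of $U$. Once the right centering constant $\frac{1}{n+1}\sum_j\ell(W,\hat Z_j)$ is identified, the range of $g$ in $\tilde U$ is exactly $[0,1]$ shifted, which gives the constant $\frac12$ rather than $2$. The change-of-measure step is then routine.
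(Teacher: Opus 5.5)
Your argument is correct and is essentially the paper's route. The paper obtains \eqref{eq:LOO-CMI} as the $(m,k)=(1,1)$ case of Theorem~\ref{thm:bld-IPCIMI}: it applies Donsker--Varadhan to the LOO cross-validation error $\frac{n+1}{n}g$ conditioned on $\hat{Z}_{[n+1]}$, and then uses Lemma~\ref{lem:bld-CGF-bound}, whose $\min(n,m)=1$ branch (with $C_{n,1}=\frac{n+1}{n}$) reduces to exactly your single Hoeffding step over the held-out index. You simply carry out that special case directly rather than passing through the general martingale-based CGF bound.
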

	
	Stemming from the new supersample construction, the LOO-CMI bound is more friendly to computation as compared to other CMI-based bounds. To compute the LOO-CMI quantity, one only needs to average $n+1$ possible values of $U$, while a number of $2^n$ values of $R_{[n]}$ should be considered for the former CMI quantities. However, the computational efficiency of the LOO-CMI bound comes with the sacrifice of its tightness, which we will show through a specific example in the sequel.

	\subsection{Evaluation of the Existing Bounds}
	In this subsection, we evaluate the tightness of the existing bounds through a specific example, unveiling the important role that the supersample construction plays in developing generalization bounds. For all the considered bounds, the calculations of the involved information measure quantities and decay orders are detailed in Appendix~\ref{app:calculations}.
	
	\begin{example}[A Bernoulli Example] \label{eg:Bernoulli}
		Suppose that all data are sampled from a Bernoulli distribution $\Bernoulli(p)$ with unknown $p \in (0, 1)$. Let $W$ be the output of the empirical risk minimization (ERM) algorithm that gives the average of the training samples, i.e.,
		\begin{align}
			W = \frac{1}{n} \sum_{i=1}^{n} Z_i. 
		\end{align}
		Consider a quadratic loss function $\ell(w, z) = (w - z)^2$, then the true generalization error can be exactly calculated as $\gen = \frac{2p (1-p)}{n}$ (c.f. Appendix~\ref{subapp:true-gen-err}), which is of order $O(\frac{1}{n})$.
	\end{example}
	The loss function clearly satisfies the boundedness assumption. The calculations of the information measure quantities are detailed in Appendix~\ref{app:calculations}. We first evaluate the IMI and ICIMI bounds under Example~\ref{eg:Bernoulli}, with the calculation details postponed to Appendices~\ref{subapp:cal_IMI} and \ref{subapp:cal_ICIMI}.
	\begin{proposition} \label{pro:ICIMI}
		Under Example~\ref{eg:Bernoulli}, the IMI bound in \eqref{eq:IMI} and ICIMI bound are both of order $O \left( \frac{1}{\sqrt{n}} \right)$.
	\end{proposition}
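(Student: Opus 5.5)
We need to show that $I(W;Z_i)=\Theta(1/n)$ and $I(W;R_i\mid \hat Z_i,\hat Z_{i+n})=\Theta(1/n)$, so that the square roots in \eqref{eq:IMI} and \eqref{eq:ICIMI} give order $1/\sqrt{n}$. Since the bounds are averages of identical terms by symmetry, it suffices to treat a single index $i$.

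\textbf{IMI term.} Write $nW=Z_i+S$ with $S=\sum_{j\neq i}Z_j\sim\Binomial(n-1,p)$ independent of $Z_i$. Then
\[
I(W;Z_i)=H(nW)-H(nW\mid Z_i)=H(Z_i+S)-H(S).
\]
We expand $P_{nW\mid Z_i=z}$ around the mixture $P_{nW}$, using
\[
I(W;Z_i)=\EE_{Z_i}\bigl[D_{\textnormal{KL}}(P_{nW\mid Z_i}\,\|\,P_{nW})\bigr]
\]
together with the $\chi^2$ upper bound $D_{\textnormal{KL}}\le \chi^2$.

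The ratio of binomial probabilities is
\[
\frac{P(S=k-1)}{P(S=k)}=\frac{k}{n-k}\cdot\frac{1-p}{p},
\]
which for typical $k\approx np\pm O(\sqrt{n})$ equals $1+O(1/\sqrt{n})$. Hence the likelihood ratio $\frac{dP_{nW\mid z}}{dP_{nW}}-1$ is $O(|k-np|/n)$ on the bulk, and its square has expectation $O(\mathrm{Var}(S)/n^2)=O(1/n)$. Atypical $k$ are handled by a Chernoff/Hoeffding tail bound, which contributes only exponentially small mass. So $I(W;Z_i)=O(1/n)$, and it is of exact order $1/n$ because the second-order (Pinsker-type) lower bound behaves the same way.

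\textbf{ICIMI term.} In the standard setting, $R_i$ matters only when $\hat Z_i\neq\hat Z_{i+n}$, which happens with probability $2p(1-p)$. In that case $nW=S+\mathds{1}\{R_i\text{ selects the }1\}$, where $S\sim\Binomial(n-1,p)$ is independent. Consequently
\[
I(W;R_i\mid \hat Z_i,\hat Z_{i+n})=2p(1-p)\, I(B;B+S),\qquad B\sim\Bernoulli\!\left(\tfrac12\right),
\]
which is the Jensen–Shannon divergence between $\Binomial(n-1,p)$ and its shift by one. The same local likelihood-ratio estimate shows this is $\Theta(1/n)$.

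Plugging both $\Theta(1/n)$ estimates into \eqref{eq:IMI} and \eqref{eq:ICIMI} gives bounds of order $1/\sqrt{n}$.

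\textbf{Main obstacle.} The key step is the uniform control of the binomial likelihood ratio, i.e. showing that $\EE\bigl[(k-np)^2\bigr]/n^2$ is the dominant term while the tails and the endpoints $k\in\{0,n\}$ are negligible. I would first try a direct $\chi^2$ computation, which has a closed form via the binomial moments:
\[
\sum_k \frac{\bigl(P(S=k-1)-P(S=k)\bigr)^2}{P(S=k)}.
\]
Since $\mathrm{KL}\le\chi^2$, this gives the upper bound immediately, with the $\Theta$ lower bound only needed for the claim of exact order.
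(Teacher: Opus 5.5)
Your reductions $I(W;Z_i)=\EE_{Z_i}\bigl[D_{\textnormal{KL}}(P_{nW|Z_i}\|P_{nW})\bigr]$ and $I(W;R_i|\hat Z_i,\hat Z_{i+n})=2p(1-p)\,I(B;B+S)$ are correct, and your route genuinely differs from the paper's. The paper computes each term exactly as an expectation of logarithms of binomial variables. For example, $I(W;Z_i)=H(p)-\log n+p\,\EE[\log(X+1)]+(1-p)\,\EE[\log(\bar X+1)]$ with $X\sim\Binomial(n-1,p)$. It then shows the $O(1)$ terms cancel, using second-order Taylor expansions of $\EE[\log(X+a)]$ with Hoeffding-controlled remainders and $\log\frac{an+b}{cn+d}=\log\frac{a}{c}+O(\frac1n)$. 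You instead bound KL by $\chi^2$ against the mixture, which reduces everything to a binomial second moment. Your route is more elementary and non-asymptotic. The paper's route yields exact formulas, which it reuses for the numerical plots. The proposition only asks for $O(\cdot)$, so your $\Theta$/Pinsker remarks can be dropped.

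The step that fails as written is the computation you single out as the key one. The sum $\sum_k \frac{(P(S=k-1)-P(S=k))^2}{P(S=k)}$ is $\chi^2(P_{S+1}\|P_S)$, and it equals $+\infty$. At $k=n$ the denominator is $P(S=n)=0$, while the numerator involves $P(S=n-1)=p^{n-1}>0$. Both KL divergences between $P_S$ and $P_{S+1}$ are infinite for the same support reason. There is also no closed form in binomial moments, since the ratio $\frac{k(1-p)}{(n-k)p}$ has a pole.

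You must keep the mixture as the reference measure, as in your IMI paragraph; the correct sum has denominator $P(S=k)+P(S=k-1)$. Once you do, no bulk/tail split is needed, because the likelihood ratios are uniformly controlled. Use $P(S=k)=\frac{n-k}{n(1-p)}B_{n,k}(p)$ and $P(S=k-1)=\frac{k}{np}B_{n,k}(p)$.

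For IMI, the ratios against $P_{nW}=\Binomial(n,p)$ are exactly $\frac{k}{np}$ and $\frac{n-k}{n(1-p)}$. Hence $I(W;Z_i)\le p\cdot\frac{1-p}{np}+(1-p)\cdot\frac{p}{n(1-p)}=\frac1n$.

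For ICIMI, take $M=\frac12(P_S+P_{S+1})$. Then $\frac{dP_{S+1}}{dM}(k)-1=-\bigl(\frac{dP_S}{dM}(k)-1\bigr)=\frac{k-np}{np+k(1-2p)}$. The denominator is at least $n\min(p,1-p)$ on $\{0,\dots,n\}$. Therefore $I(B;B+S)\le \frac{\EE_{K\sim M}[(K-np)^2]}{n^2\min(p,1-p)^2}=\frac{(n-2)p(1-p)+\frac12}{n^2\min(p,1-p)^2}=O(\frac1n)$.

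Plugging these into \eqref{eq:IMI} and \eqref{eq:ICIMI} gives $O(1/\sqrt{n})$ for both bounds.
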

	This proposition aligns with the result in the Gaussian location problem studied in \cite{zhou2022individually}. We next evaluate the LOO-CMI bound and arrive at the following proposition. 
	\begin{proposition} \label{pro:LOO-CMI}
		Under Example~\ref{eg:Bernoulli}, the LOO-CMI bound in \eqref{eq:LOO-CMI} converges to a constant, i.e.,
		\begin{equation}
			\begin{aligned}
				\textnormal{RHS of \eqref{eq:LOO-CMI}} = \sqrt{\frac{1}{2} H(p)} + O\left( \frac{1}{\sqrt{n}} \right).
			\end{aligned}
		\end{equation}
	\end{proposition}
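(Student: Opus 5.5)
The plan is to compute the LOO-CMI quantity in Theorem~\ref{thm:LOO-CMI} exactly, in closed form, and then do an asymptotic expansion. Fix a realization $\hat{z}_{[n+1]}$ of the supersample and let $K := \sum_{j=1}^{n+1} \hat{z}_j$ be the number of ones in it. Under Definition~\ref{def:loo-supersample}, the ERM output is
\begin{align*}
W = \frac{1}{n}\Big(K - \hat{z}_U\Big),
\end{align*}
so $W$ is a deterministic function of $(U,\hat{z}_{[n+1]})$. Hence $I^{\hat{z}_{[n+1]}}(W;U) = H(W \mid \hat{Z}_{[n+1]}=\hat{z}_{[n+1]})$.

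Given the supersample, $W$ takes only two values:
\begin{itemize}
\item $(K-1)/n$ with probability $K/(n+1)$ (when the left-out sample is a one);
\item $K/n$ with probability $(n+1-K)/(n+1)$.
\end{itemize}
The degenerate cases $K\in\{0,n+1\}$ give entropy $0$, which is consistent with $H(0)=H(1)=0$. Averaging over the supersample gives the exact identity
\begin{align*}
I\left(W;U\mid \hat{Z}_{[n+1]}\right) = \EE\left[H\!\left(\tfrac{K}{n+1}\right)\right], \qquad K\sim\Binomial(n+1,p).
\end{align*}
This is a Bernstein-polynomial average $\sum_{x} B_{n+1,x}(p)\,H(x/(n+1))$ of the binary entropy function.

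Next I show $\EE[H(K/(n+1))] = H(p) + O(1/\sqrt{n})$. Write $\hat p := K/(n+1)$, and split on the event $A:=\{|\hat p - p|\le \delta\}$ with $\delta := \tfrac12\min(p,1-p)$.
\begin{itemize}
\item On $A$, $\hat p$ stays inside $[p/2,(1+p)/2]\subset(0,1)$. There $H$ is Lipschitz with constant $L_p := \max|H'| = \max|\log\frac{1-x}{x}|$ over that interval. So $|H(\hat p)-H(p)|\le L_p|\hat p-p|$, and by Cauchy--Schwarz
\begin{align*}
\EE\big[|\hat p - p|\big] \le \sqrt{p(1-p)/(n+1)}.
\end{align*}
\item On $A^c$, we have $|H(\hat p)-H(p)|\le \log 2$. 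By Hoeffding, $\PP(A^c)\le 2e^{-2(n+1)\delta^2}$, which is exponentially small.
\end{itemize}
Together these give $|\EE[H(\hat p)]-H(p)| \le C_p/\sqrt{n}$. Concavity and Jensen's inequality also give the one-sided fact $\EE[H(\hat p)]\le H(p)$, though it is not needed.

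Finally, I plug this into the right-hand side of \eqref{eq:LOO-CMI}. Since $p\in(0,1)$, $H(p)>0$, so $\sqrt{x}$ is Lipschitz in a neighbourhood of $\tfrac12 H(p)$. This yields
\begin{align*}
\sqrt{\tfrac12 \EE[H(\hat p)]} = \sqrt{\tfrac12 H(p)} + O(1/\sqrt n).
\end{align*}
Multiplying by $\frac{n+1}{n} = 1+O(1/n)$ only adds an $O(1/n)$ term, which gives the claim.

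The only step I expect to need care is the uniform control of $H$ near the endpoints $0,1$, where $H'$ is unbounded. The truncation at $\delta$ plus the Hoeffding tail handles this, at the cost of a $p$-dependent constant, which is acceptable since $p$ is fixed.
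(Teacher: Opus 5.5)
Your proof is correct, and your exact identity is the same quantity the paper computes. The paper derives the posterior $P_{U\mid W,\hat{Z}_{[n+1]}}$ (equal to $1/(n+1-K)$ or $1/K$). Its first expression for $I(W;U\mid\hat{Z}_{[n+1]})$ is literally $\EE\bigl[\frac{n+1-K}{n+1}\log\frac{n+1}{n+1-K}+\frac{K}{n+1}\log\frac{n+1}{K}\bigr]=\EE[H(K/(n+1))]$, which it then size-biases into \eqref{eq:Ber_LOO} with $X\sim\Binomial(n,p)$. Your observation that $I^{\hat{z}_{[n+1]}}(W;U)=H(W\mid\hat{Z}_{[n+1]}=\hat{z}_{[n+1]})$ reaches that expression more directly.

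The genuine difference is the asymptotic step. The paper Taylor-expands $\EE[\log(X+a)]$ with a tail-controlled remainder (Lemma~\ref{lem:exp_bin_logx}) and then applies Lemma~\ref{lem:log_frac}. This gives the sharper $I(W;U\mid\hat{Z}_{[n+1]})=H(p)+O(1/n)$. You use only that $H$ is Lipschitz on $[p/2,(1+p)/2]$ plus a Hoeffding tail, which gives $H(p)+O(1/\sqrt{n})$. That is exactly enough for the stated claim.

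Your argument is more elementary and self-contained: it needs no binomial moment computations or remainder bounds, and Jensen gives for free that $I(W;U\mid\hat{Z}_{[n+1]})\le H(p)$ for every $n$. The paper's computation buys the better rate. Since $H(p)>0$, it actually shows the bound equals $\sqrt{H(p)/2}+O(1/n)$, which is stronger than stated.

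If you want that rate within your framework, a second-order Taylor expansion of $H$ on $A$ suffices. This works because $\EE[(\hat p-p)\mathds{1}_A]=-\EE[(\hat p-p)\mathds{1}_{A^c}]$ is exponentially small.
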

	The calculations are detailed in Appendix~\ref{subapp:cal_LOO}. The observations from Propositions~\ref{pro:ICIMI} and \ref{pro:LOO-CMI} demonstrate that the LOO-CMI bound can be order-wise looser than the other considered bounds, including the MI-based bounds. It seems like the adoption of the LOO supersample setting may impair the tightness of the resulting bound. Motivated by this, we introduce in the next sections a new supersample framework, from which all of the aforementioned bounds can be recovered and further improved.
	
	\section{A Unified Framework for CMI-Based Bounds}\label{sec:unified-CMI}
	In this section, we construct the supersample set by a new approach to yield a more general CMI-based bound. We then show that some special cases of the unified CMI-bound can either recover the existing results or provide improved bounds. Finally, we conduct comparisons of all the considered bounds for the case of Example~\ref{eg:Bernoulli}, demonstrating the benefits of the proposed bounds.
	
	\subsection{A Unified CMI-Based Bound}
	Without loss of generality, we assume that the number of training samples is $n$ throughout this paper. Now consider a more general setting where $m$ samples are removed from a supersample set of size $n+m$, where $m \in \NN$. 
	
	\begin{definition}[Partitioned Leave-$m$-Out Supersample Setting] \label{def:p-lmo}
		Define a supersample set $\hat{Z}_{[n+m]} := \{\hat{Z}_i\}_{i=1}^{n+m} \in \Zcal^{n+m}$ as a collection of $n+m$ samples drawn i.i.d. from $\mu$. Let $k \in \Kcal$ be an integer that divides $n$ and $m$, where $\Kcal := \{ k \in \NN: n \mod k = 0, m \mod k =0 \}$. Then, each supersample is mapped to $\hat{Z}^{(i)}_j := \hat{Z}_{\frac{n+m}{k}(i-1) + j}$ in an ascending order, where $i \in [k]$ and $j \in [\frac{n+m}{k}]$. Let $\hat{Z}^{(i)}_{[\frac{n+m}{k}]} := \{ \hat{Z}^{(i)}_j \}_{j=1}^{\frac{n+m}{k}}$ denote the $i$-th block of size $\frac{n+m}{k}$, and $\hat{Z}_{[n+m]} = \{\hat{Z}^{(i)}_{[\frac{n+m}{k}]}\}_{i=1}^{k}$ a partition of $\hat{Z}_{[n+m]}$ composed of $k$ disjoint blocks. For the $i$-th block, we further define $U^{(i)}_{[\frac{n}{k}]} := \{U^{(i)}_{j}\}_{j=1}^{\frac{n}{k}}$ as the corresponding training membership vector and $\bar{U}^{(i)}_{[\frac{m}{k}]} := [\frac{n+m}{k}] \backslash U^{(i)}_{[\frac{n}{k}]}$ the test membership vector, where $U^{(i)}_{[\frac{n}{k}]}$ is a subset of size $\frac{n}{k}$ chosen uniformly from $[\frac{n+m}{k}]$ (i.e., $U^{(i)}_{[\frac{n}{k}]}$ is distributed uniformly among all $\big(\begin{smallmatrix} \frac{n+m}{k} \\ \frac{n}{k} \end{smallmatrix}\big)$ possibilities), such that $\frac{n}{k}$ samples are selected as the training subset $Z^{(i)}_{[\frac{n}{k}]} := \hat{Z}^{(i)}_{U^{(i)}_{[\frac{n}{k}]}} = \{ \hat{Z}^{(i)}_{U^{(i)}_{j}} \}_{j=1}^{\frac{n}{k}}$ and the remaining $\frac{m}{k}$ supersamples $\hat{Z}^{(i)}_{\bar{U}^{(i)}_{[\frac{m}{k}]}}$ constitutes the test subset. The training membership of the partition is then defined by the collection of $U^{(i)}_{[\frac{n}{k}]}$'s, i.e., $U_{[n]} := \{U^{(i)}_{[\frac{n}{k}]}\}_{i=1}^{k}$, and the test membership $\bar{U}_{[m]}$ is defined analogously. Accordingly, the whole training set of size $n$ is constructed as $\{ Z^{(i)}_{[\frac{n}{k}]} \}_{i=1}^{k}$ (or equivalently $\hat{Z}_{U_{[n]}} = \{ \hat{Z}^{(i)}_{U^{(i)}_{[\frac{n}{k}]}} \}_{i=1}^k$) and the remaining $m$ supersamples $\hat{Z}_{\bar{U}_{[m]}}$ constitutes the test set. Thus, the output hypothesis is characterized as $W^{\textnormal{L$m$O}} \sim P_{W|Z_{[n]}} (\cdot | \hat{Z}_{U_{[n]}})$. 
	\end{definition}
	
	\begin{figure}[t]
		\centering
		\includegraphics[width=0.8\textwidth]{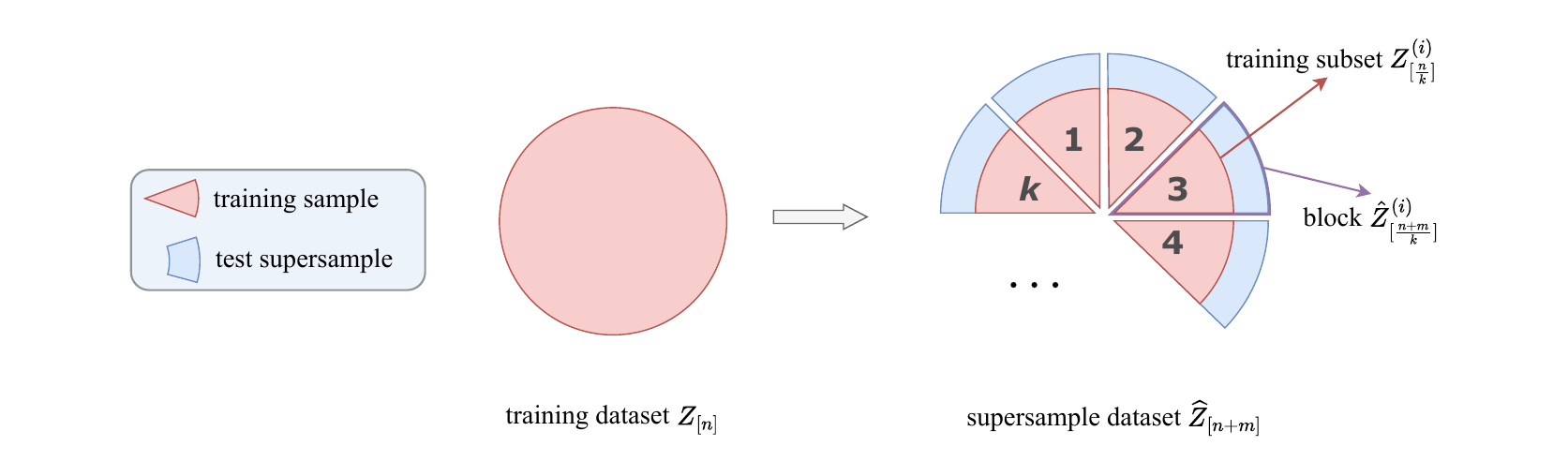}
		\caption{A brief illustration of the proposed partitioned L$m$O supersample setting. In this figure, the area of a shape denotes the number of its data samples, e.g., the training dataset $Z_{[n]}$ (left) has an area of $n$. Composed of $Z_{[n]}$ and some extra test supersamples with a total area of $m$, the new supersample set $\hat{Z}_{[n+m]}$ (right) is equally divided into $k$ blocks. }
		\label{fig:intro_supersample}
	\end{figure}
	\begin{figure}[t]
		\centering
		\includegraphics[width=0.95\textwidth]{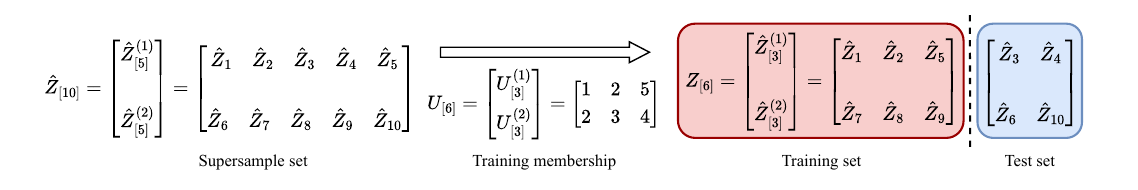}
		\caption{An example of the partitioned sumpersample set, where $n = 6$, $m = 4$ and $k = 2$.}
		\label{fig:general-supersample-set}
	\end{figure}
	
	\begin{remark}
		For the benefit of analytical tractability, the setting in Definition~\ref{def:p-lmo} divides the supersample set $\hat{Z}_{[n+m]}$ into equal-sized blocks. In essence, this setting can be extended by adopting other partitions of $\hat{Z}_{[n+m]}$ with variable-sized blocks. In this case, the blocks remain disjoint by the definition of partition, and thus most of our results can be readily extended.
	\end{remark}
	
	Adopting the convention, we still use $W$ instead of $W^{\textnormal{L$m$O}}$ whenever it is clear from context. For illustrative purposes, the proposed supersample setting is briefly depicted in Fig.~\ref{fig:intro_supersample}. In addition, a specific example of the proposed setting is provided in Fig.~\ref{fig:general-supersample-set}. 
	
	We next argue that a certain error incurred on the introduced supersamples is equivalent to $\gen$ defined in $\eqref{eq:def-gen}$, akin to the cases in other supersample settings. Specifically, we define the following L$m$O cross-validation error to measure the difference between the training and test losses:
	\begin{equation}
		\begin{aligned}
			\Ecal_{\textnormal{L$m$O}} \left( w, \hat{z}_{[n+m]}, u_{[n]} \right) &:= L_{\hat{z}_{\bar{u}_{[m]}}}(w) - L_{\hat{z}_{u_{[n]}}}(w) \\
			&= \sum_{i=1}^{k} \Bigg( \frac{1}{m} \sum_{j \in \bar{u}^{(i)}_{[\frac{m}{k}]}} \ell(w, \hat{z}^{(i)}_j) - \frac{1}{n} \sum_{j \in u^{(i)}_{[\frac{n}{k}]}} \ell(w, \hat{z}^{(i)}_j) \Bigg).
		\end{aligned}\label{eq:lmo-cv-err}
	\end{equation}
	Let define $\Ecal^{(i)}\left( w, \hat{z}^{(i)}_{[\frac{n+m}{k}]}, u^{(i)}_{[\frac{n}{k}]} \right) := L_{\hat{z}_{\bar{u}^{(i)}_{[\frac{m}{k}]}}}(w) - L_{\hat{z}_{u^{(i)}_{[\frac{n}{k}]}}}(w)$ as the cross-validation error with respect to the $i$-th block, then \eqref{eq:lmo-cv-err} can be rewritten as
	\begin{equation}
		\begin{aligned}
			\Ecal_{\textnormal{L$m$O}} \left( w, \hat{z}_{[n+m]}, u_{[n]} \right) = \frac{1}{k} \sum_{i=1}^{k} \Ecal^{(i)}\left( w, \hat{z}^{(i)}_{[\frac{n+m}{k}]}, u^{(i)}_{[\frac{n}{k}]} \right).
		\end{aligned}
	\end{equation}
	Hence, the following lemma shows that the L$m$O cross-validation error, when averaged over the joint distribution of the random variables involved, provides an unbiased estimate of the expected generalization error in \eqref{eq:def-gen}.
	\begin{lemma} \label{lem:IP-lmo-gen-err}
		Consider the partitioned L$m$O supersample setting in Definition~\ref{def:p-lmo}. The expected generalization error defined in \eqref{eq:def-gen} can be rewritten as
		\begin{equation}
			\begin{aligned}
				\gen &= \EE_{W, \hat{Z}_{[n+m]}, U_{[n]}} \left[ \Ecal_{\textnormal{L$m$O}}( W, \hat{Z}_{[n+m]}, U_{[n]} ) \right] \\
				&= \frac{1}{k} \sum_{i=1}^{k} \EE_{W, \hat{Z}^{(i)}_{[\frac{n+m}{k}]}, U^{(i)}_{[\frac{n}{k}]}} \left[ \Ecal^{(i)}\left( W, \hat{Z}^{(i)}_{[\frac{n+m}{k}]}, U^{(i)}_{[\frac{n}{k}]} \right) \right].
			\end{aligned}\label{eq:IP-lmo-gen-err}
		\end{equation}
	\end{lemma}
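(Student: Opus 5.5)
The plan is to prove the first equality by splitting $\Ecal_{\textnormal{L$m$O}}$ into its test-loss and training-loss parts, and then to show that each part has the right expectation. The second equality follows at once from linearity of expectation applied to the block decomposition $\Ecal_{\textnormal{L$m$O}} = \frac{1}{k}\sum_{i}\Ecal^{(i)}$ stated just before the lemma. That decomposition holds because each block contributes $\frac{m}{k}$ test and $\frac{n}{k}$ training terms, and the per-block averages use the normalisations $\frac{k}{m}$ and $\frac{k}{n}$. The only point to note there is that the expectation of $\Ecal^{(i)}$ depends only on the marginal law of $(W, \hat{Z}^{(i)}_{[\frac{n+m}{k}]}, U^{(i)}_{[\frac{n}{k}]})$.

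\textbf{Training term.} First I will check that $(W, \hat{Z}_{U_{[n]}})$ has the same joint law as $(W, Z_{[n]})$ in Section~\ref{subsec:pbl_setup}.
\begin{itemize}
\item $U_{[n]}$ is independent of $\hat{Z}_{[n+m]}$, and $\hat{Z}_{[n+m]}$ is i.i.d.\ $\mu$.
\item Conditioned on any realization $u_{[n]}$, the selected samples $\hat{Z}_{u_{[n]}}$ are $n$ distinct i.i.d.\ draws from $\mu$, so $\hat{Z}_{U_{[n]}} \sim \mu^{\otimes n}$.
\item $W$ is generated from $P_{W|Z_{[n]}}(\cdot|\hat{Z}_{U_{[n]}})$.
\end{itemize}
The ordering of the training set should be fixed by, say, block index and then increasing position; this convention matches Definition~\ref{def:p-lmo}. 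It follows that $\EE[L_{\hat{Z}_{U_{[n]}}}(W)] = \widehat{L}_n$.

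\textbf{Test term.} This is the main step. Fix $u_{[n]}$ and a test index $j \in \bar{u}_{[m]}$. The sample $\hat{Z}_j$ is independent of the training samples $\hat{Z}_{u_{[n]}}$, and $W$ depends on $\hat{Z}_{[n+m]}$ only through $\hat{Z}_{u_{[n]}}$ together with independent algorithmic randomness. Hence, conditioned on $U_{[n]} = u_{[n]}$,
\[ (W, \hat{Z}_j) \sim P_W \otimes \mu, \]
where $P_W$ is the same marginal as in the original problem. This gives
\[ \EE[\ell(W,\hat{Z}_j) \mid U_{[n]}=u_{[n]}] = \EE_W[L_\mu(W)] = L_\mu. \]
Averaging over the $m$ test indices and then over $U_{[n]}$ yields $\EE[L_{\hat{Z}_{\bar{U}_{[m]}}}(W)] = L_\mu$. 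Subtracting the training term gives $\gen$.

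The only subtlety I expect is stating the conditional independence $W \indep \hat{Z}_{\bar{u}_{[m]}} \mid U_{[n]}=u_{[n]}$ cleanly. I would do this by writing the joint density as
\[ P_{U_{[n]}} \otimes \mu^{\otimes(n+m)} \otimes P_{W|Z_{[n]}}(\cdot|\hat{z}_{u_{[n]}}) \]
and integrating out the test coordinates. Nonnegativity of $\ell$, or integrability, justifies exchanging sums and expectations, so all the manipulations above are legitimate.
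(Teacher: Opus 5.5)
Your proposal is correct and follows essentially the same route as the paper's proof. Both condition on $U_{[n]}=u_{[n]}$, identify the training part with $\widehat{L}_n$ because $\hat{Z}_{u_{[n]}}$ has the law of $Z_{[n]}$, identify the test part with $L_\mu$ because the held-out samples are independent of $W$, and obtain the block form by linearity of expectation. Your explicit factorization of the joint law to justify $W \indep \hat{Z}_{\bar{u}_{[m]}}$ given $U_{[n]}=u_{[n]}$ is a slightly cleaner version of the step the paper leaves implicit when it writes $\EE_W \EE_{\hat{Z}_{\bar{u}^{(i)}_{[\frac{m}{k}]}}}$.
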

	
	The proof is given in Appendix~\ref{appsub:lemma1}. Based on this lemma, we are now in a position to derive generalization bounds with new information quantities. The following theorem holds by the linearity of expectation and marginalization, resulting in a new CMI-based bound that is conditioned on an individual block. The complete proof can be found in Appendix~\ref{appsub:theorem6}. We refer to the new bound as the individually partitioned conditional individual mutual information (IPCIMI) bound, which is shown as follows. 
	\begin{theorem}[IPCIMI Bounds] \label{thm:IPCIMI}
		Consider the partitioned L$m$O supersample setting in Definition~\ref{def:p-lmo}. Let $(\tilde{W}, \tilde{U}^{(i)}_{[\frac{n}{k}]})$ be a decoupled pair of $(W, U^{(i)}_{[\frac{n}{k}]})$ conditioned on $\hat{Z}^{(i)}_{[\frac{n+m}{k}]}$, i.e., $P_{\tilde{W},\tilde{U}^{(i)}_{[\frac{n}{k}]} | \hat{Z}^{(i)}_{[\frac{n+m}{k}]}} = P_{W | \hat{Z}^{(i)}_{[\frac{n+m}{k}]}} \otimes P_{U^{(i)}_{[\frac{n}{k}]} | \hat{Z}^{(i)}_{[\frac{n+m}{k}]}}$, where $P_{U^{(i)}_{[\frac{n}{k}]} | \hat{Z}^{(i)}_{[\frac{n+m}{k}]}} = P_{U^{(i)}_{[\frac{n}{k}]}}$ as $U^{(i)}_{[\frac{n}{k}]} \indep \hat{Z}^{(i)}_{[\frac{n+m}{k}]}$. Then,
		\begin{align}
			\gen &\leq \frac{1}{k} \sum_{i=1}^{k} \EE \left[ \inf_{\lambda > 0} \frac{1}{\lambda} \left( I^{\hat{Z}^{(i)}_{[\frac{n+m}{k}]}}\left( W; U^{(i)}_{[\frac{n}{k}]} \right) + \psi_{\mathcal{E}^{(i)}(\tilde{W}, \hat{Z}^{(i)}_{[\frac{n+m}{k}]}, \tilde{U}^{(i)}_{[\frac{n}{k}]}) | \hat{Z}^{(i)}_{[\frac{n+m}{k}]}}(\lambda) \right) \right] \label{eq:gen-dis-IPCIMI}\\
			&\leq \frac{1}{k} \sum_{i=1}^{k} \inf_{\lambda > 0} \frac{1}{\lambda} \left(I\left( W; U^{(i)}_{[\frac{n}{k}]} | \hat{Z}^{(i)}_{[\frac{n+m}{k}]} \right) + \EE \left[ \psi_{\mathcal{E}^{(i)}(\tilde{W}, \hat{Z}^{(i)}_{[\frac{n+m}{k}]}, \tilde{U}^{(i)}_{[\frac{n}{k}]}) | \hat{Z}^{(i)}_{[\frac{n+m}{k}]}}(\lambda) \right] \right). \label{eq:gen-IPCIMI}
		\end{align}
	\end{theorem}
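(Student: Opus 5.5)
The plan is to start from Lemma~\ref{lem:IP-lmo-gen-err}, which writes $\gen$ as the average over blocks $i\in[k]$ of $\EE[\Ecal^{(i)}(W,\hat{Z}^{(i)}_{[\frac{n+m}{k}]},U^{(i)}_{[\frac{n}{k}]})]$. Each block term is then bounded separately using the Donsker--Varadhan variational representation of the KL divergence, applied conditionally on a realization $\hat{Z}^{(i)}_{[\frac{n+m}{k}]}=\hat{z}^{(i)}$. By the tower property,
\[
\EE[\Ecal^{(i)}] = \EE_{\hat{Z}^{(i)}}\Big[\EE_{W,U^{(i)}|\hat{Z}^{(i)}}[\Ecal^{(i)}]\Big].
\]
The inner expectation is taken under the joint conditional law $P_{W,U^{(i)}|\hat{z}^{(i)}}$. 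Here $W$ is obtained by marginalizing over the other blocks and their memberships, which is the "marginalization" step. The decoupled pair $(\tilde W,\tilde U^{(i)})$ has law $P_{W|\hat{z}^{(i)}}\otimes P_{U^{(i)}}$, and this product measure serves as the reference measure. Its divergence from the joint law is exactly $I^{\hat{z}^{(i)}}(W;U^{(i)}_{[\frac{n}{k}]})$, since $U^{(i)}$ is independent of $\hat{Z}^{(i)}$.

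Next, fix $\lambda>0$ and apply Donsker--Varadhan with the function $f=\lambda\,\Ecal^{(i)}(w,\hat{z}^{(i)},u)$:
\[
\lambda\,\EE_{W,U^{(i)}|\hat{z}^{(i)}}[\Ecal^{(i)}] \le I^{\hat{z}^{(i)}}(W;U^{(i)}) + \log \EE_{\tilde W,\tilde U^{(i)}|\hat{z}^{(i)}}\big[e^{\lambda \Ecal^{(i)}(\tilde W,\hat{z}^{(i)},\tilde U^{(i)})}\big].
\]
The log-MGF term splits as $\lambda\,\EE[\Ecal^{(i)}(\tilde W,\hat{z}^{(i)},\tilde U^{(i)})]+\psi_{\Ecal^{(i)}(\tilde W,\hat{z}^{(i)},\tilde U^{(i)})|\hat{z}^{(i)}}(\lambda)$. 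The key observation is that the centering term vanishes. Under the product measure, $\tilde U^{(i)}$ is uniform over subsets of size $\frac nk$ and independent of $\tilde W$. For fixed $w$, averaging over $\tilde u$ gives each index $j$ test probability $\frac{m/k}{(n+m)/k}$ and training probability $\frac{n/k}{(n+m)/k}$. Hence
\[
\EE_{\tilde u}\big[L_{\hat z_{\bar{\tilde u}}}(w)\big]=\EE_{\tilde u}\big[L_{\hat z_{\tilde u}}(w)\big]=\frac{k}{n+m}\sum_j \ell(w,\hat z^{(i)}_j),
\]
using the block-local normalization implicit in $\Ecal^{(i)}$ (means over $\frac mk$ and $\frac nk$ samples). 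So the conditional mean is zero, and dividing by $\lambda$ gives
\[
\EE[\Ecal^{(i)}\mid\hat{z}^{(i)}] \le \frac1\lambda\Big(I^{\hat{z}^{(i)}}(W;U^{(i)})+\psi_{\Ecal^{(i)}(\tilde W,\hat{z}^{(i)},\tilde U^{(i)})|\hat{z}^{(i)}}(\lambda)\Big).
\]

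Since this holds for every $\lambda>0$, we take the infimum inside, then the expectation over $\hat{Z}^{(i)}$, and finally average over $i$. This yields \eqref{eq:gen-dis-IPCIMI}. For \eqref{eq:gen-IPCIMI}, we use $\EE[\inf_\lambda g_\lambda]\le\inf_\lambda\EE[g_\lambda]$ together with $\EE_{\hat Z^{(i)}}[I^{\hat Z^{(i)}}(W;U^{(i)})]=I(W;U^{(i)}|\hat Z^{(i)})$, which is the definition of the conditional mutual information.

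The main obstacle is the measure-theoretic bookkeeping. First, one must check that the conditional law of $W$ given only the $i$-th block is well defined and that the decoupled measure is the right reference, so that the disintegrated divergence matches. Second, the normalization of $\Ecal^{(i)}$ must be consistent with Lemma~\ref{lem:IP-lmo-gen-err}; in particular, the factor $\frac1k$ must reconcile the $\frac1m,\frac1n$ weights of \eqref{eq:lmo-cv-err} with block averages over $\frac mk,\frac nk$ samples. This consistency is exactly what makes the zero-mean property hold. If the CGF is infinite for some $\lambda$, the bound is trivially true. Integrability of $\Ecal^{(i)}$ under the joint law is needed for Donsker--Varadhan; it is assumed, or handled by truncation and monotone convergence.
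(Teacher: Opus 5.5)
Your proposal is correct and is essentially the paper's proof. Both use the block decomposition from Lemma~\ref{lem:IP-lmo-gen-err}, a Donsker--Varadhan step conditional on $\hat{Z}^{(i)}_{[\frac{n+m}{k}]}$ against the decoupled product measure, the zero-mean property of $\Ecal^{(i)}$ under the uniform membership (the paper's Lemma~\ref{lem:symmetry}) to turn the log-MGF into the conditional CGF, and then $\EE[\inf_{\lambda}] \leq \inf_{\lambda} \EE$ for the integrated bound; the paper additionally cites Lemma~\ref{lem:inv_conj} for measurability of the infimum before integrating over $\hat{Z}^{(i)}_{[\frac{n+m}{k}]}$, which falls under the bookkeeping you flagged.
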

	
	\begin{remark}
		It is noted that although the disintegrated bounds are in general tighter than their integrated counterparts (e.g., \eqref{eq:gen-dis-ICIMI} is tighter than \eqref{eq:gen-ICIMI}, and \eqref{eq:gen-dis-IPCIMI} is tighter than \eqref{eq:gen-IPCIMI}), they may suffer from the issue of computational intractability \cite{haghifam2020sharpened,zhou2022individually}. For efficient evaluations and fair comparisons, we mainly focus on the integrated CMI-based bounds (e.g., \eqref{eq:gen-ICIMI} and \eqref{eq:gen-IPCIMI}) unless otherwise stated.
	\end{remark}
	
	The intuition conveyed by the CMI quantity $I( W; U^{(i)}_{[\frac{n}{k}]} | \hat{Z}^{(i)}_{[\frac{n+m}{k}]} )$ is whether the output hypothesis $W$ can infer the supersamples, from a given block, that contribute to the training, and a negative answer to this question implies a good generalization behavior. Next, we show in the following lemma that the CGF of $\mathcal{E}^{(i)}(\tilde{W}, \hat{z}^{(i)}_{[\frac{n+m}{k}]}, \tilde{U}^{(i)}_{[\frac{n}{k}]})$ can be upper bounded when the loss function satisfies some specific assumption. 
	\begin{lemma}\label{lem:bld-CGF-bound}
		Consider the partitioned L$m$O supersample setting in Definition~\ref{def:p-lmo}. Suppose $\sup_{w \in \Wcal} |\ell(w, z) - \ell(w, z')| \leq \Delta$ for any $z, z' \in \Zcal$, then for all $\hat{z}^{(i)}_{[\frac{n+m}{k}]} \in \Zcal^{\frac{n+m}{k}}$, $i \in [k]$ and $\lambda \in \RR$, we have
		\begin{align}
			\psi_{\mathcal{E}^{(i)}(\tilde{W}, \hat{z}^{(i)}_{[\frac{n+m}{k}]}, \tilde{U}^{(i)}_{[\frac{n}{k}]})}(\lambda) \leq \frac{\lambda^2 \Delta^2 C^k_{n,m} \cdot k(n+m)}{8nm}, \label{eq:CGF-bound}
		\end{align}
		where $C^k_{n,m} = \begin{cases}
			\frac{n+m}{\max(n, m)} & \min(n, m) = k, \\
			\frac{n+m}{n+m-k} \cdot \frac{nm}{nm - k \min(n, m)} & \text{otherwise.}
		\end{cases}$.
	\end{lemma}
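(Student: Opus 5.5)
The plan is to reduce \eqref{eq:CGF-bound} to a concentration statement for sampling without replacement from a fixed finite population. Write $N := \frac{n+m}{k}$, $a := \frac{n}{k}$, $b := \frac{m}{k}$. Consistent with the $\frac{1}{k}$ decomposition of $\Ecal_{\textnormal{L$m$O}}$ in Lemma~\ref{lem:IP-lmo-gen-err}, the block error is
\begin{align*}
\Ecal^{(i)} = \frac{k}{m}\sum_{j \in \bar{u}^{(i)}} \ell(w,\hat{z}^{(i)}_j) - \frac{k}{n}\sum_{j \in u^{(i)}} \ell(w,\hat{z}^{(i)}_j).
\end{align*}

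\textbf{Step 1 (condition on $w$).} Fix $w$ and set $x_j := \ell(w,\hat{z}^{(i)}_j)$ and $S := \sum_{j=1}^N x_j$. Since $\tilde U^{(i)}$ is uniform over size-$a$ subsets and independent of $\tilde W$ given the block, each index lies in the test set with probability $b/N$. Hence
\begin{align*}
\EE_{\tilde U}\big[\Ecal^{(i)}\big] = \frac{k}{m}\cdot\frac{b}{N}S - \frac{k}{n}\cdot\frac{a}{N}S = 0 \quad \text{for every } w.
\end{align*}
So the centering in $\psi$ is zero, and $\psi(\lambda) = \log \EE_{\tilde W}\EE_{\tilde U}[e^{\lambda \Ecal^{(i)}}] \le \sup_w \log \EE_{\tilde U}[e^{\lambda\Ecal^{(i)}}]$. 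It therefore suffices to bound the CGF for fixed $w$, uniformly in $w$.

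\textbf{Step 2 (rewrite as a without-replacement sum).} Using $\sum_{\text{test}} = S - \sum_{\text{train}}$, we get
\begin{align*}
\Ecal^{(i)} = \frac{k(n+m)}{nm}\Big(\sum_{j\in\bar u} x_j - \frac{b}{N}S\Big) = -\frac{k(n+m)}{nm}\Big(\sum_{j\in u} x_j - \frac{a}{N}S\Big).
\end{align*}
This expresses $\Ecal^{(i)}$ as a scaled, centered sum of $r := \min(a,b) = \min(n,m)/k$ draws without replacement from $\{x_j\}$, choosing whichever of the two forms uses the smaller subset. By the hypothesis, all $x_j$ lie in an interval of length $\Delta$.

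\textbf{Step 3 (apply a without-replacement concentration bound).} Hoeffding's reduction (without-replacement sampling is dominated in convex order by with-replacement sampling) gives the centered sum a CGF at most $\lambda^2 r\Delta^2/8$. Multiplying by the squared scale $\frac{k^2(n+m)^2}{n^2m^2}$ yields
\begin{align*}
\psi(\lambda) \le \frac{\lambda^2\Delta^2 k(n+m)}{8nm}\cdot\frac{(n+m)\min(n,m)}{nm} = \frac{\lambda^2\Delta^2 k(n+m)}{8nm}\cdot\frac{n+m}{\max(n,m)}.
\end{align*}
This is exactly the case $\min(n,m)=k$, i.e.\ $r=1$, where the single draw makes the bound tight in form.

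\textbf{Main obstacle.} For $r>1$ the claimed constant $\frac{n+m}{n+m-k}\cdot\frac{nm}{nm-k\min(n,m)}$ must come from a sharper without-replacement inequality rather than plain Hoeffding. I would first try Serfling's inequality, which supplies a finite-population factor $1-\frac{r-1}{N}$. If that does not reproduce the stated expression, I would try the Bardenet--Maillard refinement, whose factor has the form $\big(1-\frac{r}{N}\big)\big(1+\frac{1}{r}\big)$-type terms. The remaining work is to track which of these factors, after multiplying by $\frac{(n+m)\min(n,m)}{nm}$ and substituting $r=\min(n,m)/k$ and $N=(n+m)/k$, simplifies to the stated $C^k_{n,m}$ (possibly as an upper bound). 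This bookkeeping is the step I expect to be delicate, since the constant depends on which variant of the inequality is used.
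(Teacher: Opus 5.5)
Your Steps 1--2 are correct, and the first tool you name (Serfling) does close the argument. As written, though, the proposal only settles $\min(n,m)=k$; the deferred bookkeeping is the whole content of the general case, and it works out as follows.

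Plain Hoeffding (convex-order) reduction really is insufficient in general. It yields the constant $\frac{n+m}{\max(n,m)}$ for every $r$, which exceeds $C^k_{n,m}$ whenever $\min(n,m)\ge 3k$. For example, $n=m=10$, $k=1$ gives $2$ versus $\frac{200}{171}$; for $\min(n,m)=2k$ it happens to still suffice.

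Serfling's bound is enough. For $r$ draws without replacement from $N$ numbers in an interval of length $\Delta$, the centered sum has CGF at most $\frac{\lambda^2\Delta^2}{8}\, r\bigl(1-\frac{r-1}{N}\bigr)$ for all $\lambda\in\RR$. With $r=\min(n,m)/k$ and $N=(n+m)/k$, one gets $r\bigl(1-\frac{r-1}{N}\bigr)=\frac{\min(n,m)\,(\max(n,m)+k)}{k(n+m)}$. Multiplying by your squared scale $\frac{k^2(n+m)^2}{n^2m^2}$ gives
\[
\psi_{\Ecal^{(i)}}(\lambda)\le\frac{\lambda^2\Delta^2 k(n+m)}{8nm}\Bigl(1+\frac{k}{\max(n,m)}\Bigr).
\]
When $\min(n,m)=k$ this constant equals $\frac{n+m}{\max(n,m)}$. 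Otherwise, using $\frac{nm}{nm-k\min(n,m)}=\frac{\max(n,m)}{\max(n,m)-k}$, one has $1+\frac{k}{\max(n,m)}<\frac{\max(n,m)}{\max(n,m)-k}\le C^k_{n,m}$. So no Bardenet--Maillard refinement is needed, and you actually obtain a slightly smaller constant than the lemma states.

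The paper proves everything from scratch rather than citing a without-replacement inequality:
\begin{enumerate}
\item It forms the Doob martingale of $\Ecal$ along the sequential draws $U_1,\dots,U_n$.
\item It shows the $i$-th increment has conditional range at most $\frac{\Delta(n+m)}{n(n+m-i)}$.
\item It applies Azuma--Hoeffding to get the factor $\frac{(n+m)^2}{n^2}\sum_{i=m}^{n+m-1} i^{-2}$.
\item It takes the minimum with the mirrored bound ($n\leftrightarrow m$), which plays the role of your smaller-subset choice.
\item It relaxes $i^{-2}\le\frac{1}{i(i-1)}$ and telescopes; the block version follows by substituting $n/k$, $m/k$.
\end{enumerate}
That martingale is exactly the one underlying Serfling's inequality, so the two routes share the same core. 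The difference is the last step. Serfling's estimate $(N-r)^2\sum_{j=N-r}^{N-1} j^{-2}\le \frac{r(N-r+1)}{N}$ (a short induction on $r$) is tighter than the telescoping relaxation. That relaxation is where the paper's extra factors $\frac{n+m}{n+m-k}$ and $\frac{\max(n,m)}{\max(n,m)-k}$ come from. Your route is shorter given the citation and gives a sharper constant; the paper's is self-contained.
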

	The proof of the above lemma is provided in Appendix~\ref{appsub:lemma2}. Combining this lemma with Theorem~\ref{thm:IPCIMI} directly yields a specific upper bound for bounded loss functions.
	
	\begin{theorem}[IPCIMI Bounds for Bounded Loss Difference] \label{thm:bld-IPCIMI}
		Consider the partitioned L$m$O supersample setting in Definition~\ref{def:p-lmo}. Suppose $\sup_{w \in \Wcal} |\ell(w, z) - \ell(w, z')| \leq \Delta$ for all $z, z' \in \Zcal$, then
		\begin{align}
			\gen
			%		 &\leq \frac{1}{k} \sum_{i=1}^{k} \EE_{\hat{Z}^{(i)}_{[\frac{n+m}{k}]}} \left[ \sqrt{\frac{\Delta^2 C^k_{n,m} \cdot k(n+m)}{2nm} I^{\hat{Z}^{(i)}_{[\frac{n+m}{k}]}}\left( W; U^{(i)}_{[\frac{n}{k}]} \right)} \right] \label{eq:bld-dis-IPCIMI} \\
			&\leq \frac{1}{k} \sum_{i=1}^{k} \sqrt{\frac{\Delta^2 C^k_{n,m} \cdot k(n+m)}{2nm} I\left( W; U^{(i)}_{[\frac{n}{k}]} | \hat{Z}^{(i)}_{[\frac{n+m}{k}]} \right)},
			\label{eq:bld-IPCIMI}
		\end{align}
	\end{theorem}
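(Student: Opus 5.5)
The plan is to combine the integrated IPCIMI bound \eqref{eq:gen-IPCIMI} of Theorem~\ref{thm:IPCIMI} with the uniform CGF bound of Lemma~\ref{lem:bld-CGF-bound}, and then optimize over $\lambda$ in closed form. Write $c := \frac{\Delta^2 C^k_{n,m}\, k(n+m)}{8nm}$ for brevity. No new probabilistic argument is needed; all the work is already in the two earlier results.

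First, fix a block index $i\in[k]$. Lemma~\ref{lem:bld-CGF-bound} holds for every realization $\hat{z}^{(i)}_{[\frac{n+m}{k}]}$ and every $\lambda\in\RR$. Therefore the conditional CGF $\psi_{\mathcal{E}^{(i)}(\tilde{W}, \hat{Z}^{(i)}_{[\frac{n+m}{k}]}, \tilde{U}^{(i)}_{[\frac{n}{k}]}) | \hat{Z}^{(i)}_{[\frac{n+m}{k}]}}(\lambda)$ is at most $c\lambda^2$ pointwise, and so is its expectation over $\hat{Z}^{(i)}_{[\frac{n+m}{k}]}$.

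The one point to check is that the Lemma's CGF, computed under the decoupled law $P_{\tilde W|\hat z^{(i)}}\otimes P_{U^{(i)}}$, is exactly the conditional CGF appearing in Theorem~\ref{thm:IPCIMI}. This holds by the definition of the decoupled pair together with $U^{(i)}_{[\frac{n}{k}]}\indep \hat{Z}^{(i)}_{[\frac{n+m}{k}]}$. In the Lemma, the bound comes only from the randomness of $\tilde U^{(i)}$, uniformly over $\tilde w$, and averaging over $\tilde W$ preserves it by Jensen/convexity of the exponential moment.

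Substituting into \eqref{eq:gen-IPCIMI}, each summand is at most $\inf_{\lambda>0}\frac{1}{\lambda}\big(I + c\lambda^2\big) = \inf_{\lambda>0}\big(I/\lambda + c\lambda\big)$, where $I := I( W; U^{(i)}_{[\frac{n}{k}]} | \hat{Z}^{(i)}_{[\frac{n+m}{k}]} )$. The minimizer is $\lambda^\ast=\sqrt{I/c}$, which gives the value $2\sqrt{cI}=\sqrt{4cI}=\sqrt{\frac{\Delta^2 C^k_{n,m} k(n+m)}{2nm} I}$.

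Two edge cases need a word. If $I=0$, the infimum is $0$, approached as $\lambda\to 0^+$, so the formula still holds. If $I=\infty$, the bound is trivial. Averaging over $i$ with the factor $\frac1k$ then yields \eqref{eq:bld-IPCIMI}.

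The same substitution applied to \eqref{eq:gen-dis-IPCIMI}, with the infimum taken inside the expectation, gives the disintegrated version with the square root inside $\EE_{\hat{Z}^{(i)}}$. The integrated version also follows from the disintegrated one by Jensen's inequality for the concave square root.

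The main obstacle is just the bookkeeping identification in the first step: confirming that the conditional CGF in Theorem~\ref{thm:IPCIMI} is exactly the object bounded in Lemma~\ref{lem:bld-CGF-bound}, for every conditioning value. Once that is settled, the rest is the elementary optimization $\min_{\lambda>0}(a/\lambda+b\lambda)=2\sqrt{ab}$.
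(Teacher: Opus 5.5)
Your proposal is correct and is essentially the paper's proof. The paper also substitutes the uniform CGF bound of Lemma~\ref{lem:bld-CGF-bound} into Theorem~\ref{thm:IPCIMI} in its disintegrated form \eqref{eq:gen-dis-IPCIMI}, optimizes $\lambda$ pointwise, and then averages over $\hat{Z}^{(i)}_{[\frac{n+m}{k}]}$ (implicitly using Jensen for the square root), which is your second route; your direct substitution into \eqref{eq:gen-IPCIMI} gives the same bound. One minor quibble: extending the per-$w$ moment bound to the average over $\tilde{W}$ needs only monotonicity of expectation and the zero mean from Lemma~\ref{lem:symmetry}, not Jensen, and Lemma~\ref{lem:bld-CGF-bound} as stated already covers this step.
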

	
	%\begin{remark}
	%	Under the partitioned supersample setting, we can obtain another type of CMI-based bound by marginalizing out only the irrelevant $U^{(i)}_{[\frac{n}{k}]}$'s. As such, the new bound, referred to as the individually partitioned conditional individual mutual information (IPCMI) bound, has its CMI quantity conditioning on the whole supersample set, which is given by
	%	\begin{align}
		%		\gen &\leq \frac{1}{k} \sum_{i=1}^{k} \EE_{\hat{Z}_{[n+m]}} \left[ \sqrt{\frac{\Delta^2 C^k_{n,m} \cdot k(n+m)}{2nm} I^{\hat{Z}_{[n+m]}} \left( W; U^{(i)}_{[\frac{n}{k}]} \right)} \right] \label{eq:dis-IPCMI} \\
		%		&\leq \frac{1}{k} \sum_{i=1}^{k} \sqrt{\frac{\Delta^2 C^k_{n,m} \cdot k(n+m)}{2nm} I\left( W; U^{(i)}_{[\frac{n}{k}]} | \hat{Z}_{[n+m]} \right)}. 
		%		\label{eq:delta-IPCMI}
		%	\end{align}
	%	In comparison with the IPCIMI bound in Theorem~\ref{thm:IPCIMI}, \eqref{eq:delta-IPCMI} is looser than \eqref{eq:bld-IPCIMI}, which holds with $I( W; U^{(i)}_{[\frac{n}{k}]} | \hat{Z}^{(i)}_{[\frac{n+m}{k}]} ) \leq I( W; U^{(i)}_{[\frac{n}{k}]} | \hat{Z}_{[n+m]} )$ by applying Lemma~\ref{lem:more-info-by-condi} with $A = W$, $B = U^{(i)}_{[\frac{n}{k}]}$, $C = \hat{Z}^{(i)}_{[\frac{n+m}{k}]}$ and $D = \hat{Z}_{[n+m]} \backslash \hat{Z}^{(i)}_{[\frac{n+m}{k}]}$. However, the comparison between \eqref{eq:dis-IPCMI} and \eqref{eq:bld-IPCIMI} is non-trivial, implying that the disintegrated IPCMI bound in \eqref{eq:dis-IPCMI} can be a potentially tighter result.
	%\end{remark}
	
	We note that the assumption in Theorem~\ref{thm:bld-IPCIMI} requires the loss difference to be bounded, which is a weaker condition than the bounded loss assumption widely adopted in the previous bounds, e.g., \eqref{eq:CMI}, \eqref{eq:ICIMI} and \eqref{eq:LOO-CMI}. In particular, suppose that $\ell(\cdot, \cdot) \in [0,1]$, we have $\Delta \leq 1$ and \eqref{eq:bld-IPCIMI} can be specialized to
	\begin{equation}
		\begin{aligned}
			\gen \leq \frac{1}{k} \sum_{i=1}^{k} \sqrt{\frac{C^k_{n,m} \cdot k(n+m)}{2nm} I\left( W; U^{(i)}_{[\frac{n}{k}]} | \hat{Z}^{(i)}_{[\frac{n+m}{k}]} \right)}.
		\end{aligned}\label{eq:IPCIMI}
	\end{equation}
	
	For clarity, we use the prefix $(m ,k)$ ahead of the IPCIMI bound to indicate the joint effect of both free parameters $m$ and $k$, i.e., \eqref{eq:gen-IPCIMI} is referred to as the $(m ,k)$-IPCIMI bound. Thus, the strongest result can be achieved by optimizing over $(m,k)$ such that
	\begin{align}
		\gen \leq \inf_{m \in \NN, k \in \Kcal} \left\{ \textnormal{RHS of \eqref{eq:gen-IPCIMI}} \right\}. \label{eq:optimal-IPCIMI}
	\end{align}
	We note that finding the optimal $(m, k)$ for the above optimizer is intractable. The challenge arises when the bound is calculated for each possible $(m, k)$ through a brute-force search process. However, we will show in the next subsection that some specific $(m, k)$ pair values allow to recover most of the classical MI- and CMI-based bounds, implying that \eqref{eq:optimal-IPCIMI} is at least no looser than the existing results. Moreover, it will be further explored that some other specifications of $(m, k)$ are sufficient to derive new bounds even sharper than the existing bounds.
	
	\subsection{Special Cases}\label{subsec:IPCIMI-case}
	\begin{table*}[t]
		\centering
		\caption{Various Bounds Specialized from the Partitioned Leave-$m$-Out Supersample Setting}
		\begin{threeparttable}
			\begin{tabular}{|c|c|c|c|} \hline
				Setting & Bound & General form & Bounded loss\tnote{$\dag$} \\ \hline\hline
				\multirow{2}{*}{$(m, k)$} & IPCIMI & \eqref{eq:gen-IPCIMI} & \eqref{eq:IPCIMI} \\ \cline{2-4}
				& SIPCIMI & -- & \eqref{eq:SIPCIMI} \\ \hline
				\multirow{2}{*}{$(m, 1)$} & L$m$O-CMI & -- & \eqref{eq:lmo-CMI} \\ \cline{2-4}
				& L$m$O-SCMI & -- & \eqref{eq:lmo-SCMI} \\ \hline
				\multirow{2}{*}{$(1, 1)$} & LOO-CMI \cite{rammal2022leave} & -- & \eqref{eq:LOO-CMI} \\ \cline{2-4}
				& LOO-SCMI & -- & \eqref{eq:loo-SCMI} \\ \hline
				\multirow{2}{*}{$(n, n)$} & ICIMI \cite{zhou2022individually} & \eqref{eq:gen-ICIMI} & \eqref{eq:ICIMI} \\ \cline{2-4}
				& SICIMI \cite{dongexactly} & -- & \eqref{eq:SICIMI} \\ \hline
				$(\infty, k)$ & IPMI & -- & \eqref{eq:IPMI} \\ \hline
				$(\infty, 1)$ & MI \cite{xu2017information} & -- & \eqref{eq:MI} \\ \hline
				$(\infty, n)$ & IMI \cite{bu2020tightening} & \eqref{eq:gen-IMI} & \eqref{eq:IMI} \\ \hline
				\multirow{2}{*}{$(m, m)$} & LOFO-CMI & \eqref{eq:gen-lofo} & \eqref{eq:lofo-CMI} \\ \cline{2-4}
				& LOFO-SCMI & -- & \eqref{eq:lofo-SCMI} \\ \hline
			\end{tabular}
			\begin{tablenotes} \footnotesize
				\item[$\dag$] Bounds under the assumption of $\ell(\cdot, \cdot) \in [0,1]$.
			\end{tablenotes} 
		\end{threeparttable} 
		\label{tab:special_cases}
	\end{table*}
	
	Now, we specify some values of $(m, k)$ and study the relationships between the special cases and the previously proposed bounds. Table~\ref{tab:special_cases} summarizes various special cases of the proposed $(m, k)$-IPCIMI bound \eqref{eq:gen-IPCIMI} when taking different pair values of $(m, k)$.
	
	First, thanks to the construction of the partitioned supersample set, it can be easily verified that the ICIMI bound \eqref{eq:gen-ICIMI} can be immediately recovered by directly plugging $(m, k) = (n, n)$ into \eqref{eq:gen-IPCIMI}, and meanwhile \eqref{eq:ICIMI} can also be recovered from \eqref{eq:IPCIMI} when the loss function is bounded within $[0,1]$. Moreover, still consider bounded loss functions, then the LOO-CMI bound \eqref{eq:LOO-CMI} can be recovered from \eqref{eq:bld-IPCIMI} by setting $(m, k) = (1, 1)$. We also note that, by applying the same parameter setting to Theorem~\ref{thm:IPCIMI}, one can directly produce a general LOO bound beyond the bounded loss assumption. Although such a bound was not reported in \cite{haghifam2022understanding,rammal2022leave}, it can be readily established from Theorem~\ref{thm:IPCIMI} and is therefore omitted. 
	
	We further argue that the proposed IPCIMI bound is also related to MI-based bounds when $m \to \infty$, which bridges the gap between two seemingly distant settings. To see this, the following result first demonstrates that the gap between the CMI in \eqref{eq:IPCIMI} and its counterpart MI diminishes as $m$ goes to infinity.
	\begin{lemma} \label{lem:lim-MI}
		Suppose $k, n$ are constant and $|\Wcal| < \infty$. When $m \to \infty$, for each $i \in [k]$ we have $\lim\limits_{m \to \infty} I\left( W; U^{(i)}_{[\frac{n}{k}]} | \hat{Z}^{(i)}_{[\frac{n+m}{k}]} \right) = I\left( W; Z^{(i)}_{[\frac{n}{k}]} \right)$.
	\end{lemma}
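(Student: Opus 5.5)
Fix $i$ and write $N := \frac{n+m}{k}$, $r := \frac{n}{k}$, $\hat Z := \hat{Z}^{(i)}_{[N]}$ and $U := U^{(i)}_{[r]}$. The plan is to express the CMI as a difference of entropies of the finite-valued $W$ and then show the conditional term converges. Since $|\Wcal|<\infty$, every entropy below is finite and bounded by $\log|\Wcal|$, so
\begin{align*}
I(W;U|\hat Z) = H(W|\hat Z) - H(W|U,\hat Z), \qquad I(W;Z^{(i)}_{[r]}) = H(W) - H(W|Z^{(i)}_{[r]}).
\end{align*}

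The second terms agree exactly. Given $(U,\hat Z)$, the block's training subset $Z^{(i)}_{[r]}=\hat Z_U$ is fixed, and $W$ depends on $\hat Z$ only through $\hat Z_U$ and the other blocks. The other blocks are independent of block $i$, and the test samples of block $i$ do not enter $W$. Also, $\hat Z_U$ is i.i.d.\ $\mu$ and independent of $U$. Hence $W - Z^{(i)}_{[r]} - (U,\hat Z)$ is a Markov chain, and $H(W|U,\hat Z)=H(W|Z^{(i)}_{[r]})$ for every $m$. It therefore suffices to show $H(W|\hat Z)\to H(W)$, or equivalently $I(W;\hat Z)\to 0$ as $m\to\infty$.

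For this step, $P_{W|\hat Z=\hat z}(w)$ equals the average over all $\binom{N}{r}$ size-$r$ subsets $S$ of $g_w(\hat z_S)$, where $g_w(z_{[r]}) := \EE[P_{W|Z_{[n]}}(w|\cdot) \mid Z^{(i)}_{[r]}=z_{[r]}]$ averages over the other blocks. This is a U-statistic of order $r$ with a bounded kernel $g_w\in[0,1]$ and a growing sample size $N\to\infty$ (with $k,n$ fixed). Its mean is $P_W(w)$, and by Hoeffding's variance bound for U-statistics its variance is at most $r/N$, i.e.\ $O(1/m)$. So $P_{W|\hat Z}(w)\to P_W(w)$ in $L^2$, hence in probability, for each of the finitely many $w$. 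Because $x\mapsto -x\log x$ is bounded and uniformly continuous on $[0,1]$, bounded convergence gives
\begin{align*}
\EE\Big[\sum_w -P_{W|\hat Z}(w)\log P_{W|\hat Z}(w)\Big] \to \sum_w -P_W(w)\log P_W(w),
\end{align*}
that is, $H(W|\hat Z)\to H(W)$.

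The main obstacle is this U-statistic concentration step. The key idea is that conditioning on a very large block reveals almost nothing about $W$, because only a vanishing fraction of it is used for training. The finiteness of $\Wcal$ is what allows passing from convergence of the probability vector to convergence of entropy. Combining the two parts gives $I(W;U|\hat Z) = H(W|\hat Z)-H(W|Z^{(i)}_{[r]}) \to H(W)-H(W|Z^{(i)}_{[r]}) = I(W;Z^{(i)}_{[r]})$.
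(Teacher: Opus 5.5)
Your proof is correct and follows the same skeleton as the paper's. It starts from the identity $I(W;U^{(i)}_{[\frac{n}{k}]}\mid\hat{Z}^{(i)}_{[\frac{n+m}{k}]}) = I(W;Z^{(i)}_{[\frac{n}{k}]}) - I(W;\hat{Z}^{(i)}_{[\frac{n+m}{k}]})$, which you obtain via conditional entropies and the Markov chain $W - Z^{(i)}_{[\frac{n}{k}]} - (U^{(i)}_{[\frac{n}{k}]},\hat{Z}^{(i)}_{[\frac{n+m}{k}]})$ and the paper obtains via the chain rule. Both then show $P_{W\mid\hat{Z}}(w)$ concentrates around $P_W(w)$, since each supersample enters only a fraction $\frac{n}{n+m}$ of the subset average, and finish with bounded convergence using $|\Wcal|<\infty$. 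Your use of $-x\log x$ on $[0,1]$ is actually cleaner than the paper's remark that the log-ratio itself is bounded.

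The only substantive difference is the concentration tool. You use an $L^2$ bound directly on block $i$, whereas the paper applies McDiarmid's inequality to $P_{W=w\mid\hat{Z}_{[n+m]}}$ on the whole supersample set and then squeezes via $I(W;\hat{Z}^{(i)}_{[\frac{n+m}{k}]})\le I(W;\hat{Z}_{[n+m]})$. If the algorithm is order-sensitive, your kernel $g_w$ is not symmetric, so Hoeffding's $r/N$ bound does not apply verbatim. It is enough to note that $g_w(\hat{Z}_S)$ and $g_w(\hat{Z}_{S'})$ are uncorrelated for disjoint $S,S'$, which bounds the variance by $\frac{r^2}{4N}=O(\frac{1}{m})$.
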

	The proof of this result is given in Appendix~\ref{appsub:lemma3}. This lemma shows that the CMI quantity will approach the MI quantity in the limit as $m \to \infty$, in which case the leading coefficient under the square-root of \eqref{eq:IPCIMI} is of order $\frac{k}{2n} + O(\frac{1}{m})$ (See Appendix~\ref{subsec:proof-IPMI} for complete calculations). Collecting all the facts together, we observe that the IPCIMI bound for bounded loss is asymptotically equivalent to a new MI-based bound, as shown in the corollary below.
	\begin{corollary}[($(\infty, k)$-IPCIMI) IPMI Bound] \label{cor:IPMI}
		Consider the partitioned L$m$O supersample setting in Definition~\ref{def:p-lmo}. Assume $\ell(\cdot, \cdot) \in [0,1]$ and $|\Wcal| < \infty$. When $m \to \infty$, the IPCIMI bound \eqref{eq:IPCIMI} will reduce to an individually partitioned mutual information (IPMI) bound:
		\begin{align}
			\gen \leq \frac{1}{k} \sum_{i=1}^{k} \sqrt{\frac{k}{2n} I\left( W; Z^{(i)}_{[\frac{n}{k}]} \right)}.
			\label{eq:IPMI}
		\end{align}
	\end{corollary}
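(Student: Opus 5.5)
Since \eqref{eq:IPCIMI} holds for every admissible pair $(m,k)$ with $k \in \Kcal$, the plan is to take the limit of its right-hand side as $m \to \infty$ along multiples of $k$, with $k$ and $n$ fixed. The left-hand side $\gen$ does not depend on $m$, so it remains bounded by this limit. The right-hand side is a finite sum of $k$ terms. Each term is the square root of a coefficient times a CMI quantity. It therefore suffices to compute the limit of each factor separately and then invoke continuity of the square root.

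For the CMI factor, I will apply Lemma~\ref{lem:lim-MI} directly. Since $|\Wcal|<\infty$ and $k,n$ are constant, it gives $I( W; U^{(i)}_{[\frac{n}{k}]} | \hat{Z}^{(i)}_{[\frac{n+m}{k}]} ) \to I( W; Z^{(i)}_{[\frac{n}{k}]} )$ for each $i \in [k]$. The limit is finite, because it is at most $\log|\Wcal|$. Hence the product of this factor with a convergent coefficient converges to the product of the limits.

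For the coefficient $\frac{C^k_{n,m} \cdot k(n+m)}{2nm}$, I will take $m$ large enough that $m > n \geq k$. Then $\min(n,m)=n$, and the first branch of the definition of $C^k_{n,m}$ applies only in the degenerate case $n=k$. In that case $C^k_{n,m} = \frac{n+m}{m} \to 1$. Otherwise, $C^k_{n,m} = \frac{n+m}{n+m-k}\cdot\frac{nm}{nm-kn} = \frac{n+m}{n+m-k}\cdot\frac{m}{m-k}$, which also tends to $1$. In both cases $\frac{k(n+m)}{2nm} = \frac{k}{2n} + \frac{k}{2m} \to \frac{k}{2n}$. So the coefficient equals $\frac{k}{2n} + O(\frac{1}{m})$ and converges to $\frac{k}{2n}$.

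Combining the two limits and using continuity of $x\mapsto\sqrt{x}$ on $[0,\infty)$ gives the limit of each summand. Passing to the limit in the inequality $\gen \le \mathrm{RHS}(m)$, which holds for all valid $m$, then yields \eqref{eq:IPMI}. The main obstacle is Lemma~\ref{lem:lim-MI} itself, which is where finiteness of $\Wcal$ is used to justify the interchange of limit and mutual information. Since that lemma may be assumed here, the only remaining care points are the case split in $C^k_{n,m}$ and the requirement that $m$ range over multiples of $k$ so that $k \in \Kcal$ throughout.
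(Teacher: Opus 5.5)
Your proposal is correct and follows essentially the same route as the paper. It shows that the coefficient $\frac{C^k_{n,m}\, k(n+m)}{2nm}$ tends to $\frac{k}{2n}$ via the same case split ($n=k$ versus $n\neq k$ once $m>n$), applies Lemma~\ref{lem:lim-MI} to the CMI factor, and passes the limit through the square root. Your extra remarks—taking $m$ along multiples of $k$, and noting that the limiting MI is at most $\log|\Wcal|$—simply make explicit details the paper leaves implicit.
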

	As straightforward results of Corollary~\ref{cor:IPMI}, the MI bound \eqref{eq:MI} and the IMI bound \eqref{eq:IMI} can be immediately obtained by setting $k=1$ and $k = n$, respectively. It is verified that the IPMI bound can achieve the same decay order $O(\frac{1}{\sqrt{n}})$ with the ICIMI bound under Example~\ref{eg:Bernoulli} (see Appendix~\ref{app:calculations}), implying that partitioning the supersample set may sharpen the resulting bound.
	
	Apart from providing connections with the existing bounds, the proposed unified CMI framework also produce new and tighter bounds. For instance, simply setting $k=1$ leads to the following L$m$O-CMI bound.
	\begin{corollary}[($(m, 1)$-IPCIMI) L$m$O-CMI Bound] \label{cor:lmo-CMI}
		Consider the partitioned L$m$O supersample setting in Definition~\ref{def:p-lmo}. Suppose that $\sup_{w \in \Wcal} |\ell(w, z) - \ell(w, z')| \leq \Delta$ for all $z, z' \in \Zcal$. When $k=1$, the IPCIMI bound \eqref{eq:bld-IPCIMI} is specialized as
		\begin{align}
			\gen \leq \sqrt{\frac{\Delta^2 C_{n,m} \cdot (n+m)}{2nm} I\left( W; U_{[n]} | \hat{Z}_{[n+m]} \right)},
			\label{eq:lmo-CMI}
		\end{align}
		where $C_{n,m} = \begin{cases}
			\frac{n+m}{\max(n, m)} &  \min(n, m) = 1, \\
			\frac{n+m}{n+m-1} \cdot \frac{nm}{nm - \min(n, m)} & \text{otherwise.}
		\end{cases}$.
	\end{corollary}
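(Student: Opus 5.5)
This is a direct specialization of Theorem~\ref{thm:bld-IPCIMI}, so the plan is to set $k=1$ there and simplify. First, check that $k=1$ is admissible: $1$ divides every $n$ and $m$, so $1 \in \Kcal$ for all $n,m \in \NN$. Under $k=1$ the partitioned L$m$O setting of Definition~\ref{def:p-lmo} has a single block. Its entries are $\hat{Z}^{(1)}_j = \hat{Z}_j$ for $j \in [n+m]$, so $\hat{Z}^{(1)}_{[\frac{n+m}{k}]} = \hat{Z}_{[n+m]}$. The membership vector is $U^{(1)}_{[\frac{n}{k}]} = U_{[n]}$, a uniformly chosen $n$-subset of $[n+m]$.

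Next, substitute into \eqref{eq:bld-IPCIMI}. The outer average $\frac{1}{k}\sum_{i=1}^k$ collapses to its single term. The mutual information becomes $I(W; U_{[n]} \mid \hat{Z}_{[n+m]})$. The factor $k(n+m)$ under the square root becomes $n+m$. The hypothesis $\sup_w|\ell(w,z)-\ell(w,z')|\le\Delta$ is exactly the one required by Theorem~\ref{thm:bld-IPCIMI}, so the theorem applies verbatim.

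Finally, specialize the constant $C^k_{n,m}$ from Lemma~\ref{lem:bld-CGF-bound} at $k=1$. In its first case, $\min(n,m)=k$ becomes $\min(n,m)=1$, giving $\frac{n+m}{\max(n,m)}$. In its second case, $\frac{n+m}{n+m-k}\cdot\frac{nm}{nm-k\min(n,m)}$ becomes $\frac{n+m}{n+m-1}\cdot\frac{nm}{nm-\min(n,m)}$. These match the stated $C_{n,m}$, and the resulting inequality is \eqref{eq:lmo-CMI}.

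There is no real obstacle, because all the analytic work (the change of measure in Theorem~\ref{thm:IPCIMI} and the CGF bound in Lemma~\ref{lem:bld-CGF-bound}) is already done. The only point to confirm is that nothing in those results silently assumes $k\ge 2$. It does not: block sizes $\frac{n}{k}$ and $\frac{m}{k}$ remain integers at $k=1$, and the second case of $C_{n,m}$ has positive denominators whenever $\min(n,m)\ge 2$, since then $n+m-1>0$ and $nm-\min(n,m)>0$.
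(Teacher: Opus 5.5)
Your proposal is correct and follows the paper's route. The paper obtains the corollary simply by setting $k=1$ in Theorem~\ref{thm:bld-IPCIMI}: the single block becomes $\hat{Z}_{[n+m]}$ with membership $U_{[n]}$, and the constant from Lemma~\ref{lem:bld-CGF-bound} reduces to $C^1_{n,m}=C_{n,m}$, exactly as you wrote.
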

	
	Moreover, by simply setting $k = m$, where $m \in [n]$, the supersample set is divided into $m$ blocks and only one sample is left out of each block. This setting resembles the standard leave-one-fold-out (LOFO) cross-validation (corresponding to the $l$-fold cross-validation introduced in \cite{mohri2018foundations}, where $l = \frac{n}{m} + 1$). Therefore, the new bound obtained by applying Theorem~\ref{thm:IPCIMI} with $(m, k) = (m, m)$ is referred to as the LOFO-CMI bound.
	
	\begin{corollary}[($(m, m)$-IPCIMI) LOFO-CMI Bound] \label{cor:lofo-CMI}
		Consider the partitioned L$m$O supersample setting in Definition~\ref{def:p-lmo}. Specially, consider $k = m$ (where $m$ divides $n$) and let $\bar{U}^{(i)}$ denote $\bar{U}^{(i)}_{[\frac{m}{k}]}$ for simplicity. Let $(\tilde{W}, \tilde{\bar{U}}^{(i)})$ be a decoupled pair of $(W, \bar{U}^{(i)})$ conditioned on $\hat{Z}^{(i)}_{[\frac{n}{m}+1]}$, i.e., $P_{\tilde{W}, \tilde{\bar{U}}^{(i)} | \hat{Z}^{(i)}_{[\frac{n}{m}+1]}} = P_{W | \hat{Z}^{(i)}_{[\frac{n}{m}+1]}} \otimes P_{\bar{U}^{(i)} | \hat{Z}^{(i)}_{[\frac{n}{m}+1]}}$, where $P_{\bar{U}^{(i)} | \hat{Z}^{(i)}_{[\frac{n}{m}+1]}} = P_{\bar{U}^{(i)}}$ as $\bar{U}^{(i)} \indep \hat{Z}^{(i)}_{[\frac{n}{m}+1]}$. Then, 
		\begin{align}
			\gen &\leq \frac{1}{m} \sum_{i=1}^{m} \inf_{\lambda > 0} \frac{1}{\lambda} \left(I\left( W; \bar{U}^{(i)} | \hat{Z}^{(i)}_{[\frac{n}{m}+1]} \right) + \EE \left[ \psi_{\mathcal{E}^{(i)}(\tilde{W}, \hat{Z}^{(i)}_{[\frac{n}{m}+1]}, \tilde{\bar{U}}^{(i)}) | \hat{Z}^{(i)}_{[\frac{n}{m}+1]}}(\lambda) \right] \right). \label{eq:gen-lofo}
		\end{align}
		In particular, if $\ell(\cdot, \cdot) \in [0,1]$, we have
		\begin{equation}
			\begin{aligned}
				\gen \leq \frac{n+m}{nm} \sum_{i=1}^{m} \sqrt{ \frac{1}{2} I\left( W; \bar{U}^{(i)} | \hat{Z}^{(i)}_{[\frac{n}{m} + 1]} \right)}.
				\label{eq:lofo-CMI}
			\end{aligned}
		\end{equation}
	\end{corollary}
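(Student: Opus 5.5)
The plan is to specialize Theorem~\ref{thm:IPCIMI} to $(m,k)=(m,m)$, then handle the bounded-loss case through Lemma~\ref{lem:bld-CGF-bound}. With $k=m$, each block $\hat{Z}^{(i)}_{[\frac{n+m}{k}]}$ has $\frac{n}{m}+1$ elements, of which exactly $\frac{n}{m}$ are used for training and one, indexed by $\bar{U}^{(i)}$, is left out. The training membership $U^{(i)}_{[\frac{n}{m}]}=[\frac{n}{m}+1]\backslash \bar{U}^{(i)}$ is a bijective function of $\bar{U}^{(i)}$. Hence $I(W;U^{(i)}_{[\frac{n}{m}]}\mid \hat{Z}^{(i)}_{[\frac{n}{m}+1]}) = I(W;\bar{U}^{(i)}\mid \hat{Z}^{(i)}_{[\frac{n}{m}+1]})$: apply the data-processing inequality in both directions, or simply note that the $\sigma$-algebras coincide.

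Likewise, decoupling $(W,U^{(i)})$ is the same as decoupling $(W,\bar{U}^{(i)})$. The error $\mathcal{E}^{(i)}$ can therefore be written as a function of $(\tilde W,\hat{Z}^{(i)},\tilde{\bar U}^{(i)})$ with the same conditional law. Its conditional CGF is unchanged. Substituting both identities into \eqref{eq:gen-IPCIMI} with $k=m$ yields \eqref{eq:gen-lofo} directly.

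For \eqref{eq:lofo-CMI}, assume $\ell\in[0,1]$, so $\Delta\le 1$. Apply Lemma~\ref{lem:bld-CGF-bound} with $k=m$. Here $\min(n,m)=m=k$ because $m$ divides $n$ and so $m\le n$. This falls in the first case, giving $C^m_{n,m}=\frac{n+m}{\max(n,m)}=\frac{n+m}{n}$. The CGF bound becomes
\begin{align*}
\frac{\lambda^2}{8}\cdot\frac{n+m}{n}\cdot\frac{m(n+m)}{nm}=\frac{\lambda^2 (n+m)^2}{8n^2}.
\end{align*}
Optimizing over $\lambda$ uses $\inf_{\lambda>0}\frac{1}{\lambda}(I+\lambda^2 a)=2\sqrt{aI}$ with $a=\frac{(n+m)^2}{8n^2}$. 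Each term is then $\frac{n+m}{n}\sqrt{\frac12 I(W;\bar U^{(i)}\mid\hat Z^{(i)})}$. Multiplying by $\frac1m$ and summing over $i$ gives exactly \eqref{eq:lofo-CMI}. Equivalently, one can plug $k=m$ and $C^m_{n,m}$ into \eqref{eq:IPCIMI}: the coefficient is $\frac{(n+m)/n\cdot m(n+m)}{2nm}=\frac{(n+m)^2}{2n^2}$, and pulling it out of the square root gives the same result.

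The main point needing care is checking that Lemma~\ref{lem:bld-CGF-bound}'s case split is applied correctly. When $n=m$, $\max(n,m)=n$ still holds, so nothing changes. Beyond that, the only subtlety is the bijection between $U^{(i)}$ and $\bar U^{(i)}$ in the decoupled quantities; the rest is routine.
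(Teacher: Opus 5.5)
Your proposal is correct and follows the paper's route: the corollary is Theorem~\ref{thm:IPCIMI} with $(m,k)=(m,m)$, where $U^{(i)}_{[\frac{n}{m}]}$ and $\bar{U}^{(i)}$ determine each other, so the CMI and CGF terms coincide. The bounded-loss form is \eqref{eq:IPCIMI} with $C^m_{n,m}=\frac{n+m}{n}$, which gives $\frac{(n+m)^2}{2n^2}$ under the square root. Your check that Lemma~\ref{lem:bld-CGF-bound} falls in its first case ($\min(n,m)=m=k$) is the detail the paper leaves implicit.
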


	\subsection{Numerical Results}
	In this subsection, we evaluate and compare the considered bounds under several learning scenarios. Supportive numerical results illustrate the benefits of the proposed bounds.
	\subsubsection{Bernoulli mean estimation}
	\begin{figure}[t]
		\centering
		%	\subfloat[]{\includegraphics[width=0.33\textwidth]{non_asym_sim_m.eps} \label{fig:non_asym_sim_m}}
		%	\subfloat[]{\includegraphics[width=0.66\textwidth]{non_asym_sim_n.eps} \label{fig:non_asym_sim_n}}
		\includegraphics[width=0.8\textwidth]{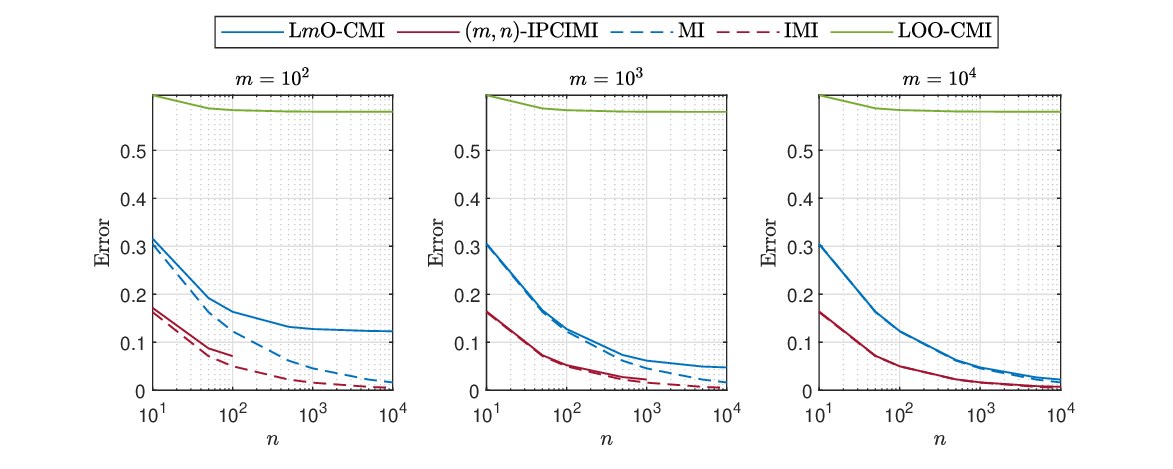}
		\caption{Comparison of various bounds for Example~\ref{eg:Bernoulli}, where we set $p=0.4$ and $m \in \{10^2,10^3,10^4\}$. As shown in the figure, our proposed L$m$O-CMI bound \eqref{eq:lmo-CMI} and $(m, n)$-IPCIMI bound \eqref{eq:m_n_IPCIMI} are significantly tighter than the existing LOO-CMI bound \eqref{eq:LOO-CMI}. When $m$ tends to infinity, the two proposed CMI-based bounds converge to the existing MI bound \eqref{eq:MI} and IMI bound \eqref{eq:IMI}, respectively. }\label{fig:non_asym_sim}
	\end{figure}
	
	We first leverage Example~\ref{eg:Bernoulli} to validate the relationship between the MI- and CMI-based bounds. Note that since the bounded-loss assumption is satisfied for this example, all the considered bounds admit their special forms. To validate Corollary~\ref{cor:IPMI}, we conduct numerical comparisons between the IPCIMI bound and the IPMI bound when $m$ increases. In particular, we choose $k \in \{1, n\}$. For $k=1$, the comparison is between L$m$O-CMI and MI bound. While for $k=n$, the comparison is with respect to a new $(m, n)$-IPCIMI bound and IMI bound, where the $(m, n)$-IPCIMI bound is given by\footnote{This bound can directly be obtained by setting $k = n$ in \eqref{eq:IPCIMI}.}
	\begin{equation}\label{eq:m_n_IPCIMI}
		\begin{aligned}
			\gen \leq \frac{n+m}{nm} \sum_{i=1}^{n} \sqrt{ \frac{1}{2} I\left( W; U^{(i)} | \hat{Z}^{(i)}_{[\frac{m}{n} + 1]} \right)},
		\end{aligned}
	\end{equation}
	where $U^{(i)}$ denotes $U^{(i)}_{[\frac{n}{k}]}$ and $m \geq n$. Therefore, it suffices to show that the L$m$O-CMI bound converges to the MI bound and the $(m, n)$-IPCIMI bound converges to the IMI bound when $m$ goes to infinity, as stated by Corollary~\ref{cor:IPMI}. We calculate all the considered bounds and obtain the numerical results via Monte-Carlo simulations. Note that the calculations are detailed in Appendix~\ref{subapp:cal_MI} to \ref{subapp:cal_mn_IPCIMI}. For the particular Bernoulli example, MI, IMI, LOO-CMI, ICIMI, L$m$O-CMI and LOFO-CMI bounds are specialized as \eqref{eq:Ber_MI}, \eqref{eq:Ber_IMI}, \eqref{eq:Ber_LOO}, \eqref{eq:Ber_ICIMI}, \eqref{eq:Ber_lmo} and \eqref{eq:Ber_LOFO}, respectively. Fig.~\ref{fig:non_asym_sim} plots the performance of the considered bounds as $m$ increases. It is observed that the CMI-based bounds are decreasing with $m$ and converge to their corresponding MI-based bounds when $m$ goes to infinity, supportively validating Corollary~\ref{cor:IPMI}. Besides, we show that the LOO-CMI bound does not provide a vanishing bound for the Bernoulli example (as stated by Proposition~\ref{pro:LOO-CMI}), while the L$m$O-CMI bound is much tighter than the LOO-CMI bound, especially when choosing $m \gg n$. This implies that our L$m$O supersample framework can tighten and improve the existing bounds.
	
	%\begin{figure}[t]
	%	\centering
	%	\includegraphics[width=0.8\textwidth]{lofo_sim.eps}
	%	\caption{Comparison of various bounds, where we set $p = 0.1$ for the Bernoulli example.}
	%	\label{fig:lofo_sim}
	%\end{figure}
	
	\subsubsection{Gaussian mean estimation}
	We now consider the learning problem of estimating the mean of Gaussian random variables.
	\begin{example}[A Gaussian Example] \label{eg:Gaussian}
		Suppose that all data are sampled from a Gaussian distribution $\Ncal(\mu, \sigma^2)$. Let $W$ be the output of the ERM algorithm that gives the average of the training samples. Consider a quadratic loss function $\ell(w, z) = (w - z)^2$, then the true generalization error can be exactly calculated as $\gen = \frac{2\sigma^2}{n}$ \cite{bu2020tightening}, which is of order $O(\frac{1}{n})$.
	\end{example}
	
	\begin{figure}[t]
		\centering
		\includegraphics[width=0.45\textwidth]{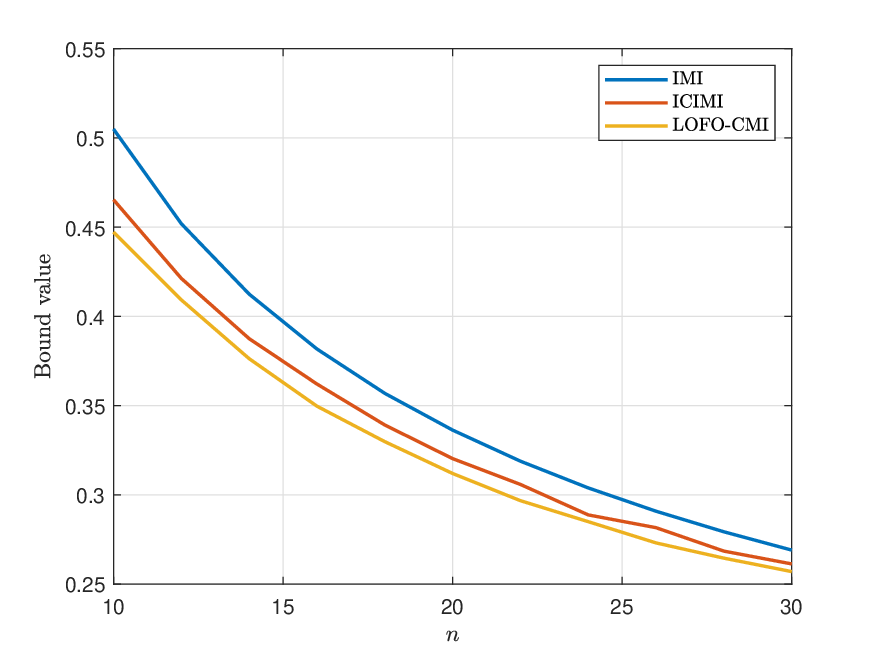}
		\caption{Comparison of various bounds for Example~\ref{eg:Gaussian}, where we set $m = \frac{n}{2}$, $\mu=0$ and $\sigma=1$. It is observed that the proposed LOFO-CMI bound \eqref{eq:gen-lofo} is tighter than the existing IMI bound \eqref{eq:gen-IMI} and ICIMI bound \eqref{eq:gen-ICIMI}, while the LOO-CMI bound \eqref{eq:LOO-CMI} is not applicable in this example.}
		\label{fig:Gaussian}
	\end{figure}
	
	Distinguished from the previous Bernoulli mean estimation in Example~\ref{eg:Bernoulli}, the training samples in Example~\ref{eg:Gaussian} are Gaussian random variables with an infinite alphabet $\Zcal = \RR$. Thus, the assumption of bounded loss difference is no longer satisfied, making bounds in Theorem~\ref{thm:bld-IPCIMI}, LOO-CMI \eqref{eq:LOO-CMI} and LOFO-CMI \eqref{eq:lofo-CMI} bounds inapplicable. Therefore, we turn to the general form of the proposed bounds as they do not require any special assumptions on the loss function. In particular, we compare the proposed LOFO-CMI bound \eqref{eq:gen-lofo} with the known ICIMI bound \eqref{eq:gen-ICIMI} and IMI bound \eqref{eq:gen-IMI}, where the expression for the IMI bound under the Gaussian example is \cite{bu2020tightening}
	\begin{align}
		\gen \leq \sigma^2 \sqrt{\frac{2(n+1)^2}{n^2} \log \frac{n}{n-1}}.
	\end{align}
	Since the other two CMI-based bounds do not have closed-formed expressions, we use Monte-Carlo simulations to approximate the involved CGF terms and the numerical results are reported in Fig.~\ref{fig:Gaussian}. With a carefully selected $m=\frac{n}{2}$, we observe that the proposed LOFO-CMI bound is tighter than the other considered bounds.
	
	\subsubsection{Gaussian mean estimation with finite $\Wcal$}
	The next example, which was explored in \cite{wu2025fast}, is a variant problem of estimating the Gaussian mean.
	\begin{example}[A Gaussian Example with Finite $\Wcal$] \label{eg:finite_Gaussian}
		Consider the Gaussian mean estimation with a finite hypothesis space $\Wcal \in \{-\mu, \mu\}$. Let $w^* \in \Wcal$ be the true hypothesis and all data are sampled from the Gaussian distribution $\Ncal(w^*, \sigma^2)$, where we assume $w^* = \mu$. The ERM algorithm trained with the quadratic loss function $\ell(w, z) = (w - z)^2$ produces the hypothesis $W$ which follows the maximum likelihood decision rule:
		\begin{align}
			W = \begin{cases}
				\mu, & \textnormal{if $\frac{1}{n} \sum_{i=1}^n Z_i \geq 0$,} \\
				-\mu, & \textnormal{otherwise.}
			\end{cases}\label{eq:W_MLD}
		\end{align}
		We alternatively consider a truncated loss function $\ell(w, z) = \min((w-z)^2, 1)$ and let the learning hypothesis $W$ still be as given in \eqref{eq:W_MLD}.
	\end{example}
	\begin{figure}[t]
		\centering
		\includegraphics[width=0.45\textwidth]{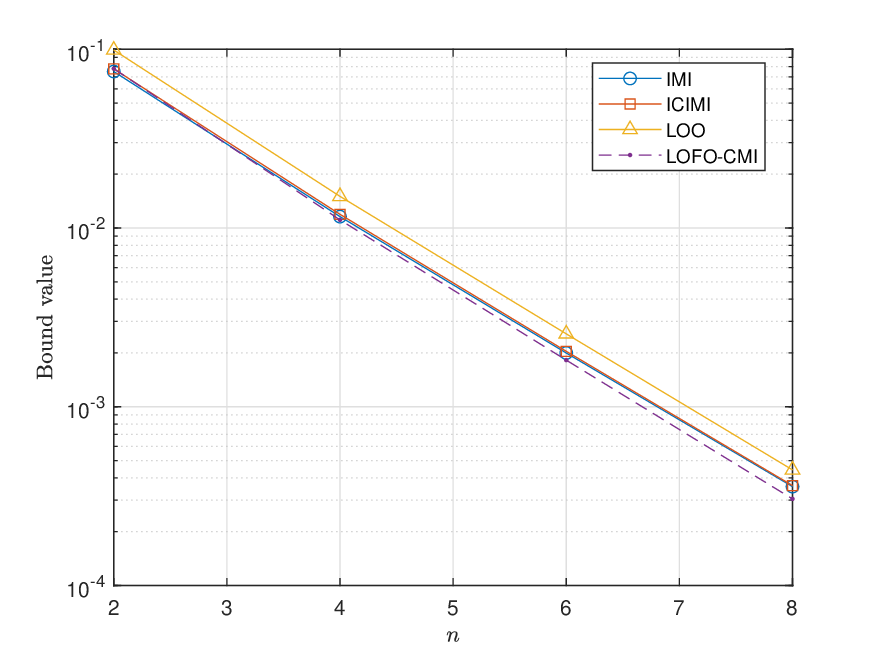}
		\caption{Comparison of various bounds for Example~\ref{eg:finite_Gaussian}, where we set $m = 2$, $\mu=1$ and $\sigma=0.5$. As shown in the figure, the proposed LOFO-CMI bound \eqref{eq:lofo-CMI} is tighter than the existing IMI bound \eqref{eq:IMI}, ICIMI bound \eqref{eq:ICIMI} and LOO-CMI bound \eqref{eq:LOO-CMI}.}
		\label{fig:Gaussian_finite_W}
	\end{figure}
	
	The adopted truncated loss function satisfies the bounded-loss assumption, enabling us to directly apply the simpler form for each considered bound. In Fig.~\ref{fig:Gaussian_finite_W}, we compare the proposed LOFO-CMI bound \eqref{eq:lofo-CMI} with the known IMI bound \eqref{eq:IMI}, ICIMI bound \eqref{eq:ICIMI} and LOO-CMI bound \eqref{eq:LOO-CMI}, under the setting of $m = 2$, $\mu=1$ and $\sigma=0.5$. It is observed that none of the CMI-based bound, expect the proposed LOFO-CMI bound, can outperform the IMI bound. Besides, as argued before, the lower-dimensional $U$ involved in LOFO-CMI bound brings an extra benefit of lower computational complexity, as compared to the other CMI-based bounds.
	
	All the examples explored in this subsection show that the proposed bounds are promising in various scenarios, and the free parameters incorporated in bounds offer chances of flexible trade-off between the tightness and effectiveness.
	
	\section{Bounds Conditioning on Single Supersample}\label{sec:SCMI-bounds}
	In this section, under the partitioned L$m$O supersample setting we establish a series of new CMI-based bounds for bounded loss functions, in parallel with those presented in Section~\ref{sec:unified-CMI}. The new bounds admit CMI quantities in which the mutual information is measured by conditioning on a single supersample instead of a block or the whole supersample set. We show through examples that these single CMI (SCMI) based bounds can be tighter than their counterpart IPCIMI bounds. Additionally, when equipped with an assumption that is widely satisfied by a wide range of learning algorithms, it is safe to eliminate the partition number $k$ in SCMI-based bounds, making the optimization of the remaining free parameter $m$ an easier searching process. Simulation results show that under Example~\ref{eg:Bernoulli} the proposed SCMI-based bound tends to gain better performance with increasing $m$.
	\subsection{Existing Single CMI Bound}
	It was explored in \cite{zhou2022individually, dongexactly} that the CMI-based bounds can be improved by eliminating redundant random variables from the conditional term. In particular, the ICIMI bound in \eqref{eq:ICIMI} was strengthened to its single version, namely the SICIMI bound, where the dependence on the conditional sample pair $(\hat{Z}_i, \hat{Z}_{i+n})$ is relaxed on single sample $\hat{Z}_i$ (or $\hat{Z}_{i+n}$), as shown in the theorem below. 
	
	\begin{theorem}[SICIMI Bound \cite{dongexactly}] \label{thm:SICIMI}
		Consider the standard supersample setting in Definition~\ref{def:standard-supersample}. If $\ell(w, z) \in [0, 1]$ for all $w \in \Wcal$ and $z \in \Zcal$, then
		\begin{align}
			\gen &\leq \frac{1}{n} \sum_{i = 1}^n \EE \left[ \sqrt{2 I^{\hat{Z}_i}\left( W; R_i \right)} \right] \label{eq:dis-SICIMI} \\
			&\leq \frac{1}{n} \sum_{i = 1}^n \sqrt{2 I\left( W; R_i | \hat{Z}_i \right)}.
			\label{eq:SICIMI}
		\end{align}
	\end{theorem}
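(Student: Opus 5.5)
The plan is to start from the individual decomposition that underlies the ICIMI bound. Under the standard supersample setting,
\[
\gen = \frac{1}{n}\sum_{i=1}^n \EE\big[(-1)^{R_i}\big(\ell(W,\hat Z_{i+n})-\ell(W,\hat Z_i)\big)\big].
\]
Since $R_i \indep \hat Z_i$ and $R_i\sim\Bernoulli(\frac12)$, the $i$-th term can be treated by disintegrating on $\hat Z_i$ alone rather than on the pair $(\hat Z_i,\hat Z_{i+n})$. To do this, I will regard $\hat Z_{i+n}$ as part of the ``hypothesis-side'' randomness. Concretely, I set $V := (W,\hat Z_{i+n})$ and consider the function
\[
f(v,r) := (-1)^{r}\big(\ell(w,\hat z_{i+n})-\ell(w,\hat z_i)\big),
\]
with $\hat z_i$ fixed, which takes values in $[-1,1]$ under the $[0,1]$ loss.

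The first step is to apply the Donsker--Varadhan variational formula conditionally on $\hat Z_i=z$. This compares $P_{V,R_i|z}$ with $P_{V|z}\otimes P_{R_i}$. Under the product measure, $R_i$ is an independent fair sign, so $f$ has conditional mean zero. Because $f$ is bounded in an interval of length $2$, Hoeffding's lemma gives a CGF bound of $\lambda^2/2$. Optimizing over $\lambda$ then yields
\[
\EE[f\mid \hat Z_i=z]\le \sqrt{2\, I^{z}(V;R_i)} .
\]
Averaging over $z$ and $i$ gives a disintegrated bound with $I^{\hat Z_i}(W,\hat Z_{i+n};R_i)$ in place of $I^{\hat Z_i}(W;R_i)$.

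The second step, which I expect to be the main obstacle, is to remove $\hat Z_{i+n}$ from the mutual information. By the chain rule,
\[
I^{z}(W,\hat Z_{i+n};R_i)=I^{z}(\hat Z_{i+n};R_i)+I^{z}(W;R_i\mid \hat Z_{i+n}),
\]
and the first term vanishes since $\hat Z_{i+n}\indep R_i$. This reproduces the pair-conditioned quantity, so the naive route does not tighten anything. I would therefore reorganize the argument by rewriting the $i$-th term, using the symmetry of $R_i$, as
\[
2\,\EE\big[\mathbb{1}\{R_i=1\}\ell(W,\hat Z_i)\big]-2\,\EE\big[\mathbb{1}\{R_i=0\}\ell(W,\hat Z_i)\big]\cdot(\ldots),
\]
that is, as a difference of losses evaluated only at $\hat Z_i$. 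The key fact is that, conditionally on $\hat Z_i$, whether $\hat Z_i$ is a training or a test sample is encoded by $R_i$. Indeed,
\[
\EE\big[\ell(W,\hat Z_i)\mid R_i=1\big]-\EE\big[\ell(W,\hat Z_i)\mid R_i=0\big]
\]
equals the population loss minus the training loss for the $i$-th coordinate, in expectation, by exchangeability of $\hat Z_i$ and $\hat Z_{i+n}$. So $\gen=\frac{1}{n}\sum_i 2\,\EE\big[(-1)^{1-R_i}\ell(W,\hat Z_i)\big]$ up to the sign convention.

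With this representation, I set $g(w,r):=(-1)^{1-r}\ell(w,\hat z_i)$, which lies in $[-1,1]$. Under $P_{W|z}\otimes P_{R_i}$ it has mean zero and is sub-Gaussian with parameter $1$. Donsker--Varadhan conditioned on $\hat Z_i=z$ gives $\EE[g\mid z]\le\sqrt{2 I^{z}(W;R_i)}$, with the factor $2$ in front absorbed after Hoeffding's lemma is applied to the $\{0,1\}$-valued indicator version. I will need to recheck the constants carefully so that they land exactly on $\sqrt{2I}$; alternatively, I can apply the bound to $\ell(W,\hat Z_i)\in[0,1]$, which has range $1$ and CGF $\lambda^2/8$, and obtain $2\sqrt{I/2}=\sqrt{2I}$. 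This yields the disintegrated bound \eqref{eq:dis-SICIMI}. Finally, \eqref{eq:SICIMI} follows from Jensen's inequality applied to the concave square root, since $\EE[I^{\hat Z_i}(W;R_i)]=I(W;R_i\mid\hat Z_i)$.
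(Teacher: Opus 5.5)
Your overall route works and differs from the paper's, but the constant is not settled as written. With $g=(-1)^{1-r}\ell(w,z)\in[-1,1]$ and CGF $\lambda^2/2$, Donsker--Varadhan gives $\EE[g\mid z]\le\sqrt{2I^{z}(W;R_i)}$. Hence you only get $\gen\le\frac1n\sum_i\EE\bigl[\sqrt{8I^{\hat Z_i}(W;R_i)}\bigr]$; nothing ``absorbs'' the factor $2$.

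The alternative you gesture at is the right fix, but it needs an explicit centering step:
\begin{itemize}
\item Because $R_i\indep\hat Z_i$ and $R_i\sim\Bernoulli(\frac12)$ also under the joint law, $g$ and $h(w,r):=(-1)^{1-r}\bigl(\ell(w,z)-c(z)\bigr)$ have the same mean under $P_{W,R_i\mid\hat Z_i=z}$ for any $c$.
\item Take $c(z)=\EE[\ell(W,z)\mid\hat Z_i=z]$ (or simply $c=\frac12$). Under $P_{W\mid\hat Z_i=z}\otimes P_{R_i}$ you then have $\EE[e^{\lambda h}]=\frac12\EE[e^{\lambda(\ell-c)}]+\frac12\EE[e^{-\lambda(\ell-c)}]\le e^{\lambda^2/8}$, by Hoeffding's lemma for $\ell(W,z)\in[0,1]$.
\item Donsker--Varadhan then gives $\EE[g\mid z]\le\sqrt{I^{z}(W;R_i)/2}$, and $2\sqrt{I^z/2}=\sqrt{2I^z}$.
\item Averaging over $\hat Z_i$ yields \eqref{eq:dis-SICIMI}, using $\gen=\frac1n\sum_i\EE_{\hat Z_i}\bigl[\EE[\ell(W,\hat Z_i)\mid\hat Z_i,R_i=1]-\EE[\ell(W,\hat Z_i)\mid\hat Z_i,R_i=0]\bigr]$. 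This identity needs no exchangeability: given $R_i=1$, $\hat Z_i$ is a fresh test point, and given $R_i=0$ it is the $i$-th training point.
\item Jensen then gives \eqref{eq:SICIMI}.
\end{itemize}
Your first attempt with $V=(W,\hat Z_{i+n})$ can simply be dropped.

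The paper does not prove this theorem directly; it is quoted from \cite{dongexactly} and recovered from Theorem~\ref{thm:SIPCIMI} with $m=k=n$. That proof uses the same single-sample representation, summed over both members of each pair. Instead of your sub-Gaussian Donsker--Varadhan step, it sandwiches the weighted binary JS divergence. Lemma~\ref{lem:js-lb} (Pinsker) gives $(p-q)^2\le 2\,d^{1/2}_{\textnormal{JS}}(p\|q)$. Lemma~\ref{lem:js-bound} (joint convexity of $d_\gamma$, Donsker--Varadhan, and Lemma~\ref{lem:binary-relative-entropy-concentration}) gives $d^{1/2}_{\textnormal{JS}}\le I^{z}(W;T)$.

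Your argument is more elementary and mirrors the ICIMI proof. The paper's route yields the intermediate inequality $d_{\textnormal{JS}}\le I$, which has three advantages:
\begin{itemize}
\item it implies the square-root bound but can be much stronger when $\widehat{L}_n$ is small;
\item it extends with sharp constants to the asymmetric membership probability $\theta=\frac{n}{n+m}$ of the L$m$O setting;
\item it is reused in Theorems~\ref{thm:lb-gen-bound} and \ref{thm:tightest-gen-bound}.
\end{itemize}
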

	As compared to \eqref{eq:ICIMI}, it was proved that $I( W; R_i | \hat{Z}_i ) \leq I( W; R_i | \hat{Z}_i, \hat{Z}_{i+n} )$ due to the fact that conditioning on more indepedent random variables will not decrease the mutual information \cite{dongexactly}. Thus, it follows that the SICIMI bound is tighter than the ICIMI bound, and thus achieves the best performance amongst the CMI-based bounds built on the standard supersample setting (Definition~\ref{def:standard-supersample}). Following this progression, we next extend the SICIMI bound to the proposed general CMI framework, facilitating the single version of the proposed IPCIMI bounds.

	\subsection{A Unified SCMI Bound} \label{subsec:SCMI}
	Prior to our main results, a few definitions are presented as follows. For any $p,q \in [0,1]$, the KL divergence between two Bernoulli random variables with parameters $p$ and $q$ is defined as the binary KL divergence \cite{hellstrom2022new}, i.e.,
	\begin{equation}
		\begin{aligned}
			d_{\textnormal{KL}} (p \| q) := D_{\textnormal{KL}} (\Bernoulli(p) \| \Bernoulli(q)) = p \log \frac{p}{q} + (1 - p) \log \frac{1 - p}{1 - q}.
		\end{aligned}
	\end{equation}
	A relaxed version of the binary KL divergence is further defined as \cite{mcallester2013pac}
	\begin{equation}
		\begin{aligned}
			d_\gamma (p \| q) := \gamma q - \log ( 1 - p + p e^\gamma),
		\end{aligned}\label{eq:def-d_gamma}
	\end{equation}
	which satisfies
	\begin{equation} \label{eq:property_d_gamma}
		\begin{aligned}
			d_{\textnormal{KL}} (p \| q) = \sup_{\gamma \in \RR} \{d_\gamma (p \| q)\}.
		\end{aligned}
	\end{equation}
	Note that both $d_{\textnormal{KL}}$ and $d_{\gamma}$ are jointly convex in their arguments \cite{hellstrom2022new}. Next define a symmetrized and smoothed version of the binary KL divergence, namely the weighted binary Jensen-Shannon (JS) divergence, which is given by
	\begin{align}
		d^\theta_{\textnormal{JS}}(p \| q) := \theta d_{\textnormal{KL}}(p \| \theta p + (1 - \theta) q) + (1 - \theta) d_{\textnormal{KL}}(q \| \theta p + (1 - \theta) q), \label{eq:d_js}
	\end{align}
	where $\theta$ is the weight for $p$. Note that a special case of $\theta = \frac{1}{2}$ can be found in \cite{dongexactly}. Finally, the inverse of the binary JS divergence is defined as
	\begin{equation}
		\begin{aligned}
			d^{-1}_{\textnormal{JS}} (\theta, p, c) := \sup_{q \in [0,1]} \left\{ d^\theta_{\textnormal{JS}}(p \| q) \leq c \right\}.
		\end{aligned}\label{eq:inv-d_js}
	\end{equation}
	
	Built upon the proposed binary JS divergence, an instrumental lemma is introduced to derive a series of information-theoretic generalization bounds later in this paper. 
	
	\begin{lemma}\label{lem:js-bound}
		For a probability space $(\Omega, \Fcal, P)$, let $X : \Omega \to \Xcal$ be a random variable, $Y: \Omega \to \Ycal$ a discrete random variable. Given a measurable set $\Scal \subseteq \Ycal$, let $T := \mathds{1}_{Y \in \Scal}$ be a Bernoulli random variable with $P(T = 1)  = \PP[Y \in \Scal] \in (0, 1)$. Assume that $f(X) \in [0, 1]$ and we let $\theta = P(T = 1)$, then
		\begin{align}
			d^\theta_{\textnormal{JS}} \left( \EE_{X|T=1} \left[ f(X) \right] \left\| \EE_{X|T=0} \left[ f(X) \right] \right. \right) \leq I(X; T) \leq I(X; Y),
		\end{align}
		where the first inequality holds with equality when $f(X) = X$ and $X \in \{0, 1\}$.
	\end{lemma}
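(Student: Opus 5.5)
The second inequality $I(X;T)\le I(X;Y)$ is immediate: $T=\mathds{1}_{Y\in\Scal}$ is a deterministic function of $Y$, so the data-processing inequality applied to the Markov chain $X \to Y \to T$ gives it. The work is in the first inequality. The plan is to write $I(X;T)$ as a $\theta$-weighted sum of KL divergences and then push each divergence through the map $X\mapsto f(X)$ using data processing, followed by a reduction to Bernoulli variables.

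\textbf{Step 1: decomposition.} Since $T$ is binary with $P(T=1)=\theta$, the marginal of $X$ is $P_X=\theta P_{X|T=1}+(1-\theta)P_{X|T=0}$. Hence
\begin{align*}
I(X;T)=\theta\, D_{\textnormal{KL}}(P_{X|T=1}\|P_X)+(1-\theta)\,D_{\textnormal{KL}}(P_{X|T=0}\|P_X).
\end{align*}

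\textbf{Step 2: reduce each divergence to a binary KL divergence.} Set $p:=\EE_{X|T=1}[f(X)]$ and $q:=\EE_{X|T=0}[f(X)]$, so that $\EE_{P_X}[f(X)]=\theta p+(1-\theta)q$.

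I will show that for any distributions $P,Q$ on $\Xcal$ and any $f$ with values in $[0,1]$,
\begin{align*}
D_{\textnormal{KL}}(P\|Q)\ge d_{\textnormal{KL}}\big(\EE_P f\,\big\|\,\EE_Q f\big).
\end{align*}
The cleanest route is a randomized map. Given $X$, draw $B\sim\Bernoulli(f(X))$. This channel sends $P$ to $\Bernoulli(\EE_P f)$ and $Q$ to $\Bernoulli(\EE_Q f)$, so data processing for KL gives the inequality. Alternatively, one can use Donsker--Varadhan with the test function $\gamma f$, bound $\EE_Q e^{\gamma f}\le 1-\EE_Q f+\EE_Q f\,e^{\gamma}$ by convexity of $e^{\gamma x}$ on $[0,1]$, and then take $\sup_\gamma$ using \eqref{eq:property_d_gamma}.

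Applying this with $Q=P_X$ and $P=P_{X|T=1}$, respectively $P=P_{X|T=0}$, and substituting into Step 1 yields exactly
\begin{align*}
I(X;T)\ge d^\theta_{\textnormal{JS}}(p\|q),
\end{align*}
which is the definition in \eqref{eq:d_js}.

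\textbf{Step 3: equality case.} Suppose $X\in\{0,1\}$ and $f(X)=X$. Then $P_{X|T=1}=\Bernoulli(p)$, $P_{X|T=0}=\Bernoulli(q)$, and $P_X=\Bernoulli(\theta p+(1-\theta)q)$. Both KL terms in Step 1 are therefore already binary KL divergences with these parameters, so equality holds.

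\textbf{Main obstacle.} The only real step is the Bernoulli reduction in Step 2. The subtle point is that $f$ takes values in $[0,1]$ rather than in $\{0,1\}$, which the randomized channel (or the convexity bound on $e^{\gamma x}$) handles. Measurability of the channel is standard. The assumption $\theta\in(0,1)$ ensures that both conditional distributions $P_{X|T=1}$ and $P_{X|T=0}$ are well defined.
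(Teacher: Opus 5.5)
Your proof is correct. Your Step~1 decomposition $I(X;T)=\EE_T[D_{\textnormal{KL}}(P_{X|T}\|P_X)]$ is the same identity the paper ends with, but your main justification of the key inequality differs from the paper's.

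The paper starts from $d^\theta_{\textnormal{JS}}(p\|q)$ and notes that the mixture point $\theta p+(1-\theta)q$ equals $\EE_X[f(X)]$. It writes each binary KL term as $\sup_\gamma d_\gamma$ and moves $\EE_{X|T=t}$ outside $d_\gamma$ by convexity. It then applies Donsker--Varadhan (Lemma~\ref{lem:DV-formula}) to $P_{X|T=t}$ versus $P_X$, with test function $d_\gamma(f(X)\|\EE_X[f(X)])$. The log-moment term is discarded via the one-sample case of Lemma~\ref{lem:binary-relative-entropy-concentration}. That test function is just $\gamma f(X)$ minus a constant, and the one-sample moment bound is exactly $\EE_Q[e^{\gamma f}]\le 1-\EE_Q f+\EE_Q f\,e^{\gamma}$. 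So your ``alternative'' route is essentially the paper's argument, with the steps done in the opposite order.

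Your primary route is genuinely different and more economical: you push $P$ and $Q$ through the kernel $B\sim\Bernoulli(f(X))$ and invoke data processing for KL. It needs neither the variational formula nor a concentration lemma. It gives $D_{\textnormal{KL}}(P\|Q)\ge d_{\textnormal{KL}}(\EE_P f\|\EE_Q f)$ for arbitrary $P,Q$ in one line. It also makes the equality case transparent: when $f(X)=X\in\{0,1\}$ the kernel is the identity, so nothing is lost.

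What the paper's template buys is reuse. The same Donsker--Varadhan/$d_\gamma$ argument drives Theorem~\ref{thm:lb-gen-bound}, where the statistic is an average over a random subset. There the multi-sample concentration bound supplies the $\frac{k}{n}$ factor in front of the mutual information, which a single-sample channel argument would not produce.

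One small caution: state the variational identity directly as $\sup_\gamma[\gamma a-\log(1-b+be^{\gamma})]=d_{\textnormal{KL}}(a\|b)$ rather than citing \eqref{eq:property_d_gamma}. The paper's displayed definition \eqref{eq:def-d_gamma} places the two arguments in the opposite roles.
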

	
	The lemma is proved through the convexity of the KL divergence and the Donsker–Varadhan inequality, with details relegated to Appendix~\ref{subapp:js-bound}. A special case of the lemma, where $Y$ follows $\Bernoulli(\frac{1}{2})$, was previously presented and explored in \cite{dongexactly}. Using Lemma~\ref{lem:js-bound}, we present below a family of SCMI-based bounds under the proposed partitioned L$m$O supersample setting (Definition~\ref{def:p-lmo}).

	\begin{theorem}[SIPCIMI Bounds]\label{thm:SIPCIMI}
		Consider the partitioned L$m$O setting in Definition~\ref{def:p-lmo}. Let $\Ucal_j$ be the collection of all the realizations of $U^{(i)}_{[\frac{n}{k}]}$ that includes any given index $j \in [\frac{n+m}{k}]$, i.e., $\Ucal_j = \{u | j \in u, u \subset [\frac{n+m}{k}], |u| = \frac{n}{k} \}$. Let $T^{(i)}_j := \mathds{1}_{U^{(i)}_{[\frac{n}{k}]} \in \Ucal_j}$ be a Bernoulli random variable. If $\ell(w, z) \in [0, 1]$ for all $w \in \Wcal$ and $z \in \Zcal$, then
		\begin{subequations}
			\begin{align}
				\gen \leq& \sum_{i=1}^{k} \sum_{j=1}^{\frac{n+m}{k}} \EE \left[ \sqrt{\frac{1}{2nm} I^{\hat{Z}^{(i)}_j}\left( W; U^{(i)}_{[\frac{n}{k}]} \right)} \right] \qquad (\textnormal{unprocessed disintegrated bound}) \label{eq:dis-SIPCIMI} \\
				\leq& \sum_{i=1}^{k} \sum_{j=1}^{\frac{n+m}{k}} \sqrt{\frac{1}{2nm} I\left( W; U^{(i)}_{[\frac{n}{k}]} | \hat{Z}^{(i)}_j \right)} \qquad (\textnormal{unprocessed integrated bound}) \label{eq:SIPCIMI}
			\end{align}
		\end{subequations}
		\begin{subequations}
			\begin{align}		
				\gen \leq& \sum_{i=1}^{k} \sum_{j=1}^{\frac{n+m}{k}} \EE \left[ \sqrt{\frac{1}{2nm} I^{\hat{Z}^{(i)}_j}\left( W; T^{(i)}_j \right)} \right] \qquad (\textnormal{processed disintegrated bound}) \qquad \label{eq:p-dis-SIPCIMI} \\
				\leq& \sum_{i=1}^{k} \sum_{j=1}^{\frac{n+m}{k}} \sqrt{\frac{1}{2nm} I\left( W; T^{(i)}_j | \hat{Z}^{(i)}_j \right)} \qquad (\textnormal{processed integrated bound}) \label{eq:p-SIPCIMI} 
			\end{align}
		\end{subequations}
	\end{theorem}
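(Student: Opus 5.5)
The plan is to decompose $\gen$ into one term per supersample, express each term as a difference of conditional expectations given the Bernoulli indicator $T^{(i)}_j$, and bound each difference with Lemma~\ref{lem:js-bound} plus a Pinsker-type lower bound on the weighted binary JS divergence.

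\textbf{Per-sample decomposition.} Start from Lemma~\ref{lem:IP-lmo-gen-err} and expand the L$m$O cross-validation error \eqref{eq:lmo-cv-err} sample by sample. For block $i$ and index $j\in[\frac{n+m}{k}]$, sample $\hat{Z}^{(i)}_j$ contributes
\begin{align*}
\EE\Big[\ell(W,\hat{Z}^{(i)}_j)\Big(\tfrac{1}{m}\mathds{1}_{T^{(i)}_j=0}-\tfrac{1}{n}\mathds{1}_{T^{(i)}_j=1}\Big)\Big],
\end{align*}
and $\gen$ is the sum of these over all $i,j$. Since $U^{(i)}_{[\frac{n}{k}]}$ is uniform over subsets of size $\frac{n}{k}$, we have $\theta:=P(T^{(i)}_j=1)=\frac{n}{n+m}\in(0,1)$. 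Because $U^{(i)}_{[\frac{n}{k}]}\indep\hat{Z}^{(i)}_j$, this probability is unchanged when we condition on $\hat{Z}^{(i)}_j=z$. Write $p_t(z):=\EE[\ell(W,z)\mid T^{(i)}_j=t,\hat{Z}^{(i)}_j=z]$. Using $(1-\theta)/m=\theta/n=\frac{1}{n+m}$, the contribution becomes
\begin{align*}
\frac{1}{n+m}\,\EE_{\hat{Z}^{(i)}_j}\big[p_0(\hat{Z}^{(i)}_j)-p_1(\hat{Z}^{(i)}_j)\big].
\end{align*}
Justifying this disintegration is the step needing the most care: one must check that, given $\hat{Z}^{(i)}_j=z$, the conditional law of $W$ given $T$ is exactly what enters, with the other supersamples and the other blocks' memberships marginalized out.

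\textbf{Bounding each gap via the JS divergence.} Fix $z$ and apply Lemma~\ref{lem:js-bound} under the conditional measure given $\hat{Z}^{(i)}_j=z$, with $X=W$, $f(W)=\ell(W,z)\in[0,1]$, $Y=U^{(i)}_{[\frac{n}{k}]}$, $\Scal=\Ucal_j$ and $T=T^{(i)}_j$. This gives
\begin{align*}
d^\theta_{\textnormal{JS}}(p_1(z)\|p_0(z))\le I^{z}(W;T^{(i)}_j)\le I^{z}(W;U^{(i)}_{[\frac{n}{k}]}).
\end{align*}
The key inequality is a Pinsker bound for $d^\theta_{\textnormal{JS}}$. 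Let $r=\theta p+(1-\theta)q$. Then $p-r=(1-\theta)(p-q)$ and $q-r=-\theta(p-q)$. Applying $d_{\textnormal{KL}}(a\|b)\ge 2(a-b)^2$ to both terms of \eqref{eq:d_js} yields
\begin{align*}
d^\theta_{\textnormal{JS}}(p\|q)\ge 2\theta(1-\theta)^2(p-q)^2+2(1-\theta)\theta^2(p-q)^2=2\theta(1-\theta)(p-q)^2.
\end{align*}
With $\theta(1-\theta)=\frac{nm}{(n+m)^2}$, this gives
\begin{align*}
|p_0(z)-p_1(z)|\le (n+m)\sqrt{\tfrac{1}{2nm}I^{z}(W;T^{(i)}_j)}.
\end{align*}

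\textbf{Assembling the four bounds.} Multiply by $\frac{1}{n+m}$, take the expectation over $\hat{Z}^{(i)}_j$, and sum over $i,j$. This yields the processed disintegrated bound \eqref{eq:p-dis-SIPCIMI}. The second inequality of Lemma~\ref{lem:js-bound}, $I^z(W;T)\le I^z(W;U)$, then yields the unprocessed disintegrated bound \eqref{eq:dis-SIPCIMI}. Finally, Jensen's inequality for the concave square root moves $\EE$ inside, giving \eqref{eq:p-SIPCIMI} and \eqref{eq:SIPCIMI}, because $\EE[I^{\hat{Z}^{(i)}_j}(\cdot;\cdot)]=I(\cdot;\cdot\mid\hat{Z}^{(i)}_j)$.
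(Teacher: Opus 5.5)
Your proposal is correct and follows essentially the same route as the paper's proof. Both write $\gen$ sample by sample through $T^{(i)}_j$ with $P(T^{(i)}_j=1)=\frac{n}{n+m}$ and use the Pinsker-type bound $d^\theta_{\textnormal{JS}}(p\|q)\ge 2\theta(1-\theta)(p-q)^2$. Both then apply Lemma~\ref{lem:js-bound} conditioned on $\hat{Z}^{(i)}_j$ to get \eqref{eq:p-dis-SIPCIMI}, and obtain the other three bounds by data processing and Jensen's inequality.
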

	
	The proof of \eqref{eq:p-dis-SIPCIMI} can be found in Appendix~\ref{subapp:SIPCIMI}. We note that bounds \eqref{eq:p-SIPCIMI} and \eqref{eq:SIPCIMI} follow from \eqref{eq:p-dis-SIPCIMI} and \eqref{eq:dis-SIPCIMI}, respectively, by using Jensen's inequality with concavity of the square-root function, i.e., $\eqref{eq:p-dis-SIPCIMI} \leq \eqref{eq:p-SIPCIMI}$ and $\eqref{eq:dis-SIPCIMI} \leq \eqref{eq:SIPCIMI}$. Hence, the disintegrated bounds are tighter than their integrated counterparts. Besides, bounds \eqref{eq:dis-SIPCIMI} and \eqref{eq:SIPCIMI} follow from \eqref{eq:p-dis-SIPCIMI} and \eqref{eq:p-SIPCIMI}, respectively, by data-processing inequality. More specifically, since $T^{(i)}_j$ is a deterministic function of $U^{(i)}_{[\frac{n}{k}]}$, then the data processing inequality indicates that $I^{\hat{z}^{(i)}_j}( W; T^{(i)}_j ) \leq I^{\hat{z}^{(i)}_j}( W; U^{(i)}_{[\frac{n}{k}]} )$ for any given realization $\hat{z}^{(i)}_j$ and $I( W; T^{(i)}_j | \hat{Z}^{(i)}_j ) \leq I( W; U^{(i)}_{[\frac{n}{k}]} | \hat{Z}^{(i)}_j )$. Hence, the processed bounds are always no looser than their unprocessed counterparts, i.e., $\eqref{eq:p-dis-SIPCIMI} \leq \eqref{eq:dis-SIPCIMI}$ and $\eqref{eq:p-SIPCIMI} \leq \eqref{eq:SIPCIMI}$, where the equations hold when the learning algorithm satisfies a simple but universal assumption, as shown below.
	
	\begin{assumption} \label{ass:invariant-alg}
		Let $\tau = \{\tau_i\}_{i=1}^n$ denote a permutation of $[n]$, such that for a given set $S$ of size $n$, $S_{\tau} = \{S_{\tau_i}\}_{i=1}^n$ represent a permutation of $S$. We say that a learning algorithm $\Acal$ is invariant to permutations of the input dataset $Z_{[n]}$, if for every possible $\tau$ and all $z_{[n]} \in \Zcal^n$, it holds that $\Acal(z_{[n]})$ and $\Acal(z_{\tau})$ are identically distributed (or equivalently, $P_{W| Z_{[n]}} (w | z_{[n]}) = P_{W| Z_{[n]}} (w | z_{\tau})$ for all $w \in \Wcal$).
	\end{assumption}
	
	This assumption considers a learning algorithm $\Acal$ which is insensitive to any permutations of the input training samples. We note that this assumption is not overly limiting as it is typically satisfied by a variety of learning algorithms, to name a few, linear regressions, support vector machines (SVM) and formal stochastic gradient descent (SGD) which are commonly used in modern machine learning scenarios. We next claim the following.
	
	\begin{proposition}\label{pro:processed-SIPCIMI}
		Let $T^{(i)}_j$ be defined in Theorem~\ref{thm:SIPCIMI}. For all $i \in [k]$ and $j \in [\frac{n+m}{k}]$, suppose that Assumption~\ref{ass:invariant-alg} is satisfied, then
		\begin{align} \label{eq:processed-SIPCIMI} 
			\forall z \in \Zcal, \quad I^{\hat{Z}^{(i)}_j = z}\left( W; T^{(i)}_j \right) &= I^{\hat{Z}^{(i)}_j = z}\left( W; U^{(i)}_{[\frac{n}{k}]} \right), \\
			I\left( W; T^{(i)}_j | \hat{Z}^{(i)}_j \right) &= I\left( W; U^{(i)}_{[\frac{n}{k}]} | \hat{Z}^{(i)}_j \right).
		\end{align}
	\end{proposition}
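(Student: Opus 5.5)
Fix $i \in [k]$, $j \in [\frac{n+m}{k}]$ and a realization $\hat{Z}^{(i)}_j = z$. Write $U := U^{(i)}_{[\frac{n}{k}]}$ and $T := T^{(i)}_j$. Since $T$ is a deterministic function of $U$, the chain rule gives
\begin{align*}
I^{z}(W; U) = I^{z}(W; U, T) = I^{z}(W; T) + I^{z}(W; U \,|\, T),
\end{align*}
where the last term is the disintegrated conditional MI with $z$ fixed and averaged over $T$. So it suffices to show $I^{z}(W; U \,|\, T) = 0$, i.e.\ that under $P_{\cdot | \hat{Z}^{(i)}_j = z}$, $W$ is conditionally independent of $U$ given $T$. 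Equivalently, $P_{W | U = u, \hat{Z}^{(i)}_j = z}$ depends on $u$ only through whether $j \in u$. The integrated statement then follows by taking expectation over $\hat{Z}^{(i)}_j$, since $I(W; T | \hat{Z}^{(i)}_j) = \EE[I^{\hat{Z}^{(i)}_j}(W;T)]$ and similarly for $U$.

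To establish this, I will compute $P_{W | U = u, \hat{Z}^{(i)}_j = z}$ explicitly. Given $U = u$, the training set is $\hat{Z}_{U_{[n]}}$, which consists of the block-$i$ samples $\hat{Z}^{(i)}_{u}$ together with the training samples from the other blocks. The latter are i.i.d.\ $\mu$ and independent of block $i$ and of $u$, because the $U^{(i')}$ are independent of each other and of the supersamples.

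\textbf{Case $j \in u$ ($T = 1$).} The training set contains $z$ together with $n-1$ other samples, each i.i.d.\ $\mu$ and independent of $z$. The supersamples $\hat{Z}^{(i)}_{u \setminus \{j\}}$ and the other blocks' training samples are all distinct i.i.d.\ draws. Their positions in the ordered training tuple depend on $u$, namely on the position of $j$ within the ascending list $u$. By Assumption~\ref{ass:invariant-alg}, $P_{W|Z_{[n]}}$ is invariant to permutations of its input. Hence
\begin{align*}
P_{W | u, z} = \EE_{Z_2,\dots,Z_n \sim \mu^{\otimes (n-1)}}\left[ P_{W|Z_{[n]}}(\cdot \,|\, z, Z_2, \dots, Z_n) \right],
\end{align*}
which does not depend on $u$.

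\textbf{Case $j \notin u$ ($T = 0$).} The training set is $n$ i.i.d.\ draws from $\mu$, all independent of $z$. Then $P_{W|u,z} = P_W$, again independent of $u$.

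Therefore $W - T - U$ forms a Markov chain conditionally on $\hat{Z}^{(i)}_j = z$, so $I^{z}(W; U \,|\, T) = 0$ and the equalities follow.

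The main point of care is the first case. There, I must make precise how the ordering of the training tuple $\hat{Z}_{U_{[n]}}$ is defined, namely block by block in ascending index order. I must also justify that relabeling the i.i.d.\ samples does not change the joint law, so that permutation invariance reduces everything to the canonical form above. This step relies on exchangeability of i.i.d.\ samples combined with Assumption~\ref{ass:invariant-alg}. Without the assumption, the position of $z$ in the input would depend on $u$, and the equality could fail, which is consistent with only the inequality holding in general.
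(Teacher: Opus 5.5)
Your proposal is correct and follows essentially the paper's argument. Both proofs reduce the equality to the conditional independence of $W$ and $U^{(i)}_{[\frac{n}{k}]}$ given $(T^{(i)}_j, \hat{Z}^{(i)}_j)$, shown by proving that $P_{W | \hat{Z}^{(i)}_j = z, U^{(i)}_{[\frac{n}{k}]} = u}$ is the same for all $u \in \Ucal_j$ (via Assumption~\ref{ass:invariant-alg}) and equals $P_W$ for all $u \notin \Ucal_j$; the paper phrases this as the reverse Markov chain $W - T^{(i)}_j - U^{(i)}_{[\frac{n}{k}]}$ plus data processing, which is equivalent to your chain-rule statement $I^{z}(W; U^{(i)}_{[\frac{n}{k}]} | T^{(i)}_j) = 0$, and your direct exchangeability computation for $T^{(i)}_j = 1$ is a more explicit version of its appeal to Lemma~\ref{lem:inv_alg} and \eqref{eq:invariant_u_1}.
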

	
	The proof of this result is given in Appendix~\ref{subapp:processed-SIPCIMI}. This proposition implies that the processed and unprocessed bounds are identical, i.e., $\eqref{eq:dis-SIPCIMI} = \eqref{eq:p-dis-SIPCIMI}$ and $\eqref{eq:SIPCIMI} = \eqref{eq:p-SIPCIMI}$, when Assumption~\ref{ass:invariant-alg} is satisfied. Unless otherwise stated, we consider the unprocessed, disintegrated SIPCIMI bound in \eqref{eq:dis-SIPCIMI}. Similar to the case of the IPCIMI bound, the strongest result facilitated from the SIPCIMI bound can be obtained by optimizing over $(m, k)$ such that
	\begin{align}
		\gen \leq \inf_{m \in \NN, k \in \Kcal} \left\{ \textnormal{RHS of \eqref{eq:dis-SIPCIMI}} \right\}. \label{eq:optimal-SIPCIMI}
	\end{align}
	Different from the case in \eqref{eq:optimal-IPCIMI}, where the determination of the optimal $(m, k)$ is non-trivial, we will show that \eqref{eq:optimal-SIPCIMI} can be more tractable in certain scenarios. Specifically, the bound \eqref{eq:dis-SIPCIMI} will not depend on $k$ if the learning algorithm satisfies Assumption~\ref{ass:invariant-alg}, thus enabling an efficient search of the optimal $m$. Detailed discussions and numerical results are relegated to the next subsection.
	
	\subsection{Special Cases}\label{subsec:SIPCIMI-cases}
	We start by stating some new bounds specialized from the proposed SIPCIMI bound. By straightforwardly applying specific pair values of $(m,k)$ to Theorem~\ref{thm:SIPCIMI}, we can recover the SICIMI bound in Theorem~\ref{thm:SICIMI} (with $m=n$ and $k=n$) and further obtain the following corollaries.
	
	\begin{corollary}[($(m, 1)$-SIPCIMI) L$m$O-SCMI Bound]\label{cor:lmo-SCMI}
		Consider the partitioned L$m$O setting in Definition~\ref{def:p-lmo} and assume $\ell(w, z) \in [0, 1]$ for all $w \in \Wcal$ and $z \in \Zcal$. When $k = 1$, the $(m, k)$-SIPCIMI bound gives a new L$m$O-SCMI bound as
		\begin{align}
			\gen &\leq \sum_{i=1}^{n+m} \EE \left[ \sqrt{ \frac{1}{2nm} I^{\hat{Z}_i}\left( W; U_{[n]} \right)} \right] \label{eq:dis-lmo-SCMI}\\
			&\leq \sum_{i=1}^{n+m} \sqrt{ \frac{1}{2nm} I\left( W; U_{[n]} | \hat{Z}_i \right)}.
			\label{eq:lmo-SCMI}
		\end{align}
	\end{corollary}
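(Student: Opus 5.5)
This corollary is a direct specialization of Theorem~\ref{thm:SIPCIMI}, so the plan is to set $k=1$ and check that every object collapses as claimed. With $k=1$ there is a single block, so $\hat{Z}^{(1)}_{[n+m]} = \hat{Z}_{[n+m]}$ and $\hat{Z}^{(1)}_j = \hat{Z}_j$. The membership vector $U^{(1)}_{[n]}$ coincides with the full training membership $U_{[n]}$, a uniformly random $n$-subset of $[n+m]$. Clearly $1 \in \Kcal$ for all $n,m$, so the choice is admissible. Substituting into \eqref{eq:dis-SIPCIMI}, the outer sum over $i\in[k]$ has one term and the inner sum runs over $j\in[n+m]$. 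This gives \eqref{eq:dis-lmo-SCMI} verbatim.

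For the second inequality \eqref{eq:lmo-SCMI}, we can either specialize \eqref{eq:SIPCIMI} in the same way or argue directly. For each $i$, Jensen's inequality with the concave square root gives
\begin{align*}
\EE\left[\sqrt{I^{\hat{Z}_i}(W;U_{[n]})}\right] \le \sqrt{\EE\left[I^{\hat{Z}_i}(W;U_{[n]})\right]} = \sqrt{I(W;U_{[n]}|\hat{Z}_i)}.
\end{align*}
The last equality is the definition of conditional mutual information as the expectation of the disintegrated mutual information. Summing over $i\in[n+m]$ and multiplying by the constant $\frac{1}{\sqrt{2nm}}$ then gives \eqref{eq:lmo-SCMI}.

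There is no genuine obstacle here. The only point needing care is bookkeeping: the index map $\hat{Z}^{(i)}_j = \hat{Z}_{\frac{n+m}{k}(i-1)+j}$ reduces to the identity when $k=1$. We should also confirm that the normalization $\frac{1}{2nm}$ in Theorem~\ref{thm:SIPCIMI} does not depend on $k$, so it carries over unchanged.
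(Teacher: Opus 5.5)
Your proposal is correct and matches the paper, which obtains this corollary simply by substituting $k=1$ into Theorem~\ref{thm:SIPCIMI}. Your Jensen step for the integrated version is the same concavity argument the paper uses to pass from \eqref{eq:dis-SIPCIMI} to \eqref{eq:SIPCIMI}.
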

	
	\begin{corollary}[($(m, m)$-SIPCIMI) LOFO-SCMI Bound]
		Consider the partitioned L$m$O setting in Definition~\ref{def:p-lmo} and assume $\ell(w, z) \in [0, 1]$ for all $w \in \Wcal$ and $z \in \Zcal$. When $k = m$ (where $m$ divides $n$), the $(m, k)$-SIPCIMI bound gives a new LOFO-SCMI bound as
		\begin{align}
			\gen &\leq \sum_{i=1}^{m} \sum_{j=1}^{\frac{n}{m}+1} \EE \left[ \sqrt{ \frac{1}{2nm} I^{\hat{Z}^{(i)}_j}\left( W; \bar{U}^{(i)} \right)} \right] \label{eq:dis-lofo-SCMI} \\
			&\leq \sum_{i=1}^{m} \sum_{j=1}^{\frac{n}{m}+1} \sqrt{ \frac{1}{2nm} I\left( W; \bar{U}^{(i)} | \hat{Z}^{(i)}_j \right)}.
			\label{eq:lofo-SCMI}
		\end{align}
	\end{corollary}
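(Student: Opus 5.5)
The LOFO-SCMI bound is the specialization $k = m$ of Theorem~\ref{thm:SIPCIMI}. The plan is to substitute $k=m$, simplify the index ranges, and then replace the training-membership subset $U^{(i)}_{[\frac{n}{m}]}$ by its complement $\bar{U}^{(i)}$ inside the mutual information.

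\emph{Step 1: substitute $k=m$.} Since $m$ divides $n$, we have $m \in \Kcal$, so Theorem~\ref{thm:SIPCIMI} applies with $k=m$. Each block then has size $\frac{n+m}{m} = \frac{n}{m}+1$, and the inner sum in \eqref{eq:dis-SIPCIMI} runs over $j \in [\frac{n}{m}+1]$. The prefactor $\frac{1}{2nm}$ does not depend on $k$, so it is unchanged. This gives
\[
\gen \leq \sum_{i=1}^{m} \sum_{j=1}^{\frac{n}{m}+1} \EE \left[ \sqrt{\tfrac{1}{2nm} I^{\hat{Z}^{(i)}_j}\left( W; U^{(i)}_{[\frac{n}{m}]} \right)} \right],
\]
together with the analogous integrated inequality coming from \eqref{eq:SIPCIMI}.

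\emph{Step 2: pass to the complement.} With $k=m$, each block holds out exactly $\frac{m}{k}=1$ sample. Hence $\bar{U}^{(i)} = [\frac{n}{m}+1] \backslash U^{(i)}_{[\frac{n}{m}]}$ is a single index, and the map $U^{(i)}_{[\frac{n}{m}]} \mapsto \bar{U}^{(i)}$ is a deterministic bijection. The data-processing inequality applied in both directions (or simply invariance of mutual information under bijective relabelings of one argument) gives, for every realization $z$,
\[
I^{\hat{Z}^{(i)}_j = z}\left(W; U^{(i)}_{[\frac{n}{m}]}\right) = I^{\hat{Z}^{(i)}_j = z}\left(W; \bar{U}^{(i)}\right).
\]
Averaging over $\hat{Z}^{(i)}_j$ gives the same identity for the conditional mutual information. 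Substituting these identities yields \eqref{eq:dis-lofo-SCMI} and \eqref{eq:lofo-SCMI}.

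\emph{Step 3: disintegrated versus integrated.} The second inequality also follows directly from the first by Jensen's inequality, since the square root is concave, exactly as in the discussion after Theorem~\ref{thm:SIPCIMI}. No step here is a real obstacle. The only point needing care is to state the bijection in Step 2 explicitly, because it relies on exactly one sample being left out per block, which is precisely what the choice $k=m$ guarantees.
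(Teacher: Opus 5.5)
Your proposal is correct and follows the paper's route: the corollary is obtained by directly substituting $k=m$ into Theorem~\ref{thm:SIPCIMI}, giving block size $\frac{n}{m}+1$ with the $k$-independent prefactor $\frac{1}{2nm}$ unchanged. You also state explicitly that $U^{(i)}_{[\frac{n}{m}]}\mapsto\bar{U}^{(i)}$ is a bijection, since exactly one index is held out per block, so the (disintegrated and conditional) mutual informations coincide; this fills in the step the paper leaves implicit when it writes the bound in terms of $\bar{U}^{(i)}$.
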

	
	\begin{corollary}[($(1, 1)$-SIPCIMI) LOO-SCMI Bound]
		Consider the LOO setting in Definition~\ref{def:loo-supersample} and assume $\ell(w, z) \in [0, 1]$ for all $w \in \Wcal$ and $z \in \Zcal$. When $m = 1$ and $k = 1$, the $(m, k)$-SIPCIMI bound gives a new LOO-SCMI bound as
		\begin{align}
			\gen &\leq \sum_{i=1}^{n+1} \EE \left[ \sqrt{ \frac{1}{2n} I^{\hat{Z}_i}\left( W; U \right)} \right]	\label{eq:dis-loo-SCMI} \\
			&\leq \sum_{i=1}^{n+1} \sqrt{ \frac{1}{2n} I\left( W; U | \hat{Z}_i \right)}.
			\label{eq:loo-SCMI}
		\end{align}
	\end{corollary}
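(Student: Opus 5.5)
The LOO-SCMI bound is the special case $m=k=1$ of Theorem~\ref{thm:SIPCIMI}, so the plan is to specialize the general statement and check that the partitioned L$m$O setting collapses to the LOO supersample setting of Definition~\ref{def:loo-supersample}.

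\textbf{Collapsing the setting.} With $m=1$ we have $\Kcal=\{1\}$, so $k=1$ is forced. There is then a single block $\hat{Z}^{(1)}_{[n+1]}=\hat{Z}_{[n+1]}$. The training membership $U^{(1)}_{[n]}$ is a uniformly chosen $n$-subset of $[n+1]$. It is in bijection with the single left-out index $\bar{U}^{(1)}_{[1]}$, which is uniform on $[n+1]$. This is precisely the variable $U$ of Definition~\ref{def:loo-supersample}, and the induced hypothesis $W\sim P_{W|Z_{[n]}}(\cdot|\hat{Z}_{[n+1]\backslash U})$ coincides with $W^{\textnormal{L$m$O}}$.

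\textbf{Mutual-information terms.} Since $U^{(1)}_{[n]}$ and $U$ are deterministic, invertible functions of each other, the disintegrated mutual informations agree for every realization $z$:
\[
I^{\hat{Z}_i=z}\big(W;U^{(1)}_{[n]}\big)=I^{\hat{Z}_i=z}(W;U).
\]
Consequently the conditional mutual informations also agree: $I(W;U^{(1)}_{[n]}|\hat{Z}_i)=I(W;U|\hat{Z}_i)$.

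\textbf{Constant and index range.} The constant becomes $\frac{1}{2nm}=\frac{1}{2n}$, and the double sum reduces to $\sum_{i=1}^{n+1}$.

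\textbf{Conclusion.} Substituting these identifications into the unprocessed bounds \eqref{eq:dis-SIPCIMI} and \eqref{eq:SIPCIMI} yields \eqref{eq:dis-loo-SCMI} and \eqref{eq:loo-SCMI}. The second inequality, from disintegrated to integrated, is Jensen's inequality applied to the concave square root, exactly as noted after Theorem~\ref{thm:SIPCIMI}. Finally, the generalization error in the LOO setting equals $\gen$, by Lemma~\ref{lem:IP-lmo-gen-err} with $m=k=1$ (or \cite[Lemma~7.2]{rammal2022leave}).

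The only step needing care is the bijection between $U^{(1)}_{[n]}$ and $U$. It guarantees that the mutual information is unchanged, and it holds without Assumption~\ref{ass:invariant-alg}.
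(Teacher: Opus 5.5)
Your proposal is correct and matches the paper, which obtains this corollary by directly plugging $(m,k)=(1,1)$ into Theorem~\ref{thm:SIPCIMI}; the paper gives no separate argument. Your explicit bijection between $U^{(1)}_{[n]}$ and the left-out index $U$, together with the invariance of the (disintegrated) mutual information under that bijection, spells out the identification the paper leaves implicit.
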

	
	Next, we explore relationships of the above specialized bounds with different parameters. When Assumption~\ref{ass:invariant-alg} is satisfied, we find the following proposition with its proof provided in Appendix~\ref{subapp:invariant-SCMI}.
	
	\begin{proposition} \label{pro:invariant-SCMI}
		Consider any given constant $m$ and suppose $\Acal$ satisfies Assumption~\ref{ass:invariant-alg}. For any given $m$ and $k_1, k_2 \in \Kcal$, for all $i_1 \in [k_1]$, $i_2 \in [k_2]$, $j_1 \in [\frac{n+m}{k_1}]$, $j_2 \in [\frac{n+m}{k_2}]$ and $z \in \Zcal$, we have
		\begin{align}
			I^{\hat{Z}^{(i_1)}_{j_1} = z}\left( W; U^{(i_1)}_{[\frac{n}{k_1}]} \right) = I^{\hat{Z}^{(i_2)}_{j_2} = z}\left( W; U^{(i_2)}_{[\frac{n}{k_2}]} \right). \label{eq:invariant-SCMI-p2}
		\end{align}
	\end{proposition}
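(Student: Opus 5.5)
By Proposition~\ref{pro:processed-SIPCIMI}, Assumption~\ref{ass:invariant-alg} gives $I^{\hat{Z}^{(i)}_j = z}( W; U^{(i)}_{[\frac{n}{k}]} ) = I^{\hat{Z}^{(i)}_j = z}( W; T^{(i)}_j )$. So it suffices to show that the conditional joint law $P_{W, T^{(i)}_j \mid \hat{Z}^{(i)}_j = z}$ is the same for every $k \in \Kcal$, $i \in [k]$ and $j \in [\frac{n+m}{k}]$. Once that holds, the disintegrated mutual information, being a functional of this law alone, is the same for all $(k,i,j)$, which is \eqref{eq:invariant-SCMI-p2}.

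\textbf{Marginal of $T^{(i)}_j$.} The set $U^{(i)}_{[\frac{n}{k}]}$ is a uniform subset of size $\frac{n}{k}$ of $[\frac{n+m}{k}]$, independent of $\hat{Z}_{[n+m]}$. Hence
\[
P(T^{(i)}_j = 1 \mid \hat{Z}^{(i)}_j = z) = P(j \in U^{(i)}_{[\frac{n}{k}]}) = \tfrac{n/k}{(n+m)/k} = \tfrac{n}{n+m},
\]
which does not depend on $k$, $i$, $j$ or $z$.

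\textbf{Conditional law of $W$ given $T^{(i)}_j = 1$ and $\hat{Z}^{(i)}_j = z$.} Condition further on the full membership $U_{[n]} = u_{[n]}$ with $j \in u^{(i)}$. The training set $\hat{Z}_{u_{[n]}}$ then consists of $z$ together with $n-1$ other supersamples. Since all supersamples are i.i.d.\ from $\mu$ and independent of $U_{[n]}$, these $n-1$ supersamples are i.i.d.\ $\mu$ and independent of $z$. The only dependence on $u_{[n]}$ is through where $z$ sits in the ordered input to $\Acal$. Assumption~\ref{ass:invariant-alg} removes this: the law of $W$ is that of $\Acal(z, Z'_2, \ldots, Z'_n)$ with $Z'_l \overset{i.i.d.}{\sim} \mu$, whatever the position of $z$. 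Averaging over $u_{[n]}$ therefore leaves this law unchanged, and it is clearly independent of $(k,i,j)$.

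\textbf{Conditional law of $W$ given $T^{(i)}_j = 0$ and $\hat{Z}^{(i)}_j = z$.} Here the training set consists of $n$ supersamples other than $\hat{Z}^{(i)}_j$, which are i.i.d.\ $\mu$ and independent of $z$. Hence $W \sim P_W$, the law of $\Acal$ on $n$ i.i.d.\ samples, again independent of $(k,i,j,z)$.

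Combining the three pieces, $P_{W, T^{(i)}_j \mid \hat{Z}^{(i)}_j = z}$ is common to all index choices, and the claim follows.

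\textbf{Main obstacle.} The step requiring care is making rigorous the statement that, conditionally on $\{T^{(i)}_j=1, \hat{Z}^{(i)}_j=z\}$, the remaining training samples are i.i.d.\ $\mu$ and independent of $z$. I would handle it by first conditioning on the whole membership $U_{[n]}$ and using the independence of $U_{[n]}$ from $\hat{Z}_{[n+m]}$. I would then invoke permutation invariance pointwise for each $u_{[n]}$, before integrating out $U_{[n]}$. This is also the step that explains why Assumption~\ref{ass:invariant-alg} is needed: without it, the position of $z$ in the ordered input to $\Acal$ could differ with $(k,i,j)$ and change the law of $W$.
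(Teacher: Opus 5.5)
Your proposal is correct and takes essentially the same route as the paper. Both reduce to $T^{(i)}_j$ via Proposition~\ref{pro:processed-SIPCIMI}, then show that $P_{T^{(i)}_j}$ and the two kernels $P_{W|\hat{Z}^{(i)}_j, T^{(i)}_j}(\cdot|z,0)=P_W$ and $P_{W|\hat{Z}^{(i)}_j, T^{(i)}_j}(\cdot|z,1)$ (the law of $W$ with $z$ in a fixed training slot) do not depend on $(i,j,k)$. Your justification of the $T^{(i)}_j=1$ kernel, by conditioning on the full membership $U_{[n]}$ and applying permutation invariance pointwise, is a more explicit version of the paper's appeal to Lemma~\ref{lem:inv_alg} and \eqref{eq:invariant_u_1}.
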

	Proposition~\ref{pro:invariant-SCMI} indicates that, if Assumption~\ref{ass:invariant-alg} is satisfied, the free parameter $k$ has no impact on the SIPCIMI bound, i.e., the SIPCIMI \eqref{eq:dis-SIPCIMI}, L$m$O-SCMI \eqref{eq:dis-lmo-SCMI} and LOFO-SCMI \eqref{eq:dis-lofo-SCMI} bounds with the same $m$ are equivalent in this case\footnote{It is straightforward to verify that the integrated versions of the mentioned bounds are equivalent, too.}. Moreover, Proposition~\ref{pro:invariant-SCMI} also enables a simplification of \eqref{eq:optimal-SIPCIMI}. Specifically, since the optimal $(m,k)$ relies only on $m$ if the learning algorithm satisfies Assumption~\ref{ass:invariant-alg}, we can focus on the case of $k=1$, i.e., the L$m$O-SCMI bound \eqref{eq:lmo-SCMI}, and \eqref{eq:optimal-SIPCIMI} is therefore simplified as
	\begin{align}
		\gen \leq \inf_{m \in \NN} \left\{ (n+m) \EE \left[ \sqrt{ \frac{1}{2nm} I^{\hat{Z}_1}\left( W; U_{[n]} \right)} \right] \right\}. \label{eq:optimal-m-SCMI}
	\end{align}
	
	\begin{figure}[ht]
		\centering
		\includegraphics[width=0.8\textwidth]{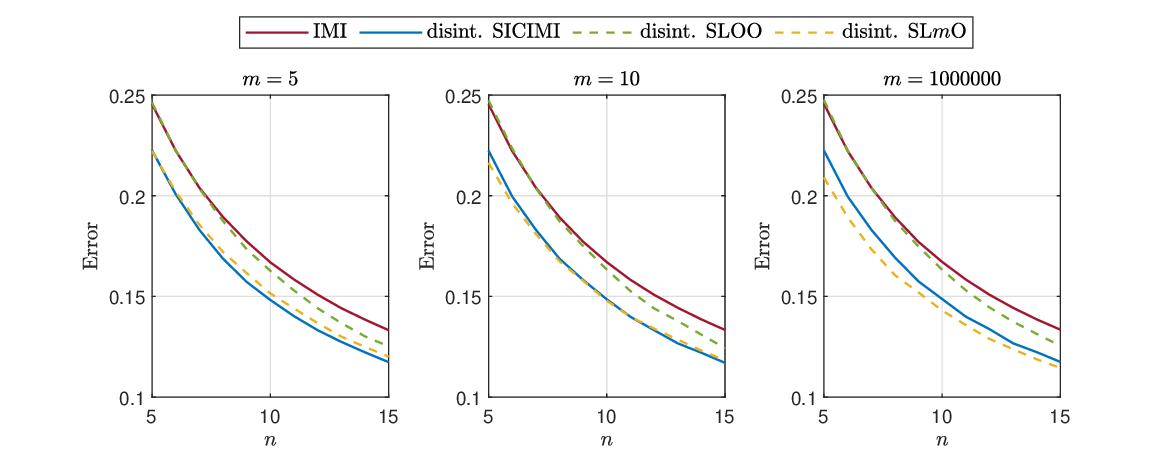}
		\caption{Comparison of various bounds for Example~\ref{eg:Bernoulli}, where we set $p=0.25$. As shown in the figure, our proposed disintegrated L$m$O-SCMI bound \eqref{eq:dis-lmo-SCMI} improves with increasing $m$ and outperforms the existing SICIMI bound \eqref{eq:dis-SICIMI} when $m$ tends to infinity. Again, we note that $m$ can be freely chosen for analysis and is not a parameter of the underlying learning problem. }
		\label{fig:SCMI_Bernoulli}
	\end{figure}
	
	Since Assumption~\ref{ass:invariant-alg} is automatically satisfied under the Bernoulli example (Example~\ref{eg:Bernoulli}), a straightforward application of \eqref{eq:optimal-m-SCMI} is thus feasible to obtain improved results. In this case, we select the proposed disintegrated L$m$O-SCMI bound \eqref{eq:dis-lmo-SCMI} along with the disintegrated LOO-SCMI bound \eqref{eq:dis-loo-SCMI} and compare them with the IMI bound \eqref{eq:IMI} and disintegrated SICIMI bound \eqref{eq:dis-SICIMI}. Fig.~\ref{fig:SCMI_Bernoulli} presents the numerical results obtained through Monte-Carlo simulations. Note that the special expressions of all the considered SCMI quantities can be found in Appendix~\ref{subapp:cal_SICIMI} to \ref{subapp:cal_LOFO_SCMI} (e.g., \eqref{eq:Ber_SICIMI} for SICIMI, \eqref{eq:Ber_LOO_SCMI} for LOO-SCMI and \eqref{eq:Ber_LmO_SCMI} for L$m$O-SCMI and LOFO-CMI\footnote{Although these results are for integrated CMI quantities, their disintegrated counterparts can be calculated analogously and are therefore omitted.}.). We observe that the proposed L$m$O-SCMI bound tends to gain better performance as $m$ increases, especially when $m \gg n$, in which case the L$m$O-SCMI bound obtains the tightest result. Besides, it is shown in Fig.~\ref{fig:SCMI_Bernoulli} that the new LOO-SCMI bound can converge, in contrast to the case in Proposition~\ref{pro:LOO-CMI} where the LOO-CMI bound is non-vanishing. This implies that the SCMI quantity is order-wise tighter as compared to the CMI quantity, demonstrating the benefit of the SCMI framework.

	\section{Extensions}\label{sec:extensions}
	The generalization bounds presented so far are all based on the hypothesis $W$ and loss difference $\gen$. In this section, we extend these results by measuring the generalization error with different approaches, and the resulting new bounds are tighter and more friendly to computation as compared to the conventional ones.
	
	\subsection{Beyond Hypothesis-Based Bounds}
	For the simplicity of notions, let $\Lambda_i := \ell(W, \hat{Z}_i)$ denote the loss that the hypothesis $W$ incurs on the supersample $\hat{Z}_i$. Analogous to the definition of $\hat{Z}^{(i)}_{[\frac{n}{k}]}$, it is clear from the context of the partitioned L$m$O supersample setting that, the shorthand $\Lambda^{(i)}_{[\frac{n+m}{k}]} = \{\Lambda^{(i)}_j\}_{j=1}^{\frac{n+m}{k}}$ denotes the collection of all individual losses in the $i$-th partition, where $\Lambda^{(i)}_j = \ell(W, \hat{Z}^{(i)}_j)$ denotes the loss with respect to $\hat{Z}^{(i)}_j$. With these notions, we have $\widehat{L}_n$ and $L_\mu$ rewritten as
	\begin{align}
		\widehat{L}_n = \frac{1}{k} \sum_{i=1}^k \EE_{\Lambda^{(i)}_{[\frac{n+m}{k}]}, U^{(i)}_{[\frac{n}{k}]}} \left[ \frac{k}{n} \sum_{j \in U^{(i)}_{[\frac{n}{k}]}} \Lambda^{(i)}_j \right], \label{eq:lb-IP-exp-emp-risk}\\
		L_\mu = \frac{1}{k} \sum_{i=1}^k \EE_{\Lambda^{(i)}_{[\frac{n+m}{k}]}, U^{(i)}_{[\frac{n}{k}]}} \left[ \frac{k}{m} \sum_{j \in \bar{U}^{(i)}_{[\frac{n}{k}]}} \Lambda^{(i)}_j \right]. \label{eq:lb-IP-exp-pop-risk}
	\end{align}
	Hence, the derivation of the previously proposed (hypothesis-based) bounds can be adapted by expressing CMI quantities via the incurred losses $\Lambda^{(i)}_j$, without an explicit mention of the hypothesis $W$. Furthermore, these prediction-based bounds can be strengthened by another approach initiated in \cite{hellstrom2022new}. In particular, instead of directly bounding the generalization error gap $\gen$, \cite{hellstrom2022new, hellstrom2024comparing, dongexactly} established bounds for various discrepancy measures between the expected empirical and population risks, thus often yielding tighter characterizations of the expected population risk $L_\mu$ when the expected empirical risk $\widehat{L}_n$ is small. 
	
	Inspired by the above-mentioned works, we choose the weighted binary JS divergence (defined in \eqref{eq:d_js}) as the discrepancy measure between $L_\mu$ and $\widehat{L}_n$ to yield the following new prediction-based bounds. The complete proof is relegated to Appendix~\ref{subapp:lb-gen-bound}.
	
	\begin{theorem}[JS-Based Bounds Evaluated on Predictions]\label{thm:lb-gen-bound}
		Consider the partitioned L$m$O supersample setting in Definition~\ref{def:p-lmo}. Let $\theta = \frac{n}{n+m}$ and assume $\ell(w, z) \in [0, 1]$ for all $w \in \Wcal$ and $z \in \Zcal$, then
		\begin{align}
			d^\theta_{\textnormal{JS}} (\widehat{L}_n \| L_\mu) &\leq \frac{2}{n+m} \sum_{i=1}^{k} I\left( \Lambda^{(i)}_{[\frac{n+m}{k}]}; U^{(i)}_{[\frac{n}{k}]} \right), \label{eq:lb-gen-bound} \\
			d^\theta_{\textnormal{JS}} (\widehat{L}_n \| L_\mu) &\leq \frac{1}{n+m} \sum_{i=1}^{k} \sum_{j=1}^{\frac{n+m}{k}} I \left( \Lambda^{(i)}_j; U^{(i)}_{[\frac{n}{k}]} \right). \label{eq:lb-gen-bound-single}
		\end{align}
	\end{theorem}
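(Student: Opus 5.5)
Both bounds follow from Lemma~\ref{lem:js-bound}, together with the joint convexity of the weighted binary JS divergence, which averages the per-index estimates. Write $N:=\frac{n+m}{k}$ and fix a block $i\in[k]$ and an index $j\in[N]$. Take $X=\Lambda^{(i)}_j$ (or, for \eqref{eq:lb-gen-bound}, $X=\Lambda^{(i)}_{[N]}$ with $f$ picking its $j$-th coordinate) and $Y=U^{(i)}_{[\frac{n}{k}]}$, with $\Scal=\Ucal_j$. Then $T=T^{(i)}_j$ and
\[
P(T=1)=\frac{n/k}{(n+m)/k}=\frac{n}{n+m}=\theta,
\]
which is exactly the weight in the statement. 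Since $\ell\in[0,1]$, $f(X)=\Lambda^{(i)}_j\in[0,1]$. Lemma~\ref{lem:js-bound} then gives
\[
d^\theta_{\textnormal{JS}}\big(\EE[\Lambda^{(i)}_j\mid T^{(i)}_j=1]\,\big\|\,\EE[\Lambda^{(i)}_j\mid T^{(i)}_j=0]\big)\le I(\Lambda^{(i)}_j;U^{(i)}_{[\frac{n}{k}]}),
\]
with the left-hand side also bounded by $I(\Lambda^{(i)}_{[N]};U^{(i)}_{[\frac{n}{k}]})$ via the data-processing step in the lemma.

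Next I identify the averages. Write $a_{ij}:=\EE[\Lambda^{(i)}_j\mid T^{(i)}_j=1]$ (the loss on $\hat Z^{(i)}_j$ when it is a training point) and $b_{ij}:=\EE[\Lambda^{(i)}_j\mid T^{(i)}_j=0]$ (the loss when it is a test point). Since $P(T^{(i)}_j=1)=\theta$,
\[
\EE\Big[\sum_{j\in U^{(i)}}\Lambda^{(i)}_j\Big]=\sum_j\theta a_{ij},
\]
and by \eqref{eq:lb-IP-exp-emp-risk}
\[
\widehat L_n=\frac1k\sum_i\frac kn\sum_j\theta a_{ij}=\frac{1}{n+m}\sum_{i,j}a_{ij}.
\]
Similarly, by \eqref{eq:lb-IP-exp-pop-risk}, $L_\mu=\frac{1}{n+m}\sum_{i,j}b_{ij}$, which is consistent with Lemma~\ref{lem:IP-lmo-gen-err}. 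So both risks are uniform averages over the $n+m$ pairs $(i,j)$.

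The key step is that $(p,q)\mapsto d^\theta_{\textnormal{JS}}(p\|q)$ is jointly convex. It is a sum of the terms $d_{\textnormal{KL}}(p\|\theta p+(1-\theta)q)$ and $d_{\textnormal{KL}}(q\|\theta p+(1-\theta)q)$, each a jointly convex $d_{\textnormal{KL}}$ composed with a linear map of $(p,q)$; this inherits from the joint convexity of $d_{\textnormal{KL}}$ noted in Section~\ref{subsec:SCMI}. Jensen over the uniform average on the $n+m$ pairs then yields
\[
d^\theta_{\textnormal{JS}}(\widehat L_n\|L_\mu)\le\frac{1}{n+m}\sum_{i,j}d^\theta_{\textnormal{JS}}(a_{ij}\|b_{ij})\le\frac{1}{n+m}\sum_{i,j}I(\Lambda^{(i)}_j;U^{(i)}_{[\frac nk]}),
\]
which is \eqref{eq:lb-gen-bound-single}.

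For \eqref{eq:lb-gen-bound}, bounding each term by $I(\Lambda^{(i)}_{[N]};U^{(i)})$ naively gives the factor $N/(n+m)=1/k$, not $2/(n+m)$. Which of the two constants is smaller depends on $N$. A route to $2/(n+m)$ is to split each block's indices into a training part and a test part and bound the two average differences separately. First, I would apply the lemma once to each block-level averaged loss, with the coordinate chosen via a uniformly random $J$ independent of everything. Then chain rule and independence of $J$ should give $I(\Lambda^{(i)}_{[N]},J;T)\le I(\Lambda^{(i)}_{[N]};U^{(i)})$ after symmetrization; the factor $2$ would come from handling the training-side and test-side contributions separately. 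Getting exactly the constant $\frac{2}{n+m}$ is the main obstacle. I would try first the symmetrization with random $J$ and compare against the per-index route, adjusting to whichever yields $2/(n+m)$.
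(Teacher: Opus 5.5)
Your proof of \eqref{eq:lb-gen-bound-single} is correct and essentially the paper's. The paper applies Jensen separately to the two $d_{\textnormal{KL}}$ terms of $d^\theta_{\textnormal{JS}}$, each against the mixture $\theta a_{ij}+(1-\theta)b_{ij}=\EE[\Lambda^{(i)}_j]$. It then recombines them into $d^\theta_{\textnormal{JS}}(a_{ij}\|b_{ij})$, which is equivalent to your joint-convexity step, and finishes with Lemma~\ref{lem:js-bound} and data processing.

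The gap is \eqref{eq:lb-gen-bound}, which your proposal does not prove. Since $N=\frac{n+m}{k}\ge 2$ always, your per-index constant $\frac{1}{k}=\frac{N}{n+m}$ is never smaller than $\frac{2}{n+m}$; the two agree only when $k=n=m$. The random-$J$ variant does not help. Take $X=(\Lambda^{(i)}_{[N]},J)$, $f(X)=\Lambda^{(i)}_J$ and $T=\mathds{1}_{J\in U^{(i)}_{[\frac nk]}}$. Lemma~\ref{lem:js-bound} then bounds the JS divergence of the two block averages by $I(\Lambda^{(i)}_{[N]},J;T)\le I(\Lambda^{(i)}_{[N]};U^{(i)}_{[\frac nk]})$. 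Averaging over blocks again gives only $\frac{1}{k}\sum_i I(\Lambda^{(i)}_{[N]};U^{(i)}_{[\frac nk]})$. Lemma~\ref{lem:js-bound} compares two conditional means through a single binary indicator and carries no sample-size factor, so no symmetrization built on it can recover the missing factor $N/2$.

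The missing idea is a block-level change of measure that exploits concentration of the training and test means, scaled by their sample sizes.

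\begin{itemize}
\item Write $d^\theta_{\textnormal{JS}}(\widehat{L}_n\|L_\mu)=\theta\,d_{\textnormal{KL}}(\widehat{L}_n\|M)+(1-\theta)\,d_{\textnormal{KL}}(L_\mu\|M)$ with $M=\theta\widehat{L}_n+(1-\theta)L_\mu$. By Lemma~\ref{lem:average-emp-loss}, $M=\frac1k\sum_i\EE\big[\frac1N\sum_{j=1}^{N}\Lambda^{(i)}_j\big]$.
\item Use $d_{\textnormal{KL}}=\sup_\gamma d_\gamma$ and push $d_\gamma$ inside the expectation over $(\Lambda^{(i)}_{[N]},U^{(i)}_{[\frac nk]})$ by joint convexity.
\item Apply Donsker--Varadhan (Lemma~\ref{lem:DV-formula}) to $P_{\Lambda^{(i)}_{[N]},U^{(i)}_{[\frac nk]}}$ versus $P_{\Lambda^{(i)}_{[N]}}\otimes P_{U^{(i)}_{[\frac nk]}}$. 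Use the \emph{scaled} function $\frac nk\,d_\gamma$ evaluated at the training mean $\frac kn\sum_{j\in U^{(i)}_{[\frac nk]}}\Lambda^{(i)}_j$ and the block mean $\frac1N\sum_{j}\Lambda^{(i)}_j$. The training mean must sit in the slot of $d_\gamma$ that enters linearly.
\item Under the product measure, the training mean is an average of $\frac nk$ draws without replacement from a population whose mean is the block mean. Lemma~\ref{lem:binary-relative-entropy-concentration} then makes the log-moment term nonpositive. It transfers to sampling without replacement by Hoeffding's comparison, since $x\mapsto e^{\gamma x}$ is convex. This gives $d_{\textnormal{KL}}(\widehat{L}_n\|M)\le\frac1n\sum_i I(\Lambda^{(i)}_{[N]};U^{(i)}_{[\frac nk]})$.
\item The same argument with the test mean and scale $\frac mk$ gives $d_{\textnormal{KL}}(L_\mu\|M)\le\frac1m\sum_i I(\Lambda^{(i)}_{[N]};U^{(i)}_{[\frac nk]})$.
\item The weights $\theta=\frac{n}{n+m}$ and $1-\theta=\frac{m}{n+m}$ turn these into exactly $\frac{2}{n+m}\sum_i I(\Lambda^{(i)}_{[N]};U^{(i)}_{[\frac nk]})$.
\end{itemize}

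So you were right that the $2$ comes from treating the training and test sides separately. The actual gain over the per-index route, however, comes from the $\frac nk$ and $\frac mk$ scalings inside the change of measure.
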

	
	Using the inverse function $d^{-1}_{\textnormal{JS}}(\cdot, \cdot, \cdot)$ defined in \eqref{eq:inv-d_js}, $L_\mu$ can be upper bounded by
	\begin{equation}
		\begin{aligned}
			L_\mu \leq d^{-1}_{\textnormal{JS}} \left( \frac{n}{n+m}, \widehat{L}_n, c \right),
		\end{aligned}
	\end{equation}
	where $c$ can be the RHS of \eqref{eq:lb-gen-bound} or \eqref{eq:lb-gen-bound-single}. Notably, the conventional hypothesis-based bounds corresponding to \eqref{eq:lb-gen-bound} and \eqref{eq:lb-gen-bound-single} are the IPCIMI bound \eqref{eq:IPCIMI} and SIPCIMI bound \eqref{eq:SICIMI}, respectively.
	
	\begin{remark}
		We provide below a variant IPCIMI bound, which is obtained by using the same proof techniques as for \eqref{eq:lb-gen-bound}. We still adopt the assumption in Theorem~\ref{thm:lb-gen-bound}, i.e., $\ell(w, z) \in [0, 1]$ for all $w \in \Wcal$ and $z \in \Zcal$, then
		\begin{equation}
			\begin{aligned}
				\gen \leq \frac{1}{k} \sum_{i=1}^{k} \sqrt{ \frac{k(n+m)}{nm} I \left( W; U^{(i)}_{[\frac{n}{k}]} | \tilde{Z}^{(i)}_{[\frac{n+m}{k}]} \right) }.
			\end{aligned}\label{eq:var-IPCIMI}
		\end{equation}
		A proof sketch for this bound is given in Appendix~\ref{subapp:var-IPCIMI}. As compared to \eqref{eq:IPCIMI}, the new variant IPCIMI bound in \eqref{eq:var-IPCIMI} eliminates a leading constant $\sqrt{\frac{1}{2}C_{n,m}^k}$. This implies that the variant IPCIMI bound \eqref{eq:var-IPCIMI} can be slightly tighter under some parameter settings (e.g., $\frac{1}{2}C_{n,m}^k = \frac{4}{3} > 1$ with $m=n$ and $k=\frac{n}{2}$), while the original bound \eqref{eq:IPCIMI} adopts less limiting assumption that allows for broader application scenarios.
	\end{remark}
	The advantages of prediction-based bounds lie in two aspects. First, due to the fact that $\Lambda^{(i)}_j$ is a processed function of $(W, \hat{Z}^{(i)}_j)$, it follows directly from the data-processing inequality that $I(\Lambda^{(i)}_j; U^{(i)}_{[\frac{n}{k}]}) \leq I(W, \hat{Z}^{(i)}_j; U^{(i)}_{[\frac{n}{k}]}) = I(W; U^{(i)}_{[\frac{n}{k}]} | \hat{Z}^{(i)}_j)$, showing that the prediction-based bounds are in general tighter than their hypothesis-based counterparts. Second, since the loss function maps the high-dimensional hypothesis space $\Wcal$ to $\RR^+$, making the estimation of the MI between $\Lambda^{(i)}_j$ and $U^{(i)}_{[\frac{n}{k}]}$ can be an easier process in practical learning scenarios (e.g., deep neural network) \cite{harutyunyan2021information,wang2023tighter}. We also note that these advantages come at the cost of losing the ability to characterize the generalization behavior of the hypothesis itself\textemdash prediction-based bounds apply only to a specific loss function, whereas hypothesis-based bounds require only knowledge of the hypothesis itself and therefore apply to arbitrary loss functions.
	
	\subsection{Exactly Tight Bounds}
	Inspired by \cite{haghifam2022understanding,wang2023tighter,dongexactly}, we also obtain a series of exactly tight bounds under the proposed supersample setting, with the proof provided in Appendix~\ref{subapp:tightest-gen-bound}.
	
	\begin{theorem}[Bound for the 0-1 Loss Function]\label{thm:tightest-gen-bound}
		Consider the partitioned L$m$O supersample setting in Definition~\ref{def:p-lmo}. Let $\theta = \frac{n}{n+m}$, if Assumption~\ref{ass:invariant-alg} is satisfied and $\ell(\cdot, \cdot) \in \{0, 1\}$, then
		\begin{align}
			d^\theta_{\textnormal{JS}} ( \widehat{L}_n \| L_\mu ) = \frac{1}{n+m} \sum_{i=1}^{k} \sum_{j=1}^{\frac{n+m}{k}} I\left( \Lambda^{(i)}_j; U^{(i)}_{[\frac{n}{k}]} \right)
		\end{align}
	\end{theorem}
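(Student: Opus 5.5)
The plan is to show that every single summand on the right-hand side equals $d^\theta_{\textnormal{JS}}(\widehat{L}_n \| L_\mu)$. Since there are exactly $k\cdot\frac{n+m}{k} = n+m$ summands, the normalization $\frac{1}{n+m}$ then gives the claimed identity. The tools are the equality case of Lemma~\ref{lem:js-bound} and the symmetry provided by Assumption~\ref{ass:invariant-alg}.

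Fix $i \in [k]$ and $j \in [\frac{n+m}{k}]$, and let $T^{(i)}_j = \mathds{1}_{U^{(i)}_{[\frac{n}{k}]} \in \Ucal_j}$ as in Theorem~\ref{thm:SIPCIMI}.

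\emph{Step 1: reduce $U^{(i)}_{[\frac{n}{k}]}$ to $T^{(i)}_j$.} I claim that $I(\Lambda^{(i)}_j; U^{(i)}_{[\frac{n}{k}]}) = I(\Lambda^{(i)}_j; T^{(i)}_j)$. The inequality ``$\geq$'' is data processing, since $T^{(i)}_j$ is a function of $U^{(i)}_{[\frac{n}{k}]}$. For ``$\leq$'', it suffices to show that the conditional law of $\Lambda^{(i)}_j = \ell(W, \hat{Z}^{(i)}_j)$ given $U^{(i)}_{[\frac{n}{k}]} = u$ depends on $u$ only through $\mathds{1}_{j \in u}$. The argument uses two facts: the supersamples are i.i.d., and $W$ depends on the training set only as an unordered multiset (Assumption~\ref{ass:invariant-alg}). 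Given these, for two realizations $u, u'$ that both contain $j$, a relabelling of the indices in block $i$ that fixes $j$ and maps $u$ to $u'$ preserves the joint law of $(W, \hat{Z}^{(i)}_j)$. The same relabelling argument applies when both $u$ and $u'$ exclude $j$. The other blocks are untouched, because $U^{(i)}_{[\frac{n}{k}]}$ is independent of them. This gives the Markov chain $\Lambda^{(i)}_j - T^{(i)}_j - U^{(i)}_{[\frac{n}{k}]}$, and hence equality. This step is the same mechanism as in Proposition~\ref{pro:processed-SIPCIMI}, applied to the loss instead of the hypothesis. I expect it to be the main point requiring care: one has to set up the exchangeability argument cleanly, including the randomness of $W$.

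\emph{Step 2: apply the equality case of Lemma~\ref{lem:js-bound}.} Take $X = \Lambda^{(i)}_j \in \{0,1\}$, $f$ the identity, $Y = U^{(i)}_{[\frac{n}{k}]}$ and $\Scal = \Ucal_j$. Then $\theta = P(T^{(i)}_j = 1) = \binom{\frac{n+m}{k}-1}{\frac{n}{k}-1}\big/\binom{\frac{n+m}{k}}{\frac{n}{k}} = \frac{n}{n+m}$. The equality case can also be seen directly: $I(X;T) = \theta D_{\textnormal{KL}}(P_{X|T=1}\|P_X) + (1-\theta) D_{\textnormal{KL}}(P_{X|T=0}\|P_X)$, and for binary $X$ this is exactly $d^\theta_{\textnormal{JS}}$ of the two conditional means. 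Hence
\[
I(\Lambda^{(i)}_j; T^{(i)}_j) = d^\theta_{\textnormal{JS}}\big(\EE[\Lambda^{(i)}_j \mid T^{(i)}_j = 1] \,\big\|\, \EE[\Lambda^{(i)}_j \mid T^{(i)}_j = 0]\big).
\]

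\emph{Step 3: identify the conditional means.} When $T^{(i)}_j = 1$, the sample $\hat{Z}^{(i)}_j$ is a training sample. By invariance and exchangeability, every training position has the same expected loss, so $\EE[\Lambda^{(i)}_j \mid T^{(i)}_j = 1]$ equals the average training loss, which is $\widehat{L}_n$; this is consistent with \eqref{eq:lb-IP-exp-emp-risk} and Lemma~\ref{lem:IP-lmo-gen-err}. When $T^{(i)}_j = 0$, the sample $\hat{Z}^{(i)}_j$ is independent of $W$ and distributed as $\mu$, so $\EE[\Lambda^{(i)}_j \mid T^{(i)}_j = 0] = L_\mu$.

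Combining Steps~1--3, every summand equals $d^\theta_{\textnormal{JS}}(\widehat{L}_n\|L_\mu)$. Summing the $n+m$ summands and dividing by $n+m$ proves the theorem.
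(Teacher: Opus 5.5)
Your proposal is correct and is essentially the paper's argument. The paper proves the theorem by showing that the three inequalities in the proof of \eqref{eq:lb-gen-bound-single} are all tight: Jensen's inequality in \eqref{eq:ineq-1}, Lemma~\ref{lem:js-bound} in \eqref{eq:ineq-2}, and data processing in \eqref{eq:ineq-3}. Your Steps~3, 2 and 1 establish exactly these three equalities, arranged as a direct per-summand computation, and your relabelling argument for the Markov chain $\Lambda^{(i)}_j - T^{(i)}_j - U^{(i)}_{[\frac{n}{k}]}$ is, if anything, more explicit than the paper's appeal to the proof of Proposition~\ref{pro:processed-SIPCIMI}.
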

	
	\begin{figure}[t]
		\centering
		\includegraphics[width=0.5\textwidth]{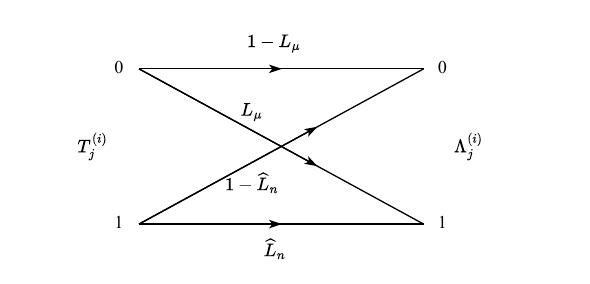}
		\caption{The transition diagram of the BAC channel from $T^{(i)}_j$ to $\Lambda^{(i)}_j$. }
		\label{fig:BAC}
	\end{figure}
	
	Inspired by \cite{wang2023tighter}, we also relate the result in \eqref{eq:tightest_bound} with a special communication setting, where $P_{\Lambda^{(i)}_j | T^{(i)}_j}$ characterizes a memoryless channel with a binary input $T^{(i)}_j$ and a binary output $\Lambda^{(i)}_j$. First, we consider $T^{(i)}_j = 1$ in which case $\hat{Z}^{(i)}_j$ is selected for training, then 
	\begin{equation}
		\begin{aligned}
			\widehat{L} &\overset{\eqref{eq:exp-emp-risk-single}}{=} \frac{1}{n+m} \sum_{i=1}^{k}\sum_{j=1}^{\frac{n+m}{k}} \EE_{\Lambda^{(i)}_j | T^{(i)}_j=1} \left[ \Lambda^{(i)}_j \right] \\
			&\overset{(a)}{=} \frac{1}{n+m} \sum_{i=1}^{k}\sum_{j=1}^{\frac{n+m}{k}} P_{\Lambda^{(i)}_j | T^{(i)}_j} (1|1) \\
			&\overset{(b)}{=} P_{\Lambda^{(i)}_j | T^{(i)}_j} (1|1),
		\end{aligned}
	\end{equation}
	where $(a)$ holds since $\Lambda^{(i)}_j \in \{0,1\}$, $(b)$ uses the fact that $P_{\Lambda^{(i)}_j | T^{(i)}_j}$ is fixed for all valid $i,j$ under Assumption~\ref{ass:invariant-alg}. This equation specifies that $\widehat{L}_n$ is exactly the error rate of the 1 to 1 transition. Likewise, when $T^{(i)}_j = 0$, it follows that $L_\mu$ is the 0 to 1 cross-over error rate. In summary, the transition matrix is given as
	\begin{equation}
		\begin{aligned}
			P_{\Lambda^{(i)}_j | T^{(i)}_j} = \begin{bmatrix} P_{\Lambda^{(i)}_j | T^{(i)}_j}(0|0) & P_{\Lambda^{(i)}_j | T^{(i)}_j}(1|0) \\ P_{\Lambda^{(i)}_j | T^{(i)}_j}(0|1) & P_{\Lambda^{(i)}_j | T^{(i)}_j}(1|1) \end{bmatrix} = \begin{bmatrix} 1-L_\mu & L_\mu \\ 1-\widehat{L}_n & \widehat{L}_n \end{bmatrix},
		\end{aligned}
	\end{equation}
	which specifies a binary asymmetric channel (BAC) model \cite{moser2009error}, as depicted in Fig.~\ref{fig:BAC}. The achievable rate of this BAC channel is $I(\Lambda^{(i)}_j; T^{(i)}_j)$, which is calculated as
	\begin{equation}
		\begin{aligned}
			I(\Lambda^{(i)}_j; T^{(i)}_j) &= H(\Lambda^{(i)}_j) - H(\Lambda^{(i)}_j | T^{(i)}_j) \\
			&= H(\theta \widehat{L}_n + (1-\theta) L_\mu) - \theta H(\widehat{L}_n) - (1-\theta) H(L_\mu) \\
			&= d^\theta_{\textnormal{JS}} ( \widehat{L}_n \| L_\mu ),
		\end{aligned}
	\end{equation}
	where $\theta = \frac{n}{n+m}$. Thus, depending on the input distribution $P_{T^{(i)}_j}$ (which is determined by $m$ and $n$), the communication rate over the BAC channel can be exactly characterized by the proposed binary JS divergence. This implies that we can derive the tightest bound for each supersample setting, extending the results under the standard and LOO supersample settings \cite{wang2023tighter,haghifam2022understanding}. Moreover, the previous exact tight bounds only hold for interpolating algorithms with the 0-1 loss function, while our bound also holds for any learning algorithm that satisfies Assumption~\ref{ass:invariant-alg}. Besides, the incorporated parameter $m$ allows us to flexibly choose a more efficient approach to evaluate the exact population risk, as shown in the following remark.
	\begin{remark}
		From Theorem~\ref{thm:tightest-gen-bound}, we thus have various expressions of generalization characterizations that hold with equality, implying that any two characterizations must be equivalent. For instance, we pick $(m,k)=(1,1)$ and $(m,k)=(n,n)$, corresponding to the LOO-CMI (Definition~\ref{def:loo-supersample}) and standard CMI (Definition~\ref{def:standard-supersample}) settings, respectively. Assume $\widehat{L}_n \leq L_\mu$ (which is typically satisfied by the ERM learning algorithm), then it follows that
		\begin{align}
			d^{-1}_{\textnormal{JS}} \left( \frac{1}{n+1}, \widehat{L}_n, I\left( \Lambda_i; U \right) \right) = d^{-1}_{\textnormal{JS}} \left( \frac{1}{2}, \widehat{L}_n, I\left( \Lambda_i; R_i \right) \right). \label{eq:tightest_bound}
		\end{align}
		This dual perspective offers a flexibility to choose whichever representation is easier to analyze in specific applications.
	\end{remark}

	\section{Conclusions}\label{sec:conclusion}
	This paper proposes a new information-theoretic framework to characterize the generalization behavior of learning algorithms. Particularly, we present a new construction of the supersample set, shedding light on a unified view and derivation of the known MI- and CMI-based bounds. The incorporated free parameters enable reproducing many existing results, facilitating new bounds and a joint optimization over a bunch of bounds. The advantages of the proposed bounds are demonstrated by applying them to various learning scenarios. Additionally, we advance our framework under the premise that the loss function is bounded, establishing potentially sharper bounds. Lastly, our framework is integrated with other methods to yield other advantages in terms of sharpness or effectiveness.
	
	\newpage
	
	{
		\bibliographystyle{IEEEtran}
		\bibliography{IEEEabrv,mybibfile}
	}
	
	\onecolumn
	\appendices
	
	\section{Prerequisite Lemmas} \label{app:lemmas}
	\begin{lemma}[Hoeffding's Lemma {\cite[Lemma~2.2.1]{raginsky2013concentration}}] \label{lem:Hoeffding}
		Let $X$ be a random variable, such that $X \in [a, b]$ almost surely for some finite constants $a \leq b$. Then, for all $\lambda \in \RR$,
		\begin{equation}
			\begin{aligned}
				\EE \left[ e^{\lambda \left( X - \EE [X] \right)} \right] \leq e^{-\frac{\lambda^2 (b-a)^2}{8}}.
			\end{aligned}
		\end{equation}
	\end{lemma}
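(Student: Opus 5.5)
Before planning the proof, note that the displayed inequality has a sign typo. By Jensen's inequality, $\EE[e^{\lambda(X-\EE[X])}] \geq e^{\lambda \EE[X - \EE[X]]} = 1$. The right-hand side $e^{-\lambda^2(b-a)^2/8}$, however, is strictly below $1$ whenever $\lambda \neq 0$ and $a<b$, so the statement as written cannot hold. The intended (and standard) statement, which is what is used to get the $\lambda^2$ terms in Lemma~\ref{lem:bld-CGF-bound}, is
\begin{equation*}
\EE\left[ e^{\lambda (X - \EE[X])} \right] \leq e^{\frac{\lambda^2 (b-a)^2}{8}},
\end{equation*}
equivalently $\psi_X(\lambda) \leq \frac{\lambda^2(b-a)^2}{8}$. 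I will prove this corrected form.

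The plan is to study the CGF $\psi(\lambda) := \psi_X(\lambda) = \log \EE[e^{\lambda(X-\EE[X])}]$ directly and apply Taylor's theorem with Lagrange remainder about $\lambda=0$. Since $X$ is bounded, $\psi$ is finite and smooth on all of $\RR$, and differentiation under the expectation is justified by dominated convergence. We have $\psi(0)=0$, and $\psi'(0)=\EE[X-\EE[X]]=0$. Hence for each $\lambda$ there is some $\xi$ between $0$ and $\lambda$ with $\psi(\lambda)=\frac{\lambda^2}{2}\psi''(\xi)$. It therefore suffices to show $\psi''(\xi)\leq \frac{(b-a)^2}{4}$ for every $\xi \in \RR$.

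For the second-derivative bound, introduce the tilted probability measure $Q_\xi$ defined by $\frac{dQ_\xi}{dP} = e^{\xi X}/\EE[e^{\xi X}]$. A direct computation shows $\psi''(\xi)$ equals the variance of $X$ under $Q_\xi$; the centering constant $\EE[X]$ cancels in the derivatives of the log. Under $Q_\xi$, $X$ still lies in $[a,b]$ almost surely, and any such random variable has variance at most $\frac{(b-a)^2}{4}$. To see this, write $\mathrm{Var}(X) \leq \EE[(X - \frac{a+b}{2})^2] \leq \left(\frac{b-a}{2}\right)^2$, using that the variance minimizes the mean squared deviation over all centering constants. Combining the two steps gives $\psi(\lambda)\leq \frac{\lambda^2}{2}\cdot\frac{(b-a)^2}{4}$, and exponentiating yields the claim.

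The only step needing care is identifying $\psi''$ with the tilted variance and justifying the interchange of derivative and expectation; boundedness of $X$ makes both routine. The main real obstacle is the sign discrepancy in the statement itself, which must be resolved as above. If the CGF route were unavailable, a fallback is the classical convexity argument: write $e^{\lambda x} \leq \frac{b-x}{b-a}e^{\lambda a}+\frac{x-a}{b-a}e^{\lambda b}$ on $[a,b]$, take expectations, and bound the resulting function $L(h)=-hp+\log(1-p+pe^{h})$, where $h=\lambda(b-a)$ and $p=-a/(b-a)$ after centering, by $h^2/8$ via $L(0)=L'(0)=0$ and $L''\leq 1/4$.
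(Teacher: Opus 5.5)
Your proof is correct, and so is your reading of the sign. By Jensen the left-hand side is at least $1$, so the exponent must be $+\lambda^2(b-a)^2/8$. That is also the form the paper actually uses in step (c) of \eqref{eq:cgf-ub-n} in the proof of Lemma~\ref{lem:ub-cgf}, and the same typo appears in Lemma~\ref{lem:con-Hoeffding}.

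The paper gives no proof of its own and only cites Raginsky--Sason. Your Taylor-expansion argument with the tilted-variance bound is a standard, complete proof, and so is your convexity fallback. The argument also gives the conditional version the paper needs: run it under the regular conditional law of $X$ given $Y$, where $X$ lies in the interval $[\psi(Y), \psi(Y)+c]$ of length $c$.
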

	
	\begin{lemma}[Conditioned Hoeffding's Lemma\footnote{We note that this lemma can be proved with very similar methods as the standard form of Hoeffding's Lemma (Lemma~\ref{lem:Hoeffding}), as has also been noted in {\cite[Lemma~1]{lecturenote}}. Thus, we omit the proof.}] \label{lem:con-Hoeffding}
		Let $X, Y$ be random variables. Assume there exist a function $\psi$ and a finite constant $c$ such that $\psi(Y) \leq X \leq \psi(Y) + c$ almost surely. Then, for all $\lambda \in \RR$,
		\begin{equation}
			\begin{aligned}
				\EE \left[ \left. e^{\lambda \left( X - \EE [X | Y] \right)} \right| Y \right] \leq e^{-\frac{\lambda^2 c^2}{8}}.
			\end{aligned}
		\end{equation}
	\end{lemma}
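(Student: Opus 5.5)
The plan is to reduce the conditional statement to the ordinary Hoeffding's Lemma (Lemma~\ref{lem:Hoeffding}) applied pointwise to the conditional law of $X$ given $Y=y$. First, a remark on the sign. As with Lemma~\ref{lem:Hoeffding} as printed, the exponent should read $+\frac{\lambda^2 c^2}{8}$. The right-hand side with a minus sign cannot hold at any $\lambda \neq 0$, because Jensen's inequality gives $\EE[e^{\lambda(X-\EE[X|Y])}|Y] \geq e^{0}=1$. I will therefore prove
\begin{align*}
\EE\left[ \left. e^{\lambda ( X - \EE[X|Y])} \right| Y \right] \leq e^{\frac{\lambda^2 c^2}{8}},
\end{align*}
which is the form used to obtain Lemma~\ref{lem:bld-CGF-bound}.

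\textbf{Step 1 (disintegration).} Take a regular conditional distribution $P_{X|Y=y}$, which exists because $X$ is real-valued. Since $\psi(Y) \leq X \leq \psi(Y)+c$ almost surely, for $P_Y$-almost every $y$ the measure $P_{X|Y=y}$ is supported on the interval $[\psi(y), \psi(y)+c]$. The conditional expectation is then realised as $\EE[X|Y=y]=\int x \, dP_{X|Y=y}(x)$. For such $y$, this expectation is finite and lies in the same interval.

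\textbf{Step 2 (pointwise Hoeffding).} Fix such a $y$ and let $X_y \sim P_{X|Y=y}$. This is a random variable with values in $[a,b]$, where $a=\psi(y)$ and $b=\psi(y)+c$ are finite constants. Lemma~\ref{lem:Hoeffding} applied to $X_y$ gives
\begin{align*}
\int e^{\lambda(x-\EE[X|Y=y])}\, dP_{X|Y=y}(x) \leq e^{\frac{\lambda^2 c^2}{8}}.
\end{align*}
The left-hand side is a version of $\EE[e^{\lambda(X-\EE[X|Y])}|Y]$ evaluated at $y$, so the claim holds almost surely. If one prefers not to cite Lemma~\ref{lem:Hoeffding} as a black box, the standard argument works verbatim. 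Use convexity of $x\mapsto e^{\lambda x}$ on $[a,b]$ to bound the exponential by the chord. This gives $\log\EE[e^{\lambda(X_y-\EE X_y)}] \leq L(h) := -hp + \log(1-p+pe^{h})$ with $h=\lambda c$ and $p=(\EE X_y - a)/c$. Then check that $L(0)=L'(0)=0$ and $L''\leq \frac14$, and apply Taylor's theorem to conclude $L(h)\leq h^2/8$.

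\textbf{Main obstacle.} The only delicate point is measure-theoretic: I must make sure that the pointwise bound in $y$ really is a statement about the conditional expectation. This means three things. The map $y\mapsto \int e^{\lambda(x-m(y))}dP_{X|Y=y}(x)$ must be measurable, where $m(y)=\EE[X|Y=y]$. The composition with $Y$ must be a version of $\EE[\,\cdot\,|Y]$. And the exceptional null set where the support condition fails must be handled. Measurability follows from the kernel property of regular conditional distributions, together with measurability of $m$ and of the integrand in $(x,y)$. The null set is harmless, since conditional expectations are only defined almost surely.
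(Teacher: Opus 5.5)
Your proof is correct and is essentially the argument the paper intends. The paper omits the proof, saying it follows the standard proof of Lemma~\ref{lem:Hoeffding}, and your pointwise application of that lemma to $P_{X|Y=y}$ on $[\psi(y),\psi(y)+c]$ is exactly this. Running the chord bound $e^{\lambda X}\leq \frac{\psi(Y)+c-X}{c}e^{\lambda\psi(Y)}+\frac{X-\psi(Y)}{c}e^{\lambda(\psi(Y)+c)}$ directly under $\EE[\,\cdot\,|Y]$ would spare you the regular-conditional-distribution bookkeeping you flag as the main obstacle, though $\psi$ must still be measurable.

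Your sign correction is also right. For $c>0$ and $\lambda\neq 0$, conditional Jensen makes the left-hand side at least $1$, so the bound must be $e^{\lambda^2c^2/8}$. This is the form the paper actually uses in step $(c)$ of \eqref{eq:cgf-ub-n} in the proof of Lemma~\ref{lem:ub-cgf}, and the same typo appears in Lemma~\ref{lem:Hoeffding}.
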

	
	\begin{lemma}[McDiarmid's Inequality {\cite[Theorem~2.2.2]{raginsky2013concentration}}] \label{lem:McDiarmid}
		Let $\{X_i\}_{i=1}^n$ be independent (but not necessarily identically distributed) random variables taking values in a measurable space $\Xcal$. Let $f : \Xcal^n \to \RR$ be a measurable function such that for each $i \in [n]$,
		\begin{equation*}
			\sup\limits_{x_1,\ldots,x_n,x'_i \in \Xcal} \left| f(x_1, \ldots, x_{i-1}, x_i, x_{i+1}, \ldots, x_n) - f(x_1, \ldots, x_{i-1}, x'_i, x_{i+1}, \ldots, x_n) \right| \leq d_i.
		\end{equation*}
		Then, for all $\epsilon \geq 0$,
		\begin{equation}
			\PP \left\{ \left| f(X_1, \ldots, X_n) - \EE \left[ f(X_1, \ldots, X_n) \right] \right| \geq \epsilon \right\} \leq 2 \exp\left( - \frac{2 \epsilon^2}{\sum_{i=1}^n d_i^2} \right).
		\end{equation}
	\end{lemma}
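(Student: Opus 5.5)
We prove McDiarmid's inequality by the standard martingale (Doob) method combined with the Chernoff bound, using the conditioned Hoeffding lemma (Lemma~\ref{lem:con-Hoeffding}) as the key per-step estimate. We read the exponent in Lemmas~\ref{lem:Hoeffding} and \ref{lem:con-Hoeffding} as $+\frac{\lambda^2 c^2}{8}$, which is the form actually needed and the correct statement of Hoeffding's lemma. Write $F := f(X_1,\ldots,X_n)$ and define the Doob martingale
\begin{equation*}
V_i := \EE\left[F \mid X_1,\ldots,X_i\right] - \EE\left[F \mid X_1,\ldots,X_{i-1}\right], \qquad i\in[n],
\end{equation*}
so that $F - \EE[F] = \sum_{i=1}^n V_i$ and $\EE[V_i \mid X_1,\ldots,X_{i-1}] = 0$.

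\textbf{Step 1: bound each increment inside a short interval.} Using independence of the $X_i$, write $\EE[F\mid X_1,\ldots,X_i] = g_i(X_1,\ldots,X_i)$ with
\begin{equation*}
g_i(x_1,\ldots,x_i) = \EE\left[f(x_1,\ldots,x_i,X_{i+1},\ldots,X_n)\right].
\end{equation*}
Set
\begin{equation*}
A_i := \inf_{x} g_i(X_1,\ldots,X_{i-1},x) - g_{i-1}(X_1,\ldots,X_{i-1}),
\end{equation*}
and define $B_i$ with $\sup$ in place of $\inf$. Both $A_i$ and $B_i$ are functions of $(X_1,\ldots,X_{i-1})$ only, and $A_i \le V_i \le B_i$. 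The bounded-difference hypothesis, applied inside the expectation defining $g_i$ (this is where independence matters: the law of $X_{i+1},\ldots,X_n$ does not depend on $x$), gives $B_i - A_i \le d_i$.

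\textbf{Step 2: control the moment generating function.} Applying Lemma~\ref{lem:con-Hoeffding} with $Y=(X_1,\ldots,X_{i-1})$, $\psi(Y)=A_i$ and $c=d_i$, and using $\EE[V_i\mid Y]=0$, we get
\begin{equation*}
\EE\left[e^{\lambda V_i}\mid X_1,\ldots,X_{i-1}\right]\le e^{\lambda^2 d_i^2/8}.
\end{equation*}
Peeling off the conditional expectations one at a time via the tower property, from $i=n$ down to $i=1$, yields
\begin{equation*}
\EE\left[e^{\lambda(F-\EE F)}\right]\le \exp\left(\frac{\lambda^2}{8}\sum_{i=1}^n d_i^2\right)\quad\text{for all } \lambda\in\RR.
\end{equation*}

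\textbf{Step 3: Chernoff bound and two-sided union.} By Markov's inequality, for $\lambda>0$,
\begin{equation*}
\PP\left\{F-\EE F\ge\epsilon\right\}\le \exp\left(-\lambda\epsilon+\frac{\lambda^2}{8}\sum_{i=1}^n d_i^2\right).
\end{equation*}
Choosing $\lambda = 4\epsilon/\sum_{i=1}^n d_i^2$ gives the bound $\exp\left(-2\epsilon^2/\sum_{i=1}^n d_i^2\right)$. The same argument applied to $-f$, which has the same bounded differences, gives the lower tail, and a union bound over the two tails produces the factor $2$.

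\textbf{Main obstacle.} The delicate step is Step~1: showing that $V_i$ lies in an interval of length $d_i$ whose endpoints are measurable with respect to $X_1,\ldots,X_{i-1}$. It is \emph{not} enough that $|V_i|\le d_i$, since that only gives an interval of length $2d_i$ and loses a factor of $4$ in the exponent. Establishing $B_i - A_i \le d_i$ requires both independence (so that $g_i$ is an integral against a fixed product law) and the measurability of the $\sup$ and $\inf$. If measurability is an issue, I would instead work with the two-point difference $g_i(\ldots,x)-g_i(\ldots,x')\le d_i$ and apply Hoeffding's lemma in its form for a variable lying in an interval $[a,a+c]$ with $a$ random but conditionally fixed.
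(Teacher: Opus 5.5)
The paper does not prove this lemma at all; it is imported verbatim from \cite[Theorem~2.2.2]{raginsky2013concentration}. Your argument is exactly the standard proof given there: the Doob martingale; increments confined, via independence, to an $(X_1,\ldots,X_{i-1})$-measurable interval of length $d_i$; conditional Hoeffding; peeling; Chernoff with $\lambda = 4\epsilon/\sum_i d_i^2$; and a union bound over the two tails. It is also the same $A_i$/$B_i$ martingale machinery the paper itself uses in proving Lemma~\ref{lem:ub-cgf}.

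Your proposal is correct. This includes your reading of the exponent in Lemmas~\ref{lem:Hoeffding} and \ref{lem:con-Hoeffding} as $+\lambda^2 c^2/8$: the minus sign there is a typo, since the left-hand side is at least $1$ by Jensen. Your fallback also works: conditioning on $X_1,\ldots,X_{i-1}=y$ and applying the ordinary Hoeffding lemma to $g_i(y,X_i)$, whose range has width at most $d_i$, cleanly removes the measurability concern about $\inf_x$ and $\sup_x$.
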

	
	\begin{lemma}[Donsker-Varadhan Variational Representation of KL Divergence {\cite[Theorem~4.6]{polyanskiy2025information}}]\label{lem:DV-formula}
		Let $P$ and $Q$ be two probability distributions defined on a common measurable space $\Xcal$ such that $P \ll Q$. Then, for any $f : \Xcal \to \RR$ such that $f(X)$ and $e^{f(X)}$ are $P$- and $Q$-integrable, respectively,
		\begin{align}
			D_{\textnormal{KL}}(P \| Q) = \sup_f \left\{ \EE_{X \sim P} [f(X)] - \log \EE_{X \sim Q} \left[ e^{f(X)} \right] \right\}.
		\end{align}
	\end{lemma}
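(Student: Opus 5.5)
The plan is to prove the two inequalities ``$\geq$'' and ``$\leq$'' separately. The tool for the first is the Gibbs (tilted) change of measure. The tool for the second is to plug in the log-likelihood ratio, truncated if necessary. Throughout, write $g := \frac{dP}{dQ}$, which exists since $P \ll Q$.

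\emph{Step 1 (every admissible $f$ gives a lower bound).} Fix $f$ with $f \in L^1(P)$ and $e^{f} \in L^1(Q)$, and set $Z_f := \EE_Q[e^{f}] \in (0,\infty)$. Define the tilted probability measure $Q_f$ by $dQ_f := \frac{e^{f}}{Z_f}\, dQ$. Since $e^{f} > 0$, the measures $Q$ and $Q_f$ are mutually absolutely continuous, so $P \ll Q_f$ and
\begin{equation*}
\frac{dP}{dQ_f} = g \, Z_f \, e^{-f}.
\end{equation*}
If $D_{\textnormal{KL}}(P\|Q) = \infty$, the inequality is trivial. Otherwise $\log g \in L^1(P)$ (its negative part is always $P$-integrable). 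Taking $\log$ of the density above and integrating against $P$ gives
\begin{equation*}
D_{\textnormal{KL}}(P\|Q_f) = D_{\textnormal{KL}}(P\|Q) - \EE_P[f] + \log \EE_Q[e^{f}].
\end{equation*}
Nonnegativity of $D_{\textnormal{KL}}(P\|Q_f)$ (Gibbs' inequality, via Jensen) then yields $D_{\textnormal{KL}}(P\|Q) \geq \EE_P[f] - \log \EE_Q[e^f]$. Taking the supremum over $f$ gives ``$\geq$''.

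\emph{Step 2 (approaching the supremum).} Formally, the choice $f = \log g$ gives $\EE_P[f] = D_{\textnormal{KL}}(P\|Q)$ and $\EE_Q[e^f] = Q(g>0) \le 1$, hence equality. However, $\log g$ can equal $-\infty$ on $\{g=0\}$ and need not be integrable. I would therefore use the doubly truncated functions
\begin{equation*}
f_{n} := \max\bigl(\min(\log g, n), -n\bigr),
\end{equation*}
which are bounded and hence admissible.

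For the $Q$-term:
\begin{equation*}
\EE_Q[e^{f_n}] \le \EE_Q[g] + e^{-n} = 1 + e^{-n},
\end{equation*}
so $\log \EE_Q[e^{f_n}] \to 0$ (or is at most $\log(1+e^{-n})$).

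For the $P$-term, split $f_n = f_n^+ - f_n^-$. The positive parts increase monotonically to $(\log g)^+$. The negative parts are dominated by $(\log g)^-$, and $\EE_P[(\log g)^-] = \EE_Q[g(\log g)^-] \le 1/e$. Monotone convergence and dominated convergence then give $\EE_P[f_n] \to \EE_P[\log g] = D_{\textnormal{KL}}(P\|Q)$. This holds including the case where $D_{\textnormal{KL}}(P\|Q) = +\infty$. Hence $\sup_f\{\cdot\} \ge \lim_n (\EE_P[f_n] - \log\EE_Q[e^{f_n}]) = D_{\textnormal{KL}}(P\|Q)$.

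\emph{Main obstacle.} The only delicate point is the measure-theoretic bookkeeping: the infinite-divergence case, the set $\{g=0\}$, and integrability of $\log g$ under $P$. The two-sided truncation in Step 2 together with the bound $x(\log x)^- \le 1/e$ is designed to handle all of these at once. The algebraic content of Step 1 is routine.
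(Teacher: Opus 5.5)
Your proof is correct. The paper gives no proof of its own for this lemma; it imports the statement from Polyanskiy--Wu. Your argument is the standard one:
\begin{itemize}
\item Gibbs tilting $dQ_f = e^{f}\,dQ/\EE_Q[e^{f}]$ gives the identity $D_{\textnormal{KL}}(P\|Q_f) = D_{\textnormal{KL}}(P\|Q) - \EE_P[f] + \log \EE_Q[e^{f}] \geq 0$.
\item The doubly truncated log-likelihood ratio then shows the supremum is attained in the limit.
\end{itemize}
The bookkeeping checks out: $e^{f_n} \le g + e^{-n}$ holds pointwise, including on $\{g=0\}$, and $\EE_P[(\log g)^-] \le 1/e$ justifies dominated convergence for the negative parts. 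One small point: your claim that $\log \EE_Q[e^{f_n}] \to 0$ does not follow from the upper bound alone; it would need $\EE_Q[e^{f_n}] \ge \EE_Q[\min(g,e^n)] \uparrow 1$. Only the upper bound is actually used, though.

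Every place the paper invokes this lemma uses only the one-sided inequality $\EE_P[f] \le D_{\textnormal{KL}}(P\|Q) + \log \EE_Q[e^{f}]$, which is exactly your Step~1. These are \eqref{eq:dv-IPCIMI} in the proof of Theorem~\ref{thm:IPCIMI}, \eqref{eq:d_gamma_bound} in the proof of Lemma~\ref{lem:js-bound}, and \eqref{eq:gen-d_gamma-bound} in the proof of Theorem~\ref{thm:lb-gen-bound}. Your Step~2 is needed only for the equality as stated, and it also shows the supremum may be restricted to bounded $f$.
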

	
	\begin{lemma}[{\cite[Section~IV-B]{zhou2022individually}}] \label{lem:inv_conj}
		Consider the conditional CGF defined in Definition~\ref{def:con-CGF}. The inverse of its Fenchel conjugate function is defined as $\psi^{*-1}_{X|Y=y}(\eta) = \inf\limits_{\lambda>0} \frac{\eta + \psi_{X|Y=y}(\lambda)}{\lambda}$, which is concave and non-decreasing in $\eta > 0$.
	\end{lemma}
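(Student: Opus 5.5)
The plan is to view $\psi^{*-1}_{X|Y=y}$ as a pointwise infimum of a family of affine functions of $\eta$, one for each $\lambda>0$, and use the fact that concavity and monotonicity both survive taking infima. Fix $y$ and, for each $\lambda>0$, set
\begin{align*}
g_\lambda(\eta) := \frac{\eta + \psi_{X|Y=y}(\lambda)}{\lambda}, \qquad \text{so that} \qquad \psi^{*-1}_{X|Y=y}(\eta) = \inf_{\lambda>0} g_\lambda(\eta).
\end{align*}
For fixed $\lambda$, $g_\lambda$ is affine in $\eta$ with slope $1/\lambda>0$. Hence each $g_\lambda$ is both concave and non-decreasing on $\eta>0$.

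First I will check that the infimum is well defined and never equals $-\infty$. By Jensen's inequality applied to the convex map $t\mapsto e^{t}$,
\begin{align*}
\EE_{X|Y=y}\left[e^{\lambda(X-\EE_{X|Y=y}[X])}\right] \geq e^{0}=1,
\end{align*}
so $\psi_{X|Y=y}(\lambda)\geq 0$ for every $\lambda$. Consequently $g_\lambda(\eta)\geq \eta/\lambda> 0$ for $\eta>0$, and the infimum lies in $[0,\infty]$.

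Two cases remain for finiteness:
\begin{itemize}
\item If $\psi_{X|Y=y}(\lambda)=+\infty$ for every $\lambda>0$, the function is identically $+\infty$. This case is degenerate, and both claimed properties hold trivially under the usual extended-value conventions.
\item Otherwise some $g_{\lambda_0}$ is finite, and the infimum is a finite nonnegative real number for every $\eta>0$.
\end{itemize}

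Monotonicity then follows directly. If $0<\eta_1\le\eta_2$, then $g_\lambda(\eta_1)\le g_\lambda(\eta_2)$ for every $\lambda>0$. Taking the infimum over $\lambda$ on both sides gives $\psi^{*-1}_{X|Y=y}(\eta_1)\le \psi^{*-1}_{X|Y=y}(\eta_2)$.

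For concavity, take $t\in[0,1]$ and $\eta_1,\eta_2>0$. Affinity of each $g_\lambda$ gives $g_\lambda(t\eta_1+(1-t)\eta_2)=t\,g_\lambda(\eta_1)+(1-t)\,g_\lambda(\eta_2)$. The right-hand side is at least $t\inf_{\lambda'}g_{\lambda'}(\eta_1)+(1-t)\inf_{\lambda''}g_{\lambda''}(\eta_2)$. Taking the infimum over $\lambda$ on the left then yields
\begin{align*}
\psi^{*-1}_{X|Y=y}(t\eta_1+(1-t)\eta_2)\ \geq\ t\,\psi^{*-1}_{X|Y=y}(\eta_1)+(1-t)\,\psi^{*-1}_{X|Y=y}(\eta_2).
\end{align*}

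The only delicate point is the finiteness bookkeeping in the second step. The lower bound $g_\lambda\ge 0$ rules out $-\infty$, so the convex-combination inequality involves no indeterminate forms. Apart from that, the argument is the standard fact that infima of affine functions are concave.
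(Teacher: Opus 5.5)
Your proof is correct. The paper gives no proof of this lemma; it imports it by citation from \cite{zhou2022individually}. There, following \cite{bu2020tightening}, the expression is usually obtained as the generalized inverse of the Legendre dual $\psi^*_{X|Y=y}(t)=\sup_{\lambda>0}\{\lambda t-\psi_{X|Y=y}(\lambda)\}$, and concavity and monotonicity are inherited from $\psi^*$ being convex, non-decreasing and vanishing at $0$.

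Your route works directly from the defining formula, as a pointwise infimum of the affine maps $\eta\mapsto(\eta+\psi_{X|Y=y}(\lambda))/\lambda$ with positive slope. It therefore needs no regularity of $\psi_{X|Y=y}$: no differentiability, no $\psi'(0)=0$, and no identification of the infimum with an actual inverse. Your Jensen bound $\psi_{X|Y=y}\ge 0$ (implicitly using that $\EE_{X|Y=y}[X]$ is finite, as Definition~\ref{def:con-CGF} presumes) handles the extended-value bookkeeping. So does your observation that finiteness of a single $\psi_{X|Y=y}(\lambda_0)$ makes the infimum finite for every $\eta$.

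What the Legendre-dual route adds is the justification of the name ``inverse of the Fenchel conjugate'' and the associated tail-bound interpretation. Neither is needed for the two properties stated here, and those two properties are all the paper uses (to argue measurability in the proof of Theorem~\ref{thm:IPCIMI}).
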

	
	\begin{lemma} \label{lem:more-info-by-condi}
		Let $A, B, C, D$ be random variables, where $B, C, D$ are mutually independent, then
		\begin{equation*}
			\begin{aligned}
				I( A; B | C ) \leq I( A; B | C, D ).
			\end{aligned}
		\end{equation*}
	\end{lemma}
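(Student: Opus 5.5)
We want to show $I(A;B|C) \le I(A;B|C,D)$ when $B,C,D$ are mutually independent. The plan is to expand $I(A;B,D\,|\,C)$ by the chain rule in two different orders and compare.

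\textbf{Step 1: the two chain-rule expansions.} The chain rule gives
\begin{align*}
I(A; B, D \mid C) &= I(A; B \mid C) + I(A; D \mid B, C) \\
&= I(A; D \mid C) + I(A; B \mid C, D).
\end{align*}

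\textbf{Step 2: identify the extra terms as information between $B$ and $D$.} For this I will apply the chain rule to $I(A,B; D \mid C)$ and to $I(A,D; B\mid C)$. Mutual independence of $(B,C,D)$ gives $I(B;D\mid C)=0$ and $I(D;B\mid C)=0$. Hence
\begin{align*}
I(A; D \mid B, C) &= I(A,B; D \mid C), \\
I(A; B \mid C, D) &= I(A,D; B \mid C).
\end{align*}
Equivalently, $I(A;B\mid C,D) = I(B; A,D \mid C) = I(B;D\mid C) + I(B;A\mid C,D)$, which is consistent with the second line.

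\textbf{Step 3: the cleaner direct route.} It is simpler to avoid Step 2 and write
\begin{align*}
I(B; A, D \mid C) &= I(B; D \mid C) + I(B; A \mid C, D) = I(A;B\mid C,D),\\
I(B; A, D \mid C) &= I(B; A \mid C) + I(B; D \mid A, C) \ge I(A;B\mid C).
\end{align*}
The first line uses $I(B;D\mid C)=0$, which holds because $B \indep D$ given $C$, a consequence of mutual independence. The second line uses nonnegativity of conditional mutual information. Combining the two lines gives $I(A;B\mid C)\le I(A;B\mid C,D)$.

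The main point to check is the vanishing of $I(B;D\mid C)$. Mutual independence implies $P_{BD\mid C}=P_B\otimes P_D = P_{B\mid C}\otimes P_{D\mid C}$, so the relevant KL divergence is zero almost surely in $C$. Measure-theoretic issues are handled by the general chain rule for conditional mutual information on standard Borel spaces, and all quantities are assumed finite or the inequality is read in $[0,\infty]$.
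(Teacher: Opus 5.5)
Your Step 3 is correct and is essentially the paper's argument: both apply the chain rule twice and reduce the gap to $I(B;D\mid A,C)\ge 0$. The paper expands the unconditional quantities $I(A,C,D;B)$ and $I(A,C;B)$ using $I(B;C)=I(B;C,D)=0$, whereas you condition on $C$ throughout and need only $I(B;D\mid C)=0$, which is a slightly weaker hypothesis than mutual independence; Step 2 is correct but unused and can be dropped.
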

	\begin{proof}
		Since $B, C, D$ are mutually independent, we get $I(B; C) = 0$ and $I(B; C, D) = 0$. Then, the chain rule of mutual information implies that
		\begin{align}
			&I(A, C; B) = I( A; B | C ) + I(B; C) = I( A; B | C ), \\
			&I( A, C, D; B ) = I( A; B | C, D ) + I(B; C, D) = I( A; B | C, D ).
		\end{align}
		It follows that
		\begin{equation*}
			\begin{aligned}
				I( A; B | C, D ) - I( A; B | C ) = I( A, C, D; B ) - I(A, C; B) = I( D; B | A, C ) \geq 0,
			\end{aligned}
		\end{equation*}
		where the inequality is due to the non-negativity of conditional mutual information. 
	\end{proof}
	
	\begin{lemma}[Pinsker’s Inequality {\cite[Theorem~7.10]{polyanskiy2025information}}]\label{lem:Pinsker-inequality}
		Let $P$ and $Q$ be two probability distributions defined on a common measurable space $\Xcal$, then
		\begin{align}
			D_{\textnormal{TV}} (P \| Q) \leq \sqrt{\frac{D_{\textnormal{KL}} (P \| Q)}{2}}.
		\end{align}
	\end{lemma}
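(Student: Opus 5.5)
The statement to prove is Pinsker's inequality, $D_{\textnormal{TV}}(P\|Q) \le \sqrt{D_{\textnormal{KL}}(P\|Q)/2}$. The plan has two parts: reduce the general case to two-point (Bernoulli) distributions via the data-processing inequality for KL divergence, then settle the binary case by a one-variable calculus argument.

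\textbf{Step 1: reduction to the binary case.} If $P \not\ll Q$, the right-hand side is $+\infty$ and there is nothing to prove, so assume $P \ll Q$. Let $A := \{x : \frac{dP}{dQ}(x) \ge 1\}$. With this choice,
\[
D_{\textnormal{TV}}(P\|Q) = \tfrac12 \int \Big| \tfrac{dP}{dQ} - 1 \Big|\, dQ = P(A) - Q(A).
\]
Put $p := P(A)$ and $q := Q(A)$. The map $x \mapsto \mathds{1}_{x \in A}$ is a deterministic channel. By data processing (equivalently, by the Donsker--Varadhan formula of Lemma~\ref{lem:DV-formula} restricted to test functions of the form $f = a\mathds{1}_A + b\mathds{1}_{A^c}$), we get $D_{\textnormal{KL}}(P\|Q) \ge d_{\textnormal{KL}}(p\|q)$. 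It therefore suffices to prove
\[
d_{\textnormal{KL}}(p\|q) \ge 2(p-q)^2 \quad \text{for all } p, q \in [0,1].
\]

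\textbf{Step 2: the binary inequality.} Fix $p$ and set
\[
g(q) := d_{\textnormal{KL}}(p\|q) - 2(p-q)^2 .
\]
Clearly $g(p) = 0$. Differentiating in $q$ gives
\[
g'(q) = -\frac{p}{q} + \frac{1-p}{1-q} + 4(p-q) = (q-p)\Big( \frac{1}{q(1-q)} - 4 \Big).
\]
Since $q(1-q) \le \tfrac14$, the second factor is nonnegative. Hence $g$ is nonincreasing on $(0,p]$ and nondecreasing on $[p,1)$, so $g \ge 0$ on $(0,1)$.

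\textbf{Step 3: boundary cases.} The endpoints $q \in \{0,1\}$ need separate treatment. If $q \in \{0,1\}$ and $p \ne q$, then $d_{\textnormal{KL}}(p\|q) = \infty$ and the inequality is trivial. If $p = q$, both sides are $0$.

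Combining the steps and taking square roots yields $D_{\textnormal{TV}} = p - q \le \sqrt{d_{\textnormal{KL}}(p\|q)/2} \le \sqrt{D_{\textnormal{KL}}(P\|Q)/2}$. The only potentially delicate step is the data-processing reduction in Step~1. If one prefers not to invoke the general data-processing inequality, the alternative is to use Lemma~\ref{lem:DV-formula} directly with $f = \lambda(\mathds{1}_A - q)$ together with Hoeffding's Lemma~\ref{lem:Hoeffding}. This gives $D_{\textnormal{KL}}(P\|Q) \ge \lambda(p-q) - \lambda^2/8$ for every $\lambda$, and optimizing at $\lambda = 4(p-q)$ yields $2(p-q)^2$. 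This route bypasses Step~2 entirely and is the one I would actually write down.
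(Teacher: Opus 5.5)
Your proof is correct. The paper does not prove this lemma; it is quoted from Polyanskiy--Wu, so there is no paper argument to compare against. Your main argument is the classical textbook proof: data processing onto the indicator of $A=\{dP/dQ\ge 1\}$, which preserves $D_{\textnormal{TV}}=P(A)-Q(A)$, followed by the one-variable check $g'(q)=(q-p)\bigl(\frac{1}{q(1-q)}-4\bigr)$ giving $d_{\textnormal{KL}}(p\|q)\ge 2(p-q)^2$.

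Your alternative route via Lemma~\ref{lem:DV-formula} with $f=\lambda(\mathds{1}_A-q)$ is also valid, and it uses only tools already in the paper's appendix. Be careful when citing Lemma~\ref{lem:Hoeffding}, though. As printed it has the wrong sign in the exponent, $e^{-\lambda^2(b-a)^2/8}$, which cannot hold because the left-hand side is at least $1$ by Jensen's inequality. Applied literally, it would give the absurd bound $D_{\textnormal{KL}}(P\|Q)\ge\lambda(p-q)+\lambda^2/8$ for all $\lambda$. You need, and in fact used, the standard form $\log\EE[e^{\lambda(X-\EE[X])}]\le\lambda^2(b-a)^2/8$.

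Finally, the paper only ever applies Pinsker to Bernoulli pairs, in the proof of Lemma~\ref{lem:js-lb}, where $D_{\textnormal{TV}}(\Bernoulli(p)\|\Bernoulli(r))=|p-r|$. So your Step~2 alone covers every use the paper makes of this lemma; the reduction in Step~1 is needed only for the general statement.
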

	
	\begin{lemma}[{\cite[Lemma~2]{hellstrom2022new}}]\label{lem:binary-relative-entropy-concentration}
		Let $X_1, \ldots, X_n$ be independent random variables which satisfy $\EE[X_i] = \mu_i$ and $X_i \in [0,1]$ for each $i \in [n]$, then for any $\gamma \in \RR$,
		\begin{align}
			\EE \left[ e^{d_\gamma\left(\left. \frac{1}{n} \sum_{i=1}^{n} X_i \right\| \frac{1}{n} \sum_{i=1}^{n} \mu_i \right)} \right] \leq 1,
		\end{align}
		where $d_\gamma(\cdot)$ was defined in \eqref{eq:def-d_gamma}.
	\end{lemma}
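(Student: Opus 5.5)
The plan is to prove the bound by a direct moment-generating-function computation, exploiting that $d_\gamma$ is affine in the argument that carries the random empirical mean. One caveat on conventions comes first. In \eqref{eq:def-d_gamma} the first argument sits inside the logarithm and the second appears linearly. The lemma is only true (and this is how it is used in \cite{hellstrom2022new}) when the random quantity $\bar X := \frac1n\sum_i X_i$ appears linearly, i.e. when the exponent is $\gamma \bar X - \log(1-\bar\mu+\bar\mu e^\gamma)$ with $\bar\mu := \frac1n\sum_i\mu_i$. If instead $\bar X$ sat inside the log, Jensen's inequality on the convex map $x\mapsto 1/(1-x+xe^\gamma)$ would push the expectation the wrong way. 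I will therefore read the statement with the slots arranged so that the exponent is $\gamma\bar X - \log(1-\bar\mu+\bar\mu e^\gamma)$. The claim then becomes
\begin{equation*}
\EE\big[e^{\gamma \bar X}\big] \le 1-\bar\mu+\bar\mu e^{\gamma}.
\end{equation*}

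The first step is to use independence to factor $\EE[e^{\gamma\bar X}] = \prod_{i=1}^n \EE[e^{(\gamma/n) X_i}]$. The second step bounds each factor by convexity of $x\mapsto e^{tx}$ on $[0,1]$. The chord bound $e^{tx}\le 1-x+xe^{t}$ for $x\in[0,1]$, taken in expectation, gives
\begin{equation*}
\EE[e^{(\gamma/n)X_i}] \le 1-\mu_i+\mu_i e^{\gamma/n}.
\end{equation*}
This is the standard step in the proof of Hoeffding's lemma (Lemma~\ref{lem:Hoeffding}). The third step removes the heterogeneity of the $\mu_i$ via the AM--GM inequality:
\begin{equation*}
\prod_{i=1}^n (1-\mu_i+\mu_i e^{\gamma/n}) \le \big(1-\bar\mu+\bar\mu e^{\gamma/n}\big)^n,
\end{equation*}
since the factors are positive and their arithmetic mean is $1-\bar\mu+\bar\mu e^{\gamma/n}$.

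The last step, which I expect to be the only non-routine point, is to show
\begin{equation*}
\big(1-\bar\mu+\bar\mu e^{\gamma/n}\big)^n \le 1-\bar\mu+\bar\mu e^{\gamma}.
\end{equation*}
To do this, I introduce an auxiliary random variable $Y$ that equals $e^\gamma$ with probability $\bar\mu$ and equals $1$ otherwise. The left side is then $(\EE[Y^{1/n}])^n$ and the right side is $\EE[Y]$. Since $y\mapsto y^{1/n}$ is concave on $(0,\infty)$, Jensen's inequality gives $\EE[Y^{1/n}]\le(\EE Y)^{1/n}$, and raising both sides to the $n$-th power yields the inequality.

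Chaining the four steps proves $\EE[e^{\gamma\bar X}]\le 1-\bar\mu+\bar\mu e^\gamma$. Dividing by the deterministic positive quantity $1-\bar\mu+\bar\mu e^\gamma$ gives $\EE[e^{d_\gamma}]\le 1$ for every $\gamma\in\RR$. All steps are valid for both signs of $\gamma$, so no case split is needed.
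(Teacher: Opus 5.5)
Your proof is correct, and your reading of the slots is the right one. Taken literally, \eqref{eq:def-d_gamma} gives $\sup_\gamma d_\gamma(p\|q)=\sup_\gamma\{\gamma q-\log(1-p+pe^\gamma)\}$. This is the Legendre transform of the $\Bernoulli(p)$ log-MGF evaluated at $q$, namely $d_{\textnormal{KL}}(q\|p)$. So \eqref{eq:property_d_gamma} itself shows that the arguments in \eqref{eq:def-d_gamma} are swapped. Moreover, the literal statement is false, not merely hard to prove. For $n=1$ and $X\sim\Bernoulli(\frac12)$ one gets $\EE[e^{d_\gamma(X\|1/2)}]=\frac12(e^{\gamma/2}+e^{-\gamma/2})=\cosh(\gamma/2)>1$ for every $\gamma\neq 0$. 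Your Jensen remark only shows that the obvious proof route fails; this example shows the claim itself fails.

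The paper gives no proof of its own and simply cites \cite{hellstrom2022new}. The standard proof of that result is your Steps 1--3 run with $\gamma$ in place of $\gamma/n$, and it yields the stronger bound $\EE[e^{n\, d_\gamma(\bar X\|\bar\mu)}]\le 1$. Your Step 3 already gives exactly this in reparametrized form: $\EE[e^{\gamma\bar X}]\le(1-\bar\mu+\bar\mu e^{\gamma/n})^n$ is the same as $\EE[e^{n\, d_{\gamma/n}(\bar X\|\bar\mu)}]\le 1$. Your Step 4 then throws away the factor $n$ to reach the weaker statement as written.

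For that weaker statement the detour is unnecessary. Since $\bar X\in[0,1]$, the chord bound applied directly to $\bar X$ gives $e^{\gamma\bar X}\le 1-\bar X+\bar X e^{\gamma}$. Taking expectations gives $\EE[e^{\gamma\bar X}]\le 1-\bar\mu+\bar\mu e^\gamma$, a one-line proof that does not use independence at all. The independence and AM--GM steps only pay off if you stop after Step 3 and keep the $n$-scaled form. That form is what is needed whenever $d_\gamma$ carries a multiplicative factor in the exponent, as in \eqref{eq:gen-d_gamma-bound}.
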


	\section{Proofs in Section~\ref{sec:unified-CMI}} \label{app:proof-unified-CMI}
	\subsection{Proof of Lemma~\ref{lem:IP-lmo-gen-err}} \label{appsub:lemma1}
	It holds by linearity of expectation that
	\begin{equation}
		\begin{aligned}
			\EE_{W, \hat{Z}_{[n+m]}, U_{[n]}} \left[ \Ecal_{\textnormal{L$m$O}}( W, \hat{Z}_{[n+m]}, U_{[n]} ) \right] &= \frac{1}{k} \sum_{i=1}^{k} \EE_{W, \hat{Z}^{(i)}_{[\frac{n+m}{k}]}, U^{(i)}_{[\frac{n}{k}]}} \left[ \Ecal^{(i)}\left( W, \hat{Z}^{(i)}_{[\frac{n+m}{k}]}, U^{(i)}_{[\frac{n}{k}]} \right) \right] \\
			&= \frac{1}{k} \sum_{i=1}^{k} \EE_{W, \hat{Z}^{(i)}_{[\frac{n+m}{k}]}, U^{(i)}_{[\frac{n}{k}]}} \left[ L_{\hat{Z}_{\bar{U}^{(i)}_{[\frac{m}{k}]}}}(W) - L_{\hat{Z}_{U^{(i)}_{[\frac{n}{k}]}}}(W) \right].
		\end{aligned}
	\end{equation}
	Let us condition on $U_{[n]} = u_{[n]}$, then
	\begin{equation}
		\begin{aligned}
			\frac{1}{k} \sum_{i=1}^{k} \EE_{W, \hat{Z}^{(i)}_{[\frac{n+m}{k}]} | u^{(i)}_{[\frac{n}{k}]}} \left[ L_{\hat{Z}_{U^{(i)}_{[\frac{n}{k}]}}}(W) \right] &= \frac{1}{k} \sum_{i=1}^{k} \EE_{W, \hat{Z}_{u^{(i)}_{[\frac{n}{k}]}}, \hat{Z}_{\bar{u}^{(i)}_{[\frac{m}{k}]}}} \left[ \frac{k}{n} \sum_{j \in u^{(i)}_{[\frac{n}{k}]}} \ell(W, \hat{Z}^{(i)}_j) \right] \\
			&\overset{(a)}{=} \sum_{i=1}^{k} \EE_{W, Z^{(i)}_{[\frac{n}{k}]}} \left[ \frac{1}{n} \sum_{j = 1}^{\frac{n}{k}} \ell(W, Z^{(i)}_j) \right] \\
			&\overset{(b)}{=} \EE_{W, Z_{[n]}} \left[ \frac{1}{n} \sum_{i = 1}^{n} \ell(W, Z_i) \right] \\
			&= \widehat{L}_n,
		\end{aligned}
	\end{equation}
	and similarly,
	\begin{equation}
		\begin{aligned}
			\frac{1}{k} \sum_{i=1}^{k} \EE_{W, \hat{Z}^{(i)}_{[\frac{n+m}{k}]} | u^{(i)}_{[\frac{n}{k}]}} \left[ L_{\hat{Z}_{\bar{U}^{(i)}_{[\frac{m}{k}]}}}(W) \right] &= \frac{1}{k} \sum_{i=1}^{k} \EE_{W, \hat{Z}_{u^{(i)}_{[\frac{n}{k}]}}, \hat{Z}_{\bar{u}^{(i)}_{[\frac{m}{k}]}}} \left[ \frac{k}{m} \sum_{j \in \bar{u}^{(i)}_{[\frac{m}{k}]}} \ell(W, \hat{Z}^{(i)}_j) \right] \\
			&= \sum_{i=1}^{k} \EE_{W} \EE_{\hat{Z}_{\bar{u}^{(i)}_{[\frac{m}{k}]}}} \left[ \frac{1}{m} \sum_{j \in \bar{u}^{(i)}_{[\frac{m}{k}]}} \ell(W, \hat{Z}^{(i)}_j) \right] \\
			&= \EE_{W} \EE_{Z} \left[ \ell(W, Z) \right] \\
			&= L_\mu,
		\end{aligned}
	\end{equation}
	where $(a)$ holds with $Z^{(i)}_{[\frac{n}{k}]} = \hat{Z}_{U^{(i)}_{[\frac{n}{k}]}}$ and $(b)$ is due to $Z_{[n]} = \{Z^{(i)}_{[\frac{n}{k}]}\}_{i=1}^k$, as given in Definition~\ref{def:p-lmo}. By taking expectation over $U_{[n]}$, we have
	\begin{align}
		&\widehat{L}_n = \frac{1}{k} \sum_{i=1}^{k} \EE_{W, \hat{Z}^{(i)}_{[\frac{n+m}{k}]}, U^{(i)}_{[\frac{n}{k}]}} \left[ L_{\hat{Z}_{U^{(i)}_{[\frac{n}{k}]}}}(W) \right], \label{eq:emp_risk_lmo}\\
		&L_\mu = \frac{1}{k} \sum_{i=1}^{k} \EE_{W, \hat{Z}^{(i)}_{[\frac{n+m}{k}]}, U^{(i)}_{[\frac{n}{k}]}} \left[ L_{\hat{Z}_{\bar{U}^{(i)}_{[\frac{m}{k}]}}}(W) \right], \label{eq:pop_risk_lmo}
	\end{align}
	which completes proof.
	
	\subsection{Proof of Theorem~\ref{thm:IPCIMI}}\label{appsub:theorem6}
	We first state an essential lemma.
	\begin{lemma} \label{lem:symmetry} 
		Consider the partitioned L$m$O supersample setting in Definition~\ref{def:p-lmo}. For all $i \in [k]$, $w \in \Wcal$ and $\hat{z}^{(i)}_{[\frac{n+m}{k}]} \in \Zcal^{\frac{n+m}{k}}$, we have
		\begin{align}
			\EE_{U^{(i)}_{[\frac{n}{k}]}} \left[ \Ecal^{(i)} \left( w, \hat{z}^{(i)}_{[\frac{n+m}{k}]}, U^{(i)}_{[\frac{n}{k}]} \right) \right] = 0.
			\label{eq:symmetry}
		\end{align}
	\end{lemma}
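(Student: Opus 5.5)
The plan is a symmetry (exchangeability) argument on the uniformly random subset $U^{(i)}_{[\frac{n}{k}]}$, with $w$ and $\hat{z}^{(i)}_{[\frac{n+m}{k}]}$ held fixed. Since these are deterministic, the quantity inside the expectation in \eqref{eq:symmetry} is a linear combination of the fixed numbers $a_j := \ell(w, \hat{z}^{(i)}_j)$, $j \in [\frac{n+m}{k}]$. The only randomness is which indices fall into the training subset and which fall into its complement. By linearity of expectation, it therefore suffices to compute the marginal inclusion probability of each index.

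I would first write the block error using the normalization that matches \eqref{eq:lmo-cv-err} together with $\Ecal_{\textnormal{L$m$O}} = \frac{1}{k}\sum_i \Ecal^{(i)}$:
\begin{align*}
\Ecal^{(i)}\left( w, \hat{z}^{(i)}_{[\frac{n+m}{k}]}, u^{(i)}_{[\frac{n}{k}]} \right) = \frac{k}{m} \sum_{j \in \bar{u}^{(i)}_{[\frac{m}{k}]}} a_j - \frac{k}{n} \sum_{j \in u^{(i)}_{[\frac{n}{k}]}} a_j = \sum_{j=1}^{\frac{n+m}{k}} a_j \left( \frac{k}{m} \mathds{1}_{j \notin u^{(i)}} - \frac{k}{n} \mathds{1}_{j \in u^{(i)}} \right).
\end{align*}
The second step uses the fact that $\bar{u}^{(i)}$ is the complement of $u^{(i)}$ in $[\frac{n+m}{k}]$.

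Next, because $U^{(i)}_{[\frac{n}{k}]}$ is uniform over all subsets of size $\frac{n}{k}$, every index is included with the same probability. Counting subsets gives
\[
\PP\left[j \in U^{(i)}_{[\frac{n}{k}]}\right] = \binom{\frac{n+m}{k}-1}{\frac{n}{k}-1} \Big/ \binom{\frac{n+m}{k}}{\frac{n}{k}} = \frac{n}{n+m},
\]
and hence $\PP[j \notin U^{(i)}] = \frac{m}{n+m}$. Taking expectations term by term, the coefficient of each $a_j$ becomes $\frac{k}{m}\cdot\frac{m}{n+m} - \frac{k}{n}\cdot\frac{n}{n+m} = 0$. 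The sum therefore vanishes identically in $(a_j)_j$, which gives \eqref{eq:symmetry}.

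There is no real obstacle here. The one point needing care is consistency of the normalization constants $\frac{k}{m}$ and $\frac{k}{n}$ in $\Ecal^{(i)}$: the paper's displayed definition writes $L_{\hat{z}_{\bar{u}^{(i)}}}$, which is an average over $\frac{m}{k}$ test and $\frac{n}{k}$ training points, and this must agree with the proof of Lemma~\ref{lem:IP-lmo-gen-err}. With averages, the cancellation holds for any sizes. Finally, I would note that $U^{(i)}_{[\frac{n}{k}]}$ is independent of $W$ only through the decoupled copy $\tilde{U}$. The lemma, however, is stated for deterministic $w$ and so involves no dependence issue; this is exactly what will be used to center the CGF term in Theorem~\ref{thm:IPCIMI} under $P_{\tilde{W}} \otimes P_{\tilde{U}}$.
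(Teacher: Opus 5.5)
Your proof is correct and is essentially the paper's argument. The paper also averages over the $\binom{\frac{n+m}{k}}{\frac{n}{k}}$ equally likely subsets, swaps the order of summation, and counts the subsets that contain or omit each index, which is your inclusion-probability computation written with indicators. Your per-index coefficient $\frac{k}{m}\cdot\frac{m}{n+m}-\frac{k}{n}\cdot\frac{n}{n+m}=0$ is the correct one; the paper's display writes the second factor as $\frac{m}{n+m}$, a typo that would not cancel unless $n=m$.
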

	
	\begin{proof}
		It follows by the definition of $\Ecal^{(i)}$ that
		\begin{equation}
			\begin{aligned}	
				\EE_{U^{(i)}_{[\frac{n}{k}]}} \left[ \Ecal^{(i)} \left( w, \hat{z}^{(i)}_{[\frac{n+m}{k}]}, U^{(i)}_{[\frac{n}{k}]} \right) \right] &= \frac{1}{\begin{psmallmatrix} \frac{n+m}{k} \\ \frac{n}{k} \end{psmallmatrix}} \sum_{u^{(i)}_{[\frac{n}{k}]}} \left( \frac{k}{m} \sum_{i \not\in u^{(i)}_{[\frac{n}{k}]}} \ell(w, \hat{z}_i) - \frac{k}{n} \sum_{i \in u^{(i)}_{[\frac{n}{k}]}} \ell(w, \hat{z}_i) \right) \\
				&\overset{(a)}{=} \frac{1}{\begin{psmallmatrix}	\frac{n+m}{k} \\ \frac{n}{k} \end{psmallmatrix}} \sum_{i=1}^{\frac{n+m}{k}} \left( \frac{k}{m} \sum_{u^{(i)}_{[\frac{n}{k}]}: i \not\in u^{(i)}_{[\frac{n}{k}]}} \ell(w, \hat{z}_i) - \frac{k}{n} \sum_{u^{(i)}_{[\frac{n}{k}]}: i \in u^{(i)}_{[\frac{n}{k}]}} \ell(w, \hat{z}_i) \right) \\
				&= \frac{1}{\begin{psmallmatrix}	\frac{n+m}{k} \\ \frac{n}{k} \end{psmallmatrix}} \sum_{i=1}^{\frac{n+m}{k}} \left( \frac{k}{m} \begin{psmallmatrix}	\frac{n+m}{k}-1 \\ \frac{n}{k} \end{psmallmatrix} \ell(w, \hat{z}_i) - \frac{k}{n} \begin{psmallmatrix}	\frac{n+m}{k}-1 \\ \frac{n}{k}-1\end{psmallmatrix} \ell(w, \hat{z}_i) \right) \\
				&= \sum_{i=1}^{\frac{n+m}{k}} \left( \frac{k}{m} \cdot \frac{m}{n+m} - \frac{k}{n} \cdot \frac{m}{n+m} \right) \ell(w, \hat{z}_i) \\
				&= 0.
			\end{aligned}
		\end{equation}
		where $(a)$ is obtained by swapping the order of summation.
	\end{proof}
	
	Starting from \eqref{eq:IP-lmo-gen-err}, for all $\lambda \in \RR$, we have
	\begin{equation}
		\begin{aligned}
			\EE_{W, \hat{Z}_{[n+m]}, U_{[n]}} \left[ \lambda \Ecal_{\textnormal{L$m$O}}\left( W, \hat{Z}_{[n+m]}, U_{[n]} \right) \right] = \frac{1}{k} \sum_{i=1}^{k} \EE_{W, \hat{Z}^{(i)}_{[\frac{n+m}{k}]}, U^{(i)}_{[\frac{n}{k}]}} \left[ \lambda \Ecal^{(i)} \left( W, \hat{Z}^{(i)}_{[\frac{n+m}{k}]}, U^{(i)}_{[\frac{n}{k}]} \right) \right].
		\end{aligned}
	\end{equation}
	When conditioned on $\hat{Z}^{(i)}_{[\frac{n+m}{k}]}$, by using the Donsker-Varadhan inequality with $X = (W, U^{(i)}_{[\frac{n}{k}]})$, $P = P_{W, U^{(i)}_{[\frac{n}{k}]} | \hat{Z}^{(i)}_{[\frac{n+m}{k}]}}$, $Q = P_{W | \hat{Z}^{(i)}_{[\frac{n+m}{k}]}} \otimes P_{U^{(i)}_{[\frac{n}{k}]} | \hat{Z}^{(i)}_{[\frac{n+m}{k}]}}$ and $f(X) = \lambda \Ecal^{(i)} ( W, \hat{Z}^{(i)}_{[\frac{n+m}{k}]}, U^{(i)}_{[\frac{n}{k}]} )$, we derive an upper bound for each summand of the above expression:
	\begin{equation}
		\begin{aligned}
			&\EE_{W, U^{(i)}_{[\frac{n}{k}]} | \hat{Z}^{(i)}_{[\frac{n+m}{k}]}} \left[ \lambda \Ecal^{(i)} \left( W, \hat{Z}^{(i)}_{[\frac{n+m}{k}]}, U^{(i)}_{[\frac{n}{k}]} \right) \right] \\
			&\leq I^{\hat{Z}^{(i)}_{[\frac{n+m}{k}]}}\left( W; U^{(i)}_{[\frac{n}{k}]} \right) + \log \EE_{W | \hat{Z}^{(i)}_{[\frac{n+m}{k}]}} \EE_{U^{(i)}_{[\frac{n}{k}]} | \hat{Z}^{(i)}_{[\frac{n+m}{k}]}} \left[ \exp \left( \lambda \Ecal^{(i)} \left( W, \hat{Z}^{(i)}_{[\frac{n+m}{k}]}, U^{(i)}_{[\frac{n}{k}]} \right) \right) \right] \\
			&= I^{\hat{Z}^{(i)}_{[\frac{n+m}{k}]}}\left( W; U^{(i)}_{[\frac{n}{k}]} \right) + \log \EE_{\tilde{W}, \tilde{U}^{(i)}_{[\frac{n}{k}]} | \hat{Z}^{(i)}_{[\frac{n+m}{k}]}} \left[ \exp \left( \lambda \Ecal^{(i)} \left( \tilde{W}, \hat{Z}^{(i)}_{[\frac{n+m}{k}]}, \tilde{U}^{(i)}_{[\frac{n}{k}]} \right) \right) \right] \\
			&\overset{(a)}{=} I^{\hat{Z}^{(i)}_{[\frac{n+m}{k}]}}\left( W; U^{(i)}_{[\frac{n}{k}]} \right) + \psi_{\mathcal{E}^{(i)}(\tilde{W}, \hat{Z}^{(i)}_{[\frac{n+m}{k}]}, \tilde{U}^{(i)}_{[\frac{n}{k}]}) | \hat{Z}^{(i)}_{[\frac{n+m}{k}]}}(\lambda),
		\end{aligned}\label{eq:dv-IPCIMI}
	\end{equation}
	where $(a)$ holds by Lemma~\ref{lem:symmetry} and Definition~\ref{def:con-CGF}. It follows for any $\lambda >0$ that
	\begin{equation}
		\begin{aligned}
			\EE_{W, U^{(i)}_{[\frac{n}{k}]} | \hat{Z}^{(i)}_{[\frac{n+m}{k}]}} \left[ \Ecal^{(i)} \left( W, \hat{Z}^{(i)}_{[\frac{n+m}{k}]}, U^{(i)}_{[\frac{n}{k}]} \right) \right] &\leq \inf_{\lambda > 0} \frac{1}{\lambda} \left( I^{\hat{Z}^{(i)}_{[\frac{n+m}{k}]}}\left( W; U^{(i)}_{[\frac{n}{k}]} \right) + \psi_{\mathcal{E}^{(i)}(\tilde{W}, \hat{Z}^{(i)}_{[\frac{n+m}{k}]}, \tilde{U}^{(i)}_{[\frac{n}{k}]}) | \hat{Z}^{(i)}_{[\frac{n+m}{k}]}}(\lambda) \right).
		\end{aligned}
	\end{equation}
	The RHS expression of the above inequality is measurable by Lemma~\ref{lem:inv_conj}. Thus, by taking expectation over $\hat{Z}^{(i)}_{[\frac{n+m}{k}]}$ and summation over $i$, we obtain
	\begin{equation}
		\begin{aligned}
			\EE_{W, \hat{Z}_{[n+m]}, U_{[n]}} \left[ \Ecal_{\textnormal{L$m$O}} \left( W, \hat{Z}_{[n+m]}, U_{[n]} \right) \right] &\leq \frac{1}{k} \sum_{i=1}^{k} \EE \left[ \inf_{\lambda > 0} \frac{1}{\lambda} \left( I^{\hat{Z}^{(i)}_{[\frac{n+m}{k}]}}\left( W; U^{(i)}_{[\frac{n}{k}]} \right) + \psi_{\mathcal{E}^{(i)}(\tilde{W}, \hat{Z}^{(i)}_{[\frac{n+m}{k}]}, \tilde{U}^{(i)}_{[\frac{n}{k}]}) | \hat{Z}^{(i)}_{[\frac{n+m}{k}]}}(\lambda) \right) \right] \\
			&\leq \frac{1}{k} \sum_{i=1}^{k} \inf_{\lambda > 0} \frac{1}{\lambda} \left(I\left( W; U^{(i)}_{[\frac{n}{k}]} | \hat{Z}^{(i)}_{[\frac{n+m}{k}]} \right) + \EE \left[ \psi_{\mathcal{E}^{(i)}(\tilde{W}, \hat{Z}^{(i)}_{[\frac{n+m}{k}]}, \tilde{U}^{(i)}_{[\frac{n}{k}]}) | \hat{Z}^{(i)}_{[\frac{n+m}{k}]}}(\lambda) \right] \right),
		\end{aligned}
	\end{equation}
	where in the last step we change the order of expectation and the infimum function.
	
	\subsection{Proof of Lemma~\ref{lem:bld-CGF-bound}} \label{appsub:lemma2}
	For notational simplicity, we let
	\begin{align}
		\Ecal\left( w, \hat{z}_{[n+m]}, U_{[n]} \right) := \frac{1}{m} \sum_{i \in \bar{U}_{[m]}} \ell(w, \hat{z}_i) - \frac{1}{n} \sum_{i \in U_{[n]}} \ell(w, \hat{z}_i), \label{eq:def-cv-err}
	\end{align}
	where $U_{[n]}$ is a subset of size $n$ chosen uniformly from $[n+m]$ and $\bar{U}_{[m]} = [n+m] \backslash U_{[n]}$. We begin by stating the following lemma, where we consider $k=1$ for simplicity. 
	
	\begin{lemma} \label{lem:ub-cgf}
		Consider the L$m$O supersample setting in Definition~\ref{def:p-lmo} with $k=1$. Suppose $\sup_{w \in \Wcal} |\ell(w, z) - \ell(w, z')| \leq \Delta$ for any $z, z' \in \Zcal$, then for all $w \in \Wcal$, $\hat{z}_{[n+m]} \in \Zcal^{n+m}$ and $\lambda \in \RR$, we have
		\begin{align}
			\log \EE_{U_{[n]}} \left[ \exp \left( \lambda \Ecal\left( w, \hat{z}_{[n+m]}, U_{[n]} \right) \right) \right] \leq \frac{\lambda^2 \Delta^2 C_{n,m} \cdot (n+m)}{8nm},
			\label{eq:ub-cgf}
		\end{align}
		where $C_{n,m} = \begin{cases}
			\frac{n+m}{\max(n, m)} &  \min(n, m) = 1, \\
			\frac{n+m}{n+m-1} \cdot \frac{nm}{nm - \min(n, m)} & \text{otherwise.}
		\end{cases}$.
	\end{lemma}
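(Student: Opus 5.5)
The plan is to reduce the claim to a concentration inequality for sampling without replacement, and to prove that inequality with a Doob martingale and Lemma~\ref{lem:con-Hoeffding}.

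\textbf{Reduction.} Write $N := n+m$, $a_i := \ell(w,\hat z_i)$ and $S := \sum_{i \in U_{[n]}} a_i$. Since $\sum_{i \in \bar U_{[m]}} a_i = \sum_{i=1}^N a_i - S$, definition \eqref{eq:def-cv-err} becomes
\begin{align*}
\Ecal(w,\hat z_{[n+m]},U_{[n]}) = \frac{1}{m}\sum_{i=1}^N a_i - \frac{N}{nm} S .
\end{align*}
By Lemma~\ref{lem:symmetry} (with $k=1$), $\EE_{U_{[n]}}[\Ecal]=0$. Hence the left-hand side of \eqref{eq:ub-cgf} equals $\psi_S(-\lambda N/(nm))$. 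The assumption $\sup_w|\ell(w,z)-\ell(w,z')|\le\Delta$ puts all $a_i$ in an interval $[c,c+\Delta]$. Also, $S$ determines the complementary sum, and swapping the roles of $U_{[n]}$ and $\bar U_{[m]}$ leaves the prefactor $N/(nm)$ unchanged. So we may work with whichever of the two index sets is smaller, i.e.\ of size $s:=\min(n,m)$.

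\textbf{Case $\min(n,m)=1$.} Here I would use Hoeffding's classical comparison: a sum of $s$ draws without replacement is dominated in convex order by the sum of $s$ i.i.d.\ draws with replacement. Applying Lemma~\ref{lem:Hoeffding} to each i.i.d.\ draw gives $\psi_S(\eta)\le s\eta^2\Delta^2/8$. With $\eta=\lambda N/(nm)$ this yields
\begin{align*}
\frac{\lambda^2\Delta^2 N^2 s}{8n^2m^2} = \frac{\lambda^2\Delta^2 N}{8nm}\cdot\frac{N}{\max(n,m)},
\end{align*}
which is exactly the claim with $C_{n,m}=N/\max(n,m)$. When $s=1$ this is even immediate, since $S$ is then a single uniformly chosen $a_i$, so Lemma~\ref{lem:Hoeffding} applies directly.

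\textbf{General case.} To obtain the sharper constant $C_{n,m}$, I would generate the smaller set by revealing its elements one at a time, $\pi_1,\dots,\pi_s$, drawn uniformly without replacement from $[N]$. Consider the Doob martingale $M_t := \EE[S' \mid \pi_1,\dots,\pi_t]$, where $S'$ is the sum over the smaller set. A direct computation shows
\begin{align*}
M_t - M_{t-1} = \frac{N-s}{N-t}\Bigl(a_{\pi_t} - \bar a^{(t-1)}\Bigr),
\end{align*}
where $\bar a^{(t-1)}$ is the mean of the $a_i$ not yet revealed before step $t$. Conditionally on the past, this increment lies in an interval of width $\Delta\,\frac{N-s}{N-t}$ whose endpoints are determined by the past. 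Lemma~\ref{lem:con-Hoeffding} applies to each step, and iterating the tower property gives
\begin{align*}
\psi_S(\eta)\le\frac{\eta^2\Delta^2}{8}\sum_{t=1}^{s}\Bigl(\frac{N-s}{N-t}\Bigr)^2 .
\end{align*}
It then remains to bound this sum by $\frac{nm}{N}\,C_{n,m}$ and to substitute $\eta=\lambda N/(nm)$.

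\textbf{Main obstacle.} The hardest step is this final bound: showing $\sum_{t=1}^s\frac{(N-s)^2}{(N-t)^2}\le \frac{N}{N-1}\cdot\frac{nm}{N}\cdot\frac{nm}{nm-s}$. I would first try to compare each term with a telescoping quantity, using $\frac{1}{(N-t)^2}\le\frac{1}{(N-t-1)(N-t)}=\frac{1}{N-t-1}-\frac{1}{N-t}$. This gives a closed form of the shape $(N-s)^2\bigl(\frac{1}{N-s-1}-\frac{1}{N-1}\bigr)$, which I would then compare with the target. If the indexing makes the constants come out slightly off, I would tune the telescoping choice (for example, start the comparison at $t=2$ and treat the $t=1$ term exactly). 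The factor $\frac{N}{N-1}$ suggests that the first step has to be handled separately in this way.
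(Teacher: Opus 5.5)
Your proposal is correct and follows essentially the paper's argument: a Doob martingale over the sequentially revealed indices, the conditional Hoeffding lemma at each step, Azuma-style iteration, and the telescoping bound $1/i^2\le 1/(i(i-1))$. The only cosmetic difference is that you reveal the smaller index set from the start, whereas the paper bounds both choices and takes the minimum. Your ``main obstacle'' is not actually one: with $s=n\le m$, the closed form you wrote is already exact, since $m^2\bigl(\frac{1}{m-1}-\frac{1}{N-1}\bigr)=\frac{nm^2}{(m-1)(N-1)}=\frac{(nm)^2}{(N-1)(nm-n)}$, which is precisely the target, so no separate treatment of $t=1$ is needed (the factor $\frac{N}{N-1}$ is just the telescoping endpoint $\frac{1}{N-1}$).
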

	
	\begin{proof}
		We start by constructing a Doob's martingale difference sequence as
		\begin{align}
			D_i = \EE \left[ \Ecal\left( w, \hat{z}_{[n+m]}, U_{[n]} \right) | U_1, \ldots, U_i \right] - \EE \left[ \Ecal\left( w, \hat{z}_{[n+m]}, U_{[n]} \right) | U_1, \ldots, U_{i-1} \right], \quad \textnormal{for $i \in [n]$},
			\label{eq:def_Di}
		\end{align}
		where the dependence on $\{U_1, \ldots, U_i\}$ is dropped for notational simplicity. Using the tower property of conditional expectation, one can simply verify that 
		\begin{equation}
			\begin{aligned}
				&\EE \left[ D_i | U_1, \ldots, U_{i-1} \right] \\
				&= \EE \left[ \EE \left[ \Ecal\left( w, \hat{z}_{[n+m]}, U_{[n]} \right) | U_1, \ldots, U_i \right] | U_1, \ldots, U_{i-1} \right] - \EE \left[ \Ecal\left( w, \hat{z}_{[n+m]}, U_{[n]} \right) | U_1, \ldots, U_{i-1} \right] \\
				&= \EE \left[ \Ecal\left( w, \hat{z}_{[n+m]}, U_{[n]} \right) | U_1, \ldots, U_{i-1} \right] - \EE \left[ \Ecal\left( w, \hat{z}_{[n+m]}, U_{[n]} \right) | U_1, \ldots, U_{i-1} \right] \\
				&= 0.
			\end{aligned}
			\label{eq:con_exp_Di}
		\end{equation}
		In the following, we will show the computation of $D_i$, where $\EE \left[ \Ecal\left( w, \hat{z}_{[n+m]}, U_{[n]} \right) | U_1, \ldots, U_i \right]$ is computed first, followed by which $\EE \left[ \Ecal\left( w, \hat{z}_{[n+m]}, U_{[n]} \right) | U_1, \ldots, U_{i-1} \right]$ can be obtained similarly. First, by Bayes' rule, the conditional distribution $P_{U_{i+1}, \ldots, U_n | U_1, \ldots, U_i}$ can be obtained as
		\begin{equation}
			\begin{aligned}
				P_{U_{i+1}, \ldots, U_n | U_1, \ldots, U_i}(u_{i+1}, \ldots, u_n | u_1, \ldots, u_i) &= \frac{P_{U_1, \ldots, U_n}(u_1, \ldots, u_n)}{P_{U_1, \ldots, U_i}(u_1, \ldots, u_i)} \\
				&= \frac{P_{U_1, \ldots, U_n}(u_1, \ldots, u_n)}{\sum_{u'_{i+1}, \ldots, u'_{n}} P_{U_1, \ldots, U_n}(u_1, \ldots, u_i, u'_{i+1}, \ldots, u'_{n})} \\
				&= \frac{\frac{1}{\begin{psmallmatrix} n+m \\ n \end{psmallmatrix}}}{\frac{\begin{psmallmatrix} n+m-i \\ n-i \end{psmallmatrix}}{\begin{psmallmatrix} n+m \\ n \end{psmallmatrix}}} = \frac{1}{\begin{psmallmatrix} n+m-i \\ n-i \end{psmallmatrix}},
			\end{aligned}\label{eq:condi-prob-u}
		\end{equation}
		where $\begin{psmallmatrix} n+m-i \\ n-i \end{psmallmatrix}$ counts the number of choices that we select $n-i$ indices from $n+m-i$ possible positions. In general, \eqref{eq:condi-prob-u} means that when conditioned on $U_1, \ldots, U_i$, the set composed of the remaining indices $\{U_{i+1}, \ldots, U_n\}$ is also uniformly distributed among $\begin{psmallmatrix} n+m-i \\ n-i \end{psmallmatrix}$ possibilities. Let condition on $U_1 = u_1, \ldots, U_i = u_i$, it follows by the definition in \eqref{eq:lmo-cv-err} that
		\begin{equation}
			\begin{aligned}
				&\EE \left[ \Ecal\left( w, \hat{z}_{[n+m]}, U_{[n]} \right) | u_1, \ldots, u_i \right] \\
				&= \EE \left[ \frac{1}{m} \sum_{j \in [n+m] \backslash \{u_1, \ldots, u_i, U_{i+1}, \ldots, U_n\}} \ell(w, \hat{z}_j) - \frac{1}{n} \sum_{j \in \{u_1, \ldots, u_i, U_{i+1}, \ldots, U_n\}} \ell(w, \hat{z}_j) \right] \\
				&= \EE \left[ \frac{1}{m} \sum_{j \in [n+m] \backslash \{u_1, \ldots, u_i, U_{i+1}, \ldots, U_n\}} \ell(w, \hat{z}_j) - \frac{1}{n} \sum_{j \in \{U_{i+1}, \ldots, U_n\}} \ell(w, \hat{z}_j) \right] - \frac{1}{n} \sum_{j \in \{u_1, \ldots, u_i\}} \ell(w, \hat{z}_j).
			\end{aligned}
			\label{eq:condi-exp-cv-err}
		\end{equation}	
		Using \eqref{eq:condi-prob-u}, we next expand the expectations in \eqref{eq:condi-exp-cv-err} to get
		\begin{equation}
			\begin{aligned}
				&\EE \left[ \Ecal\left( w, \hat{z}_{[n+m]}, U_{[n]} \right) | u_1, \ldots, u_i \right] \\
				&= \frac{1}{\begin{psmallmatrix} n+m-i \\ n-i \end{psmallmatrix}} \sum_{u'_{i+1}, \ldots, u'_n} \left( \frac{1}{m} \sum_{j \in [n+m] \backslash \{u_1, \ldots, u_i, u'_{i+1}, \ldots, u'_n\}} \ell(w, \hat{z}_j) - \frac{1}{n} \sum_{j \in \{u'_{i+1}, \ldots, u'_n\}} \ell(w, \hat{z}_j) \right) \\
				&\quad - \frac{1}{n} \sum_{j \in \{u_1, \ldots, u_i\}} \ell(w, \hat{z}_j) \\
				&\overset{(a)}{=} \frac{1}{\begin{psmallmatrix} n+m-i \\ n-i \end{psmallmatrix}} \sum_{j \in [n+m] \backslash \{u_1, \ldots, u_i\}} \left( \sum_{\{u'_{i+1}, \ldots, u'_n\}: j \not\in  \{u'_{i+1}, \ldots, u'_n\}} \frac{1}{m} \ell(w, \hat{z}_j) - \sum_{\{u'_{i+1}, \ldots, u'_n\}: j \in  \{u'_{i+1}, \ldots, u'_n\}} \frac{1}{n} \ell(w, \hat{z}_j) \right) \\
				&\quad - \frac{1}{n} \sum_{j \in \{u_1, \ldots, u_i\}} \ell(w, \hat{z}_j) \\
				&= \sum_{j \in [n+m] \backslash \{u_1, \ldots, u_i\}} \left( \frac{\begin{psmallmatrix} n+m-i-1 \\ n-i \end{psmallmatrix}}{\begin{psmallmatrix} n+m-i \\ n-i \end{psmallmatrix}} \cdot \frac{1}{m} \ell(w, \hat{z}_j) - \frac{\begin{psmallmatrix} n+m-i-1 \\ n-i-1 \end{psmallmatrix}}{\begin{psmallmatrix} n+m-i \\ n-i \end{psmallmatrix}} \cdot \frac{1}{n} \ell(w, \hat{z}_j) \right) - \frac{1}{n} \sum_{j \in \{u_1, \ldots, u_i\}} \ell(w, \hat{z}_j) \\
				&= \sum_{j \in [n+m] \backslash \{u_1, \ldots, u_i\}} \left( \frac{m}{n+m-i} \cdot \frac{1}{m} - \frac{n-i}{n+m-i} \cdot \frac{1}{n} \right) \ell(w, \hat{z}_j) - \frac{1}{n} \sum_{j \in \{u_1, \ldots, u_i\}} \ell(w, \hat{z}_j) \\
				&= - \sum_{j \in \{u_1, \ldots, u_i\}} \frac{1}{n} \ell(w, \hat{z}_j) + \sum_{j \in [n+m] \backslash \{u_1, \ldots, u_i\}} \frac{i}{n(n+m-i)} \ell(w, \hat{z}_j),
			\end{aligned}
			\label{eq:def_Di_1}
		\end{equation}
		where in $(a)$ we exchange the order of summation. Since \eqref{eq:def_Di_1} holds for all $i$, we can simply substitute $i-1$ for $i$ to obtain
		\begin{equation}
			\begin{aligned}
				\EE \left[ \Ecal\left( w, \hat{z}_{[n+m]}, U_{[n]} \right) | u_1 \ldots, u_{i-1} \right]
				=& - \sum_{j \in \{u_1, \ldots, u_{i-1}\}} \frac{1}{n} \ell(w, \hat{z}_j) + \sum_{j \in [n+m] \backslash \{u_1, \ldots, u_{i-1}\}} \frac{i-1}{n(n+m-i+1)} \ell(w, \hat{z}_j).
			\end{aligned}
			\label{eq:def_Di_2}
		\end{equation}	
		Plugging \eqref{eq:def_Di_1} and \eqref{eq:def_Di_2} into \eqref{eq:def_Di} leads to
		\begin{equation}
			\begin{aligned}
				D_i =& - \sum_{j \in \{U_1, \ldots, U_i\}} \frac{1}{n} \ell(w, \hat{z}_j) + \sum_{j \in [n+m] \backslash \{U_1, \ldots, U_i\}} \frac{i}{n(n+m-i)} \ell(w, \hat{z}_j) \\
				& + \sum_{j \in \{U_1, \ldots, U_{i-1}\}} \frac{1}{n} \ell(w, \hat{z}_j) - \sum_{j \in [n+m] \backslash \{U_1, \ldots, U_{i-1}\}} \frac{i-1}{n(n+m-i+1)} \ell(w, \hat{z}_j) \\
				=& \left( - \frac{1}{n} - \frac{i-1}{n(n+m-i+1)} \right) \ell(w, \hat{z}_{U_i}) + \sum_{j \in [n+m] \backslash \{U_1, \ldots, U_i\}} \left( \frac{i}{n(n+m-i)} - \frac{i-1}{n(n+m-i+1)} \right) \ell(w, \hat{z}_j) \\
				=& - \frac{n+m}{n(n+m-i+1)} \ell(w, \hat{z}_{U_i}) + \sum_{j \in [n+m] \backslash \{U_1, \ldots, U_i\}} \frac{n+m}{n(n+m-i)(n+m-i+1)} \ell(w, \hat{z}_j) \\
				=& \left( - \frac{n+m}{n(n+m-i+1)} - \frac{n+m}{n(n+m-i)(n+m-i+1)} \right) \ell(w, \hat{z}_{U_i}) \\
				& + \sum_{j \in [n+m] \backslash \{U_1, \ldots, U_{i-1}\}} \frac{n+m}{n(n+m-i)(n+m-i+1)} \ell(w, \hat{z}_j) \\
				=& - \frac{n+m}{n(n+m-i)} \ell(w, \hat{z}_{U_i}) + \sum_{j \in [n+m] \backslash \{U_1, \ldots, U_{i-1}\}} \frac{n+m}{n(n+m-i)(n+m-i+1)} \ell(w, \hat{z}_j).
			\end{aligned}
			\label{eq:D_i}
		\end{equation}	
		Furthermore, let define $A_i$ and $B_i$ as the lower and upper bounds of $D_i$, respectively, such that for $i \in [n]$,
		\begin{equation}
			\begin{aligned}
				&A_i := \min\limits_{u_i \in [n+m] \backslash \{U_1, \ldots, U_{i-1}\}} \EE \left[ \Ecal\left( w, \hat{z}_{[n+m]}, U_{[n]} \right) | U_1, \ldots, U_{i-1}, U_i = u_i \right] - \EE \left[ \Ecal\left( w, \hat{z}_{[n+m]}, U_{[n]} \right) | U_1, \ldots, U_{i-1} \right], \\
				&B_i := \max\limits_{u_i \in [n+m] \backslash \{U_1, \ldots, U_{i-1}\}} \EE \left[ \Ecal\left( w, \hat{z}_{[n+m]}, U_{[n]} \right) | U_1, \ldots, U_{i-1}, U_i = u_i \right] - \EE \left[ \Ecal\left( w, \hat{z}_{[n+m]}, U_{[n]} \right) | U_1, \ldots, U_{i-1} \right].
			\end{aligned}
		\end{equation}
		Since \eqref{eq:D_i} is monotonically decreasing with $\ell(w, \hat{z}_{U_i})$, let
		\begin{equation*}
			\begin{aligned}
				u = \underset{j \in [n+m]\backslash \{U_1, \ldots, U_{i-1}\}}{\arg\max} \ell(w, \hat{z}_j), \\
				u' = \underset{j \in [n+m]\backslash \{U_1, \ldots, U_{i-1}\}}{\arg\min} \ell(w, \hat{z}_j),
			\end{aligned}
		\end{equation*}
		then we have
		\begin{equation}
			\begin{aligned}
				&A_i = \EE \left[ \Ecal\left( w, \hat{z}_{[n+m]}, U_{[n]} \right) | U_1, \ldots, U_{i-1}, U_i = u \right] - \EE \left[ \Ecal\left( w, \hat{z}_{[n+m]}, U_{[n]} \right) | U_1, \ldots, U_{i-1} \right], \\
				&B_i = \EE \left[ \Ecal\left( w, \hat{z}_{[n+m]}, U_{[n]} \right) | U_1, \ldots, U_{i-1}, U_i = u' \right] - \EE \left[ \Ecal\left( w, \hat{z}_{[n+m]}, U_{[n]} \right) | U_1, \ldots, U_{i-1} \right],
			\end{aligned}
		\end{equation}
		which means that $A_i$ and $B_i$ can be further calculated by plugging $U_i = u$ and $U_i = u'$ into \eqref{eq:D_i}, respectively. Thus, the interval $B_i - A_i$ is bounded by
		\begin{equation}
			\begin{aligned}
				B_i - A_i = \frac{n+m}{n(n+m-i)} \left( \ell(w, \hat{z}_{u}) - \ell(w, \hat{z}_{u'}) \right) \leq \frac{\Delta(n+m)}{n(n+m-i)},
			\end{aligned}\label{eq:interval-delta}
		\end{equation}	
		where the inequality is due to the boundedness of the loss difference. The rest of the argument follows similarly with the proof of Azuma-Hoeffding's inequality {\cite[Theorem~3.2.1]{roch2024modern}}, which is detailed as follows to make this paper self-contained. Specifically, by iteration, we have
		\begin{equation}	
			\begin{aligned}
				&\log \EE \left[ \exp \left( \lambda \Ecal\left( w, \hat{z}_{[n+m]}, U_{[n]} \right) \right) \right] \\
				&\overset{(a)}{=} \log \EE \left[ \exp \left( \lambda \left( \Ecal\left( w, \hat{z}_{[n+m]}, U_{[n]} \right) - \EE \left[ \Ecal\left( w, \hat{z}_{[n+m]}, U_{[n]} \right) \right] \right) \right) \right] \\
				&= \log \EE \left[ \exp \left( \lambda \sum_{i=1}^{n} D_i \right) \right] \\
				&\overset{(b)}{=} \log \EE \left[ \EE \left[ \left. \exp \left( \lambda \sum_{i=1}^{n-1} D_i \right) \exp \left( \lambda D_n \right) \right| U_1, U_2, \ldots, U_{n-1} \right] \right] \\
				&= \log \EE \left[ \exp \left( \lambda \sum_{i=1}^{n-1} D_i \right) \EE \left[ \exp \left( \lambda D_n \right) | U_1, \ldots, U_{n-1} \right] \right] \\
				&\overset{\eqref{eq:con_exp_Di}}{=} \log \EE \left[ \exp \left( \lambda \sum_{i=1}^{n-1} D_i \right) \EE \left[ \left. \exp \left( \lambda \left( D_n - \EE \left[ D_n | U_1, \ldots, U_{n-1} \right] \right) \right) \right| U_1, \ldots, U_{n-1} \right] \right] \\
				&\overset{(c)}{\leq} \log \EE \left[ \exp \left( \lambda \sum_{i=1}^{n-1} D_i \right) \right] + \frac{\lambda^2 \Delta^2 (n+m)^2}{8 n^2} \cdot \frac{1}{m^2} \\
				&\quad \vdots \\
				&\leq \frac{\lambda^2 \Delta^2 (n+m)^2}{8 n^2} \sum_{i=1}^{n} \frac{1}{(n+m-i)^2} \\
				&= \frac{\lambda^2 \Delta^2 (n+m)^2}{8 n^2} \sum_{i=m}^{n+m-1} \frac{1}{i^2}.
			\end{aligned}\label{eq:cgf-ub-n}
		\end{equation}
		where $(a)$ is due to Lemma~\ref{lem:symmetry}, $(b)$ is by the tower property of conditional expectation, $(c)$ is obtained by combining \eqref{eq:interval-delta} and Lemma~\eqref{lem:con-Hoeffding} with $X = D_n$ when conditioned on $Y = \{U_1, \ldots, U_{n-1}\}$, and the omitted procedure is with respect to the repeated steps applied to $D_{n-1}, \ldots, D_1$. The above result is established when we assume the randomness comes from $U_{[n]}$. Alternatively, if we treat $\bar{U}_{[m]}$ as the source random variable, one can verify from \eqref{eq:def-cv-err} that the following symmetry holds:
		\begin{equation}	
			\begin{aligned}
				\Ecal\left( w, \hat{z}_{[n+m]}, U_{[n]} \right) &= - \Ecal\left( w, \hat{z}_{[n+m]}, \bar{U}_{[m]} \right).
			\end{aligned}
		\end{equation}
		Thus, we can simply switch $n$ and $m$ in \eqref{eq:cgf-ub-n} to establish another upper bound as
		\begin{equation}	
			\begin{aligned}
				\log \EE \left[ \exp \left( \lambda \Ecal\left( w, \hat{z}_{[n+m]}, U_{[n]} \right) \right) \right] &= \log \EE \left[ \exp \left( - \lambda \Ecal\left( w, \hat{z}_{[n+m]}, \bar{U}_{[m]} \right) \right) \right] \\
				&\leq \frac{\lambda^2 \Delta^2 (n+m)^2}{8 m^2} \sum_{i=n}^{n+m-1} \frac{1}{i^2}. 
			\end{aligned}\label{eq:cgf-ub-m}
		\end{equation}
		Combining \eqref{eq:cgf-ub-n} and \eqref{eq:cgf-ub-m} gives a tightened upper bound as
		\begin{equation}	
			\begin{aligned}
				\log \EE \left[ \exp \left( \lambda \Ecal\left( w, \hat{z}_{[n+m]}, U_{[n]} \right) \right) \right]
				\leq \frac{\lambda^2 \Delta^2 (n+m)^2}{8} \min \left( \frac{1}{n^2} \sum_{i=m}^{n+m-1} \frac{1}{i^2}, \frac{1}{m^2} \sum_{i=n}^{n+m-1} \frac{1}{i^2} \right).
				\label{eq:cgf-ub-mn}
			\end{aligned}
		\end{equation}
		Notice that when $m = 1$, the RHS of \eqref{eq:cgf-ub-mn} can be directly computed as
		\begin{equation}	
			\begin{aligned}
				\frac{\lambda^2 \Delta^2 (n+m)^2}{8} \min \left( \frac{1}{n^2} \sum_{i=1}^{n} \frac{1}{i^2}, \frac{1}{n^2} \right) = \frac{\lambda^2 \Delta^2 (n+m)^2}{8 n^2},
			\end{aligned}
		\end{equation}
		and similarly, for $n = 1$ we have the RHS of \eqref{eq:cgf-ub-mn} simplified as
		\begin{equation}	
			\begin{aligned}
				\frac{\lambda^2 \Delta^2 (n+m)^2}{8} \min \left( \frac{1}{m^2}, \frac{1}{m^2} \sum_{i=1}^{m} \frac{1}{i^2} \right) = \frac{\lambda^2 \Delta^2 (n+m)^2}{8 m^2}.
			\end{aligned}
		\end{equation}
		Collectively, for the case $\min(n, m) = 1$, we therefore have
		\begin{equation}	
			\begin{aligned}
				\log \EE \left[ \exp \left( \lambda \Ecal\left( w, \hat{z}_{[n+m]}, U_{[n]} \right) \right) \right]
				\leq \frac{\lambda^2 \Delta^2 (n+m)^2}{8 \max(n^2, m^2)}.
			\end{aligned}\label{eq:cgf-ub-1}
		\end{equation}
		Otherwise, \eqref{eq:cgf-ub-mn} is further upper bounded by
		\begin{equation}	
			\begin{aligned}
				&\log \EE \left[ \exp \left( \lambda \Ecal\left( w, \hat{z}_{[n+m]}, U_{[n]} \right) \right) \right] \\
				&\leq \frac{\lambda^2 \Delta^2 (n+m)^2}{8} \min \left( \frac{1}{n^2} \sum_{i=m}^{n+m-1} \frac{1}{i(i-1)}, \frac{1}{m^2} \sum_{i=n}^{n+m-1} \frac{1}{i(i-1)} \right) \\
				&= \frac{\lambda^2 \Delta^2 (n+m)^2}{8} \min \left( \frac{1}{n^2} \sum_{i=m}^{n+m-1} \left( \frac{1}{i-1} - \frac{1}{i} \right), \frac{1}{m^2} \sum_{i=n}^{n+m-1} \left( \frac{1}{i-1} - \frac{1}{i} \right) \right) \\
				&= \frac{\lambda^2 \Delta^2 (n+m)^2}{8} \min \left( \frac{1}{n^2} \left( \frac{1}{m-1} - \frac{1}{n+m-1} \right), \frac{1}{m^2} \left( \frac{1}{n-1} - \frac{1}{n+m-1} \right) \right) \\
				&= \frac{\lambda^2 \Delta^2 (n+m)^2}{8 (n+m-1) (nm - \min(n, m))}.
				\label{eq:cgf-ub-gen}
			\end{aligned}
		\end{equation}
		Combining \eqref{eq:cgf-ub-1} and \eqref{eq:cgf-ub-gen} completes the proof.
	\end{proof}
	
	By simply replacing $n$ and $m$ with $\frac{n}{k}$ and $\frac{m}{k}$, respectively, one can verify that this lemma is also applicable to each individual block $\hat{z}^{(i)}_{[\frac{n+m}{k}]}$. Thus, it follows for all $w \in \Wcal$ and $\hat{z}^{(i)}_{[\frac{n+m}{k}]} \in \Zcal^{\frac{n+m}{k}}$ that
	\begin{align}
		\EE_{U^{(i)}_{[\frac{n}{k}]}} \left[ \exp \left( \lambda \Ecal^{(i)}\left( w, \hat{z}^{(i)}_{[\frac{n+m}{k}]}, U^{(i)}_{[\frac{n}{k}]} \right) \right) \right] \leq \exp \left( \frac{\lambda^2 \Delta^2 C^k_{n,m} \cdot k(n+m)}{8nm} \right).
	\end{align}
	Since the above constant bound holds for all $w \in \Wcal$ (including the worst evaluated $w$), then this upper bound still holds when we take the expectation over $W$, i.e.,
	\begin{align}
		\EE_{\tilde{W}, \tilde{U}^{(i)}_{[\frac{n}{k}]} | \hat{z}^{(i)}_{[\frac{n+m}{k}]}} \left[ \exp \Ecal^{(i)}\left( w, \hat{z}^{(i)}_{[\frac{n+m}{k}]}, U^{(i)}_{[\frac{n}{k}]} \right) \right] &= \EE_{W | \hat{z}^{(i)}_{[\frac{n+m}{k}]}} \EE_{U^{(i)}_{[\frac{n}{k}]}} \left[ \exp \left( \lambda \Ecal^{(i)}\left( w, \hat{z}^{(i)}_{[\frac{n+m}{k}]}, U^{(i)}_{[\frac{n}{k}]} \right) \right) \right] \\
		&\leq \exp \left( \frac{\lambda^2 \Delta^2 C^k_{n,m} \cdot k(n+m)}{8nm} \right).
	\end{align}
	Taking logarithm on both sides leads to the desired result.
	
	\subsection{Proof of Theorem~\ref{thm:bld-IPCIMI}}
	In \eqref{eq:gen-dis-IPCIMI}, we let $\Psi_{\hat{Z}^{(i)}_{[\frac{n+m}{k}]}}$ be the RHS of \eqref{eq:CGF-bound}. After optimizing $\lambda$ and taking average over $\hat{Z}^{(i)}_{[\frac{n+m}{k}]}$, the desired result is proved.
	
	\subsection{Proof of Lemma~\ref{lem:lim-MI}} \label{appsub:lemma3}
	We first show the following relationship between the CMI and MI quantities.
	\begin{lemma} \label{lem:ineq-CMI-MI}
		For each $i \in [k]$, we have
		\begin{equation}
			I\left( W; U^{(i)}_{[\frac{n}{k}]} | \hat{Z}^{(i)}_{[\frac{n+m}{k}]} \right) = I\left( W; Z^{(i)}_{[\frac{n}{k}]} \right) - I\left( W; \hat{Z}^{(i)}_{[\frac{n+m}{k}]} \right)
			\label{eq:ineq-CMI-MI}
		\end{equation}
	\end{lemma}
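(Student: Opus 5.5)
The identity follows from the chain rule of mutual information applied to $I\big(W; \hat{Z}^{(i)}_{[\frac{n+m}{k}]}, U^{(i)}_{[\frac{n}{k}]}\big)$, expanded in two different orders. Throughout, write $B := \hat{Z}^{(i)}_{[\frac{n+m}{k}]}$ for the $i$-th block and $V := U^{(i)}_{[\frac{n}{k}]}$ for its training membership.

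First, expand by conditioning on $B$:
\begin{align*}
I(W; B, V) = I(W; B) + I(W; V \mid B).
\end{align*}
This already contains both the quantity of interest and the subtracted term.

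Second, decompose $(B, V)$ differently. The pair $(B,V)$ is in bijection with the triple $\big(V, Z^{(i)}_{[\frac{n}{k}]}, \hat{Z}^{(i)}_{\bar{V}}\big)$, i.e., the membership, the training part and the test part of the block. Mutual information is invariant under such bijections, so $I(W;B,V) = I\big(W; V, Z^{(i)}_{[\frac{n}{k}]}, \hat{Z}^{(i)}_{\bar{V}}\big)$. The key step is the Markov structure
\[
W \to Z_{[n]} = \big\{Z^{(l)}_{[\frac{n}{k}]}\big\}_{l=1}^k \to (\text{everything else}).
\]
Given the full training set, the algorithm's randomness is independent of the memberships and of the test samples. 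Moreover, conditioned on $Z^{(i)}_{[\frac{n}{k}]}$, the remaining training blocks $Z^{(l)}$ with $l\neq i$ are independent of $\big(V, \hat{Z}^{(i)}_{\bar{V}}\big)$. This holds because the blocks are built from disjoint i.i.d. supersamples with independent uniform memberships, and because, given $V$, the training part of block $i$ is i.i.d. $\mu$ and independent of its test part, so $\big(V,\hat{Z}^{(i)}_{\bar V}\big)$ is independent of $Z^{(i)}_{[\frac{n}{k}]}$. Together these give the Markov chain $W \to Z^{(i)}_{[\frac{n}{k}]} \to \big(V, \hat{Z}^{(i)}_{\bar{V}}\big)$. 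The chain rule then yields
\begin{align*}
I\big(W; V, Z^{(i)}_{[\frac{n}{k}]}, \hat{Z}^{(i)}_{\bar{V}}\big) = I\big(W; Z^{(i)}_{[\frac{n}{k}]}\big) + I\big(W; V, \hat{Z}^{(i)}_{\bar{V}} \mid Z^{(i)}_{[\frac{n}{k}]}\big) = I\big(W; Z^{(i)}_{[\frac{n}{k}]}\big).
\end{align*}
Also, $Z^{(i)}_{[\frac{n}{k}]}$ has law $\mu^{\otimes \frac{n}{k}}$ and $W$ has its original marginal behaviour (Lemma~\ref{lem:IP-lmo-gen-err}), so this MI is the one appearing in the statement.

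Equating the two expansions gives $I(W;B) + I(W;V\mid B) = I\big(W; Z^{(i)}_{[\frac{n}{k}]}\big)$, which rearranges to \eqref{eq:ineq-CMI-MI}.

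The main obstacle is rigorously justifying the conditional independence $W \indep \big(V, \hat{Z}^{(i)}_{\bar V}\big) \mid Z^{(i)}_{[\frac{n}{k}]}$. I would verify it by writing the joint density: $P_{V}\,\mu^{\otimes \frac{n+m}{k}}$ for block $i$, times the analogous factors for the other blocks, times $P_{W\mid Z_{[n]}}$. Then I would integrate out the other blocks and check that the resulting conditional of $W$ depends only on $Z^{(i)}_{[\frac{n}{k}]}$. A secondary point is finiteness: if $I\big(W;Z^{(i)}_{[\frac{n}{k}]}\big)=\infty$, the subtraction must be interpreted with care. I would state the identity assuming finiteness, which is automatic under $|\Wcal|<\infty$ as in Lemma~\ref{lem:lim-MI}.
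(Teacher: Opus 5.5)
Your proof is correct and takes essentially the same route as the paper: expand $I\bigl(W;\hat{Z}^{(i)}_{[\frac{n+m}{k}]},U^{(i)}_{[\frac{n}{k}]}\bigr)$ by the chain rule in two ways, rewrite the block as (membership, training part, test part), and drop the conditional term using $W \indep \bigl(U^{(i)}_{[\frac{n}{k}]},\hat{Z}^{(i)}_{\bar{U}^{(i)}_{[\frac{m}{k}]}}\bigr) \mid Z^{(i)}_{[\frac{n}{k}]}$. Your justification of that conditional independence, which integrates out the other blocks' training data that $W$ also depends on, and your finiteness caveat are more careful than the paper's. The paper only asserts, in one line, that $W$ is independent of all other variables given $\hat{Z}^{(i)}_{U^{(i)}_{[\frac{n}{k}]}}$, which is loose when $k>1$.
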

	
	\begin{proof}
		According to the partitioned supersample setting in Definition~\ref{def:p-lmo}, we have
		\begin{equation}
			\begin{aligned}
				I\left( W; \hat{Z}^{(i)}_{[\frac{n+m}{k}]}, U^{(i)}_{[\frac{n}{k}]} \right) &= I\left( W; \hat{Z}^{(i)}_{U^{(i)}_{[\frac{n}{k}]}}, \hat{Z}^{(i)}_{\bar{U}^{(i)}_{[\frac{m}{k}]}}, U^{(i)}_{[\frac{n}{k}]} \right) \\
				&= I\left( W; \hat{Z}^{(i)}_{U^{(i)}_{[\frac{n}{k}]}} \right) + I\left( W; \hat{Z}^{(i)}_{\bar{U}^{(i)}_{[\frac{m}{k}]}}, U^{(i)}_{[\frac{n}{k}]} | \hat{Z}^{(i)}_{U^{(i)}_{[\frac{n}{k}]}} \right) \\
				& \overset{(a)}{=} I\left( W; \hat{Z}^{(i)}_{U^{(i)}_{[\frac{n}{k}]}} \right) \\
				& \overset{(b)}{=} I\left( W; Z^{(i)}_{[\frac{n}{k}]} \right),
			\end{aligned}
		\end{equation}
		where $(a)$ is due to the fact that $W$ is independent of other random variables given $\hat{Z}^{(i)}_{U^{(i)}_{[\frac{n}{k}]}}$, and $(b)$ is by Definition~\ref{def:p-lmo} which reveals that $Z^{(i)}_{[\frac{n}{k}]}$ and $\hat{Z}^{(i)}_{U^{(i)}_{[\frac{n}{k}]}}$ lead to the same training set. Next, by the chain rule of mutual information,
		\begin{equation}
			\begin{aligned}
				I\left( W; U^{(i)}_{[\frac{n}{k}]} | \hat{Z}^{(i)}_{[\frac{n+m}{k}]} \right) &= I\left( W; \hat{Z}^{(i)}_{[\frac{n+m}{k}]}, U^{(i)}_{[\frac{n}{k}]} \right) - I\left( W; \hat{Z}^{(i)}_{[\frac{n+m}{k}]} \right) \\
				&= I\left( W; Z^{(i)}_{[\frac{n}{k}]} \right) - I\left( W; \hat{Z}^{(i)}_{[\frac{n+m}{k}]} \right),
			\end{aligned}
		\end{equation}
		which concludes the proof.
	\end{proof}
	
	Thus, it suffices to prove that the gap between the CMI and MI quantities diminishes as $m$, i.e., $\lim\limits_{m \to \infty} I\left( W; \hat{Z}^{(i)}_{[\frac{n+m}{k}]} \right) = 0$. To see this, we next prove $\lim\limits_{m \to \infty} I( W; \hat{Z}_{[n+m]} ) = 0$. Using the tower property of conditional expectation and the fact that $U_{[n]} \indep \hat{Z}_{[n+m]}$, we get
	\begin{equation}
		\begin{aligned}
			P_{W | \hat{Z}_{[n+m]}}(w | \hat{z}_{[n+m]}) &= \EE_{U_{[n]}} \left[ P_{W | \hat{Z}_{[n+m]}, U_{[n]}}(w | \hat{z}_{[n+m]}, U_{[n]}) \right] \\
			&= \frac{1}{\begin{psmallmatrix} \frac{n+m}{k} \\ \frac{n}{k} \end{psmallmatrix}^k} \sum_{u_{[n]}} P_{W | \hat{Z}_{[n+m]}, U_{[n]}}(w | \hat{z}_{[n+m]}, u_{[n]}) \\
			&= \frac{1}{\begin{psmallmatrix} \frac{n+m}{k} \\ \frac{n}{k} \end{psmallmatrix}^k} \sum_{u_{[n]}} P_{W | Z_{[n]}}(w | \hat{z}_{u_{[n]}}).
			\label{eq:tower-con-exp}
		\end{aligned}
	\end{equation}
	Let $g_w: \Zcal^{n+m} \to [0, 1]$ be a function defined as $g_w(\hat{Z}_{[n+m]}) := P_{W = w | \hat{Z}_{[n+m]}}$. Given a supersample set $\hat{z}_{[n+m]} \in \Zcal^{n+m}$, we use $\hat{z}'_{[n+m]}$ to denote $\hat{z}_{[n+m]}$ in which $\hat{z}^{(i)}_j$ is replaced by an independent copy $\hat{z}'^{(i)}_j$. Then, for each summand in \eqref{eq:tower-con-exp}, it is shown that $P_{W | Z_{[n]}}(w | \hat{z}_{u_{[n]}})$ and $P_{W | Z_{[n]}}(w | \hat{z}'_{u_{[n]}})$ only differ when the index $j$ is included by $u^{(i)}_{[\frac{n}{k}]}$. Thus, the absolute difference between $g_w(\hat{z}_{[n+m]})$ and $g_w({\hat{z}'_{[n+m]}})$ can be bounded as:
	\begin{equation}
		\begin{aligned}
			\left| g_w(\hat{z}_{[n+m]}) - g_w(\hat{z}'_{[n+m]}) \right| &= \left| \frac{1}{\begin{psmallmatrix} \frac{n+m}{k} \\ \frac{n}{k} \end{psmallmatrix}^k} \sum_{u^{(1)}_{[\frac{n}{k}]}} \cdots \sum_{u^{(i)}_{[\frac{n}{k}]}:\ j \in u^{(i)}_{[\frac{n}{k}]}} \cdots \sum_{u^{(k)}_{[\frac{n}{k}]}} P_{W | \hat{Z}_{U_{[n]}}}(w | \hat{z}_{u_{[n]}}) - P_{W | \hat{Z}_{U_{[n]}}}(w | \hat{z}'_{u_{[n]}}) \right| \\
			&\overset{(a)}{\leq} \frac{1}{\begin{psmallmatrix} \frac{n+m}{k} \\ \frac{n}{k} \end{psmallmatrix}^k} \sum_{u^{(1)}_{[\frac{n}{k}]}} \cdots \sum_{u^{(i)}_{[\frac{n}{k}]}:\ j \in u^{(i)}_{[\frac{n}{k}]}} \cdots \sum_{u^{(k)}_{[\frac{n}{k}]}} \left| P_{W | \hat{Z}_{U_{[n]}}}(w | \hat{z}_{u_{[n]}}) - P_{W | \hat{Z}_{U_{[n]}}}(w | \hat{z}'_{u_{[n]}}) \right| \\
			&\overset{(b)}{\leq} \frac{1}{\begin{psmallmatrix} \frac{n+m}{k} \\ \frac{n}{k} \end{psmallmatrix}^k} \cdot \begin{psmallmatrix} \frac{n+m}{k} \\ \frac{n}{k} \end{psmallmatrix}^{k-1} \cdot \begin{psmallmatrix} \frac{n+m}{k}-1 \\ \frac{n}{k}-1 \end{psmallmatrix} \\
			&= \frac{n}{n+m},
		\end{aligned}
	\end{equation}
	where $(a)$ is obtained by applying Jensen's inequality on the convex function $|\cdot|$, and $(b)$ is due to the fact that probability measures are within $[0, 1]$. With the above bounded difference property, applying McDiarmid’s inequality yields
	\begin{equation}
		\begin{aligned}
			\PP \left\{ \left| g_w(\hat{Z}_{[n+m]}) - \EE \left[ g_w(\hat{Z}_{[n+m]}) \right] \right| \geq \epsilon \right\} \leq 2 \exp\left( -\frac{2(n+m) \epsilon^2}{n^2} \right). 
		\end{aligned}
	\end{equation}
	Notice that $\EE \left[ g_w(\hat{Z}_{[n+m]}) \right] = P_{W = w}$, the above inequality reveals a convergence in probability as $\lim\limits_{m \to \infty} P_{W = w | \hat{Z}_{[n+m]}} = P_{W = w}$, that is, $W$ tends to be independent of $\hat{Z}_{[n+m]}$ in the limit as $m \to \infty$. Thus, by definition of mutual information, we obtain
	\begin{equation}
		\begin{aligned}
			\lim_{m \to \infty} I\left( W; \hat{Z}_{[n+m]} \right) &= \lim_{m \to \infty} \EE \left[ D\left( P_{W|\hat{Z}_{[n+m]}} \| P_W \right) \right] \\
			&= \lim_{m \to \infty} \EE \left[ \sum_{w \in \Wcal} P_{W = w |\hat{Z}_{[n+m]}} \log \frac{P_{W = w |\hat{Z}_{[n+m]}}}{P_{W = w}} \right] \\
			&= 0,
		\end{aligned}
	\end{equation}
	where the last step holds because of {\cite[Theorem~2.3.4]{durrett2019probability}}, as $\log \frac{P_{W = w |\hat{Z}_{[n+m]}}}{P_{W = w}}$ is bounded with $|\Wcal| < \infty$. Next, by the chain rule of mutual information, we have
	\begin{equation}
		\begin{aligned}
			I\left( W; \hat{Z}_{[n+m]} \right) = I\left( W; \hat{Z}^{(i)}_{[\frac{n+m}{k}]} \right) + I\left( W; \hat{Z}_{[n+m]} \backslash \hat{Z}^{(i)}_{[\frac{n+m}{k}]} | \hat{Z}^{(i)}_{[\frac{n+m}{k}]} \right),
		\end{aligned}
	\end{equation}
	which further indicates the following bounds due to the non-negativity of mutual information:
	\begin{equation}
		\begin{aligned}
			0 \leq I\left( W; \hat{Z}^{(i)}_{[\frac{n+m}{k}]} \right) \leq I\left( W; \hat{Z}_{[n+m]} \right).
		\end{aligned}
	\end{equation}
	Taking the limit of $m \to \infty$ on both sides of the second inequality, the desired result immediately follows by invoking the squeeze theorem.
	
	\subsection{Proof of Corollary~\ref{cor:IPMI}} \label{subsec:proof-IPMI}
	We first simplify the leading coefficient under the square-root of \eqref{eq:IPCIMI}. For the case of $\min(n, m) = k$, we only need to consider $k = n$ as $m \to \infty$. In this case, we have $C^k_{n,m} = \frac{n+m}{m}$ and the coefficient is
	\begin{equation}
		\begin{aligned}
			\frac{C_{n, m}^k \cdot k(n+m)}{2nm} = \frac{1}{2}\left( 1+\frac{n}{m} \right)^2 = \frac{k}{2n} + O\left( \frac{1}{m} \right).
		\end{aligned}
	\end{equation}
	For the other cases of $k \not= n$, the coefficient can be simplified as
	\begin{equation}
		\begin{aligned}
			\frac{C_{n, m}^k \cdot k(n+m)}{2nm} &= \frac{n+m}{n+m-k} \cdot \frac{nm}{nm - k n} \cdot \frac{k(n+m)}{2nm} \\
			&= \left( 1 + \frac{k}{n+m-k} \right) \left( 1 + \frac{k}{m - k} \right) \frac{k(n+m)}{2nm} \\
			&= \frac{k}{2n} \left( 1 + \frac{k}{n+m-k} \right) \left( 1 + \frac{k}{m - k} \right) \left( 1 + \frac{n}{m} \right) \\
			&= \frac{k}{2n} + O\left( \frac{1}{m} \right).
		\end{aligned}
	\end{equation}
	Combining the result in Lemma~\ref{lem:lim-MI}, one can verify that
	\begin{equation}
		\begin{aligned}
			\frac{1}{k} \sum_{i=1}^{k} \lim_{m \to \infty} \sqrt{\frac{C_{n, m}^k \cdot k(n+m)}{2nm} I\left( W; U^{(i)}_{[\frac{n}{k}]} | \hat{Z}^{(i)}_{[\frac{n+m}{k}]} \right)} &= \frac{1}{k} \sum_{i=1}^{k} \sqrt{\lim_{m \to \infty} \frac{C_{n, m}^k \cdot k(n+m)}{2nm} \lim_{m \to \infty} I\left( W; U^{(i)}_{[\frac{n}{k}]} | \hat{Z}^{(i)}_{[\frac{n+m}{k}]} \right)} \\
			&= \frac{1}{k} \sum_{i=1}^{k} \sqrt{\frac{k}{2n} I\left( W; Z^{(i)}_{[\frac{n}{k}]} \right)}.
		\end{aligned}\label{eq:lim-IPCIMI}
	\end{equation}
	Specially, when $k = 1$ and $m \to \infty$, \eqref{eq:lim-IPCIMI} is simplified as
	\begin{equation}
		\begin{aligned}
			\sqrt{\frac{1}{2n} I\left( W; Z_{[n]} \right)},
		\end{aligned}
	\end{equation}
	which matches the MI bound in \eqref{eq:MI}. In addition, for $k = n$ and $m \to \infty$, the following result of \eqref{eq:lim-IPCIMI} matches the IMI bound in \eqref{eq:IMI}:
	\begin{equation}
		\begin{aligned}
			\frac{1}{n} \sum_{i=1}^{n} \sqrt{\frac{1}{2} I\left( W; Z_i \right)}.
		\end{aligned}
	\end{equation}

	\section{Proofs in Section~\ref{sec:SCMI-bounds}}\label{app:SCMI-bounds}
	\setcounter{subsection}{0}
	\subsection{Proof of Lemma~\ref{lem:js-bound}}\label{subapp:js-bound}
	Starting with the definition of JS divergence, we have
	\begin{equation}
		\begin{aligned}
			&d^\theta_{\textnormal{JS}} \left( \EE_{X|T=1} \left[ f(X) \right] \left\| \EE_{X|T=0} [f(X)] \right. \right) \\
			&= P(T=1) d_{\textnormal{KL}}(\EE_{X|T = 1} [f(X)] \| \EE_{X} [f(X)]) + P(T=0) d_{\textnormal{KL}}(\EE_{X|T=0} [f(X)] \| \EE_{X} [f(X)]) \\
			&\overset{\eqref{eq:property_d_gamma}}{=} \EE_{T} \left[ \sup_{\gamma} d_{\gamma}(\EE_{X|T} [f(X)] \| \EE_{X} [f(X)]) \right] \\
			&\leq \EE_{T} \left[ \sup_{\gamma} \EE_{X | T} \left[ d_{\gamma}(f(X) \| \EE_{X} [f(X)]) \right] \right],
		\end{aligned}\label{eq:js-bound-X-T}
	\end{equation}
	where the inequality is due to the joint convexity of $d_{\gamma}(\cdot \| \cdot)$. By the Donsker-Varadhan formula in Lemma~\ref{lem:DV-formula}, it holds that
	\begin{equation}
		\begin{aligned}
			\EE_{X|T=t} \left[ d_{\gamma}(f(X) \| \EE_{X} [f(X)]) \right] &\leq D_{\textnormal{KL}} (P_{X|T=t} \| P_X) + \log \EE_{X} \left[ e^{d_{\gamma}(f(X) \| \EE_{X} [f(X)])} \right] \\
			&\leq D_{\textnormal{KL}} (P_{X|T=t} \| P_X),
		\end{aligned}\label{eq:d_gamma_bound}
	\end{equation}
	where the last inequality is due to Lemma~\ref{lem:binary-relative-entropy-concentration}. Since the above inequality holds for any $\gamma$, taking the supremum over $\gamma$ preserves the inequality, i.e.,
	\begin{equation}
		\begin{aligned}
			\sup_{\gamma} \EE_{X|T=t} \left[ d_{\gamma}(f(X) \| \EE_{X} [f(X)]) \right] \leq D_{\textnormal{KL}} (P_{X|T=t} \| P_X),
		\end{aligned}
	\end{equation}
	plugging which into \eqref{eq:js-bound-X-T}, we thus get
	\begin{equation}
		\begin{aligned}
			d^\theta_{\textnormal{JS}} \left( \EE_{X|T=1} \left[ f(X) \right] \left\| \EE_{X|T=0} [f(X)] \right. \right) \leq \EE_{T} \left[ D_{\textnormal{KL}} (P_{X|T} \| P_X) \right] = I(X; T).
		\end{aligned}
	\end{equation}
	Since $T$ is a deterministic function of $Y$, we have $I(X; T) \leq I(X; Y)$ by the data-processing inequality, which completes the proof.

	Next, we prove that the upper bound $I(X; T)$ will hold with equality if $f(X) = X$ and $X \in \{0,1\}$. Notice that when $X \in \{0,1\}$, we have $\EE_X [X] = P(X=1)$ and $\EE_{X|T=t} [X] = P(X=1|T=t)$. By the definition of the binary KL divergence, we then have
	\begin{equation}
		\begin{aligned}
			D_{\textnormal{KL}}(P_{X|T=t} \| P_X) &= P(X=0|T=t) \log \frac{P(X=0|T=t)}{P(X=0)} + P(X=1|T=t) \log \frac{P(X=1|T=t)}{P(X=1)} \\
			&= d_{\textnormal{KL}}(\EE_{X|T=t} [X] \| \EE_X [X]).
		\end{aligned}
	\end{equation}
	Thus, the equality holds that
	\begin{equation}
		\begin{aligned}
			d^\theta_{\textnormal{JS}} \left( \EE_{X|T=1} [X] \left\| \EE_{X|T=0} [X] \right. \right) &= P(T=1) d_{\textnormal{KL}}(\EE_{X|T = 1} [X] \| \EE_{X} [X]) + P(T=0) d_{\textnormal{KL}}(\EE_{X|T=0} [X] \| \EE_{X} [X]) \\
			&= \EE_{T} \left[ D_{\textnormal{KL}}(P_{X|T} \| P_X) \right] = I(X; T).
		\end{aligned}
	\end{equation}
	This completes the proof.
	
	\subsection{Proof of Theorem~\ref{thm:SIPCIMI}} \label{subapp:SIPCIMI}
	The proof begins with the following lemma.
	\begin{lemma} \label{lem:js-lb}
		Let $p, q \in [0, 1]$, then
		\begin{equation}
			\begin{aligned}
				d^\theta_{\textnormal{JS}}(p \| q) \geq 2 \theta (1 - \theta) (p - q)^2.
			\end{aligned}
		\end{equation}
	\end{lemma}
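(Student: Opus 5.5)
The plan is to reduce the weighted JS divergence to two binary KL divergences and apply Pinsker's inequality to each. Set $r := \theta p + (1-\theta) q$. By \eqref{eq:d_js},
\begin{align*}
d^\theta_{\textnormal{JS}}(p \| q) = \theta\, d_{\textnormal{KL}}(p \| r) + (1-\theta)\, d_{\textnormal{KL}}(q \| r).
\end{align*}
Each term is a KL divergence between two Bernoulli distributions. The total variation distance between $\Bernoulli(a)$ and $\Bernoulli(b)$ is $|a-b|$, so Pinsker's inequality (Lemma~\ref{lem:Pinsker-inequality}) gives $d_{\textnormal{KL}}(a \| b) \geq 2(a-b)^2$ for all $a,b \in [0,1]$. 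If $r \in \{0,1\}$, then necessarily $p = q = r$ (or the weight on the offending term vanishes), and the inequality is trivial. So we may assume the KL terms are finite whenever their weights are positive.

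Next I compute the two differences. We have $p - r = (1-\theta)(p-q)$ and $q - r = -\theta(p-q)$. Substituting into the Pinsker bounds yields
\begin{align*}
d^\theta_{\textnormal{JS}}(p \| q) \geq 2\theta (1-\theta)^2 (p-q)^2 + 2(1-\theta)\theta^2 (p-q)^2 = 2\theta(1-\theta)(p-q)^2,
\end{align*}
where the last step uses $(1-\theta) + \theta = 1$. This is exactly the claimed inequality.

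There is no serious obstacle. The only care needed is at the edge cases $\theta \in \{0,1\}$ and at boundary values of $r$, where a KL term may be infinite but is multiplied by a zero weight. I would handle these with the convention $0 \cdot \infty = 0$, or by noting that in these cases both sides reduce to $0 \le$ a nonnegative quantity. An alternative route, expanding $H(r) - \theta H(p) - (1-\theta) H(q)$ using strong concavity of the binary entropy ($H'' \le -4$), gives the same constant and could serve as a cross-check.
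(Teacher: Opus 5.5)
Your proposal is correct and matches the paper's proof: apply Pinsker's inequality to each binary KL term against the mixture $r=\theta p+(1-\theta)q$, use $p-r=(1-\theta)(p-q)$ and $q-r=-\theta(p-q)$, and sum to obtain $2\theta(1-\theta)(p-q)^2$. Your treatment of the boundary cases via the $0\cdot\infty=0$ convention is a minor addition that the paper leaves implicit.
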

	
	\begin{proof}
		By Pinsker's inequality in Lemma~\ref{lem:Pinsker-inequality}, it can be obtained that
		\begin{equation}
			\begin{aligned}
				d^\theta_{\textnormal{JS}}(p \| q) &= \theta d_{\textnormal{KL}}(p \| \theta p + (1 - \theta) q) + (1 - \theta) d_{\textnormal{KL}}(q \| \theta p + (1 - \theta) q) \\
				&\geq 2\theta d^2_{\textnormal{TV}}(p \| \theta p + (1 - \theta) q) + 2(1 - \theta) d^2_{\textnormal{TV}}(q \| \theta p + (1 - \theta) q) \\
				&= 2\theta (p - \theta p - (1 - \theta) q)^2 + 2(1 - \theta) (q - \theta p - (1 - \theta) q)^2 \\
				&= 2 \theta (1 - \theta) (p - q)^2.
			\end{aligned}
		\end{equation}
		where $d_{\textnormal{TV}}(p \| q) := D_{\textnormal{TV}}(\Bernoulli(p) \| \Bernoulli(q))$.
	\end{proof}
	
	Let $W$, $\hat{Z}_{[n+m]}$ and $U_{[n]}$ be as given in Definition~\ref{def:p-lmo}. One can verify that for each $i \in [k]$ and $j \in [\frac{n+m}{k}]$, 
	\begin{equation}
		\begin{aligned}
			P(T^{(i)}_j = 1) = \PP\left[ U^{(i)}_{[\frac{n}{k}]} \in \Ucal_j \right] = \frac{\begin{psmallmatrix} \frac{n+m}{k} - 1 \\ \frac{n}{k} - 1 \end{psmallmatrix}}{\begin{psmallmatrix} \frac{n+m}{k} \\ \frac{n}{k} \end{psmallmatrix}} = \frac{n}{n+m}, \quad P(T^{(i)}_j = 0) = \frac{m}{n+m}.
		\end{aligned}\label{eq:P_Tij}
	\end{equation}
	Then, the expected empirical loss $\widehat{L}_n$ can be rewritten as
		\begin{equation}
		\begin{aligned}
			\widehat{L}_n &\overset{\eqref{eq:emp_risk_lmo}}{=} \frac{1}{k} \sum_{i=1}^{k} \EE_{W, \hat{Z}^{(i)}_{[\frac{n+m}{k}]}, U^{(i)}_{[\frac{n}{k}]}} \left[ \frac{k}{n} \sum_{j \in U^{(i)}_{[\frac{n}{k}]}} \ell(W, \hat{Z}^{(i)}_j) \right] \\
			&= \frac{1}{n} \sum_{i=1}^{k} \sum_{u} \sum_{j \in u} P(U^{(i)}_{[\frac{n}{k}]} = u) \EE_{W, \hat{Z}^{(i)}_j | U^{(i)}_{[\frac{n}{k}]} = u} \left[ \ell(W, \hat{Z}^{(i)}_j) \right] \\
			&\overset{(a)}{=} \frac{1}{n} \sum_{i=1}^{k} \sum_{j=1}^{\frac{n+m}{k}} \sum_{u: u \in \Ucal_j} P(U^{(i)}_{[\frac{n}{k}]} = u) \EE_{W, \hat{Z}^{(i)}_j | U^{(i)}_{[\frac{n}{k}]} = u} \left[ \ell(W, \hat{Z}^{(i)}_j) \right] \\
			&= \frac{1}{n} \sum_{i=1}^{k} \sum_{j=1}^{\frac{n+m}{k}} \EE \left[ \ell(W, \hat{Z}^{(i)}_j) \mathds{1}_{U^{(i)}_{[\frac{n}{k}]} \in \Ucal_j} \right] \\
			&= \frac{1}{n} \sum_{i=1}^{k} \sum_{j=1}^{\frac{n+m}{k}} P(T^{(i)}_j = 1) \EE_{W, \hat{Z}^{(i)}_j | T^{(i)}_j = 1} \left[ \ell(W, \hat{Z}^{(i)}_j) \right] \\
			&\overset{\eqref{eq:P_Tij}}{=} \frac{1}{n+m} \sum_{i=1}^{k} \sum_{j=1}^{\frac{n+m}{k}} \EE_{W, \hat{Z}^{(i)}_j | T^{(i)}_j = 1} \left[ \ell(W, \hat{Z}^{(i)}_j) \right],
		\end{aligned}\label{eq:exp-emp-risk-single}
	\end{equation}
	where $(a)$ is obtained by swapping the order of summation. Similarly, the expected population loss $L_\mu$ is rewritten as
	\begin{equation}
		\begin{aligned}
			L_\mu = \frac{1}{n+m} \sum_{i=1}^{k} \sum_{j=1}^{\frac{n+m}{k}} \EE_{W, \hat{Z}^{(i)}_j | T^{(i)}_j = 0} \left[ \ell(W, \hat{Z}^{(i)}_j) \right].
		\end{aligned}\label{eq:exp-pop-risk-single}
	\end{equation}
	Combining the above results and applying Lemma~\ref{lem:js-lb} with $\theta = P(T^{(i)}_j = 1) = \frac{n}{n+m}$, $p = \EE_{W | \hat{Z}^{(i)}_j, T^{(i)}_j = 1} \left[ \ell(W, \hat{Z}^{(i)}_j) \right]$ and $q = \EE_{W | \hat{Z}^{(i)}_j, T^{(i)}_j = 0} \left[ \ell(W, \hat{Z}^{(i)}_j) \right]$, it holds that
	\begin{equation}
		\begin{aligned}
			\gen =& \frac{1}{n+m} \sum_{i=1}^{k} \sum_{j=1}^{\frac{n+m}{k}} \left( \EE_{W, \hat{Z}^{(i)}_j | T^{(i)}_j = 0} \left[ \ell(W, \hat{Z}^{(i)}_j) \right] - \EE_{W, \hat{Z}^{(i)}_j | T^{(i)}_j = 1} \left[ \ell(W, \hat{Z}^{(i)}_j) \right] \right) \\
			=& \frac{1}{n+m} \sum_{i=1}^{k} \sum_{j=1}^{\frac{n+m}{k}} \EE_{\hat{Z}^{(i)}_j} \left[ \EE_{W | \hat{Z}^{(i)}_j, T^{(i)}_j = 0} \left[ \ell(W, \hat{Z}^{(i)}_j) \right] - \EE_{W | \hat{Z}^{(i)}_j, T^{(i)}_j = 1} \left[ \ell(W, \hat{Z}^{(i)}_j) \right] \right] \\
			\leq& \frac{1}{n+m} \sum_{i=1}^{k} \sum_{j=1}^{\frac{n+m}{k}} \EE_{\hat{Z}^{(i)}_j} \left[ \sqrt{\frac{1}{2 \theta (1-\theta)} d^\theta_{\textnormal{JS}} \left( \EE_{W | \hat{Z}^{(i)}_j, T^{(i)}_j = 1} \left[ \ell(W, \hat{Z}^{(i)}_j) \right] \left\| \EE_{W | \hat{Z}^{(i)}_j, T^{(i)}_j = 0} \left[ \ell(W, \hat{Z}^{(i)}_j) \right] \right. \right) } \right].
		\end{aligned}
	\end{equation}
	When conditioned on $\hat{Z}^{(i)}_j$, using Lemma~\ref{lem:js-bound} with $X = W$, $Y = U^{(i)}_{[\frac{n}{k}]}$, $T = T^{(i)}_j$, $f(X) = \ell(W, \hat{Z}^{(i)}_j)$ and $\theta = \frac{n}{n+m}$, we can get
	\begin{equation}
		\begin{aligned}
			\gen \leq& \sum_{i=1}^{k} \sum_{j=1}^{\frac{n+m}{k}} \EE_{\hat{Z}^{(i)}_j} \left[ \sqrt{\frac{1}{2nm} I^{\hat{Z}^{(i)}_j}\left( W; T^{(i)}_j \right)} \right] \\
			\leq& \sum_{i=1}^{k} \sum_{j=1}^{\frac{n+m}{k}} \sqrt{\frac{1}{2nm} I\left( W; T^{(i)}_j | \hat{Z}^{(i)}_j \right)},
		\end{aligned}
	\end{equation}
	which completes the proof of \eqref{eq:p-SIPCIMI} and \eqref{eq:p-dis-SIPCIMI}.
	
	\subsection{A Lemma Based on Assumption~\ref{ass:invariant-alg}} \label{subapp:invariant_alg}
	\begin{lemma} \label{lem:inv_alg}
		Suppose that $\Acal$ satisfies Assumption~\ref{ass:invariant-alg}, then for all $i, j \in [n]$, $P_{W, Z_i} = P_{W, Z_j}$, where both $P_{W, Z_i}$ and $P_{W, Z_j}$ are marginals of the joint distribution $P_{W, Z_{[n]}}$.
	\end{lemma}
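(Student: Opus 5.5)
The plan is to build a transposition of the training indices and push it through the joint law. Since $Z_{[n]} \sim \mu^{\otimes n}$ is i.i.d. and $P_{W|Z_{[n]}}$ is permutation invariant by Assumption~\ref{ass:invariant-alg}, the joint law of $(W, Z_{[n]})$ should itself be invariant under permuting the coordinates of $Z_{[n]}$. Marginalizing then gives $P_{W,Z_i} = P_{W,Z_j}$.

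First fix $i \neq j$; the case $i=j$ is trivial. Let $\tau$ be the transposition swapping $i$ and $j$, and define $Z'_{[n]} := Z_\tau$, so that $Z'_l = Z_{\tau_l}$. Because the $Z_l$ are i.i.d. from $\mu$, the product measure $\mu^{\otimes n}$ is invariant under coordinate permutations, hence $Z'_{[n]} \overset{d}{=} Z_{[n]}$.

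Next, for any measurable $A \subseteq \Wcal$ and $B \subseteq \Zcal^n$, write
\begin{align*}
P_{W,Z_{[n]}}(A \times B) = \int_B P_{W|Z_{[n]}}(A | z_{[n]})\, \mu^{\otimes n}(dz_{[n]}).
\end{align*}
Apply the assumption in the form $P_{W|Z_{[n]}}(A|z_{[n]}) = P_{W|Z_{[n]}}(A|z_\tau)$, which is stated pointwise in $w$ but extends to events by summation or integration. Then change variables $z \mapsto z_\tau$, using that $\mu^{\otimes n}$ is invariant under this map. The result is $P_{W,Z_{[n]}}(A\times B) = P_{W,Z_\tau}(A \times B)$, i.e.\ $(W, Z_{[n]}) \overset{d}{=} (W, Z_\tau)$. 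A $\pi$-$\lambda$ argument on rectangles extends this to the full product $\sigma$-algebra.

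Finally, project both sides onto the coordinates $(W, \text{$i$-th entry})$. On the left this gives $P_{W,Z_i}$. On the right the $i$-th entry of $Z_\tau$ is $Z_{\tau_i} = Z_j$, which gives $P_{W,Z_j}$.

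The main obstacle is essentially bookkeeping: being careful that the change of variables composes correctly with the kernel. We need the kernel evaluated at $z_\tau$ after substitution to match the original kernel, and that is precisely what the invariance assumption supplies. A secondary point is that Assumption~\ref{ass:invariant-alg} is phrased via densities or pmfs $P_{W|Z_{[n]}}(w|\cdot)$. I would state explicitly that it is interpreted as equality of conditional distributions, which is exactly what the change-of-variables step uses.
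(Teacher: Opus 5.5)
Your proposal is correct and follows essentially the same route as the paper: take the transposition $\tau$ swapping $i$ and $j$, use the i.i.d.\ structure together with Assumption~\ref{ass:invariant-alg} to show that the joint law of $(W, Z_{[n]})$ is invariant under $\tau$, and then marginalize. The only difference is in the presentation. You work with events and a $\pi$-$\lambda$ argument and project directly onto $(W, Z_i)$, while the paper manipulates densities pointwise, first marginalizing to $(W, Z_i, Z_j)$ and then integrating out the remaining coordinate; your version is slightly cleaner measure-theoretically, but it is the same argument.
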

	\begin{proof}
		Let Assumption~\ref{ass:invariant-alg} hold. Without loss of generality, we assume $1\leq i<j\leq n$ and consider a fixed permutation $\tau = \{1,\ldots,i-1,j,i+1,\ldots,j-1,i,j+1,\ldots,n\}$, such that $Z_{\tau}$ denotes $Z_{[n]}$ in which $Z_i$ and $Z_j$ are swapped. Due to the fact that $Z_i$'s are i.i.d., we have
		\begin{equation}
			\begin{aligned}
				P_{W, Z_{[n]}} (w, z_{[n]}) &= P_{W| Z_{[n]}} (w | z_{[n]}) P_{Z_{[n]}}(z_{[n]}) \\
				&= P_{W| Z_{[n]}} (w | z_{[n]}) P_{Z_1}(z_1) \ldots P_{Z_i}(z_i) \ldots P_{Z_j}(z_j) \ldots P_{Z_n}(z_n) \\
				&\overset{(a)}{=} P_{W| Z_{[n]}} (w | z_{\tau}) P_{Z_1}(z_1) \ldots P_{Z_j}(z_j) \ldots P_{Z_i}(z_i) \ldots P_{Z_n}(z_n) \\
				&= P_{W| Z_{[n]}} (w | z_{\tau}) P_{Z_{[n]}}(z_{\tau}) \\
				&= P_{W, Z_{[n]}} (w, z_{\tau}),
			\end{aligned}
		\end{equation}
		where $(a)$ is obtained by using Assumption~\ref{ass:invariant-alg}. By marginalizing out all $Z_{[n]}$ except for $Z_i$ and $Z_j$ on both sides, we then get
		\begin{equation}
			\begin{aligned}
				P_{W, Z_i, Z_j} (w, z_i, z_j) &= \int_{\Zcal^{n-2}} P_{W, Z_{[n]}} (w, z_{[n]}) d z_1 \ldots d z_{i-1} d z_{i+1} \ldots d z_{j-1} d z_{j+1} \ldots d z_n \\
				&= \int_{\Zcal^{n-2}} P_{W, Z_{[n]}} (w, z_{\tau}) d z_1 \ldots d z_{i-1} d z_{i+1} \ldots d z_{j-1} d z_{j+1} \ldots d z_n \\
				&= P_{W, Z_i, Z_j} (w, z_j, z_i).
			\end{aligned}
		\end{equation}
		Since this equation holds for any $z_j \in \Zcal$, then integrating over $z_j$ preserves the equivalence, i.e.,
		\begin{equation}
			\begin{aligned}
				P_{W, Z_i} (w, z_i) &= \int_{\Zcal} P_{W, Z_i, Z_j} (w, z_i, z_j) d z_j \\
				&= \int_{\Zcal} P_{W, Z_i, Z_j} (w, z_j, z_i) d z_j \\
				&= P_{W, Z_j} (w, z_i).
			\end{aligned} \label{eq:invariant_i_j}
		\end{equation}
	\end{proof}
	Intuitively, the above lemma shows that each input training sample contributes equally to the output hypothesis, i.e., $P_{W|Z_i} = P_{W|Z_j}$ for all $i,j \in [n]$. Thus, when considering the partitioned L$m$O supersample set (Definition~\ref{def:p-lmo}) with any given fixed $i_1,i_2 \in [k]$ and $j_1,j_2 \in [\frac{n+m}{k}]$, Lemma~\ref{lem:inv_alg} results in
	\begin{align}
		\forall z \in \Zcal, \forall u \in \Ucal_{j_1}, \forall u' \in \Ucal_{j_2}, \quad P_{W | Z^{(i_1)}_{j_1}, U^{(i_1)}_{[\frac{n}{k}]}}(\cdot | z, u) = P_{W | Z^{(i_2)}_{j_2}, U^{(i_2)}_{[\frac{n}{k}]}}(\cdot | z, u'), \label{eq:invariant_u_1}
	\end{align}
	which further yields 
	\begin{align}
		\forall z \in \Zcal, \forall u \in \Ucal_{j_1}, \forall u' \in \Ucal_{j_2}, \quad P_{W, Z^{(i_1)}_{j_1} | U^{(i_1)}_{[\frac{n}{k}]}}(\cdot, z | u) = P_{W, Z^{(i_2)}_{j_2} | U^{(i_2)}_{[\frac{n}{k}]}}(\cdot, z | u'), \label{eq:invariant_u_2}
	\end{align}
	since $\hat{Z}^{(i)}_j$'s are i.i.d. distributed and they are independent of $U^{(i)}_{[\frac{n}{k}]}$.
	%Collectively, they lead to the following property. For all $j_1,j_2 \in [\frac{n+m}{k}]$, we have
	%\begin{equation}
	%	P_{W, Z^{(i)}_{j_1}, U^{(i)}_{[\frac{n}{k}]}} = P_{W, Z^{(i)}_{j_2}, U^{(i)}_{[\frac{n}{k}]}}.
	%\end{equation}
	%Additionally, we also have
	%\begin{enumerate}
	%	\item For all $j_1, j_2 \in [\frac{n+m}{k}]$,
	%	\begin{align}
		%		P_{W, \hat{Z}^{(i)}_{j_1}, U^{(i)}_{[\frac{n}{k}]}} = P_{W, \hat{Z}^{(i)}_{j_2}, U^{(i)}_{[\frac{n}{k}]}}.
		%	\end{align}
	%	\item For all $i_1, i_2 \in [k]$,
	%	\begin{align}
		%		P_{W, \hat{Z}^{(i_1)}_{[\frac{n+m}{k}]}, U^{(i_1)}_{[\frac{n}{k}]}} = P_{W, \hat{Z}^{(i_2)}_{[\frac{n+m}{k}]}, U^{(i_2)}_{[\frac{n}{k}]}}.
		%	\end{align}
	%\end{enumerate}
	
	\subsection{Proof of Proposition~\ref{pro:processed-SIPCIMI}} \label{subapp:processed-SIPCIMI}
	Consider any fixed $i \in [k]$ and $j \in [\frac{n+m}{k}]$ and let us condition on $\hat{Z}^{(i)}_j$. Due to the Markov chain
	\begin{equation}\label{eq:W-U-T}
		W - U^{(i)}_{[\frac{n}{k}]} - T^{(i)}_j,
	\end{equation}
	we can apply the data-processing inequality to get 
	\begin{align}
		\forall z \in \Zcal, \quad I^{\hat{Z}^{(i)}_j = z}\left( W; T^{(i)}_j \right) &\leq I^{\hat{Z}^{(i)}_j = z}\left( W; U^{(i)}_{[\frac{n}{k}]} \right), \\
		I\left( W; T^{(i)}_j | \hat{Z}^{(i)}_j \right) &\leq I\left( W; U^{(i)}_{[\frac{n}{k}]} | \hat{Z}^{(i)}_j \right).
	\end{align}
	Next, we prove the above inequalities hold with equality under Assumption~\ref{ass:invariant-alg}. In light of the data-processing inequality, it suffices to show that also the reverse Markov chain holds:
	\begin{equation}
		W - T^{(i)}_j - U^{(i)}_{[\frac{n}{k}]}.
	\end{equation}
	More formally, we are going to argue
	\begin{equation}\label{eq:decp_WU-T}
		P_{W, U^{(i)}_{[\frac{n}{k}]} | \hat{Z}^{(i)}_j, T^{(i)}_j} = P_{W | \hat{Z}^{(i)}_j, T^{(i)}_j} P_{U^{(i)}_{[\frac{n}{k}]} | \hat{Z}^{(i)}_j, T^{(i)}_j}.
	\end{equation} 
	
	Let Assumption~\ref{ass:invariant-alg} hold, then
	\begin{equation}
		\begin{aligned}
			\forall z \in \Zcal, \forall u \in \Ucal_j, \quad P_{W | \hat{Z}^{(i)}_j, T^{(i)}_j}(\cdot | z, 1) &= \frac{P_{W, T^{(i)}_j | \hat{Z}^{(i)}_j}(\cdot, 1 | z)}{P(T^{(i)}_j = 1)} \\
			&\overset{(a)}{=} \frac{m+n}{n} \sum_{u': u' \in \Ucal_j} P_{W, U^{(i)}_{[\frac{n}{k}]} | \hat{Z}^{(i)}_j}(\cdot, u' | z) \\
			&\overset{(b)}{=} \frac{m+n}{n} \begin{psmallmatrix} \frac{n+m}{k}-1 \\ \frac{n}{k}-1 \end{psmallmatrix} P_{W, U^{(i)}_{[\frac{n}{k}]} | \hat{Z}^{(i)}_j}(\cdot, u | z) \\
			&= \begin{psmallmatrix} \frac{n+m}{k} \\ \frac{n}{k} \end{psmallmatrix} P_{W, U^{(i)}_{[\frac{n}{k}]} | \hat{Z}^{(i)}_j}(\cdot, u | z) \\
			&= P_{W | \hat{Z}^{(i)}_j, U^{(i)}_{[\frac{n}{k}]}}(\cdot | z, u),
		\end{aligned}\label{eq:eqv_P_T1_U}
	\end{equation}
	where $(a)$ is due to the definition of $T^{(i)}_j$ and \eqref{eq:P_Tij}, and $(b)$ is obtained by using \eqref{eq:invariant_u_1} (which holds since Assumption~\ref{ass:invariant-alg} is satisfied) with $|\Ucal_j| = \begin{psmallmatrix} \frac{n+m}{k}-1 \\ \frac{n}{k}-1 \end{psmallmatrix}$. Then, it follows that
	\begin{equation}
		\begin{aligned}
			P_{W | \hat{Z}^{(i)}_j, T^{(i)}_j}(\cdot | z, 1) P_{U^{(i)}_{[\frac{n}{k}]} | \hat{Z}^{(i)}_j, T^{(i)}_j} (u | z, 1) &\overset{\eqref{eq:eqv_P_T1_U}}{=} P_{W | \hat{Z}^{(i)}_j, U^{(i)}_{[\frac{n}{k}]}} (\cdot | z, u) P_{U^{(i)}_{[\frac{n}{k}]} | \hat{Z}^{(i)}_j, T^{(i)}_j} (u | z, 1) \\
			&\overset{(a)}{=} P_{W | \hat{Z}^{(i)}_j, U^{(i)}_{[\frac{n}{k}]}, T^{(i)}_j} (\cdot | z, u, 1) P_{U^{(i)}_{[\frac{n}{k}]} | \hat{Z}^{(i)}_j, T^{(i)}_j} (u | z, 1) \\
			&= P_{W, U^{(i)}_{[\frac{n}{k}]} | \hat{Z}^{(i)}_j, T^{(i)}_j} (\cdot, u | z, 1),
		\end{aligned}\label{eq:decp_T1}
	\end{equation}
	where $(a)$ holds due to the Markov chain \eqref{eq:W-U-T}. Besides, we obtain
	\begin{equation}
		\begin{aligned}
			\forall z \in \Zcal, \forall \bar{u} \not\in \Ucal_j, \quad P_{W | \hat{Z}^{(i)}_j, T^{(i)}_j}(\cdot | z, 0) &= \frac{P_{W, T^{(i)}_j | \hat{Z}^{(i)}_j}(\cdot, 0 | z)}{P(T^{(i)}_j = 0)} \\
			&\overset{(a)}{=} \frac{m+n}{m} \sum_{\bar{u}': \bar{u}' \not\in \Ucal_j} P_{W, U^{(i)}_{[\frac{n}{k}]} | \hat{Z}^{(i)}_j}(\cdot, \bar{u}' | z) \\
			&\overset{(b)}{=} \frac{m+n}{m} \begin{psmallmatrix} \frac{n+m}{k}-1 \\ \frac{m}{k}-1 \end{psmallmatrix} P_{W, U^{(i)}_{[\frac{n}{k}]} | \hat{Z}^{(i)}_j}(\cdot, u | z) \\
			&= \begin{psmallmatrix} \frac{n+m}{k} \\ \frac{n}{k} \end{psmallmatrix} P_{W, U^{(i)}_{[\frac{n}{k}]} | \hat{Z}^{(i)}_j}(\cdot, \bar{u} | z) \\
			&= P_{W | \hat{Z}^{(i)}_j, U^{(i)}_{[\frac{n}{k}]}}(\cdot | z, \bar{u}),
		\end{aligned}\label{eq:eqv_P_T0_U}
	\end{equation}
	where $(a)$ is by \eqref{eq:P_Tij}, and $(b)$ is due to the fact that $W \indep \hat{Z}^{(i)}_j$ if $\hat{Z}^{(i)}_j$ is not a training sample, i.e., $P_{W | \hat{Z}^{(i)}_j, U^{(i)}_{[\frac{n}{k}]}}(\cdot | z, \bar{u}_1) = P_{W}(\cdot) = P_{W | \hat{Z}^{(i)}_j, U^{(i)}_{[\frac{n}{k}]}}(\cdot | z, \bar{u}_2)$ for all $\bar{u}_1, \bar{u}_2 \not\in \Ucal_j$ (which does not rely on Assumption~\ref{ass:invariant-alg}). Thus, we get the following analogous to \eqref{eq:decp_T1}
	\begin{equation}
		\begin{aligned}
			\forall z \in \Zcal, \forall \bar{u} \not\in \Ucal_j, \quad P_{W | \hat{Z}^{(i)}_j, T^{(i)}_j}(\cdot | z, 0) P_{U^{(i)}_{[\frac{n}{k}]} | \hat{Z}^{(i)}_j, T^{(i)}_j} (\bar{u} | z, 0)
			&= P_{W, U^{(i)}_{[\frac{n}{k}]} | \hat{Z}^{(i)}_j, T^{(i)}_j} (\cdot, \bar{u} | z, 0).
		\end{aligned}\label{eq:decp_T0}
	\end{equation}
	Next, it holds by the definition of $T^{(i)}_j$ that 
	\begin{equation}
		\begin{aligned}
			\forall z \in \Zcal, \forall u \in \Ucal_j, \bar{u} \in \Ucal_j, \quad P_{U^{(i)}_{[\frac{n}{k}]} | \hat{Z}^{(i)}_j, T^{(i)}_j} (u | z, 0) = 0 = P_{U^{(i)}_{[\frac{n}{k}]} | \hat{Z}^{(i)}_j, T^{(i)}_j} (\bar{u} | z, 1).
		\end{aligned}
	\end{equation}
	For the same reason,
	\begin{equation}
		\begin{aligned}
			\forall z \in \Zcal, \forall u \in \Ucal_j, \bar{u} \in \Ucal_j, \quad P_{W, U^{(i)}_{[\frac{n}{k}]} | \hat{Z}^{(i)}_j, T^{(i)}_j} (\cdot, u | z, 0) = 0 = P_{W, U^{(i)}_{[\frac{n}{k}]} | \hat{Z}^{(i)}_j, T^{(i)}_j} (\cdot, \bar{u} | z, 1).
		\end{aligned}
	\end{equation}
	Combining all the facts together, we find that \eqref{eq:decp_WU-T} holds for all realizations of $U^{(i)}_{[\frac{n}{k}]}$ and $T^{(i)}_j$. This completes the proof.
	
	\subsection{Proof of Proposition~\ref{pro:invariant-SCMI}} \label{subapp:invariant-SCMI}
	Since Proposition~\ref{pro:processed-SIPCIMI} states that the SCMI quantity (RHS of \eqref{eq:processed-SIPCIMI}) is equivalent to its processed counterpart (LHS of \eqref{eq:processed-SIPCIMI}) when Assumption~\ref{ass:invariant-alg} is satisfied, it suffices to prove $I^{\hat{Z}^{(i_1)}_{j_1} = z}\left( W; T^{(i_1)}_{j_1} \right) = I^{\hat{Z}^{(i_2)}_{j_2} = z}\left( W; T^{(i_2)}_{j_2} \right)$. 
	
	First, it holds from \eqref{eq:eqv_P_T0_U} and \eqref{eq:eqv_P_T1_U} that
	\begin{align}
		&\forall z \in \Zcal, \forall \bar{u} \not\in \Ucal_j, \quad P_{W | \hat{Z}^{(i)}_j, T^{(i)}_j}(\cdot | z, 0) = P_{W | \hat{Z}^{(i)}_j, U^{(i)}_{[\frac{n}{k}]}}(\cdot | z, \bar{u}) = P_{W}(\cdot), \label{eq:P_W_Z_T0} \\
		&\forall z \in \Zcal, \forall u \in \Ucal_j, \forall j' \in [n], \quad P_{W | \hat{Z}^{(i)}_j, T^{(i)}_j}(\cdot | z, 1) = P_{W | \hat{Z}^{(i)}_j, U^{(i)}_{[\frac{n}{k}]}}(\cdot | z, u) = P_{W | Z_{j'}}(\cdot | z),  \label{eq:P_W_Z_T1}
	\end{align}
	where \eqref{eq:P_W_Z_T1} holds with Assumption~\ref{ass:invariant-alg}. \eqref{eq:P_W_Z_T0} and \eqref{eq:P_W_Z_T1} jointly imply that $P_{W | \hat{Z}^{(i)}_j, T^{(i)}_j}$ does not depend on $i,j,k$. Next, by definition of mutual information,
	\begin{equation}
		\begin{aligned}
			I^{\hat{Z}^{(i)}_j = z}\left( W; T^{(i)}_j \right) =& \sum_{t \in \{0,1\}} \int_{\Wcal} P_{T^{(i)}_j}(t) \log \left( \frac{d P_{W | \hat{Z}^{(i)}_j, T^{(i)}_j}(\cdot | z, t)}{d P_{W | \hat{Z}^{(i)}_j}(\cdot | z)}(w) \right) d P_{W | \hat{Z}^{(i)}_j, T^{(i)}_j}(d w | z, t)\\
			=& -\sum_{t \in \{0,1\}} \int_{\Wcal} P_{T^{(i)}_j}(t) \log \left( \frac{\sum_{t' \in \{0,1\}}P_{T^{(i)}_j}(t') d P_{W | \hat{Z}^{(i)}_j, T^{(i)}_j}(\cdot | z, t')}{d P_{W | \hat{Z}^{(i)}_j, T^{(i)}_j}(\cdot | z, t)}(w) \right) d P_{W | \hat{Z}^{(i)}_j, T^{(i)}_j}(d w | z, t) \\
			=& - \int_{\Wcal} P_{T^{(i)}_j}(0) \log \left( P_{T^{(i)}_j}(0) + P_{T^{(i)}_j}(1) \frac{d P_{W | \hat{Z}^{(i)}_j, T^{(i)}_j}(\cdot | z, 1)}{d P_{W | \hat{Z}^{(i)}_j, T^{(i)}_j}(\cdot | z, 0)}(w) \right) d P_{W | \hat{Z}^{(i)}_j, T^{(i)}_j}(d w | z, 0) \\
			&- \int_{\Wcal} P_{T^{(i)}_j}(1) \log \left( P_{T^{(i)}_j}(1) + P_{T^{(i)}_j}(0) \frac{d P_{W | \hat{Z}^{(i)}_j, T^{(i)}_j}(\cdot | z, 0)}{d P_{W | \hat{Z}^{(i)}_j, T^{(i)}_j}(\cdot | z, 1)}(w) \right) d P_{W | \hat{Z}^{(i)}_j, T^{(i)}_j}(d w | z, 1).
		\end{aligned}\label{eq:I^Z_W_T}
	\end{equation}
	When Assumption~\ref{ass:invariant-alg} is satisfied, \eqref{eq:P_Tij}, \eqref{eq:P_W_Z_T0} and \eqref{eq:P_W_Z_T1} show that both $P_{T^{i}_j}$ and $P_{W | \hat{Z}^{(i)}_j, T^{(i)}_j}$ do not rely on $i,j,k$. Hence, it follows that \eqref{eq:I^Z_W_T} is independent of $i,j,k$. This concludes the proof.
	
	\section{Proofs in Section~\ref{sec:extensions}} \label{app:proof-extensions}
	\setcounter{subsection}{0}	
	\subsection{Proof of Theorem~\ref{thm:lb-gen-bound}} \label{subapp:lb-gen-bound}
	Through the following lemma, we show that given fixed $w$ and $\hat{z}^{(i)}_{[\frac{n+m}{k}]}$, the loss averaged over the training or test subset, when taken expectation over $P_{U^{(i)}_{[\frac{n}{k}]}}$, is the loss averaged over the whole supersample block $\hat{z}^{(i)}_{[\frac{n+m}{k}]}$.
	\begin{lemma}\label{lem:average-emp-loss}
		Consider the partitioned L$m$O supersample setting in Definition~\ref{def:p-lmo}. For all $i \in [k]$, $w \in \Wcal$ and $\hat{z}^{(i)}_{[\frac{n+m}{k}]} \in \Zcal^{\frac{n+m}{k}}$, we have
		\begin{equation}
			\begin{aligned}
				\EE_{U^{(i)}_{[\frac{n}{k}]}} \left[ \frac{k}{n} \sum_{j \in U^{(i)}_{[\frac{n}{k}]}} \Lambda^{(i)}_j \right] = \EE_{U^{(i)}_{[\frac{n}{k}]}} \left[ \frac{k}{m} \sum_{j \not\in U^{(i)}_{[\frac{n}{k}]}} \Lambda^{(i)}_j \right] = \frac{k}{n+m} \sum_{j=1}^{\frac{n+m}{k}} \Lambda^{(i)}_j.
			\end{aligned}
		\end{equation}
	\end{lemma}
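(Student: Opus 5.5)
The identity is a counting statement about a uniformly random $\frac{n}{k}$-subset of $[\frac{n+m}{k}]$. With $w$ and $\hat{z}^{(i)}_{[\frac{n+m}{k}]}$ fixed, each $\Lambda^{(i)}_j=\ell(w,\hat{z}^{(i)}_j)$ is a deterministic number, so the only randomness is in $U^{(i)}_{[\frac{n}{k}]}$. The plan is to exchange the order of summation, exactly as in the proof of Lemma~\ref{lem:symmetry}, and then count subsets.

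For the training-side average, write
\begin{align*}
\EE_{U^{(i)}_{[\frac{n}{k}]}}\Bigl[\frac{k}{n}\sum_{j\in U^{(i)}_{[\frac{n}{k}]}}\Lambda^{(i)}_j\Bigr] = \frac{k}{n}\sum_{j=1}^{\frac{n+m}{k}}\PP\bigl[j\in U^{(i)}_{[\frac{n}{k}]}\bigr]\,\Lambda^{(i)}_j.
\end{align*}
The inclusion probability is $|\Ucal_j|\big/\binom{\frac{n+m}{k}}{\frac{n}{k}} = \binom{\frac{n+m}{k}-1}{\frac{n}{k}-1}\big/\binom{\frac{n+m}{k}}{\frac{n}{k}} = \frac{n}{n+m}$, as already computed in \eqref{eq:P_Tij}. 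Multiplying by $\frac{k}{n}$ gives the coefficient $\frac{k}{n+m}$ on every $\Lambda^{(i)}_j$.

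For the test-side average, the same argument uses the complementary probability $\PP\bigl[j\notin U^{(i)}_{[\frac{n}{k}]}\bigr]=\frac{m}{n+m}$. Multiplying by $\frac{k}{m}$ again gives the coefficient $\frac{k}{n+m}$. Both expectations therefore equal $\frac{k}{n+m}\sum_{j}\Lambda^{(i)}_j$.

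There is no real obstacle here. The only care needed is the binomial ratio, and \eqref{eq:P_Tij} already supplies it.
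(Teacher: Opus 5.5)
Your proposal is correct and follows essentially the same route as the paper's proof. Both swap the sum over realizations $u$ with the sum over indices $j$, then use $P(T^{(i)}_j=1)=\frac{n}{n+m}$ and $P(T^{(i)}_j=0)=\frac{m}{n+m}$ from \eqref{eq:P_Tij}, so that each $\Lambda^{(i)}_j$ receives the coefficient $\frac{k}{n+m}$ on both the training side and the test side.
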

	\begin{proof}
		By definition, we have
		\begin{equation}
			\begin{aligned}
				\EE_{U^{(i)}_{[\frac{n}{k}]}} \left[ \frac{k}{n} \sum_{j \in U^{(i)}_{[\frac{n}{k}]}} \Lambda^{(i)}_j \right] &= \frac{k}{n} \sum_{u} P_{U^{(i)}_{[\frac{n}{k}]}}(u) \sum_{j \in u} \Lambda^{(i)}_j \\
				&= \frac{k}{n} \sum_{j=1}^{\frac{n+m}{k}} \Lambda^{(i)}_j \sum_{u : u \in \Ucal_j} P_{U^{(i)}_{[\frac{n}{k}]}}(u) \\
				&= \frac{k}{n}  \sum_{j=1}^{\frac{n+m}{k}} \Lambda^{(i)}_j P(T^{(i)}_j = 1) \\
				&\overset{\eqref{eq:P_Tij}}{=} \frac{k}{n+m} \sum_{j=1}^{\frac{n+m}{k}} \Lambda^{(i)}_j.
			\end{aligned}
		\end{equation}
		Likewise, it holds that
		\begin{equation}
			\begin{aligned}
				\EE_{U^{(i)}_{[\frac{n}{k}]}} \left[ \frac{k}{m} \sum_{j \not\in U^{(i)}_{[\frac{n}{k}]}} \Lambda^{(i)}_j \right] &= \frac{k}{m} \sum_{u} P_{U^{(i)}_{[\frac{n}{k}]}}(u) \sum_{j \not\in u} \Lambda^{(i)}_j \\
				&= \frac{k}{m} \sum_{j=1}^{\frac{n+m}{k}} \Lambda^{(i)}_j \sum_{u : u \not\in \Ucal_j} P_{U^{(i)}_{[\frac{n}{k}]}}(u) \\
				&= \frac{k}{m}  \sum_{j=1}^{\frac{n+m}{k}} \Lambda^{(i)}_j P(T^{(i)}_j = 0) \\
				&\overset{\eqref{eq:P_Tij}}{=} \frac{k}{n+m} \sum_{j=1}^{\frac{n+m}{k}} \Lambda^{(i)}_j.
			\end{aligned}
		\end{equation}
		The proof is complete.
	\end{proof}
	
	Now, we are ready to prove \eqref{eq:lb-gen-bound} and \eqref{eq:lb-gen-bound-single}, respectively.
	\subsubsection{Proof of \eqref{eq:lb-gen-bound}}
	First, by the property of $d_{\textnormal{KL}}$, we get
	\begin{equation}
		\begin{aligned}
			&d_{\textnormal{KL}} \left(\widehat{L}_n \left\| \frac{n}{n+m} \widehat{L}_n + \frac{m}{n+m} L_\mu \right.\right) \\
			&\overset{(a)}{=} \sup_{\gamma} d_{\gamma} \left(\widehat{L}_n \left\| \frac{n}{n+m} \widehat{L}_n + \frac{m}{n+m} L_\mu \right.\right) \\
			&\overset{\eqref{eq:lb-IP-exp-emp-risk}, \eqref{eq:lb-IP-exp-pop-risk}}{=} \sup_{\gamma} d_{\gamma} \left(\frac{1}{k} \sum_{i=1}^k \EE_{\Lambda^{(i)}_{[\frac{n+m}{k}]}, U^{(i)}_{[\frac{n}{k}]}} \left[ \frac{k}{n} \sum_{j \in U^{(i)}_{[\frac{n}{k}]}} \Lambda^{(i)}_j \right] \left\| \frac{1}{k} \sum_{i=1}^k \EE_{\Lambda^{(i)}_{[\frac{n+m}{k}]}, U^{(i)}_{[\frac{n}{k}]}} \left[ \frac{k}{n+m} \sum_{j=1}^{\frac{n+m}{k}} \Lambda^{(i)}_j \right] \right.\right) \\
			&\overset{(b)}{\leq} \sup_{\gamma} \frac{1}{k} \sum_{i=1}^k \EE_{\Lambda^{(i)}_{[\frac{n+m}{k}]}, U^{(i)}_{[\frac{n}{k}]}} \left[ d_{\gamma} \left( \frac{k}{n} \sum_{j \in U^{(i)}_{[\frac{n}{k}]}} \Lambda^{(i)}_j \left\| \frac{k}{n+m} \sum_{j=1}^{\frac{n+m}{k}} \Lambda^{(i)}_j \right. \right) \right],
		\end{aligned}
	\end{equation}
	where $(a)$ is by the property of $d_{\textnormal{KL}}(\cdot \| \cdot)$ in \eqref{eq:property_d_gamma} and $(b)$ is by the joint convexity of $d_{\textnormal{KL}}(\cdot \| \cdot)$. For any $\gamma \in \RR$ and $i \in [k]$, it follows by the Donsker-Varadhan formula that
	\begin{equation}
		\begin{aligned}
			&\EE_{\Lambda^{(i)}_{[\frac{n+m}{k}]}, U^{(i)}_{[\frac{n}{k}]}} \left[ \frac{n}{k} d_{\gamma} \left( \frac{k}{n} \sum_{j \in U^{(i)}_{[\frac{n}{k}]}} \Lambda^{(i)}_j \left\| \frac{k}{n+m} \sum_{j=1}^{\frac{n+m}{k}} \Lambda^{(i)}_j \right. \right) \right] \\
			&\leq D_{\textnormal{KL}}\left( P_{\Lambda^{(i)}_{[\frac{n+m}{k}]},  U^{(i)}_{[\frac{n}{k}]}} \left\| P_{\Lambda^{(i)}_{[\frac{n+m}{k}]}} P_{U^{(i)}_{[\frac{n}{k}]}} \right.\right) + \log \EE_{\Lambda^{(i)}_{[\frac{n+m}{k}]}} \EE_{U^{(i)}_{[\frac{n}{k}]}} \left[ e^{\frac{n}{k} d_{\gamma} \left( \frac{k}{n} \sum_{j \in U^{(i)}_{[\frac{n}{k}]}} \Lambda^{(i)}_j \left\| \frac{k}{n+m} \sum_{j=1}^{\frac{n+m}{k}} \Lambda^{(i)}_j \right. \right)} \right] \\
			&= I\left( \Lambda^{(i)}_{[\frac{n+m}{k}]} ;  U^{(i)}_{[\frac{n}{k}]} \right) + \log \EE_{\Lambda^{(i)}_{[\frac{n+m}{k}]}} \EE_{U^{(i)}_{[\frac{n}{k}]}} \left[ e^{\frac{n}{k} d_{\gamma} \left( \frac{k}{n} \sum_{j \in U^{(i)}_{[\frac{n}{k}]}} \Lambda^{(i)}_j \left\| \frac{k}{n+m} \sum_{j=1}^{\frac{n+m}{k}} \Lambda^{(i)}_j \right. \right)} \right].
		\end{aligned}\label{eq:gen-d_gamma-bound}
	\end{equation}
	In the above expression, the second argument of $d_\gamma$ is the mean of the first argument under $P_{U^{(i)}_{[\frac{n}{k}]}}$, as shown in Lemma~\ref{lem:average-emp-loss}. Therefore, an application of Lemma~\ref{lem:binary-relative-entropy-concentration} directly yields
	\begin{equation}
		\begin{aligned}
			\EE_{\Lambda^{(i)}_{[\frac{n+m}{k}]}, U^{(i)}_{[\frac{n}{k}]}} \left[ d_{\gamma} \left( \frac{k}{n} \sum_{j \in U^{(i)}_{[\frac{n}{k}]}} \Lambda^{(i)}_j \left\| \frac{k}{n+m} \sum_{j=1}^{\frac{n+m}{k}} \Lambda^{(i)}_j \right. \right) \right] \leq \frac{k}{n} I\left( \Lambda^{(i)}_{[\frac{n+m}{k}]} ;  U^{(i)}_{[\frac{n}{k}]} \right).
		\end{aligned}
	\end{equation}
	Since this inequality holds for any $\gamma$, it also holds for the supremum over $\gamma$. By collecting everything together, we get
	\begin{equation}
		\begin{aligned}
			d_{\textnormal{KL}} \left(\widehat{L}_n \left\| \frac{n}{n+m} \widehat{L}_n + \frac{m}{n+m} L_\mu \right.\right) \leq \frac{1}{n} \sum_{i=1}^k I\left( \Lambda^{(i)}_{[\frac{n+m}{k}]} ;  U^{(i)}_{[\frac{n}{k}]} \right).
		\end{aligned}\label{eq:d_KL_emp}
	\end{equation}
	Likewise, repeating the above steps for $d_{\textnormal{KL}} \left(L_\mu \left\| \frac{n}{n+m} \widehat{L}_n + \frac{m}{n+m} L_\mu \right.\right)$ yields
	\begin{equation}
		\begin{aligned}
			d_{\textnormal{KL}} \left(L_\mu \left\| \frac{n}{n+m} \widehat{L}_n + \frac{m}{n+m} L_\mu \right.\right) \leq \frac{1}{m} \sum_{i=1}^k I\left( \Lambda^{(i)}_{[\frac{n+m}{k}]} ;  U^{(i)}_{[\frac{n}{k}]} \right).
		\end{aligned}\label{eq:d_KL_pop}
	\end{equation}
	Finally, combining the above two inequalities gives an upper bound for the weighted binary JSD between $\widehat{L}_n$ and $L_\mu$ (with $\theta = \frac{n}{n+m}$), i.e., 
	\begin{equation}
		\begin{aligned}
			d^\theta_{\textnormal{JS}} \left(\widehat{L}_n \left\| L_\mu \right.\right) &= \frac{n}{n+m} d_{\textnormal{KL}} \left(\widehat{L}_n \left\| \frac{n}{n+m} \widehat{L}_n + \frac{m}{n+m} L_\mu \right.\right) + \frac{m}{n+m} d_{\textnormal{KL}} \left(L_\mu \left\| \frac{n}{n+m} \widehat{L}_n + \frac{m}{n+m} L_\mu \right.\right) \\
			&\leq \frac{2}{n+m} \sum_{i=1}^k I\left( \Lambda^{(i)}_{[\frac{n+m}{k}]} ;  U^{(i)}_{[\frac{n}{k}]} \right).
		\end{aligned}\label{eq:d_JS_IPCIMI}
	\end{equation}
	
	\subsubsection{Proof of \eqref{eq:lb-gen-bound-single}}
	First, following the same way as it did for \eqref{eq:exp-emp-risk-single} and \eqref{eq:exp-pop-risk-single}, \eqref{eq:lb-IP-exp-emp-risk} and \eqref{eq:lb-IP-exp-pop-risk} are further simplified as
	\begin{align}
		\widehat{L}_n =& \frac{1}{n+m} \sum_{i=1}^k \sum_{j=1}^{\frac{n+m}{k}} \EE_{\Lambda^{(i)}_j | T^{(i)}_j = 1} \left[ \Lambda^{(i)}_j \right] \\
		L_\mu =& \frac{1}{n+m} \sum_{i=1}^k \sum_{j=1}^{\frac{n+m}{k}} \EE_{\Lambda^{(i)}_j | T^{(i)}_j = 0} \left[ \Lambda^{(i)}_j \right].
	\end{align}
	Applying Jensen's inequality on the convex function $d_{\textnormal{KL}}(\cdot \| \cdot)$, it follows that
	\begin{equation}
		\begin{aligned}
			d_{\textnormal{KL}} \left(\widehat{L}_n \left\| \frac{n}{n+m} \widehat{L}_n + \frac{m}{n+m} L_\mu \right.\right) =& d_{\textnormal{KL}} \left(\frac{1}{n+m} \sum_{i=1}^k \sum_{j=1}^{\frac{n+m}{k}} \EE_{\Lambda^{(i)}_j | T^{(i)}_j = 1} \left[ \Lambda^{(i)}_j \right] \left\| \frac{1}{n+m} \sum_{i=1}^k \sum_{j=1}^{\frac{n+m}{k}} \EE_{\Lambda^{(i)}_j} \left[ \Lambda^{(i)}_j \right] \right.\right) \\
			\leq& \frac{1}{n+m} \sum_{i=1}^k \sum_{j=1}^{\frac{n+m}{k}} d_{\textnormal{KL}} \left(\EE_{\Lambda^{(i)}_j | T^{(i)}_j = 1} \left[ \Lambda^{(i)}_j \right] \left\| \EE_{\Lambda^{(i)}_j} \left[ \Lambda^{(i)}_j \right] \right.\right).
		\end{aligned}
	\end{equation}
	On the other hand, we can analogously obtain that
	\begin{equation}
		\begin{aligned}
			d_{\textnormal{KL}} \left(L_\mu \left\| \frac{n}{n+m} \widehat{L}_n + \frac{m}{n+m} L_\mu \right.\right) \leq \frac{1}{n+m} \sum_{i=1}^k \sum_{j=1}^{\frac{n+m}{k}} d_{\textnormal{KL}} \left(\EE_{\Lambda^{(i)}_j | T^{(i)}_j = 0} \left[ \Lambda^{(i)}_j \right] \left\| \EE_{\Lambda^{(i)}_j} \left[ \Lambda^{(i)}_j \right] \right.\right).
		\end{aligned}
	\end{equation}
	Combining the above two upper bounds yields
	\begin{subequations}
		\begin{align}
			&d^\theta_{\textnormal{JS}} (\widehat{L}_n \| L_\mu) \nonumber \\
			&= \frac{n}{n+m} d_{\textnormal{KL}} \left(\widehat{L}_n \left\| \frac{n}{n+m} \widehat{L}_n + \frac{m}{n+m} L_\mu \right.\right) + \frac{m}{n+m} d_{\textnormal{KL}} \left(L_\mu \left\| \frac{n}{n+m} \widehat{L}_n + \frac{m}{n+m} L_\mu \right.\right) \nonumber \\
			&\leq \frac{1}{n+m} \sum_{i=1}^k \sum_{j=1}^{\frac{n+m}{k}} \left( \frac{n}{n+m} d_{\textnormal{KL}} \left(\EE_{\Lambda^{(i)}_j | T^{(i)}_j = 1} \left[ \Lambda^{(i)}_j \right] \left\| \EE_{\Lambda^{(i)}_j} \left[ \Lambda^{(i)}_j \right] \right.\right) +  \frac{m}{n+m} d_{\textnormal{KL}} \left(\EE_{\Lambda^{(i)}_j | T^{(i)}_j = 0} \left[ \Lambda^{(i)}_j \right] \left\| \EE_{\Lambda^{(i)}_j} \left[ \Lambda^{(i)}_j \right] \right.\right) \right) \label{eq:ineq-1} \\
			&\overset{(a)}{=} \frac{1}{n+m} \sum_{i=1}^k \sum_{j=1}^{\frac{n+m}{k}} d_{\textnormal{JS}}^\theta \left(\EE_{\Lambda^{(i)}_j | T^{(i)}_j = 1} \left[ \Lambda^{(i)}_j \right] \left\| \EE_{\Lambda^{(i)}_j | T^{(i)}_j = 0} \left[ \Lambda^{(i)}_j \right] \right.\right) \nonumber \\
			&\leq \frac{1}{n+m} \sum_{i=1}^k \sum_{j=1}^{\frac{n+m}{k}} I \left( \Lambda^{(i)}_j; T^{(i)}_j \right) \label{eq:ineq-2} \\
			&\leq \frac{1}{n+m} \sum_{i=1}^k \sum_{j=1}^{\frac{n+m}{k}} I \left( \Lambda^{(i)}_j; U^{(i)}_{[\frac{n}{k}]} \right), \label{eq:ineq-3}
		\end{align}
	\end{subequations}
	where $(a)$ is by the definition of $d_{\textnormal{JS}}^\theta$ in \eqref{eq:d_js} with $\theta = \frac{n}{n+m}$, \eqref{eq:ineq-2} is an application of Lemma~\ref{lem:js-bound} with $f(X) = X = \Lambda^{(i)}_j$, $Y = T^{(i)}_j$ and \eqref{eq:P_Tij}, \eqref{eq:ineq-3} holds by the data-processing inequality as $T^{(i)}_j$ is a function of $U^{(i)}_{[\frac{n}{k}]}$. The proof is complete.
	
	\subsection{A Proof Sketch for \eqref{eq:var-IPCIMI}} \label{subapp:var-IPCIMI}
	By \eqref{eq:lb-IP-exp-emp-risk} and \eqref{eq:lb-IP-exp-pop-risk}, we rewrite $\gen$ as
	\begin{equation}
		\begin{aligned}
			\gen &= \frac{1}{k} \sum_{i=1}^k \left( \EE_{\Lambda^{(i)}_{[\frac{n+m}{k}]}, U^{(i)}_{[\frac{n}{k}]}} \left[ \frac{k}{n} \sum_{j \not\in U^{(i)}_{[\frac{n}{k}]}} \Lambda^{(i)}_j \right] - \EE_{\Lambda^{(i)}_{[\frac{n+m}{k}]}, U^{(i)}_{[\frac{n}{k}]}} \left[ \frac{k}{n} \sum_{j \in U^{(i)}_{[\frac{n}{k}]}} \Lambda^{(i)}_j \right] \right) \\
			&\leq \frac{1}{k} \sum_{i=1}^k \sqrt{\frac{(n+m)^2}{2nm} d^\theta_{\textnormal{JS}} \left( \EE_{\Lambda^{(i)}_{[\frac{n+m}{k}]}, U^{(i)}_{[\frac{n}{k}]}} \left[ \frac{k}{n} \sum_{j \in U^{(i)}_{[\frac{n}{k}]}} \Lambda^{(i)}_j \right] \left\| \EE_{\Lambda^{(i)}_{[\frac{n+m}{k}]}, U^{(i)}_{[\frac{n}{k}]}} \left[ \frac{k}{n} \sum_{j \not\in U^{(i)}_{[\frac{n}{k}]}} \Lambda^{(i)}_j \right] \right. \right)},
		\end{aligned}
	\end{equation}
	where the inequality is an application of Lemma~\ref{lem:js-lb} with $\theta = \frac{n}{n+m}$. Following the similar steps as done for \eqref{eq:d_JS_IPCIMI}, we get
	\begin{equation}
		\begin{aligned}
			d^\theta_{\textnormal{JS}} \left( \EE_{\Lambda^{(i)}_{[\frac{n+m}{k}]}, U^{(i)}_{[\frac{n}{k}]}} \left[ \frac{k}{n} \sum_{j \in U^{(i)}_{[\frac{n}{k}]}} \Lambda^{(i)}_j \right] \left\| \EE_{\Lambda^{(i)}_{[\frac{n+m}{k}]}, U^{(i)}_{[\frac{n}{k}]}} \left[ \frac{k}{n} \sum_{j \not\in U^{(i)}_{[\frac{n}{k}]}} \Lambda^{(i)}_j \right] \right. \right) &\leq \frac{2k}{n+m} I\left( \Lambda^{(i)}_{[\frac{n+m}{k}]} ;  U^{(i)}_{[\frac{n}{k}]} \right) \\
			&\leq \frac{2k}{n+m} I\left( W ;  U^{(i)}_{[\frac{n}{k}]} | \hat{Z}^{(i)}_{[\frac{n+m}{k}]} \right),
		\end{aligned}
	\end{equation}
	where the last inequality is by the data-processing inequality. Combining everything together leads to the desired result.
	
	\subsection{Proof of Theorem~\ref{thm:tightest-gen-bound}}\label{subapp:tightest-gen-bound}
	It suffices to show that all of \eqref{eq:ineq-1}, \eqref{eq:ineq-2} and \eqref{eq:ineq-3} hold with equality. Specifically, we prove below that \eqref{eq:ineq-1} and \eqref{eq:ineq-3} take equality under Assumption~\ref{ass:invariant-alg}, and \eqref{eq:ineq-2} takes equality in the case of the binary loss function.
	
	\begin{enumerate}
		\item[\eqref{eq:ineq-1}] When Assumption~\ref{ass:invariant-alg} is satisfied, for all fixed $i_1,i_2 \in [k]$ and $j_1,j_2 \in [\frac{n+m}{k}]$, we have $P_{\Lambda^{(i_1)}_{j_1} | T^{(i_1)}_{j_1}} = P_{\Lambda^{(i_2)}_{j_2} | T^{(i_2)}_{j_2}}$ by combining \eqref{eq:invariant_u_2} and \eqref{eq:eqv_P_T1_U}. Thus, it follows that $\EE_{\Lambda^{(i_1)}_{j_1} | T^{(i_1)}_{j_1} = t} \left[ \Lambda^{(i_2)}_{j_1} \right] = \EE_{\Lambda^{(i_2)}_{j_2} | T^{(i_2)}_{j_2} = t} \left[ \Lambda^{(i_2)}_{j_2} \right]$ and $\EE_{\Lambda^{(i_1)}_{j_1}} \left[ \Lambda^{(i_2)}_{j_1} \right] = \EE_{\Lambda^{(i_2)}_{j_2}} \left[ \Lambda^{(i_2)}_{j_2} \right]$, with which the Jensen's inequality holds with equality.
		\item[\eqref{eq:ineq-2}] As proved in Appendix~\ref{subapp:js-bound}, \eqref{eq:ineq-2} holds equality with $\Lambda^{(i)}_j \in \{0, 1\}$ for all valid $i,j$. 
		\item[\eqref{eq:ineq-3}] When Assumption~\ref{ass:invariant-alg} is satisfied, we have $I ( W; T^{(i)}_j | \hat{Z}^{(i)}_j ) = I ( W; U^{(i)}_{[\frac{n}{k}]} | \hat{Z}^{(i)}_j )$, as shown in Proposition~\ref{pro:processed-SIPCIMI}. Following the proof as given in Appendix~\ref{subapp:processed-SIPCIMI}, it can be analogously proved that $I ( \Lambda^{(i)}_j; T^{(i)}_j ) = I ( \Lambda^{(i)}_j; U^{(i)}_{[\frac{n}{k}]} )$.
	\end{enumerate}
	\newpage
	
	\section{Calculation Details for Example~\ref{eg:Bernoulli}}\label{app:calculations}
	\setcounter{subsection}{0}
	In this section, we provide the calculations details when evaluating the considered bounds. To make it clear from the context, the realizations of capital random variables are represented with the corresponding lower-case letters. Before presenting the calculation details, we introduce some useful lemmas as below.
	\begin{lemma} \label{lem:bin_coe_pro}
		Let $X$ and $Y$ be two independent binomial random variables following $X \sim \Binomial(n, p)$ and $Y \sim \Binomial(m, p)$, respectively. If $\bar{X} = n - X$, then $\bar{X} \sim \Binomial(n, 1 - p)$. If $Z = X + Y$, then $Z \sim \Binomial(n + m, p)$.
	\end{lemma}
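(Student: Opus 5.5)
The statement is Lemma~\ref{lem:bin_coe_pro}, which contains two elementary claims about binomial variables. I will prove each from the representation of a binomial as a sum of i.i.d.\ Bernoulli trials, as in the notation section. Moment generating functions or explicit pmf convolution would serve as a cross-check.

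\emph{First claim.} Write $X = \sum_{i=1}^n B_i$ with $B_i \overset{i.i.d.}{\sim} \Bernoulli(p)$. Then
\[
\bar{X} = n - X = \sum_{i=1}^n (1 - B_i).
\]
Each $1-B_i$ takes values in $\{0,1\}$ with $P(1-B_i = 1) = 1-p$, so $1-B_i \sim \Bernoulli(1-p)$. The variables $1-B_i$ are i.i.d.\ because they are measurable functions of independent variables. Hence $\bar X \sim \Binomial(n,1-p)$.

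Directly from the pmf, for $x\in\{0,\dots,n\}$,
\[
P(\bar X = x) = P(X = n-x) = \begin{psmallmatrix} n \\ n-x \end{psmallmatrix} p^{n-x}(1-p)^{x} = B_{n,x}(1-p),
\]
using the symmetry of binomial coefficients.

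\emph{Second claim.} Take $X=\sum_{i=1}^n B_i$ and $Y = \sum_{i=n+1}^{n+m} B_i$, where all $n+m$ trials are i.i.d.\ $\Bernoulli(p)$. Independence of $X$ and $Y$ lets us realize them jointly on one product space this way without changing their joint law. Then $Z = \sum_{i=1}^{n+m} B_i \sim \Binomial(n+m,p)$.

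Alternatively, one can convolve the pmfs:
\[
P(Z=z) = \sum_x \begin{psmallmatrix} n \\ x \end{psmallmatrix}\begin{psmallmatrix} m \\ z-x \end{psmallmatrix} p^z (1-p)^{n+m-z} = \begin{psmallmatrix} n+m \\ z \end{psmallmatrix} p^z(1-p)^{n+m-z},
\]
by Vandermonde's identity. A third route is the MGF product $(1-p+pe^t)^n(1-p+pe^t)^m$.

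There is no real obstacle here. The only point needing care is justifying that an independent pair $(X,Y)$ can be represented by disjoint blocks of a single i.i.d.\ Bernoulli sequence. This holds because the law of $(X,Y)$ is the product of its marginals, so any construction that reproduces the marginals and independence yields the same joint law. The convolution/Vandermonde argument avoids this step entirely, and I would present it as the primary proof.
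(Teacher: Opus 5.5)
Your proof is correct and matches the paper, which simply says both claims are clear from the definition of the binomial distribution as the number of successes in i.i.d.\ Bernoulli trials; your Bernoulli-sum arguments spell out exactly that. The pmf/Vandermonde convolution you propose as the primary argument for $Z$ is an equally valid, self-contained alternative, and it avoids the harmless coupling step of realizing the independent pair $X$, $Y$ as disjoint blocks of one Bernoulli sequence.
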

	
	\begin{proof}
		The proof is clear form the definition of Binomial distribution.
	\end{proof}
	
	\begin{lemma} \label{lem:exp_bin_logx}
		If $X$ is a binomial random variable following $X \sim \Binomial(n, p)$ and $a \geq 0$, then
		\begin{equation}
			\begin{aligned}
				\EE \left[ \log (X + a) \right] = \log (np + a) - \frac{np (1 - p)}{2 (np + a)^2} + O\left( \frac{1}{n^2} \right).
			\end{aligned}
		\end{equation}
	\end{lemma}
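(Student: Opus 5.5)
The plan is a second-order (delta-method) Taylor expansion of $\log(X+a)$ around its mean $\mu := np$, followed by control of the remainder through the concentration of the binomial distribution.

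Write $g(x) = \log(x+a)$ and expand about $x=\mu$:
\begin{equation*}
g(X) = \log(\mu+a) + \frac{X-\mu}{\mu+a} - \frac{(X-\mu)^2}{2(\mu+a)^2} + R(X),
\end{equation*}
where $R(X) = \frac{(X-\mu)^3}{3(\xi+a)^3}$ for some $\xi$ between $X$ and $\mu$. Taking expectations, the linear term vanishes and $\EE[(X-\mu)^2] = np(1-p)$, which produces exactly the two displayed leading terms. It therefore remains to show $\EE[R(X)] = O(1/n^2)$, with $p\in(0,1)$ and $a \geq 0$ held fixed and $n\to\infty$.

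For the remainder, I split on the event $A := \{ |X-\mu| \leq \mu/2 \}$.

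On $A$, we have $\xi + a \geq \mu/2$. A third-order Taylor bound is too crude here, because $|R| \lesssim |X-\mu|^3/\mu^3$ would only give $O(n^{3/2}/n^3) = O(n^{-3/2})$. So I expand one order further:
\begin{equation*}
R(X) = \frac{(X-\mu)^3}{3(\mu+a)^3} + R_4(X), \qquad |R_4(X)| \leq \frac{(X-\mu)^4}{4(\mu/2)^4} \ \text{on } A.
\end{equation*}
The binomial central moments satisfy $\EE[(X-\mu)^3] = np(1-p)(1-2p) = O(n)$ and $\EE[(X-\mu)^4] = O(n^2)$. The cubic term thus contributes $O(n)/\Theta(n^3) = O(n^{-2})$, and the quartic term contributes $O(n^2)/\Theta(n^4) = O(n^{-2})$. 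Since the full polynomial terms are used here, replacing their expectations over $A$ by full expectations costs only an exponentially small amount, handled as in the next step.

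On $A^c$, Hoeffding's inequality gives $\PP(A^c) \leq 2e^{-np^2/2}$. On this event $|g(X)| \leq \max(|\log a|, \log(n+a))$ when $a>0$, and the polynomial terms of the expansion are bounded by $\mathrm{poly}(n)$. Hence the contribution from $A^c$ is $\mathrm{poly}(n)\,e^{-cn} = o(n^{-2})$.

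The main obstacle is the case $a=0$: there $\log X$ equals $-\infty$ on the event $\{X=0\}$, which has probability $(1-p)^n$, so the statement is literally false and $\EE[\log X]$ is infinite. I would handle this by assuming $a>0$, or by conditioning on $X\geq 1$ / using the convention $0\log 0$ as in the intended entropy applications. Under that reading the $A^c$ contribution is bounded by $\log(n+a)\,(1-p)^n$ or similar, which is negligible. I would flag this case explicitly rather than hide it in the proof.
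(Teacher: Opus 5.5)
Your proposal is correct and follows essentially the same route as the paper. Both use a fourth-order Taylor expansion of $\log(x+a)$ at $np$, exact binomial central moments up to order three, and a Hoeffding split at deviation $np/2$; on the good event this gives $\xi + a \geq np/2$, so the quartic remainder is $O(\EE[(X-np)^4]/n^4) = O(1/n^2)$. Your caveat about $a=0$ is also well taken: since $\PP(X=0)=(1-p)^n>0$, $\EE[\log X]=-\infty$, and the paper's proof only sidesteps this by tacitly treating $X \in [n]$ (i.e., $X \geq 1$) when bounding the remainder on the bad event, so restricting to $a>0$ or to $X \geq 1$ is the right repair.
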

	
	\begin{proof}
		By Taylor's theorem, we expand the function $f(x) = \log (x + a)$ at point $x_0$ such that
		\begin{equation}
			\begin{aligned}
				f(x) = f(x_0) + \sum_{i=1}^{3} \frac{f^{(i)}(x)}{i !} (x - x_0)^i + \frac{f^{(i)}(\xi)}{4 !} (x - x_0)^4,
			\end{aligned}
		\end{equation}
		where $\xi$ is between $x_0$ and $x$. Let $x_0 = np$, it can be further calculated that
		\begin{equation}
			\begin{aligned}
				\log (x + a) =& \log (np + a) + \frac{x - np}{np + a} - \frac{(x - np)^2}{2 (np + a)^2} + \frac{(x - np)^3}{3 (np + a)^3} - \frac{(x - np)^4}{4 (\xi + a)^4}.
			\end{aligned}
			\label{eq:taylor_logx}
		\end{equation}
		Taking expectation on the upper bound of \eqref{eq:taylor_logx} yields
		\begin{equation}
			\begin{aligned}
				\EE \left[ \log (X + a) \right] &= \EE \left[ \log (np + a) + \frac{X - np}{np + a} - \frac{(X - np)^2}{2 (np + a)^2} + \frac{(X - np)^3}{3 (np + a)^3} \right] - \EE \left[ \frac{(X - np)^4}{4 (\xi + a)^4} \right] \\
				&= \log (np + a) + \frac{\EE \left[ X - np \right]}{np + a} - \frac{\EE \left[ (X - np)^2 \right]}{2 (np + a)^2} + \frac{\EE \left[ (X - np)^3 \right]}{3 (np + a)^3} - \EE \left[ \frac{(X - np)^4}{4 (\xi + a)^4} \right] \\
				&= \log (np + a) + 0 - \frac{np (1 - p)}{2 (np + a)^2} + \frac{np (1 - p) (1 - 2p)}{3 (np + a)^3} - \EE \left[ \frac{(X - np)^4}{4 (\xi + a)^4} \right] \\
				&= \log (np + a) - \frac{np (1 - p)}{2 (np + a)^2} + O\left( \frac{1}{n^2} \right) - \EE \left[ \frac{(X - np)^4}{4 (\xi + a)^4} \right].
			\end{aligned}
		\end{equation}
		It suffices to show that the expectation of the remainder term is of order $O(\frac{1}{n^2})$. Using Hoeffding's inequality, we can derive an upper bound for the tail probability that $X \leq \frac{1}{2} np$, which is shown as follows:
		\begin{equation}
			\begin{aligned}
				\Pr\left[ X \leq \frac{1}{2} np \right] &= \Pr\left[ X - \EE \left[ X \right] \leq -\frac{1}{2} np \right] \\
				&= \Pr\left[\sum_{i=1}^{n} X_i - \sum_{i=1}^{n} \EE \left[ X_i \right] \leq - \frac{1}{2} np \right] \\
				&\leq e^{-\frac{1}{2} p^2 n},
			\end{aligned}
		\end{equation}
		where $X_i \sim \Bernoulli(p)$ is a Bernoulli random variable for any $i \in [1, n]$. Then, by the tower property of expectation, we rewrite the expectation of the remainder term as
		\begin{equation}
			\begin{aligned}
				\EE \left[ \frac{(X - np)^4}{4 (\xi + a)^4} \right] &= \Pr\left[ X > \frac{1}{2} np \right] \EE \left[ \left. \frac{(X - np)^4}{4 (\xi + a)^4} \right| X > \frac{1}{2}np \right] + \Pr\left[ X \leq \frac{1}{2} np \right] \EE \left[ \left. \frac{(X - np)^4}{4 (\xi + a)^4} \right| X \leq \frac{1}{2}np \right].
			\end{aligned}\label{eq:exp_remainder}
		\end{equation}
		Since $X \in [n]$ and $\xi$ is within $np$ and $X$, we have 
		\begin{equation}
			\begin{aligned}
				\frac{(X - np)^4}{4 (\xi + a)^4} \leq \frac{\max((np-1)^4, (n-np)^4)}{4(1+a)^4} = O(n^4).
			\end{aligned}
		\end{equation}
		Due to fact that exponential decay dominates any polynomial growth, we see that the contribution of $X \leq \frac{1}{2} np$ to the expectation is negligible, i.e.,
		\begin{equation}
			\begin{aligned}
				\Pr\left[ X \leq \frac{1}{2} np \right] \EE \left[ \left. \frac{(X - np)^4}{4 (\xi + a)^4} \right| X \leq \frac{1}{2}np \right] \leq e^{-\frac{1}{2} p^2 n} \frac{\max((np-1)^4, (n-np)^4)}{4(1+a)^4} = o(\frac{1}{n^2}).
			\end{aligned}\label{eq:exp_X_out_range}
		\end{equation}
		In addition, similar to \eqref{eq:exp_remainder}, we have
		\begin{equation}
			\begin{aligned}
				\EE \left[ (X - np)^4 \right] &= \Pr\left[ X > \frac{1}{2} np \right] \EE \left[ \left. (X - np)^4 \right| X > \frac{1}{2}np \right] + \Pr\left[ X \leq \frac{1}{2} np \right] \EE \left[ \left. (X - np)^4 \right| X \leq \frac{1}{2}np \right] \\
				&\geq \Pr\left[ X > \frac{1}{2} np \right] \EE \left[ \left. (X - np)^4 \right| X > \frac{1}{2}np \right] \\
				&\geq \left( 1 - e^{-\frac{1}{2} p^2 n} \right) \EE \left[ \left. (X - np)^4 \right| X > \frac{1}{2}np \right].
			\end{aligned}\label{eq:exp_X_in_range}
		\end{equation}
		Now, by using the results in \eqref{eq:exp_X_out_range} and \eqref{eq:exp_X_in_range}, we can bound \eqref{eq:exp_remainder} as
		\begin{equation}
			\begin{aligned}
				0 \leq \EE \left[ \frac{(X - np)^4}{4 (\xi + a)^4} \right] &\overset{\eqref{eq:exp_X_out_range}}{\leq} \EE \left[ \left. \frac{(X - np)^4}{4 (\xi + a)^4} \right| X > \frac{1}{2}np \right] + o(\frac{1}{n^2}) \\
				&\leq \EE \left[ \left. \frac{(X - np)^4}{4 (\frac{1}{2}np + a)^4} \right| X > \frac{1}{2}np \right] + o(\frac{1}{n^2}) \\
				&\overset{\eqref{eq:exp_X_in_range}}{\leq} \frac{1}{1 - e^{-\frac{1}{2} p^2 n}} \frac{1}{4 (\frac{1}{2}np + a)^4} \EE \left[ (X - np)^4 \right] + o(\frac{1}{n^2}) \\
				&= \frac{1}{1 - e^{-\frac{1}{2} p^2 n}} \frac{np(1-p)(1+ 3(n-2)p(1-p))}{4 (\frac{1}{2}np + a)^4} + o(\frac{1}{n^2}) \\
				&= \Theta(\frac{1}{n^2}),
			\end{aligned}
		\end{equation}
		which implies \eqref{eq:exp_remainder} is of order $O(\frac{1}{n^2})$. The proof is complete.
	\end{proof}

	\begin{lemma} \label{lem:exp_bin_log-x}
		If $X$ is a binomial random variable following $X \sim \Binomial(n, p)$ and $b \geq n$. Let $\bar{X} = n - X$, then
		\begin{equation}
			\begin{aligned}
				\EE \left[ \log (b - \bar{X}) \right] = \log (b - n(1-p)) - \frac{np (1 - p)}{2 (b - n(1-p))^2} + O\left( \frac{1}{n^2} \right).
			\end{aligned}
		\end{equation}
	\end{lemma}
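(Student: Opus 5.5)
The plan is to reduce this statement to Lemma~\ref{lem:exp_bin_logx} by a change of variables, rather than repeat the Taylor-expansion argument. Since $\bar{X} = n - X$, we have $b - \bar{X} = X + (b - n)$. With $a := b - n \geq 0$ (this uses the hypothesis $b \geq n$), this gives
\begin{equation*}
	\EE\left[\log(b - \bar{X})\right] = \EE\left[\log(X + a)\right].
\end{equation*}
Lemma~\ref{lem:exp_bin_logx} applies directly to $X \sim \Binomial(n,p)$ with this $a$, and yields
\begin{equation*}
	\EE\left[\log(X+a)\right] = \log(np + a) - \frac{np(1-p)}{2(np+a)^2} + O\left(\frac{1}{n^2}\right).
\end{equation*}
Substituting back, $np + a = np + b - n = b - n(1-p)$, which is exactly the expression in the claim.

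The only point needing care is the $O(1/n^2)$ remainder. Its constant in Lemma~\ref{lem:exp_bin_logx} must not blow up when $a$ grows with $n$, as it will since $b \geq n$. In the proof of Lemma~\ref{lem:exp_bin_logx}, every remainder bound has $(np+a)$, $(\tfrac12 np + a)$ or $(1+a)$ in the denominator. These only shrink as $a$ increases, so the remainder is $O(1/n^2)$ uniformly in $a \geq 0$. I will make this uniformity explicit in one sentence.

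A second route, which I could use as a cross-check, goes through Lemma~\ref{lem:bin_coe_pro}. It gives $\bar{X} \sim \Binomial(n, 1-p)$, and Taylor-expanding $\log(b - x)$ around $x = n(1-p)$ leads to the same computation. The variance term is unchanged, $n(1-p)p$, and the odd central moment changes sign, but it is absorbed into the $O(1/n^2)$ term anyway.

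The main (mild) obstacle is therefore bookkeeping the uniformity of the error term in $a$. Once that is noted, the rest is the single substitution above.
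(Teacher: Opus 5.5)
Your proposal is correct and matches the paper's proof exactly: the paper's proof is the single sentence applying Lemma~\ref{lem:exp_bin_logx} with $a=b-n$, so that $np+a=b-n(1-p)$. Your uniformity remark is a worthwhile addition the paper omits, since $a=b-n$ may vary with $n$. However, it holds uniformly in $a\geq a_0$ for any fixed $a_0>0$, not in $a\geq 0$. The $(1+a)$ bound in the proof of Lemma~\ref{lem:exp_bin_logx} tacitly assumes $X\geq 1$, so that $\xi\geq 1$, and this fails on $\{X=0\}$. Indeed, at $b=n$ the left-hand side is $\EE[\log X]=-\infty$, because $\PP[X=0]=(1-p)^n>0$. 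This defect is inherited from the cited lemma, and it is harmless in the paper's applications, where the shift is always at least $1$.
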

	
	\begin{proof}
		The results directly follows by applying Lemma~\ref{lem:exp_bin_logx} with $a = b-n$.
	\end{proof}
	
	\begin{lemma} \label{lem:exp_bin_xlogx}
		If $X$ is a binomial random variable following $X \sim \Binomial(n, p)$, then
		\begin{equation}
			\begin{aligned}
				\EE \left[ X \log X \right] = np \log np + \frac{1 - p}{2} + \frac{(1-p)(5-4p)}{6 np} + O\left( \frac{1}{n^2} \right).
			\end{aligned}
		\end{equation}
	\end{lemma}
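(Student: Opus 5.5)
The plan is to reuse the Taylor-plus-tail strategy from the proof of Lemma~\ref{lem:exp_bin_logx}, now applied to $f(x)=x\log x$ expanded at $x_0=np$. The derivatives are
\[
f'(x)=\log x+1,\quad f''(x)=\tfrac{1}{x},\quad f'''(x)=-\tfrac{1}{x^2},\quad f^{(4)}(x)=\tfrac{2}{x^3},\quad f^{(5)}(x)=-\tfrac{6}{x^4},\quad f^{(6)}(x)=\tfrac{24}{x^5}.
\]
Unlike $\log(x+a)$, the $r$-th Taylor term here has size $\Theta(n^{\lfloor r/2\rfloor-(r-1)})$. So stopping at fourth order is not enough to isolate the $1/n$ coefficient: the fourth and fifth terms both contribute at order $1/n$. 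I would therefore expand to fifth order with a Lagrange remainder of sixth order,
\[
X\log X=\sum_{r=0}^{5}\frac{f^{(r)}(np)}{r!}(X-np)^r+\frac{f^{(6)}(\xi)}{720}(X-np)^6 ,
\]
with $\xi$ between $X$ and $np$. The singularity at $X=0$ is harmless because $x\log x$ extends continuously to $0$.

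Next I take expectations and insert the binomial central moments. With $\sigma^2=np(1-p)$, these are
\[
\mu_1=0,\quad \mu_2=\sigma^2,\quad \mu_3=\sigma^2(1-2p),\quad \mu_4=3\sigma^4+\sigma^2\bigl(1-6p(1-p)\bigr),\quad \mu_5=10\sigma^4(1-2p)+O(n).
\]
The orders are as follows. The zeroth-order term gives $np\log np$. The second-order term gives $\tfrac{\sigma^2}{2np}=\tfrac{1-p}{2}$, which matches the stated constant term. The third, fourth and fifth terms give, respectively,
\[
-\frac{(1-p)(1-2p)}{6np},\qquad \frac{(1-p)^2}{4np}+O(n^{-2}),\qquad -\frac{(1-p)^2(1-2p)}{2np}+O(n^{-2}).
\]
Summing these three contributions is the step where the stated coefficient $\frac{(1-p)(5-4p)}{6np}$ must be produced. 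I expect this bookkeeping to be the main obstacle. Before writing the final line I would recheck $\mu_4$ and $\mu_5$ against a known closed form, for instance the cumulant recursion $\kappa_{r+1}=p(1-p)\,d\kappa_r/dp$. I would also sanity-check the resulting coefficient numerically at a fixed $p$ and moderate $n$. My preliminary summation does not obviously reduce to $(5-4p)/6$, so if that check fails, the correct conclusion is that the stated coefficient needs amending rather than that the method is wrong.

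The remainder is handled exactly as in Lemma~\ref{lem:exp_bin_logx}. Split on the event $\{X\le \tfrac12 np\}$.
\begin{itemize}
\item On this event, Hoeffding's inequality gives probability at most $e^{-p^2n/2}$. The remainder is polynomially bounded there, since $|f^{(6)}(\xi)|\,|X-np|^6$ is $O(n^6)$ once $\xi\ge 1$ (and the boundary case $X=0$ is handled by bounding the whole expansion directly). The contribution is therefore $o(n^{-2})$.
\item On the complement, $\xi\ge \tfrac12 np$. Hence the remainder is at most $\frac{24}{720(np/2)^5}\EE[(X-np)^6]$, and since $\EE[(X-np)^6]=O(n^3)$, this is $O(n^{-2})$.
\end{itemize}
Combining the pieces gives $np\log np+\frac{1-p}{2}+\frac{c_p}{np}+O(n^{-2})$, where $c_p$ is the coefficient obtained in the second step, to be identified with the stated $\frac{(1-p)(5-4p)}{6}$.
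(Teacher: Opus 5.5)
Your method is sound: the direct Taylor expansion of $x\log x$ at $np$, the binomial central moments, and the Hoeffding split for the remainder all work. But the proposal stops short of the statement, and that last step cannot be completed, because the stated coefficient is wrong.

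There is also a slip in your fifth-order term. It equals
\[
-\frac{6}{120(np)^4}\Bigl(10\bigl(np(1-p)\bigr)^2(1-2p)+O(n)\Bigr)=-\frac{(1-p)^2(1-2p)}{2(np)^2}+O(n^{-3}),
\]
which is $O(n^{-2})$, exactly as your own count $n^{\lfloor 5/2\rfloor-4}$ predicts; you lost a factor $np$. So only the third- and fourth-order terms contribute at order $1/n$, and together they give
\[
-\frac{(1-p)(1-2p)}{6np}+\frac{(1-p)^2}{4np}=\frac{(1-p)(1+p)}{12np}.
\]
The correct expansion is therefore $\EE[X\log X]=np\log np+\frac{1-p}{2}+\frac{1-p^2}{12np}+O(n^{-2})$. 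This disagrees with the stated $\frac{(1-p)(5-4p)}{6np}$ for every $p\in(0,1)$.

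Two independent checks confirm this. As $p\to 0$ with $np=\lambda$, the corrected formula reproduces the known Poisson expansion $\lambda\log\lambda+\frac12+\frac{1}{12\lambda}$ (equivalently, the Poisson entropy $\frac12\log(2\pi e\lambda)-\frac{1}{12\lambda}+\cdots$). And for $n=8$, $p=\frac12$, the exact value gives $\EE[X\log X]-4\log 4-\frac14\approx 0.0205$, against $\frac{1}{64}\approx 0.0156$ from the corrected formula and $\frac{1}{16}=0.0625$ from the stated one.

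Expanding to fifth order is still the right call, but for the remainder rather than the coefficient. A fifth-order Lagrange remainder bounded in absolute value only gives $O(n^{-3/2})$, since $\EE|X-np|^5=\Theta(n^{5/2})$. Your sixth-order remainder, by contrast, is genuinely $O(n^{-2})$.

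The paper's proof contains the corresponding error. It writes $X\log X$ as $X$ times the fourth-order Taylor expansion of $\log X$ at $np$. It then discards $-\EE\bigl[\frac{X(X-np)^4}{4\xi^4}\bigr]$ on the grounds that $\EE[X(X-np)^4]=\Theta(n^2)$. But in $\EE[X(X-np)^4]=\EE[(X-np)^5]+np\,\EE[(X-np)^4]$, the factor $np$ on the second term was dropped. In fact $np\,\EE[(X-np)^4]=3(np)^3(1-p)^2+O(n^2)$, so the discarded term has leading part $-\frac{3(1-p)^2}{4np}$ and is of exact order $1/n$. Adding it to $\frac{(1-p)(5-4p)}{6np}$ gives exactly $\frac{(1-p)(1+p)}{12np}$, in agreement with your corrected expansion.

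So your suspicion is right: amend the statement, not the method. The mistake is harmless for the rest of the paper, since the lemma is only used in Lemma~\ref{lem:bin_coe_approx}, where all $O(1/n)$ terms are absorbed into $O(1)$.
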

	
	\begin{proof}
		Using the Taylor series in \eqref{eq:taylor_logx} with $a=0$, for all $x>0$, we have
		\begin{equation}
			\begin{aligned}
				x \log x = x \left( \log np + \frac{x - np}{np} - \frac{(x - np)^2}{2 (np)^2} + \frac{(x - np)^3}{3 (np)^3} - \frac{(x - np)^4}{4 \xi^4} \right),
			\end{aligned}
		\end{equation}
		where $\xi$ is between $np$ and $x$. Taking expectations on both sides yields
		\begin{equation}
			\begin{aligned}
				\EE \left[ X \log X \right] &= np \log np + \frac{\EE \left[ X(X - np) \right]}{np} - \frac{\EE \left[ X(X - np)^2 \right]}{2 (np)^2} + \frac{\EE \left[ X(X - np)^3 \right]}{3 (np)^3} - \EE \left[ \frac{X(X - np)^4}{4 \xi^4} \right].
			\end{aligned}
			\label{eq:ub_xlogx}
		\end{equation}
		Notice $X(X-np)^i = (X-np)^{i+1} + np(X-np)^i$ for any positive integer $i$, then it can be accordingly calculated that
		\begin{equation}
			\begin{aligned}
				&\EE \left[ X(X-np) \right] = \EE \left[ (X-np)^2 + np(X-np) \right] = np(1-p), \\
				&\EE \left[ X(X-np)^2 \right] = \EE \left[ (X-np)^3 + np(X-np)^2 \right] = np(1-p)(1-2p) + (np)^2(1-p), \\
				&\EE \left[ X(X-np)^3 \right] = \EE \left[ (X-np)^4 + np(X-np)^3 \right] = np(1-p)(1+ 3(n-2)p(1-p)) + (np)^2(1-p)(1-2p).
			\end{aligned}
		\end{equation}
		We plug all of the above expectations into \eqref{eq:ub_xlogx}, yielding
		\begin{equation}
			\begin{aligned}
				\EE \left[ X \log X \right] \leq& np \log np + \frac{np(1-p)}{np} - \frac{np(1-p)(1-2p) + (np)^2(1-p)}{2 (np)^2} \\
				& + \frac{np(1-p)(1+ 3(n-2)p(1-p)) + (np)^2(1-p)(1-2p)}{3 (np)^3} - \EE \left[ \frac{X(X - np)^4}{4 \xi^4} \right] \\
				=& np \log np + \frac{1 - p}{2} + \frac{(1-p)(5-4p)}{6 np} + O\left( \frac{1}{n^2} \right) - \EE \left[ \frac{X(X - np)^4}{4 \xi^4} \right].
			\end{aligned}\label{eq:exp_xlogx}
		\end{equation}
		Following a similar argument as proved in Lemma~\ref{lem:exp_bin_logx}, we are going to show that the expectation term scales as $O(\frac{1}{n^2})$. Specifically, we have
		\begin{equation}
			\begin{aligned}
				0 \leq \EE \left[ \frac{X (X - np)^4}{4 \xi^4} \right] &= \Pr\left[ X > \frac{1}{2} np \right] \EE \left[ \left. \frac{X (X - np)^4}{4 \xi^4} \right| X > \frac{1}{2}np \right] + \Pr\left[ X \leq \frac{1}{2} np \right] \EE \left[ \left. \frac{X (X - np)^4}{4 \xi^4} \right| X \leq \frac{1}{2}np \right] \\
				&\leq \EE \left[ \left. \frac{X (X - np)^4}{4 \xi^4} \right| X > \frac{1}{2}np \right] + o(\frac{1}{n^2}) \\
				&\leq \frac{4}{(np)^4} \EE \left[ \left. X (X - np)^4 \right| X > \frac{1}{2}np \right] + o(\frac{1}{n^2}) \\
				&\leq \frac{1}{1 - e^{-\frac{1}{2} p^2 n}} \frac{4}{(np)^4} \EE \left[ X (X - np)^4 \right] + o(\frac{1}{n^2}) \\
				&= \frac{1}{1 - e^{-\frac{1}{2} p^2 n}} \frac{4}{(np)^4} \EE \left[ X (X - np)^4 \right] + o(\frac{1}{n^2}) \\
				&= \Theta(\frac{1}{n^2}),
			\end{aligned}
		\end{equation}
		where
		\begin{equation}
			\begin{aligned}
				\EE \left[ X (X - np)^4 \right] &= \EE \left[ (X - np)^5 \right] + np \EE \left[ (X - np)^4 \right] \\
				&= np(1-p)(1-2p)(1 + 2p(1-p)(5n-6)) + np(1-p)(1+ 3(n-2)p(1-p)) \\
				&= \Theta(n^2).
			\end{aligned}
		\end{equation}
		Combining this result with \eqref{eq:exp_xlogx} completes the proof.
	\end{proof}
	
	\begin{lemma} \label{lem:log_frac}
		If $a, b, c, d$ are all constants, where $a > 0, b \geq 0, c > 0, d \geq 0$ and $ad \not= bc$, then $\log \frac{an + b}{cn + d}$ scales as $\log \frac{a}{c} + O\left( \frac{1}{n} \right)$.
	\end{lemma}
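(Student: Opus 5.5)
The plan is to rewrite the ratio so that its $n$-dependence sits in a single correction term of size $1/n$, and then expand the logarithm. For $n$ large enough that $cn+d>0$ and $an+b>0$ (true for all $n \ge 1$ under the sign assumptions, since $a,c>0$ and $b,d\ge 0$), factor out the leading coefficients:
\begin{equation*}
\log \frac{an+b}{cn+d} = \log \frac{a}{c} + \log\left(1 + \frac{b}{an}\right) - \log\left(1+\frac{d}{cn}\right).
\end{equation*}
This isolates the constant $\log\frac{a}{c}$ from two correction terms, each of the form $\log(1+x_n)$ with $x_n = \beta/n$ for a fixed constant $\beta \ge 0$.

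Next, bound each correction term using the elementary inequality $0 \le \log(1+x) \le x$ for $x \ge 0$. This gives
\begin{equation*}
0 \le \log\left(1+\frac{b}{an}\right) \le \frac{b}{an}, \qquad 0 \le \log\left(1+\frac{d}{cn}\right) \le \frac{d}{cn}.
\end{equation*}
Consequently,
\begin{equation*}
\left| \log \frac{an+b}{cn+d} - \log\frac{a}{c}\right| \le \frac{1}{n}\max\left(\frac{b}{a}, \frac{d}{c}\right),
\end{equation*}
which is $O\left(\frac{1}{n}\right)$ with an explicit constant. A Taylor expansion sharpens this to $\log\frac{a}{c} + \frac{bc-ad}{acn} + O\left(\frac{1}{n^2}\right)$.

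The hypothesis $ad \ne bc$ is only needed to guarantee that the $\frac{1}{n}$ term is genuinely present, i.e. that its coefficient $\frac{bc-ad}{ac}$ is nonzero. It does not affect the upper bound; if $ad = bc$, the expression equals $\log\frac{a}{c}$ exactly. No step is a real obstacle. The only care needed is confirming positivity of the arguments, which follows directly from the sign assumptions.
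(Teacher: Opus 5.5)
Your proof is correct and follows the same basic strategy as the paper. Both arguments factor out $\frac{a}{c}$ and then show that the leftover logarithm is $O(\frac{1}{n})$. The difference is only in execution: the paper merges the correction into the single term $\log\bigl(1 + \frac{b/a - d/c}{n + d/c}\bigr)$ and Taylor-expands it, whereas you split it into two terms with nonnegative arguments and use $0 \le \log(1+x) \le x$. Your version yields the explicit bound $\frac{1}{n}\max\bigl(\frac{b}{a},\frac{d}{c}\bigr)$ for every $n \ge 1$ and avoids any Taylor-remainder bookkeeping; you are also right that $ad \neq bc$ plays no role in the upper bound.
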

	
	\begin{proof}
		\begin{equation}
			\begin{aligned}
				\log \frac{an + b}{cn + d} &= \log \frac{a}{c} + \log \left( 1 + \frac{\frac{b}{a} - \frac{d}{c}}{n + \frac{d}{c}} \right) \\
				&\overset{(a)}{=} \log \frac{a}{c} + \frac{\frac{b}{a} - \frac{d}{c}}{n + \frac{d}{c}} - \frac{(\frac{b}{a} - \frac{d}{c})^2}{2(n + \frac{d}{c})^2} + o\left( \frac{1}{(n + \frac{d}{c})^3} \right) \\
				&= \log \frac{a}{c} + O\left( \frac{1}{n} \right)
			\end{aligned}
		\end{equation}
		where $(a)$ holds by applying $\log (1+x) = x -\frac{x^2}{2} + O(x^3) $ with $x = \frac{\frac{b}{a} - \frac{d}{c}}{n + \frac{d}{c}}$.
	\end{proof}
	
	\begin{lemma}[Bounds for a binomial coefficient {\cite[Lemma~7]{macwilliams1977theory}}] \label{lem:bin_coe_bound}
		If $1 \leq k \leq n - 1$, then
		\begin{equation}
			\begin{aligned}
				\sqrt{\frac{n}{8 k (n - k)}} e^{n H\left( \frac{k}{n} \right)} \leq \begin{psmallmatrix} n \\ k \end{psmallmatrix} \leq \sqrt{\frac{n}{2 \pi k (n - k)}} e^{n H\left( \frac{k}{n} \right)}.
			\end{aligned}
		\end{equation}
	\end{lemma}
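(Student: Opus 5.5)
The plan is to derive both bounds from Stirling-type estimates on $k!$ via the Gamma function, after reducing to a symmetric expression. First I would dispose of the boundary cases $k=0$ and $k=n$, which are excluded by the hypothesis $1\le k\le n-1$, and observe that both sides are invariant under $k\mapsto n-k$. Writing $\binom{n}{k}=\frac{n!}{k!(n-k)!}$ and $e^{nH(k/n)}=\frac{n^n}{k^k(n-k)^{n-k}}$, the claim is equivalent to
\begin{align*}
\sqrt{\tfrac{n}{8k(n-k)}}\le \frac{n!\,k^k(n-k)^{n-k}}{n^n\,k!\,(n-k)!}\le \sqrt{\tfrac{n}{2\pi k(n-k)}}.
\end{align*}

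For the estimates themselves I would use Robbins' refinement of Stirling's formula,
\begin{align*}
\sqrt{2\pi j}\,(j/e)^j e^{1/(12j+1)}\le j!\le \sqrt{2\pi j}\,(j/e)^j e^{1/(12j)},
\end{align*}
valid for all $j\ge1$. Substituting into the ratio, the powers of $e$ from $(j/e)^j$ cancel exactly since $n=k+(n-k)$, leaving
\begin{align*}
\sqrt{\tfrac{n}{2\pi k(n-k)}}\exp\!\Big(\epsilon_n-\epsilon_k-\epsilon_{n-k}\Big),
\end{align*}
where $\epsilon_j\in[\frac{1}{12j+1},\frac{1}{12j}]$.

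The upper bound then needs $\epsilon_n\le\epsilon_k+\epsilon_{n-k}$. This holds because $\frac{1}{12n}\le\frac{1}{12k+1}$ whenever $k<n$, since $12k+1\le 12n$; the term $\epsilon_{n-k}>0$ only helps.

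The lower bound requires $\exp(\epsilon_n-\epsilon_k-\epsilon_{n-k})\ge\sqrt{2\pi/8}=\sqrt{\pi}/2\approx0.886$. Using the worst case, the exponent is at least $-\frac{1}{12k}-\frac{1}{12(n-k)}\ge -\frac{1}{6}$ when $k,n-k\ge1$, and $e^{-1/6}\approx0.846$. This falls short of $0.886$, which is the main obstacle.

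To close the gap I would keep the positive $\epsilon_n\ge\frac{1}{12n+1}$ term and split into cases. If $\min(k,n-k)\ge2$, the exponent is at least $-\frac1{24}-\frac1{12}=-\frac18$, giving $e^{-1/8}\approx0.882$, still marginally short. So the real fix is to treat small $\min(k,n-k)$ directly. For $k=1$ (and symmetrically $k=n-1$) the ratio is exactly $(1-\frac1n)^{n-1}\ge e^{-1}\cdot\frac{n}{n-1}$ times the factor, which can be checked directly against $\sqrt{n/(8(n-1))}$. For $k=2$ the check is a similar explicit one. For $\min(k,n-k)\ge3$, the exponent is at least $-\frac1{36}-\frac1{36}=-\frac1{18}$, and $e^{-1/18}\approx0.946>0.886$.

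Alternatively, I could simply cite the classical MacWilliams–Sloane argument, but the case split above is what I would carry out.
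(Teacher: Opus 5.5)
The paper gives no proof of this lemma; it only cites MacWilliams--Sloane. Your route (Stirling with Robbins' error terms, then a small-case check) is the standard one. Your reduction, the cancellation to $\sqrt{n/(2\pi k(n-k))}\,e^{\epsilon_n-\epsilon_k-\epsilon_{n-k}}$, and the upper bound are all correct.

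The lower bound, however, is not established, because the inequality you use for $k=1$ is false. You need $(1-\frac{1}{n})^{n-1}\ge\sqrt{n/(8(n-1))}$, and you assert $(1-\frac{1}{n})^{n-1}\ge e^{-1}\frac{n}{n-1}$. That is equivalent to $(1-\frac{1}{n})^{n}\ge e^{-1}$, which fails for every $n\ge 2$ since $\log(1-x)<-x$; at $n=2$ it reads $\frac{1}{2}$ versus $2/e\approx 0.74$. The correct bound $(1-\frac{1}{n})^{n-1}\ge e^{-1}$ only beats $\sqrt{n/(8(n-1))}$ when $8(n-1)\ge e^{2}n$, i.e.\ $n\ge 14$. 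There is also no slack for a lossy estimate: at $(n,k)=(2,1)$ the lower bound holds with equality, since $\binom{2}{1}=2=\sqrt{2/8}\,e^{2\log 2}$. Your $k=2$ check is asserted but not carried out.

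Both special cases become unnecessary once you fix two slips.
\begin{itemize}
\item If $\min(k,n-k)\ge 2$, then both $k$ and $n-k$ are at least $2$. The exponent is therefore at least $-\frac{1}{24}-\frac{1}{24}=-\frac{1}{12}$, not $-\frac{1}{8}$, and $e^{-1/12}\approx 0.920>\sqrt{\pi}/2\approx 0.886$.
\item If $k=1$ (the case $n-k=1$ is symmetric) and $n\ge 3$, keep the term $\epsilon_n\ge\frac{1}{12n+1}$. The exponent is then at least $\frac{1}{12n+1}-\frac{1}{12}-\frac{1}{12(n-1)}=-\frac{1}{12}-\frac{13}{(12n+1)(12n-12)}$. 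This is increasing in $n$ and equals $\frac{1}{37}-\frac{1}{12}-\frac{1}{24}\approx -0.098$ at $n=3$, so the factor is at least $e^{-0.098}\approx 0.907>0.886$.
\end{itemize}
The only remaining pair is $(n,k)=(2,1)$, the equality case checked above. With these changes your argument becomes a complete proof.
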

	
	\begin{lemma} \label{lem:bin_coe_approx}
		If $X$ is a binomial random variable following $X \sim \Binomial(n, p)$, then
		\begin{equation}
			\begin{aligned}
				\EE \left[ \log \begin{psmallmatrix} n \\ X \end{psmallmatrix} \right] = n H(p) - \frac{1}{2} \log n + O \left( 1 \right).
			\end{aligned}
		\end{equation}
	\end{lemma}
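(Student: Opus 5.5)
We plan to sandwich $\log\binom{n}{X}$ between the two sides of Lemma~\ref{lem:bin_coe_bound}, take expectations, and evaluate each resulting term with the Taylor-type lemmas already established (Lemmas~\ref{lem:exp_bin_logx}, \ref{lem:exp_bin_log-x} and \ref{lem:exp_bin_xlogx}). For $1\le X\le n-1$, Lemma~\ref{lem:bin_coe_bound} gives
\begin{equation*}
\log\begin{psmallmatrix} n \\ X \end{psmallmatrix} = nH\left(\tfrac{X}{n}\right) + \tfrac12\log n - \tfrac12\log X - \tfrac12\log(n-X) + O(1),
\end{equation*}
with the $O(1)$ term lying between $-\frac12\log 8$ and $-\frac12\log(2\pi)$, hence uniformly bounded. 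The boundary events $X\in\{0,n\}$ have probability $p^n+(1-p)^n$, which is exponentially small. On these events $\log\binom{n}{X}=0$, while the right-hand-side terms are $O(n)$ at worst once we use the convention $0\log 0=0$ and replace $\log X$ by $\log(X+1)$. Their contribution to the expectation is therefore $o(1)$.

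Next we expand $nH(X/n) = n\log n - X\log X - (n-X)\log(n-X)$. Lemma~\ref{lem:exp_bin_xlogx} gives $\EE[X\log X] = np\log(np) + \frac{1-p}{2} + O(1/n)$. Applying the same lemma to $n-X\sim\Binomial(n,1-p)$ (Lemma~\ref{lem:bin_coe_pro}) gives $\EE[(n-X)\log(n-X)] = n(1-p)\log(n(1-p)) + \frac p2 + O(1/n)$. Subtracting these from $n\log n$, the $n\log n$ terms cancel and we obtain $\EE[nH(X/n)] = nH(p) - \frac12 + O(1/n)$.

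For the logarithmic terms, Lemma~\ref{lem:exp_bin_logx} (with $a=0$, or $a=1$ to avoid $\log 0$, which changes the result only by $O(1/n)$) gives $\EE[\log X] = \log(np) + O(1/n)$. Lemma~\ref{lem:exp_bin_log-x} similarly gives $\EE[\log(n-X)] = \log(n(1-p)) + O(1/n)$. Hence
\begin{equation*}
\tfrac12\log n - \tfrac12\EE[\log X + \log(n-X)] = \tfrac12\log n - \log n + O(1) = -\tfrac12\log n + O(1).
\end{equation*}
Summing the three pieces yields $nH(p) - \frac12\log n + O(1)$.

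The main obstacle is the endpoints $X=0$ and $X=n$, where $\log X$ and $\log(n-X)$ blow up and Lemma~\ref{lem:bin_coe_bound} does not apply. We would handle them as sketched above: restrict the expectation to $1\le X\le n-1$ and bound the complementary event's contribution by $n\cdot(p^n+(1-p)^n)=o(1)$. We would also have to check that the expansion of Lemma~\ref{lem:exp_bin_xlogx} remains valid with the $0\log0$ convention, which follows from the same tail-splitting argument used in its proof.
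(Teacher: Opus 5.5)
Your proposal is correct and follows essentially the same route as the paper. Both sandwich $\log\binom{n}{X}$ between the two sides of Lemma~\ref{lem:bin_coe_bound}, take expectations, and evaluate $\EE[X\log X]$, $\EE[\bar X\log \bar X]$, $\EE[\log X]$ and $\EE[\log \bar X]$ via Lemmas~\ref{lem:exp_bin_logx} and \ref{lem:exp_bin_xlogx}, so that the $n\log n$ terms cancel and only $nH(p)-\frac12\log n$ plus bounded constants remain. Your explicit handling of the endpoint events $X\in\{0,n\}$ is something the paper's proof silently skips: there Lemma~\ref{lem:bin_coe_bound} does not apply, and $\EE[\log X]$ with $a=0$ is actually $-\infty$. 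Your version is therefore slightly more rigorous.
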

	\begin{proof}
		First, by Lemma~\ref{lem:bin_coe_bound}, the lower bound of the expected binomial coefficient can be calculated as
		\begin{equation}
			\begin{aligned}
				\EE \left[ \log \begin{psmallmatrix} n \\ X \end{psmallmatrix} \right] \geq& \EE \left[ \frac{1}{2} \left( \log n - \log X - \log \bar{X} - \log 8 \right) + n H(\frac{X}{n}) \right] \\
				=& \frac{1}{2} \log n - 3 \log 2 - \EE \left[ \frac{1}{2} \left( \log X + \log \bar{X} \right) + X \log \frac{X}{n} + \bar{X} \log \frac{\bar{X}}{n} \right] \\
				=& \frac{1}{2} \log n - 3 \log 2 - \EE \left[ \frac{1}{2} \left( \log X + \log \bar{X} \right) + X \log X + \bar{X} \log \bar{X} - X \log n - \bar{X} \log n \right] \\
				=& \frac{1}{2} \log n - 3 \log 2 - \left( \frac{1}{2} \left( \log np - \frac{1-p}{2 np} + \log n(1 - p) - \frac{p}{2 n(1 - p)} + O \left( \frac{1}{n^2} \right) \right) \right. \\
				& + np \log np + \frac{1 - p}{2} + \frac{(1-p)(5-4p)}{6 np} + O \left( \frac{1}{n^2} \right) \\
				& + n(1 - p) \log n(1 - p) + \frac{p}{2} + \frac{p(1+4p)}{6 n (1 - p)} + O \left( \frac{1}{n^2} \right) \\
				& - np \log n - n(1 - p) \log n \bigg) \\
				=& - \frac{1}{2} \log np (1 - p) - 3 \log 2 - \frac{1}{2} - np \log p - n(1 - p) \log (1 - p) + O \left( \frac{1}{n} \right) \\
				=& n H(p) - \frac{1}{2} \log n - \frac{1}{2} \log p (1 - p) - 3 \log 2 - \frac{1}{2} + O \left( \frac{1}{n} \right).
			\end{aligned}
		\end{equation}
		Likewise, the upper bound can be obtained as
		\begin{equation}
			\begin{aligned}
				\EE \left[ \log \begin{psmallmatrix} n \\ X \end{psmallmatrix} \right] \leq n H(p) - \frac{1}{2} \log n - \frac{1}{2} \log p (1 - p) - \log 2\pi - \frac{1}{2} + O \left( \frac{1}{n} \right).
			\end{aligned}
		\end{equation}
		Combining the lower and upper bounds completes the proof.	
	\end{proof}
	
	\begin{table*}[t]
		\centering
		\caption{Expressions of $W$ under Examples\ref{eg:Bernoulli} and \ref{eg:Gaussian}}
		\begin{threeparttable}
			\begin{tabular}{|c|c|c|} \hline
				\multicolumn{2}{|c|}{Setting} & \multirow{2}{*}{Bound} \\ \cline{1-2}
				$(m, k)$ & Expression of $W$ & \\ \hline\hline
				$(m, k)$ & $\frac{1}{n} \sum_{i=1}^{k} \sum_{j=1}^{\frac{n}{k}} \hat{Z}^{(i)}_{U^{(i)}_j}$ & IPCIMI \\ \hline
				$(m, 1)$ & $\frac{1}{n} \sum_{i=1}^{n} \hat{Z}_{U_i}$ & L$m$O-CMI \\ \hline
				$(1, 1)$ & $\frac{1}{n} \left( \sum_{i=1}^{n+1} \hat{Z}_i - \hat{Z}_U \right)$ & LOO-CMI \cite{rammal2022leave} \\ \hline
				$(n, n)$ & $\frac{1}{n} \sum_{i=1}^{n} \hat{Z}_{i + R_i n}$ & ICIMI \cite{zhou2022individually} \\ \hline
				$(\infty, k)$ & $\frac{1}{n} \sum_{i=1}^{n} Z_i$ & IPMI \\ \hline
				$(\infty, 1)$ & $\frac{1}{n} \sum_{i=1}^{n} Z_i$ & MI \cite{xu2017information} \\ \hline
				$(\infty, n)$ & $\frac{1}{n} \sum_{i=1}^{n} Z_i$ & IMI \cite{bu2020tightening} \\ \hline
				$(m, m)$ & $\frac{1}{n} \sum_{i=1}^{m} \left( \sum_{j=1}^{\frac{n}{m} + 1} \hat{Z}^{(i)}_j - \hat{Z}^{(i)}_{U^{(i)}} \right)$ & LOFO-CMI \\ \hline
			\end{tabular} 
		\end{threeparttable} 
		\label{tab:expression_W}
	\end{table*}
	
	In the following, we present the calculation details of each bound under Example~\ref{eg:Bernoulli}. For ease of reference, the expressions of the hypothesis $W$ for different bounds are summarized in Table~\ref{tab:expression_W}.
	
	\subsection{True Generalization Error} \label{subapp:true-gen-err}
	Under the Bernoulli example in Example~\ref{eg:Bernoulli}, the ERM solution is $W = \frac{1}{n} \sum_{i=1}^{n} Z_i$ and the expected generalization error is given by
	\begin{equation}
		\begin{aligned}
			\gen = \EE \left[ \left( W - Z \right)^2 - \frac{1}{n} \sum_{i=1}^{n} \left( W - Z_i \right)^2 \right],
		\end{aligned}
	\end{equation}
	where $Z \sim \Bernoulli(p)$ is a Bernoulli random variable independent of $Z_i$'s. Collecting all the facts that $W \indep Z$, $\EE[W] = \EE[Z] = p$ and $\EE [Z_i^2] = \EE [Z^2] = p$, we further calculate the above expression as
	\begin{equation}
		\begin{aligned}
			\gen &= \EE \left[ W^2 \right] - 2 \EE \left[ W \right] \EE \left[ Z \right] + \EE \left[ Z^2 \right] - \frac{1}{n} \sum_{i=1}^{n} \EE \left[ W^2 \right] + \frac{2}{n} \sum_{i=1}^{n} \EE \left[ W Z_i \right] - \frac{1}{n} \sum_{i=1}^{n} \EE \left[ Z_i^2 \right] \\
			&= -2p^2 + \frac{2}{n} \sum_{i=1}^{n} \EE \left[ W Z_i \right] \\
			&= -2p^2 + \frac{2}{n} \sum_{i=1}^{n} \EE \left[ \frac{1}{n} \sum_{j=1}^{n} Z_j Z_i \right] \\
			&= -2p^2 + \frac{2}{n^2} \sum_{i=1}^{n} \sum_{j=1}^{n} \EE \left[ Z_j Z_i \right] \\
			&= -2p^2 + \frac{2}{n^2} \sum_{i=1}^{n} \left( \sum_{j \not= i} \EE \left[ Z_i \right] \EE \left[ Z_j \right] + \EE \left[ Z_i^2 \right] \right) \\
			&= -2p^2 + \frac{2}{n} \left( (n-1)p^2 + p \right) \\
			&= \frac{2p(1-p)}{n}.
		\end{aligned}
	\end{equation}
	
	\subsection{MI Case} \label{subapp:cal_MI}
	We first evaluate the MI bound in \eqref{eq:MI}, where the hypothesis is $W = \frac{1}{n} \sum_{i=1}^{n} Z_i$. Let $w$ and $z_{[n]} = (z_1, \ldots, z_n)$ be the realizations of $W$ and $Z_{[n]}$. Let $x = \sum_{i=1}^{n} z_i$, where $x$ in an integer from $[0, n]$, then we can obtain the joint, marginal, conditional probabilities as follows:
	\begin{align}
		&P_{W | Z_{[n]}}(\frac{x}{n} | z_{[n]}) = 1, \\
		&P_{W, Z_{[n]}}(\frac{x}{n}, z_{[n]}) = p^{x} (1 - p)^{n - x}, \\
		&P_{W}(\frac{x}{n}) = B_{n,x}(p).
	\end{align}
	By definition of mutual information, $I( W; Z_{[n]} )$ can be calculated as
	\begin{equation}
		\begin{aligned}
			I\left( W; Z_{[n]} \right) =& \sum_{w, z_{[n]}} P_{W, Z_{[n]}}(w, z_{[n]}) \log \frac{P_{W | Z_{[n]}}(w | z_{[n]})}{P_{W}(w)} \\
			=& - \sum_{x=0}^{n} B_{n,x}(p) \log B_{n,x}(p).
		\end{aligned}
	\end{equation}
	Let $x$ be the realization of a Binomial random variable $X \sim \Binomial(n, p)$ and let $\bar{X} = n - X \sim \Binomial(n, 1 - p)$, we can rewritten $I( W; Z_{[n]} )$ as
	\begin{equation}
		\begin{aligned}
			I\left( W; Z_{[n]} \right) = - \EE \left[ \log \begin{psmallmatrix} n \\ X \end{psmallmatrix} + X \log p + \bar{X} \log (1 - p) \right],
		\end{aligned}\label{eq:Ber_MI}
	\end{equation}
	plugging which back to the standard MI bound yields
	\begin{equation}
		\begin{aligned}
			\sqrt{\frac{1}{2n} I\left( W; Z_{[n]} \right)} &= \sqrt{ - \frac{1}{2n} \EE \left[ \log \begin{psmallmatrix} n \\ X \end{psmallmatrix} + X \log p + \bar{X} \log (1 - p) \right]} \\
			&\overset{(a)}{=} \sqrt{\frac{1}{2n} \left( - n H(p) + \frac{1}{2} \log n + O \left( 1 \right) - np \log p - n(1 - p) \log (1 - p) \right)} \\
			&= \sqrt{- \frac{1}{2} H(p) + \frac{1}{4} \frac{\log n}{n} + O \left( \frac{1}{n} \right) + \frac{1}{2} H(p)} \\
			&= \sqrt{\frac{1}{4} \frac{\log n}{n} + O \left( \frac{1}{n} \right)},
		\end{aligned}
	\end{equation}
	where $(a)$ is by Lemma~\ref{lem:bin_coe_approx}. Thus, the MI bound scales as $O\left( \sqrt{\frac{\log n}{n}} \right)$.
	
	\subsection{IMI Case} \label{subapp:cal_IMI}
	This subsection evaluates the IMI bound in \eqref{eq:IMI}. For the realizations $w$ and $z_i$, let $x = \sum_{j=1}^{n} z_j - z_i$, where $x$ in an integer from $[0, n - 1]$, then we have
	\begin{align}
		&P_{W | Z_i}(\frac{x + z_i}{n} | z_i) = B_{n-1,x}(p), \\
		&P_{W, Z_i}(\frac{x + z_i}{n}, z_i) = B_{n-1,x}(p) p^{z_i} (1-p)^{1-z_i}, \\
		&P_{W}(\frac{x + z_i}{n}) = B_{n, x + z_i}(p).
	\end{align}
	With the above elements obtained, we can further calculate the mutual information according to its definition:
	\begin{equation}
		\begin{aligned}
			I\left( W; Z_i \right) =& \sum_{w, z_i} P_{W, Z_i}(w, z_i) \log \frac{P_{W | Z_i}(w | z_i)}{P_{W}(w)} \\
			=& \sum_{w} P_{W, Z_i}(w, 0) \log \frac{P_{W | Z_i}(w | 0)}{P_{W}(w)} + \sum_{w} P_{W, Z_i}(w, 1) \log \frac{P_{W | Z_i}(w | 1)}{P_{W}(w)} \\
			=& \sum_{x=0}^{n-1} (1-p) B_{n-1,x}(p) \log \frac{B_{n-1,x}(p)}{B_{n,x}(p)} + \sum_{x=0}^{n-1} p B_{n-1,x}(p) \log \frac{B_{n-1,x}(p)}{B_{n,x+1}(p)} \\
			=& (1 - p) \sum_{x=0}^{n-1} B_{n-1,x}(p) \log \frac{n - x}{n (1 - p)} + p \sum_{x=0}^{n-1} B_{n-1,x}(p) \log \frac{x + 1}{n p}.
		\end{aligned}
	\end{equation}
	Let $X \sim \Binomial(n - 1, p)$ and $\bar{X} = n - 1 - X \sim \Binomial(n - 1, 1 - p)$, after arrangement and simplification we arrive at
	\begin{equation}
		\begin{aligned}
			I\left( W; Z_i \right) = H(p) - \log n + p \EE \left[ \log (X + 1) \right] + (1 - p) \EE \left[ \log (\bar{X} + 1) \right].
		\end{aligned}\label{eq:Ber_IMI}
	\end{equation}
	Next, applying the approximation introduced in Lemma~\ref{lem:exp_bin_logx} yields
	\begin{equation}
		\begin{aligned}
			I\left( W; Z_i \right) =& H(p) - \log n + p \log ((n - 1) p + 1) - \frac{(n - 1) (1 - p) p}{2 \left( (n - 1) p + 1 \right)^2} \\
			&+ (1 - p) \log ((n - 1) (1 - p) + 1) - \frac{(n - 1) (1 - p) p}{2 \left( (n - 1) (1 - p) + 1 \right)^2} + O\left(\frac{1}{n^2}\right) \\
			=& H(p) + p \log \frac{np + 1 - p}{n} + (1 - p) \log \frac{(1 - p) n + p}{n} + O\left( \frac{1}{n} \right).
		\end{aligned}
	\end{equation}
	Then, it follows by Lemma~\ref{lem:log_frac} that
	\begin{equation}
		\begin{aligned}
			I\left( W; Z_i \right) =& H(p) + p \log p + (1 - p) \log (1 - p) + O\left( \frac{1}{n} \right) \\
			=& O\left( \frac{1}{n} \right),
		\end{aligned}
	\end{equation}
	which implies that the IMI bound \eqref{eq:IMI} is of order $O\left( \frac{1}{\sqrt{n}} \right)$.
	
	\subsection{IPMI Case} \label{subapp:cal_IPMI}
	We continue to evaluate the IPMI bound in \eqref{eq:IPMI}, which is a general case of the MI and IMI bound. Under the partitioned setting, the hypothesis $W = \frac{1}{n} \sum_{i=1}^{n} Z_i$ can be rewritten as $W = \frac{1}{n} \sum_{i=1}^{k} \sum_{j=1}^{\frac{n}{k}} Z^{(i)}_j$. For all $i \in [k]$ and $j \in [\frac{n}{k}]$, let $w$ and $z^{(i)}_j $ be the realization of $W$ and $Z^{(i)}_j$, respectively. Furthermore, let $x = \sum_{j=1}^{\frac{n}{k}} z^{(i)}_j$ and $y = n w - x$, with $x \in [0, \frac{n}{k}]$ and $y \in [0, n - \frac{n}{k}]$. Then, one can calculate the following elemental probabilities as
	\begin{align}
		&P_{W | Z_{[\frac{n}{k}]}^{(i)}}(\frac{x + y}{n} | z_{[\frac{n}{k}]}^{(i)}) = B_{n - \frac{n}{k}, y} (p), \\
		&P_{W, Z_{[\frac{n}{k}]}^{(i)}}(\frac{x + y}{n}, z_{[\frac{n}{k}]}^{(i)}) = B_{n - \frac{n}{k}, y} (p) p^{x} (1-p)^{\frac{n}{k}-x}, \\
		&P_{W}(\frac{x + y}{n}) = B_{n, x+y} (p).
	\end{align}
	with which $I( W; Z_{[\frac{n}{k}]}^{(i)} )$ can be further obtained as
	\begin{equation}
		\begin{aligned}
			I\left( W; Z_{[\frac{n}{k}]}^{(i)} \right) =& \sum_{y=0}^{n - \frac{n}{k}} \sum_{z_{[\frac{n}{k}]}^{(i)}} P_{W, Z_{[\frac{n}{k}]}^{(i)}}(\frac{x + y}{n}, z_{[\frac{n}{k}]}^{(i)}) \log \frac{P_{W | Z_{[\frac{n}{k}]}^{(i)}}(\frac{x + y}{n} | z_{[\frac{n}{k}]}^{(i)})}{P_{W}(\frac{x + y}{n})}  \\
			=& \sum_{y=0}^{n - \frac{n}{k}} B_{n - \frac{n}{k}, y} (p) \sum_{x=0}^{\frac{n}{k}} B_{\frac{n}{k}, x} (p) \log \frac{\begin{psmallmatrix} n - \frac{n}{k} \\ y \end{psmallmatrix}}{\begin{psmallmatrix} n \\ x + y \end{psmallmatrix} p^x (1 - p)^{\frac{n}{k} - x}} \\
			=& \EE \left[ \log \begin{psmallmatrix} n - \frac{n}{k} \\ Y \end{psmallmatrix} - \log \begin{psmallmatrix} n \\ X + Y \end{psmallmatrix} - X \log p - \bar{X} \log (1 - p) \right],
		\end{aligned}
	\end{equation}
	where $X \sim \Binomial(\frac{n}{k}, p)$, $Y \sim \Binomial(n - \frac{n}{k}, p)$ and $\bar{X} = \frac{n}{k} - X \sim \Binomial(\frac{n}{k}, 1 - p)$.
	
	Using Lemmas~\ref{lem:bin_coe_pro} and \ref{lem:bin_coe_approx}, the above mutual information can be approximated as
	\begin{equation}
		\begin{aligned}
			I\left( W; Z_{[\frac{n}{k}]}^{(i)} \right) =& (n - \frac{n}{k}) H(p) - \frac{1}{2} \log (n - \frac{n}{k}) - n H(p) + \frac{1}{2} \log n + O\left( 1 \right) - \frac{n}{k} p \log p - \frac{n}{k} (1 - p) \log (1 - p) \\
			=& - \frac{n}{k} H(p) + \frac{1}{2} \log \frac{k}{k - 1} + O\left( 1 \right) + \frac{n}{k} H(p) \\
			=& \frac{1}{2} \log \frac{k}{k - 1} + O\left( 1 \right).
		\end{aligned}
	\end{equation}
	Since $k$ is a constant, $I( W; Z_{[\frac{n}{k}]}^{(i)} )$ is of a constant order $O\left( 1 \right)$, and it is simple to further verify that the IPMI bound in \eqref{eq:IPMI} is of order $O\left( \frac{1}{\sqrt{n}} \right)$. 
	%If $k \propto n$, by Taylor expansion we have $I( W; Z_{[\frac{n}{k}]}^{(i)} ) = \frac{1}{2(k - 1)} + o(\frac{1}{n}) + O\left( C \right)$, which converges to a constant. Since the factor $\frac{k}{2n}$ under the square root of \eqref{eq:IPMI} is also a constant, the resultant IPMI bound is of a constant order. 
	
	\subsection{LOO-CMI Case} \label{subapp:cal_LOO}
	\label{subsubsec: LOO-case}
	Under this case, the learner is simplified as $W = \frac{1}{n} (\sum_{i=1}^{n+1} \hat{Z}_i - \hat{Z}_U)$, and by Bayes rule we get
	\begin{equation}
		\begin{aligned}
			P_{U | W, \hat{Z}_{[n+1]}}(u | w, \hat{z}_{[n+1]}) = \frac{P_{W| \hat{Z}_{[n+1]}, U}(w| \hat{z}_{[n+1]}, u) P_{\hat{Z}_{[n+1]}}(\hat{z}_{[n+1]}) P_{U}(u)}{\sum_{u} P_{W, \hat{Z}_{[n+1]}, U}(w, \hat{z}_{[n+1]}, u)}.
		\end{aligned}
	\end{equation}
	For $w = \frac{x}{n}$, it is easy to verify that the number of $1$'s in $\hat{z}_{[n+1]}$ is either $x$ when $\hat{z}_u = 0$ or $x+1$ when $\hat{z}_u = 1$ since otherwise $P_{W| \hat{Z}_{[n+1]}, U}$ is zero. When $\hat{z}_u = 0$ and $\sum_{i = 1}^{n+1} \hat{z}_i = x$, we have
	\begin{equation} 
		\begin{aligned}
			P_{U | W, \hat{Z}_{[n+1]}}(u | \frac{x}{n}, \hat{z}_{[n+1]}) &= \frac{\frac{1}{n+1} p^x (1-p)^{n+1-x}}{\frac{n+1-x}{n+1} p^x (1-p)^{n+1-x}} \\
			&= \frac{1}{n+1-x}.
			\label{eq:pdf-zu-0}
		\end{aligned}
	\end{equation}
	Likewise, when $\hat{z}_u = 1$ and $\sum_{i = 1}^{n+1} \hat{z}_i = x+1$, it holds that:
	\begin{equation}
		\begin{aligned}
			P_{U | W, \hat{Z}_{[n+1]}}(u | \frac{x}{n}, \hat{z}_{[n+1]}) = \frac{1}{x+1}.
			\label{eq:pdf-zu-1}
		\end{aligned}
	\end{equation}
	For $w = \frac{x}{n}$, combining \eqref{eq:pdf-zu-0} and \eqref{eq:pdf-zu-1} gives
	\begin{equation}
		\begin{aligned}
			P_{U | W, \hat{Z}_{[n+1]}}(u | \frac{x}{n}, \hat{z}_{[n+1]}) = \begin{cases}
				\frac{1}{n+1-x}, & \text{if $\hat{z}_u = 0$ and $\sum_{i = 1}^{n+1} \hat{z}_i = x$} \\
				\frac{1}{x+1}, & \text{if $\hat{z}_u = 1$ and $\sum_{i = 1}^{n+1} \hat{z}_i = x+1$} \\
				0, & \text{otherwise}
			\end{cases}.
		\end{aligned}
	\end{equation}
	By definition, we compute the mutual information as
	\begin{equation}
		\begin{aligned}
			I\left( W; U | \hat{Z}_{[n+1]} \right) &= \EE_{W, \hat{Z}_{[n+1]}, U} \left[ \log \frac{P_{U | W, \hat{Z}_{[n+1]}}}{P_{U}} \right] \\
			&= \sum_{x=0}^{n} \frac{n+1-x}{n+1} B_{n+1,x}(p) \log \frac{n+1}{n+1-x} + \sum_{x=0}^{n} \frac{x+1}{n+1} B_{n+1,x+1}(p) \log \frac{n+1}{x+1} \\
			&= \log (n+1) - \sum_{x=0}^{n} B_{n,x}(p) \left( (1-p)\log (n+1-x) + p \log (x+1) \right).
		\end{aligned}
	\end{equation}
	Let $X$ be a binomial random variable $\Binomial(n, p)$, then the above mutual information can be rewritten as
	\begin{equation}
		\begin{aligned}
			I\left( W; U | \hat{Z}_{[n+1]} \right) = \log (n+1) - \EE \left[ (1-p)\log (n+1-X) + p \log (X+1) \right].
		\end{aligned}
		\label{eq:Ber_LOO}
	\end{equation}
	Using the approximations in Lemma~\ref{lem:exp_bin_logx}, we get
	\begin{equation}
		\begin{aligned}
			I\left( W; U | \hat{Z}_{[n+1]} \right) =& \log (n + 1) - p \left( \log (np + 1) - \frac{np (1 - p)}{2 (np + 1)^2} \right) \\
			&- (1 - p) \left( \log (n + 1 - np) - \frac{np (1 - p)}{2 (n + 1 - np)^2} \right) + O\left( \frac{1}{n^2} \right) \\
			=& p \log \frac{n + 1}{np + 1} + (1 - p) \log \frac{n + 1}{(1 - p) n + 1} + O\left( \frac{1}{n} \right)\\
			\overset{(a)}{=}& - p \log p - (1-p) \log (1 - p) + O\left( \frac{1}{n} \right) \\
			=& H(p) + O\left( \frac{1}{n} \right),
		\end{aligned}
	\end{equation}
	where in $(a)$ we apply Lemma~\ref{lem:log_frac}. Finally, it is shown that the LOO bound converges to a constant $\sqrt{\frac{1}{2} H(p)}$.
	
	\subsection{ICIMI Case} \label{subapp:cal_ICIMI}
	\label{subsubsec: ICIMI-case}
	In this subsection, we evaluate the ICIMI bound in \eqref{eq:ICIMI}, where the hypothesis is $W = \frac{1}{n} \sum_{i=1}^{n} \hat{Z}_{i + R_i n}$. Let $w = \frac{x}{n}$, $r_{[n]} = (r_1, \ldots, r_n)$ and $\hat{z}_{[2n]} = (\hat{z}_1, \ldots, \hat{z}_{2n})$ be the realizations, where $x \in \{0, \ldots, n\}$. First, we calculate the conditional probability as
	\begin{equation}
		\begin{aligned}
			P_{W | R_i, \hat{Z}_{i}, \hat{Z}_{i+n}}(\frac{x}{n} | r_i, \hat{z}_{i}, \hat{z}_{i+n}) =& B_{n-1, x - \hat{z}_{i + r_i n}}(p).
		\end{aligned}
	\end{equation}
	By marginalizing out $R_i$, we get
	\begin{equation}
		\begin{aligned}
			P_{W | \hat{Z}_{i}, \hat{Z}_{i+n}}(\frac{x}{n} | \hat{z}_{i}, \hat{z}_{i+n}) =& P_{R_i}(0) P_{W | R_i, \hat{Z}_{i}, \hat{Z}_{i+n}}(\frac{x}{n} | 0, \hat{z}_{i}, \hat{z}_{i+n}) + P_{R_i}(1) P_{W | R_i, \hat{Z}_{i}, \hat{Z}_{i+n}}(\frac{x}{n} | 1, \hat{z}_{i}, \hat{z}_{i+n}) \\
			=& \frac{1}{2} B_{n-1, x - \hat{z}_{i}}(p) + \frac{1}{2} B_{n-1, x - \hat{z}_{i + n}}(p).
		\end{aligned}
	\end{equation}
	Then, the joint probability is given by
	\begin{equation} \label{eq:ICIMI_P_W_Ri_Zi_Zi+n}
		\begin{aligned}
			P_{W, R_i, \hat{Z}_{i}, \hat{Z}_{i+n}}(\frac{x}{n}, r_i, \hat{z}_{i}, \hat{z}_{i+n}) =& P_{W | R_i, \hat{Z}_{i}, \hat{Z}_{i+n}}(\frac{x}{n} | r_i, \hat{z}_{i}, \hat{z}_{i+n}) P_{R_i}(r_i) P_{\hat{Z}_{i}}(\hat{z}_{i}) P_{\hat{Z}_{i+n}} (\hat{z}_{i+n})\\
			=& \frac{1}{2} B_{n-1, x - \hat{z}_{i + r_i n}}(p) p^{\hat{z}_{i}} (1-p)^{1-\hat{z}_{i}} p^{\hat{z}_{i+n}} (1-p)^{1-\hat{z}_{i+n}}.
		\end{aligned}
	\end{equation}
	With the above elemental probabilities obtained, we continue to calculate the mutual information:
	\begin{equation}
		\begin{aligned}
			I\left( W; R_i | \hat{Z}_{i}, \hat{Z}_{i+n} \right) =& \sum_{w, r_i, \hat{z}_{i}, \hat{z}_{i+n}} P_{W, R_i, \hat{Z}_{i}, \hat{Z}_{i+n}}(w, r_i, \hat{z}_{i}, \hat{z}_{i+n}) \log \frac{P_{W | R_i, \hat{Z}_{i}, \hat{Z}_{i+n}}(w | r_i, \hat{z}_{i}, \hat{z}_{i+n})}{P_{W | \hat{Z}_{i}, \hat{Z}_{i+n}}(w | \hat{z}_{i}, \hat{z}_{i+n})} \\
			=& \sum_{w} P_{W, R_i, \hat{Z}_{i}, \hat{Z}_{i+n}}(w, 0, 0, 0) \log \frac{P_{W | R_i, \hat{Z}_{i}, \hat{Z}_{i+n}}(w | 0, 0, 0)}{P_{W | \hat{Z}_{i}, \hat{Z}_{i+n}}(w | 0, 0)} + \ldots \\
			& + \sum_{w} P_{W, R_i, \hat{Z}_{i}, \hat{Z}_{i+n}}(w, 0, 1, 1) \log \frac{P_{W | R_i, \hat{Z}_{i}, \hat{Z}_{i+n}}(w | 0, 1, 1)}{P_{W | \hat{Z}_{i}, \hat{Z}_{i+n}}(w | 1, 1)} \\
			& + \sum_{w} P_{W, R_i, \hat{Z}_{i}, \hat{Z}_{i+n}}(w, 1, 0, 0) \log \frac{P_{W | R_i, \hat{Z}_{i}, \hat{Z}_{i+n}}(w | 1, 0, 0)}{P_{W | \hat{Z}_{i}, \hat{Z}_{i+n}}(w | 0, 0)} + \ldots \\
			& + \sum_{w} P_{W, R_i, \hat{Z}_{i}, \hat{Z}_{i+n}}(w, 1, 1, 1) \log \frac{P_{W | R_i, \hat{Z}_{i}, \hat{Z}_{i+n}}(w | 1, 1, 1)}{P_{W | \hat{Z}_{i}, \hat{Z}_{i+n}}(w | 1, 1)}.
		\end{aligned}
		\label{eq:ICIMI_cal}
	\end{equation}
	Notice the following facts: $(a)$ whenever $\hat{z}_{i} = \hat{z}_{i+n}$,
	\begin{equation*}
		\log \frac{P_{W | R_i, \hat{Z}_{i}, \hat{Z}_{i+n}}(w | r_i, \hat{z}_{i}, \hat{z}_{i+n})}{P_{W | \hat{Z}_{i}, \hat{Z}_{i+n}}(w | \hat{z}_{i}, \hat{z}_{i+n})} = 0,
	\end{equation*}
	$(b)$ for $(R_i, \hat{Z}_{i}, \hat{Z}_{i+n}) = (0,0,1)$ and $(R_i, \hat{Z}_{i}, \hat{Z}_{i+n}) = (1,1,0)$,
	\begin{equation}
		\begin{aligned}
			\sum_{w} P_{W, R_i, \hat{Z}_{i}, \hat{Z}_{i+n}}(w, 0, 0, 1) \log \frac{P_{W | R_i, \hat{Z}_{i}, \hat{Z}_{i+n}}(w | 0, 0, 1)}{P_{W | \hat{Z}_{i}, \hat{Z}_{i+n}}(w | 0, 1)} =& \frac{1}{2} \sum_{x = 0}^{n - 1} p(1-p) B_{n-1,x}(p) \log \frac{2}{\frac{x}{n - x} \cdot \frac{1 - p}{p} + 1} \\
			=& \sum_{w} P_{W, R_i, \hat{Z}_{i}, \hat{Z}_{i+n}}(w, 1, 1, 0) \log \frac{P_{W | R_i, \hat{Z}_{i}, \hat{Z}_{i+n}}(w | 1, 1, 0)}{P_{W | \hat{Z}_{i}, \hat{Z}_{i+n}}(w | 1, 0)},
		\end{aligned}
	\end{equation}
	and similarly, $(c)$ for $(R_i, \hat{Z}_{i}, \hat{Z}_{i+n}) = (0,1,0)$ and $(R_i, \hat{Z}_{i}, \hat{Z}_{i+n}) = (1,0,1)$,
	\begin{equation}
		\begin{aligned}
			\sum_{w} P_{W, R_i, \hat{Z}_{i}, \hat{Z}_{i+n}}(w, 0, 1, 0) \log \frac{P_{W | R_i, \hat{Z}_{i}, \hat{Z}_{i+n}}(w | 0, 1, 0)}{P_{W | \hat{Z}_{i}, \hat{Z}_{i+n}}(w | 1, 0)} =& \frac{1}{2} \sum_{x = 0}^{n - 1} p(1-p) B_{n-1,x}(p) \log \frac{2}{\frac{n - 1 - x}{x + 1} \cdot \frac{p}{1 - p} + 1} \\
			=& \sum_{w} P_{W, R_i, \hat{Z}_{i}, \hat{Z}_{i+n}}(w, 1, 0, 1) \log \frac{P_{W | R_i, \hat{Z}_{i}, \hat{Z}_{i+n}}(w | 1, 0, 1)}{P_{W | \hat{Z}_{i}, \hat{Z}_{i+n}}(w | 0, 1)}.
		\end{aligned}
	\end{equation}
	Plugging all the summand terms back to \eqref{eq:ICIMI_cal}, we get
	\begin{equation}
		\begin{aligned}
			I\left( W; R_i | \hat{Z}_{i}, \hat{Z}_{i+n} \right) =& \sum_{x = 0}^{n - 1} p(1-p) B_{n-1,x}(p) \log \frac{2}{\frac{x}{n - x} \cdot \frac{1 - p}{p} + 1} + \sum_{x = 0}^{n - 1} p(1-p) B_{n-1,x}(p) \log \frac{2}{\frac{n - 1 - x}{x + 1} \cdot \frac{p}{1 - p} + 1} \\
			=& 2 p(1 - p) \log 2 - \sum_{x = 0}^{n - 1} p(1-p) B_{n-1,x}(p) \log \left[ \left( \frac{x}{n - x} \cdot \frac{1 - p}{p} + 1 \right) \left( \frac{n - 1 - x}{x + 1} \cdot \frac{p}{1 - p} + 1 \right) \right] \\
			=& p(1 - p) \left( 2 \log 2 - \EE \left[ \log \left( \frac{X}{n - X} \cdot \frac{1 - p}{p} + 1 \right) \left( \frac{n - 1 - X}{X + 1} \cdot \frac{p}{1 - p} + 1 \right) \right] \right) \\
			=& p(1 - p) \left( 2 \log 2 - \EE \left[ \log \left( \frac{\alpha X}{\bar{X} + 1} + 1 \right) \right] - \EE \left[ \log \left( \frac{\beta \bar{X}}{X + 1} + 1 \right) \right] \right),
		\end{aligned}
		\label{eq:Ber_ICIMI}
	\end{equation}
	where we let $X \sim \Binomial(n - 1, p)$, $\bar{X} = n - 1 - X \sim \Binomial(n - 1, 1 - p)$, $\alpha = \frac{1-p}{p}$ and $\beta = \frac{p}{1-p}$. Without loss of generality, we assume $p < \frac{1}{2}$ and $\alpha > 1$, $\beta < 1$. The above expression can be approximated as
	\begin{equation}
		\begin{aligned}
			I\left( W; R_i | \hat{Z}_{i}, \hat{Z}_{i+n} \right) =& p(1 - p) \left( 2 \log 2 - \EE \left[ \log \frac{(\alpha - 1) X + n}{\bar{X} + 1} + \log \frac{n - (1 - \beta) \bar{X}}{X + 1} \right] \right) \\
			=& p(1 - p) \left( 2 \log 2 - \EE \left[ \log (\alpha - 1) + \log \left( X + \frac{n}{\alpha - 1} \right) - \log \left( \bar{X} + 1 \right)  \right.\right. \\
			&\left.\left.  + \log (1 - \beta) + \log \left( \frac{n}{1 - \beta} - \bar{X} \right) - \log \left( X + 1 \right) \right] \right) \\
			\overset{(a)}{=}& p(1 - p) \left( 2 \log 2 - \log \left( (\alpha - 1) (1 - \beta) \right) - \log \left( (n - 1) p + \frac{n}{\alpha - 1} \right) + \frac{(n - 1) (1 - p) p}{2 \left( (n - 1) p + \frac{n}{\alpha - 1} \right)^2} \right. \\
			&\left. + \log \left( (n - 1) (1 - p) + 1 \right) - \frac{(n - 1) (1 - p) p}{2 \left( (n - 1) (1 - p) + 1 \right)^2} \right. \\
			&\left. - \log \left( \frac{n}{1 - \beta} -  (n - 1) (1 - p) \right) + \frac{(n - 1) (1 - p) p}{2 \left( \frac{n}{1 - \beta} -  (n - 1) (1 - p) \right)^2} \right. \\
			&\left. + \log \left( (n - 1) p + 1 \right) - \frac{(n - 1) (1 - p) p}{2 \left( (n - 1) p + 1 \right)^2} + O\left( \frac{1}{n^2} \right) \right) \\
			=& p(1 - p) \left( 2 \log 2 - \log \left( (\alpha - 1) (1 - \beta) \right) + \log \frac{(n - 1) (1 - p) + 1}{(n - 1) p + \frac{n}{\alpha - 1}} + \log \frac{(n - 1) p + 1}{\frac{n}{1 - \beta} -  (n - 1) (1 - p)} + O\left( \frac{1}{n} \right) \right) \\
			\overset{(b)}{=}& p(1 - p) \left( 2 \log 2 - \log \left( (\alpha - 1) (1 - \beta) \right) + \log \frac{1 - p}{p + \frac{1}{\alpha - 1}} + \log \frac{p}{\frac{1}{1 - \beta} - (1 - p)} + O\left( \frac{1}{n} \right) \right) \\
			=& p(1 - p) \left( 2 \log 2 + \log \frac{1 - p}{p (\alpha - 1) + 1} + \frac{p}{1 - (1 - p) (1 - \beta)} + O\left( \frac{1}{n} \right) \right) \\
			=& p(1 - p) \left( 2 \log 2 + \log \frac{1 - p}{2 - 2 p} + \log \frac{p}{2 p} + O\left( \frac{1}{n} \right) \right) \\
			=& O\left( \frac{1}{n} \right).
		\end{aligned}
	\end{equation}
	where $(a)$ is obtained by using Lemma~\ref{lem:exp_bin_logx} and Lemma~\ref{lem:exp_bin_log-x} to each summand, $(b)$ is due to Lemma~\ref{lem:log_frac}. After taking the square root of $I( W; R_i | \hat{Z}_{i}, \hat{Z}_{i+n} )$, one can find that the ICIMI bound is of order $O\left( \frac{1}{\sqrt{n}} \right)$.
	
	\subsection{L$m$O-CMI Case} \label{subapp:cal_LmO}
	\label{subsubsec: LmO-case}
	Generalized from the LOO case, the hypothesis in the L$m$O case is $W = \frac{1}{n} \sum_{i=1}^{n}\hat{Z}_{U_i}$. Consider realizations $w = \frac{1}{n} \sum_{i=1}^{n}\hat{z}_{u_i} = \frac{x}{n}$ and $\sum_{i=1}^{n+m} \hat{z}_i - \sum_{i=1}^{n}\hat{z}_{u_i} = y$, then the joint probability is
	\begin{equation}
		\begin{aligned}
			P_{W, \hat{Z}_{[n+m]}, U_{[n]}}(\frac{x}{n}, \hat{z}_{[n+m]}, u_{[n]}) &= P_{W | \hat{Z}_{[n+m]}, U_{[n]}}(\frac{x}{n} | \hat{z}_{[n+m]}, u_{[n]}) P_{U_{[n]}}(u_{[n]}) \prod_{i=1}^{n+m} P_{\hat{Z}_i}(\hat{z}_i)\\
			&= \frac{1}{\begin{psmallmatrix} n+m \\ n \end{psmallmatrix}} p^{x + y} (1 - p)^{n + m - x - y}.
		\end{aligned}
	\end{equation}
	For the conditional probability $P_{W | \hat{Z}_{[n+m]}}(\frac{x}{n} | \hat{z}_{[n+m]})$, we have
	\begin{equation}
		\begin{aligned}
			P_{W | \hat{Z}_{[n+m]}}(\frac{x}{n} | \hat{z}_{[n+m]}) &= \sum_{u'_{[n]}} P_{W | \hat{Z}_{[n+m]}, U_{[n]}}(\frac{x}{n} | \hat{z}_{[n+m]}, u'_{[n]}) P_{U_{[n]}}(u'_{[n]})\\
			&= \frac{1}{\begin{psmallmatrix} n+m \\ n \end{psmallmatrix}} \sum_{u'_{[n]}} P_{W | \hat{Z}_{[n+m]}, U_{[n]}}(\frac{x}{n} | \hat{z}_{[n+m]}, u'_{[n]}).
		\end{aligned}
	\end{equation}
	We now consider the number of feasible $u_{[n]}$'s such that $P_{W | \hat{Z}_{[n+m]}, U_{[n]}}(\frac{x}{n} | \hat{z}_{[n+m]}, u_{[n]}) = 1$, which means $\sum_{i=1}^{n}\hat{z}_{u_i} = x$ is still satisfied. For clarity, we say that $\hat{z}_i = 1$ is a ``successful trial'' while the opposite case, i.e., $\hat{z}_i = 0$, is referred to as a ``failed trial''. Consider that there are $x+y$ successful trials and $n+m-x-y$ failed trials among all $n+m$ trials, then a feasible $u_{[n]}$ should choose $x$ successful trials, which contributes to $\big(\begin{smallmatrix} x+y \\ x \end{smallmatrix}\big)$ free choices, along with $n - x$ failed trials, which provides $\big(\begin{smallmatrix} n+m-x-y \\ n-x \end{smallmatrix}\big)$ choices. Thus,
	\begin{equation}
		\begin{aligned}
			P_{W | \hat{Z}_{[n+m]}}(\frac{x}{n} | \hat{z}_{[n+m]}) &= \frac{\begin{psmallmatrix} x+y \\ x \end{psmallmatrix} \begin{psmallmatrix} n+m-x-y \\ n-x \end{psmallmatrix}}{\begin{psmallmatrix} n+m \\ n \end{psmallmatrix}}.
		\end{aligned}
	\end{equation}
	Then, the mutual information is given by
	\begin{equation}
		\begin{aligned}
			I\left( W; U_{[n]} | \hat{Z}_{[n+m]} \right) &= \sum_{w, \hat{z}_{[n+m]}, u_{[n]}} P_{W, \hat{Z}_{[n+m]}, U_{[n]}}(w, \hat{z}_{[n+m]}, u_{[n]}) \log \frac{P_{W | \hat{Z}_{[n+m]}, U_{[n]}}(w | \hat{z}_{[n+m]}, u_{[n]})}{P_{W | \hat{Z}_{[n+m]}}(w | \hat{z}_{[n+m]})} \\
			&= \sum_{x=0}^{n} \sum_{y=0}^{m} \frac{\begin{psmallmatrix} x+y \\ x \end{psmallmatrix} \begin{psmallmatrix} n+m-x-y \\ n-x \end{psmallmatrix}}{\begin{psmallmatrix} n+m \\ n \end{psmallmatrix}} B_{n+m, x+y}(p) \log \frac{\begin{psmallmatrix} n+m \\ n \end{psmallmatrix}}{\begin{psmallmatrix} x+y \\ x \end{psmallmatrix} \begin{psmallmatrix} n+m-x-y \\ n-x \end{psmallmatrix}} \\
			&= \log \begin{psmallmatrix} n + m \\ n \end{psmallmatrix} - \sum_{x=0}^{n} \sum_{y=0}^{m} B_{n, x}(p) B_{m, y}(p) \log \begin{psmallmatrix} x+y \\ x \end{psmallmatrix} \begin{psmallmatrix} n+m-x-y \\ n-x \end{psmallmatrix}.
		\end{aligned}
	\end{equation}
	Let $X$ and $Y$ be two binomial random variables following $X \sim \Binomial(n, p)$ and $Y \sim \Binomial(m, p)$, respectively, then the above mutual information can be rewritten as
	\begin{equation}
		\begin{aligned}
			I\left( W; U_{[n]} | \hat{Z}_{[n+m]} \right) = \log \begin{psmallmatrix} n + m \\ n \end{psmallmatrix} - \EE \left[ \log \begin{psmallmatrix} X+Y \\ X \end{psmallmatrix} \begin{psmallmatrix} \bar{X} + \bar{Y} \\ \bar{X} \end{psmallmatrix} \right],
		\end{aligned}
		\label{eq:Ber_lmo}
	\end{equation}
	where $\bar{X} = n - X$ and $\bar{Y} = m - Y$. Note that one can verify that \eqref{eq:Ber_lmo} will reduce to \eqref{eq:Ber_LOO} with $m = 1$, while the detailed calculations are omitted.

	\subsection{LOFO-CMI Cases} \label{subapp:cal_LOFO}
	We now evaluate the $(m, m)$-IPCIMI and LOFO-CMI bounds. The hypothesis is $W = \frac{1}{n} \sum_{i=1}^{m} \left( \sum_{j=1}^{\frac{n}{m} + 1} \hat{Z}^{(i)}_j - \hat{Z}^{(i)}_{U^{(i)}} \right)$. Let $w = \frac{x + y}{n}$, $\hat{z}^{(i)}_{u^{(i)}} = t$ and $\sum_{j=1}^{\frac{n}{m} + 1} \hat{z}^{(i)}_j - t = y$, where $x \in [0, n - \frac{n}{m}]$, $y \in [0, \frac{n}{m}]$ and $t \in \{0, 1\}$. Then,
	\begin{equation}
		\begin{aligned}
			P_{W | U^{(i)}, \hat{Z}_{[\frac{n}{m}+1]}^{(i)}}(\frac{x + y}{n} | u^{(i)}, \hat{z}_{[\frac{n}{m}+1]}^{(i)}) = B_{n - \frac{n}{m}, x}(p).
		\end{aligned}
	\end{equation}
	\begin{equation}
		\begin{aligned}
			P_{W, U^{(i)}, \hat{Z}_{[\frac{n}{m}+1]}^{(i)}}(\frac{x + y}{n}, u^{(i)}, \hat{z}_{[\frac{n}{m}+1]}^{(i)}) = \frac{1}{\frac{n}{m} + 1} \begin{psmallmatrix} n - \frac{n}{m} \\ x \end{psmallmatrix} p^{x+y} (1 - p)^{n - x - y} p^t (1 - p)^t.
		\end{aligned}
	\end{equation}
	\begin{equation}
		\begin{aligned}
			P_{W | \hat{Z}_{[\frac{n}{m}+1]}^{(i)}}(\frac{x + y}{n} | \hat{z}_{[\frac{n}{m}+1]}^{(i)}) =& \sum_{u'} P_{W | U^{(i)}, \hat{Z}_{[\frac{n}{m}+1]}^{(i)}}(\frac{x + y}{n} | u', \hat{z}_{[\frac{n}{m}+1]}^{(i)}) P_{U^{(i)}}(u') \\
			=& \frac{y + t}{\frac{n}{m} + 1} B_{n - \frac{n}{m}, x + 1 - t}(p) + \frac{\frac{n}{m} + 1 - y - t}{\frac{n}{m} + 1} B_{n - \frac{n}{m}, x - t}(p).
		\end{aligned}
	\end{equation}
	Using the above expressions, we calculate the mutual information as
	\begin{equation}
		\begin{aligned}
			I&\left( W; U^{(i)} | \hat{Z}_{[\frac{n}{m}+1]}^{(i)} \right)\\
			=& \sum_{x=0}^{n - \frac{n}{m}} \sum_{y=0}^{\frac{n}{m}} \frac{y + 1}{\frac{n}{m} + 1} \begin{psmallmatrix} \frac{n}{m} + 1 \\ y + 1 \end{psmallmatrix} \begin{psmallmatrix} n - \frac{n}{m} \\ x \end{psmallmatrix} p^{x+y} (1 - p)^{n - x - y} p \log \frac{\frac{n}{m} + 1}{y + 1 + \frac{1 - p}{p} \cdot \frac{(\frac{n}{m} - y) x}{n - \frac{n}{m} + 1 - x}} \\
			& + \sum_{x=0}^{n - \frac{n}{m}} \sum_{y=0}^{\frac{n}{m}} \frac{\frac{n}{m} + 1 - y}{\frac{n}{m} + 1} \begin{psmallmatrix} \frac{n}{m} + 1 \\ y \end{psmallmatrix} \begin{psmallmatrix} n - \frac{n}{m} \\ x \end{psmallmatrix} p^{x+y} (1 - p)^{n - x - y} (1 - p) \log \frac{\frac{n}{m} + 1}{\frac{p}{1 - p} \cdot \frac{(n - \frac{n}{m} - x) y}{x + 1} + \frac{n}{m} + 1 - y} \\
			=& \sum_{x=0}^{n - \frac{n}{m}} \sum_{y=0}^{\frac{n}{m}} B_{\frac{n}{m}, y}(p) B_{n - \frac{n}{m}, x}(p) \left( p \log \frac{\frac{n}{m} + 1}{y + 1 + \frac{1 - p}{p} \cdot \frac{(\frac{n}{m} - y) x}{n - \frac{n}{m} + 1 - x}} + (1 - p) \log \frac{\frac{n}{m} + 1}{\frac{p}{1 - p} \cdot \frac{(n - \frac{n}{m} - x) y}{x + 1} + \frac{n}{m} + 1 - y} \right),
		\end{aligned}
	\end{equation}
	which can be rewritten as
	\begin{equation}
		\begin{aligned}
			I\left( W; U^{(i)} | \hat{Z}_{[\frac{n}{m}+1]}^{(i)} \right) =& \log \left( \frac{n}{m} + 1 \right) - p \EE \left[ \log \left( Y + 1 + \frac{\alpha X \bar{Y}}{\bar{X} + 1} \right) \right]- (1 - p) \EE \left[ \log \left( \bar{Y} + 1 + \frac{\beta \bar{X} Y}{X + 1} \right) \right],
		\end{aligned}
		\label{eq:Ber_LOFO}
	\end{equation}
	where $X \sim \Binomial\left( n - \frac{n}{m}, p \right)$, $Y \sim \Binomial\left( \frac{n}{m}, p \right)$, $\bar{X} = n - \frac{n}{m} - X \sim \Binomial\left( n - \frac{n}{m}, 1 - p \right)$, $\bar{Y} = \frac{n}{m} - Y \sim \Binomial\left( \frac{n}{m}, 1 - p \right)$, $\alpha = \frac{1 - p}{p}$ and $\beta = \frac{p}{1 - p}$. Through the above expression, one can validate the argument that the $(m, m)$-IPCIMI bound is a general case of the LOO-CMI and ICIMI bounds. With detailed calculations omitted, we note that letting $m = 1$ reduces \eqref{eq:Ber_LOFO} to the LOO-CMI bound in \eqref{eq:Ber_LOO}, whereas setting $m = n$ leads to the ICIMI bound shown in \eqref{eq:Ber_ICIMI}. Furthermore, it can be checked that \eqref{eq:Ber_LOFO} also holds for the LOFO-CMI bound.

	\subsection{$(m, n)$-IPCIMI Case} \label{subapp:cal_mn_IPCIMI}
	Following a similar procedure in Appendix~\ref{subapp:cal_LOFO}, we calculate $I( W; U^{(i)} | \hat{Z}_{[\frac{m}{n}+1]}^{(i)} )$ as
	\begin{equation}
		\begin{aligned}
			I\left( W; U^{(i)} | \hat{Z}_{[\frac{m}{n}+1]}^{(i)} \right) =& \log \left( \frac{m}{n} + 1 \right) - p \EE \left[ \log \left( Y + 1 + \frac{\beta \bar{X} \bar{Y}}{X + 1} \right) \right] - (1 - p) \EE \left[ \log \left( \bar{Y} + 1 + \frac{\alpha X Y}{\bar{X} + 1} \right) \right],
		\end{aligned}
	\end{equation}
	where $X \sim \Binomial\left( n - 1, p \right)$, $Y \sim \Binomial\left( \frac{m}{n}, p \right)$, $\bar{X} = n - 1 - X \sim \Binomial\left( n - 1, 1 - p \right)$, $\bar{Y} = \frac{m}{n} - Y \sim \Binomial\left( \frac{m}{n}, 1 - p \right)$, $\alpha = \frac{1 - p}{p}$ and $\beta = \frac{p}{1 - p}$.
	
	\subsection{SICIMI Case}  \label{subapp:cal_SICIMI}
	Let $w = \frac{x}{n}$, $r_{[n]} = (r_1, \ldots, r_n)$ and $\hat{z}_{[2n]} = (\hat{z}_1, \ldots, \hat{z}_{2n})$ be the realizations, where $x \in [0, n]$. We first consider the case that $r_i = 0$, i.e., $\hat{Z}_i$ is selected as the training sample. Thus, the joint probability $P_{W, R_i, \hat{Z}_{i}}(\frac{x}{n}, r_i, \hat{z}_{i})$ can be calculated by marginalizing $\hat{Z}_{i+n}$ out of \eqref{eq:ICIMI_P_W_Ri_Zi_Zi+n}:
	\begin{equation}
		\begin{aligned}
			P_{W, R_i, \hat{Z}_{i}}(\frac{x}{n}, 0, \hat{z}_{i}) =& \sum_{\hat{z}_{i+n} \in \{0,1\}} P_{W, R_i, \hat{Z}_{i}, \hat{Z}_{i+n}}(\frac{x}{n}, 0, \hat{z}_{i}, \hat{z}_{i+n}) \\
			=& \frac{1}{2} B_{n-1, x-\hat{z}_{i}}(p) p^{\hat{z}_{i}} (1-p)^{1-\hat{z}_{i}}.
		\end{aligned}\label{eq:P_W_R_Z_0}
	\end{equation}
	For the opposite case $r_i = 1$, by marginalization we obtain
	\begin{equation}
		\begin{aligned}
			P_{W, R_i, \hat{Z}_{i}}(\frac{x}{n}, 1, \hat{z}_{i}) =& \frac{1}{2} B_{n,x}(p) p^{\hat{z}_{i}} (1-p)^{1-\hat{z}_{i}}.
		\end{aligned}\label{eq:P_W_R_Z_1}
	\end{equation}
	The above two expressions further yield
	\begin{equation}
		\begin{aligned}
			P_{W, \hat{Z}_{i}}(\frac{x}{n}, \hat{z}_{i}) =& P_{W, R_i, \hat{Z}_{i}}(\frac{x}{n}, 0, \hat{z}_{i}) + P_{W, R_i, \hat{Z}_{i}}(\frac{x}{n}, 1, \hat{z}_{i}) \\ 
			=& \frac{1}{2} B_{n-1, x-\hat{z}_{i}}(p) p^{\hat{z}_{i}} (1-p)^{1-\hat{z}_{i}} + \frac{1}{2} B_{n,x}(p) p^{\hat{z}_{i}} (1-p)^{1-\hat{z}_{i}}.
		\end{aligned}\label{eq:P_W_Z}
	\end{equation}
	Then, we expand the mutual information by its definition:
	\begin{equation}
		\begin{aligned}
			I\left( W; R_i | \hat{Z}_{i} \right) =& \sum_{w, r_i, \hat{z}_{i}} P_{W, R_i, \hat{Z}_{i}}(w, r_i, \hat{z}_{i}) \log \frac{P_{W, R_i, \hat{Z}_{i}}(w, r_i, \hat{z}_{i})}{P_{W, \hat{Z}_{i}}(w, \hat{z}_{i}) P_{R_i}(r_i)} \\
			=& \sum_{w} P_{W, R_i, \hat{Z}_{i}}(w, 0, 0) \log \frac{2 P_{W, R_i, \hat{Z}_{i}}(w, 0, 0)}{P_{W, \hat{Z}_{i}}(w, 0)} + \sum_{w} P_{W, R_i, \hat{Z}_{i}}(w, 0, 1) \log \frac{2 P_{W, R_i, \hat{Z}_{i}}(w, 0, 1)}{P_{W, \hat{Z}_{i}}(w, 1)} \\
			& + \sum_{w} P_{W, R_i, \hat{Z}_{i}}(w, 1, 0) \log \frac{2 P_{W, R_i, \hat{Z}_{i}}(w, 1, 0)}{P_{W, \hat{Z}_{i}}(w, 0)} + \sum_{w} P_{W, R_i, \hat{Z}_{i}}(w, 1, 1) \log \frac{2 P_{W, R_i, \hat{Z}_{i}}(w, 1, 1)}{P_{W, \hat{Z}_{i}}(w, 1)}.
		\end{aligned}
	\end{equation}
	By combining the elemental probabilities obtained in \eqref{eq:P_W_R_Z_0}, \eqref{eq:P_W_R_Z_1} and \eqref{eq:P_W_Z}, we have
	\begin{equation}
		\begin{aligned}
			I\left( W; R_i | \hat{Z}_{i} \right) =& \sum_{x=0}^{n-1} \frac{1-p}{2} B_{n-1,x}(p) \log \frac{2 \begin{psmallmatrix} n - 1 \\ x \end{psmallmatrix}}{\begin{psmallmatrix} n - 1 \\ x \end{psmallmatrix} + \begin{psmallmatrix} n \\ x \end{psmallmatrix} (1-p)} + \sum_{x=1}^{n} \frac{p}{2} B_{n-1,x-1}(p) \log \frac{2 \begin{psmallmatrix} n - 1 \\ x - 1 \end{psmallmatrix}}{\begin{psmallmatrix} n - 1 \\ x - 1 \end{psmallmatrix} + \begin{psmallmatrix} n \\ x \end{psmallmatrix} p} \\
			& + \sum_{x=0}^{n} \frac{1-p}{2} B_{n,x}(p) \log \frac{2 \begin{psmallmatrix} n \\ x \end{psmallmatrix} (1-p)}{\begin{psmallmatrix} n - 1 \\ x \end{psmallmatrix} + \begin{psmallmatrix} n \\ x \end{psmallmatrix} (1-p)} + \sum_{x=0}^{n} \frac{p}{2} B_{n,x}(p) \log \frac{2 \begin{psmallmatrix} n \\ x \end{psmallmatrix} p}{\begin{psmallmatrix} n - 1 \\ x - 1 \end{psmallmatrix} + \begin{psmallmatrix} n \\ x \end{psmallmatrix} p}.
		\end{aligned}
	\end{equation}
	Let $X \sim \Binomial(n, p)$, $\bar{X} = n - X \sim \Binomial(n, 1 - p)$, $Y \sim \Binomial(n - 1, p)$, $\bar{Y} = n - 1 - Y \sim \Binomial(n - 1, 1 - p)$, we can finally get the simplified expression as
	\begin{equation}
		\begin{aligned}
			I\left( W; R_i | \hat{Z}_{i} \right) =& \log 2 - \frac{p}{2} \left(\EE \left[ \log \left( 1 + \frac{X}{n p} \right) + \log \left( 1 + \frac{n p}{Y + 1} \right) \right] \right) \\
			&- \frac{1 - p}{2} \left( \EE \left[ \log \left( 1 + \frac{\bar{X}}{n (1 - p)} \right) + \log \left( 1 + \frac{n (1 - p)}{\bar{Y} + 1} \right) \right] \right).
		\end{aligned}
		\label{eq:Ber_SICIMI}
	\end{equation}
	
	\subsection{LOO-SCMI Case} \label{subapp:cal_LOO_SCMI}
	Let $w = \frac{x}{n}$, $U = u$ and $\hat{z}_{[n+1]} = (\hat{z}_1, \ldots, \hat{z}_{n+1})$ be the realizations, where $x \in [0, n]$. Analogous to the calculations in the case of SICIMI, for any $i \in [n+1]$ we calculate $P_{W, U, \hat{Z}_{i}}$ as
	\begin{equation}
		\begin{aligned}
			P_{W, U, \hat{Z}_{i}}(\frac{x}{n}, u, \hat{z}_{i}) =& \begin{cases}
				\frac{1}{n+1} B_{n-1,x-\hat{z}_{i}}(p) p^{\hat{z}_{i}} (1-p)^{1-\hat{z}_{i}}, & \textnormal{if $u \not= i$} \\
				\frac{1}{n+1} B_{n,x}(p) p^{\hat{z}_{i}} (1-p)^{1-\hat{z}_{i}}, & \textnormal{if $u = i$}
			\end{cases},
		\end{aligned}
	\end{equation}
	which further yields
	\begin{equation}
		\begin{aligned}
			P_{W, \hat{Z}_{i}}(\frac{x}{n}, \hat{z}_{i}) =& \sum_{u=1}^{n+1} P_{W, U, \hat{Z}_{i}}(\frac{x}{n}, u, \hat{z}_{i}) \\
			=& \left( \frac{n}{n+1} B_{n-1,x-\hat{z}_{i}}(p) + \frac{1}{n+1} B_{n,x}(p) \right) p^{\hat{z}_{i}} (1-p)^{1-\hat{z}_{i}}.
		\end{aligned}
	\end{equation}
	Then, we expand the mutual information by its definition:
	\begin{equation}
		\begin{aligned}
			I\left( W; U | \hat{Z}_{i} \right) =& \sum_{w, u, \hat{z}_{i}} P_{W, U, \hat{Z}_{i}}(w, u, \hat{z}_{i}) \log \frac{P_{W, U, \hat{Z}_{i}}(w, u, \hat{z}_{i})}{P_{W, \hat{Z}_{i}}(w, \hat{z}_{i}) P_{U}(u)} \\
			=& \sum_{w} \sum_{u \in [n+1], u \not= i} P_{W, U, \hat{Z}_{i}}(w, u, 0) \log \frac{P_{W, U, \hat{Z}_{i}}(w, u, 0)}{P_{W, \hat{Z}_{i}}(w, 0)} + \sum_{w} P_{W, U, \hat{Z}_{i}}(w, i, 0) \log \frac{P_{W, U, \hat{Z}_{i}}(w, i, 0)}{P_{W, \hat{Z}_{i}}(w, 0)} \\
			& + \sum_{w} \sum_{u \in [n+1], u \not= i} P_{W, U, \hat{Z}_{i}}(w, u, 1) \log \frac{P_{W, U, \hat{Z}_{i}}(w, u, 1)}{P_{W, \hat{Z}_{i}}(w, 1)} + \sum_{w} P_{W, U, \hat{Z}_{i}}(w, i, 1) \log \frac{P_{W, U, \hat{Z}_{i}}(w, i, 1)}{P_{W, \hat{Z}_{i}}(w, 1)},
		\end{aligned}
	\end{equation}
	Plugging $P_{W, U, \hat{Z}_{i}}$ and $P_{W, \hat{Z}_{i}}$ into the above expression, with some additional rearrangements and simplifications, we finally get
	\begin{equation}
		\begin{aligned}
			I\left( W; U | \hat{Z}_{1} \right) =& \log (n + 1) - \frac{p}{n + 1} \EE \left[ \log \left( 1 + \frac{X}{p} \right) + n \log \left( n + \frac{n p}{Y + 1} \right) \right] \\
			&- \frac{1 - p}{n + 1} \EE \left[ \log \left( 1 + \frac{\bar{X}}{1 - p} \right) + n \log \left( n + \frac{n (1 - p)}{\bar{Y} + 1} \right) \right],
		\end{aligned}
		\label{eq:Ber_LOO_SCMI}
	\end{equation}
	where $X \sim \Binomial(n, p)$, $\bar{X} = n - X \sim \Binomial(n, 1 - p)$, $Y \sim \Binomial(n - 1, p)$, $\bar{Y} = n - 1 - Y \sim \Binomial(n - 1, 1 - p)$.
	
	\subsection{L$m$O-SCMI Case} \label{subapp:cal_LmO_SCMI}
	Let $w = \frac{x}{n}$, $U_{[n]} = u$ and $\hat{z}_{[n+m]} = (\hat{z}_1, \ldots, \hat{z}_{n+m})$ be the realizations, where $x \in [0, n]$. Analogous to the calculations in the case of SICIMI, for any $i \in [n+m]$ we calculate $P_{W, U_{[n]}, \hat{Z}_{i}}$ as
	\begin{equation}
		\begin{aligned}
			P_{W, U_{[n]}, \hat{Z}_{i}}(\frac{x}{n}, u, \hat{z}_{i}) =& \begin{cases}
				\frac{1}{\begin{psmallmatrix} n + m \\ n \end{psmallmatrix}} B_{n-1,x-\hat{z}_{i}}(p) p^{\hat{z}_{i}} (1-p)^{1-\hat{z}_{i}}, & \textnormal{if $i \in u$} \\
				\frac{1}{\begin{psmallmatrix} n + m \\ n \end{psmallmatrix}} B_{n,x}(p) p^{\hat{z}_{i}} (1-p)^{1-\hat{z}_{i}}, & \textnormal{if $i \not\in u$}
			\end{cases},
		\end{aligned}
	\end{equation}
	which further yields
	\begin{equation}
		\begin{aligned}
			P_{W, \hat{Z}_{i}}(\frac{x}{n}, \hat{z}_{i}) =& \sum_{u: i \in u} P_{W, U, \hat{Z}_{i}}(\frac{x}{n}, u, \hat{z}_{i}) + \sum_{u: i \not\in u} P_{W, U, \hat{Z}_{i}}(\frac{x}{n}, u, \hat{z}_{i}) \\
			=& \frac{\begin{psmallmatrix} n + m - 1 \\ n - 1 \end{psmallmatrix}}{\begin{psmallmatrix} n + m \\ n \end{psmallmatrix}} B_{n-1,x-\hat{z}_{i}}(p) p^{\hat{z}_{i}} (1-p)^{1-\hat{z}_{i}} + \frac{\begin{psmallmatrix} n + m - 1 \\ n \end{psmallmatrix}}{\begin{psmallmatrix} n + m \\ n \end{psmallmatrix}} B_{n,x}(p) p^{\hat{z}_{i}} (1-p)^{1-\hat{z}_{i}} \\
			=& \left( \frac{n}{n+m} B_{n-1,x-\hat{z}_{i}}(p) + \frac{m}{n+m} B_{n,x}(p) \right) p^{\hat{z}_{i}} (1-p)^{1-\hat{z}_{i}}.
		\end{aligned}
	\end{equation}
	Plugging these probabilities into the CMI given by
	\begin{equation}
		\begin{aligned}
			I\left( W; U_{[n]} | \hat{Z}_{i} \right) =& \sum_{w, u, \hat{z}_{i}} P_{W, U_{[n]}, \hat{Z}_{i}}(w, u, \hat{z}_{i}) \log \frac{P_{W, U_{[n]}, \hat{Z}_{i}}(w, u, \hat{z}_{i})}{P_{W, \hat{Z}_{i}}(w, \hat{z}_{i}) P_{U_{[n]}}(u)} \\
			=& \sum_{w} \sum_{u: i \in u} P_{W, U_{[n]}, \hat{Z}_{i}}(w, u, 0) \log \frac{P_{W, U_{[n]}, \hat{Z}_{i}}(w, u, 0)}{P_{W, \hat{Z}_{i}}(w, 0)} \\
			& + \sum_{w} \sum_{u: i \not\in u} P_{W, U_{[n]}, \hat{Z}_{i}}(w, u, 0) \log \frac{P_{W, U_{[n]}, \hat{Z}_{i}}(w, i, 0)}{P_{W, \hat{Z}_{i}}(w, 0)} \\
			& + \sum_{w} \sum_{u: i \in u} P_{W, U_{[n]}, \hat{Z}_{i}}(w, u, 1) \log \frac{P_{W, U_{[n]}, \hat{Z}_{i}}(w, u, 1)}{P_{W, \hat{Z}_{i}}(w, 1)} \\
			& + \sum_{w} \sum_{u: i \not\in u} P_{W, U_{[n]}, \hat{Z}_{i}}(w, u, 1) \log \frac{P_{W, U_{[n]}, \hat{Z}_{i}}(w, i, 1)}{P_{W, \hat{Z}_{i}}(w, 1)}.
		\end{aligned}
	\end{equation}
	With some additional simplifications, we have
	\begin{equation}
		\begin{aligned}
			I\left( W; U_{[n]} | \hat{Z}_{i} \right) =& \log (n + m) - p \EE \left[ \frac{m}{n + m} \log \left( m + \frac{X}{p} \right) + \frac{n}{n + m} \log \left( n + \frac{n m p}{Y + 1} \right) \right] \\
			&- (1 - p) \EE \left[ \frac{m}{n + m} \log \left( m + \frac{\bar{X}}{1 - p} \right) + \frac{n}{n + m} \log \left( n + \frac{n m (1 - p)}{\bar{Y} + 1} \right) \right],
		\end{aligned}
		\label{eq:Ber_LmO_SCMI}
	\end{equation}
	where we let $X \sim \Binomial(n, p)$, $\bar{X} = n - X \sim \Binomial(n, 1 - p)$, $Y \sim \Binomial(n - 1, p)$, $\bar{Y} = n - 1 - Y \sim \Binomial(n - 1, 1 - p)$.
	
	\subsection{LOFO-SCMI Case} \label{subapp:cal_LOFO_SCMI}
	Since the ERM algorithm adopted in Example~\ref{eg:Bernoulli} satisfies Assumption~\ref{ass:invariant-alg}, then Proposition~\ref{pro:invariant-SCMI} implies that L$m$O-SCMI and LOFO-SCMI bounds are identical. Therefore, the calculations for the LOFO-SCMI bound are omitted.
\end{document}